\documentclass[12pt]{article}%
\usepackage{amssymb}
\usepackage{amsfonts}
\usepackage{amsmath}
\usepackage{amsthm}
\usepackage{bbm}
\usepackage[nohead]{geometry}
\usepackage[singlespacing]{setspace}
\usepackage[bottom]{footmisc}
\usepackage{indentfirst}
\usepackage{minitoc}  
\usepackage{graphicx}%
\usepackage{rotating}
\usepackage{placeins}
\usepackage{enumerate}
\usepackage{booktabs}
\usepackage{subcaption}
\usepackage[flushleft]{threeparttable}
\usepackage{float}
\usepackage{verbatim}
\usepackage{epstopdf}
\usepackage{pdflscape}
\usepackage{multirow}
\usepackage{enumitem}
\usepackage{mathrsfs}

\usepackage{bibunits} 

\usepackage{siunitx}
\defaultbibliographystyle{plainnat}
\defaultbibliography{ref} 

\newtheorem{theorem}{Theorem}[section]
\newtheorem{lemma}{Lemma}

\newtheorem{proposition}{Proposition}
\newtheorem{corollary}{Corollary}
\newtheorem{assumption}{Assumption}

\newcounter{remark}
\newenvironment{remark}[1][Remark]{\refstepcounter{remark}\begin{trivlist}
		\item[\hskip \labelsep {\bfseries #1 \theremark.\ }]}{\end{trivlist}}
 
\newcommand{\mP}{\mathbb{P}}
\newcommand{\E}{\mathbb{E}}
\newcommand{\Var}{\mathrm{Var}}

\newcommand{\1}{\mathbf{1}}
\newcommand{\mG}{\mathbb{G}}

\usepackage[usenames,dvipsnames]{color}
\usepackage[round]{natbib}
\usepackage[hyperfootnotes=false]{hyperref}
\hypersetup{
	colorlinks,
	citecolor=Blue,
	linkcolor=Blue,
	urlcolor=Blue}

\numberwithin{equation}{section}

\begin{document}
	\begin{bibunit}

	\begin{spacing}{1.2}
		
		\title{Orthogonal Integrated Conditional Moment Tests for Treatment Effect Heterogeneity}
        
		\author{Haokun Lu\thanks{Department of Business Statistics and Econometrics,
				Guanghua School of Management, Peking University, Beijing, 100871, China. Email: \texttt{hklu25@stu.pku.edu.cn}.}\\Peking University \and Xiaojun Song\thanks{Department of Business Statistics and Econometrics, Guanghua School of Management, 
                Peking	University, Beijing, 100871, China. Email: \texttt{sxj@gsm.pku.edu.cn}. This work was supported by the National Natural Science Foundation of China [Grant Numbers 72373007 and 72333001]. The author also gratefully acknowledges the research support from the Center for Statistical Science of Peking University, China, and the Key Laboratory of Mathematical Economics and Quantitative Finance (Peking University) of the Ministry of Education, China.}\\Peking University}
		\maketitle

    \begin{abstract}
        We propose a nonparametric integrated conditional moment (ICM) test for treatment effect heterogeneity across subpopulations defined by a given covariate subvector. Under unconfoundedness, the null is recast as a conditional moment restriction based on a Neyman-orthogonal score, which reduces the first-order sensitivity of the empirical process to nuisance parameter estimation. The test statistics are constructed as continuous functionals of a marked empirical process. We establish a uniform feasible-to-oracle approximation and derive the asymptotic properties of these test statistics under the null and fixed alternatives. We further show that the test has nontrivial power against local alternatives converging to the null at the \(n^{-1/2}\) rate, and develop an easy-to-implement multiplier bootstrap for feasible inference. We also develop extensions to tests of parametric CATE specifications and to settings with endogenous treatment and a binary instrument. Finally, we apply the proposed testing approach to study whether the effect of maternal smoking during pregnancy on infant birth weight varies with maternal age.           \par\vspace{0.5em}
            
            \noindent\textbf{Keywords}: Doubly Robust; Empirical Processes; Integrated Conditional Moments; Neyman Orthogonality; Treatment Effect Heterogeneity.
	
	\noindent\textbf{JEL Classification Number:} C12; C14; C15.

		\end{abstract}
	\end{spacing}

	\thispagestyle{empty}
	
	\newpage
	
	\normalsize
	
	\section{Introduction}\label{sec:intro}

In program evaluation, it has long been recognized that treatment effects may vary systematically across individuals and subpopulations with different observed characteristics; see, for example, \citet{heckman1985alternative}, \citet{heckman1997making}, and \citet{imbens2009recent}. Such heterogeneity matters because an intervention that is beneficial on average may be less effective, ineffective, or even harmful for particular groups. Average treatment effects remain useful summaries, but they do not reveal how effects vary with economically meaningful characteristics such as age, income, education, or baseline risk. Understanding these differences is therefore important for interpreting empirical findings and assessing their external validity, and may also inform subsequent treatment-assignment decisions; see, for example, \citet{manski2004statistical}.

A natural way to study such heterogeneity is through the conditional average treatment effect (CATE), which characterizes how the average effect of a treatment varies across subpopulations defined by observed characteristics. In observational studies, however, the covariates needed for causal identification need not coincide with those along which treatment-effect heterogeneity is substantively of interest. Let \(X\) denote the full vector of pretreatment covariates used to adjust for treatment selection, and let \(Z\subseteq X\) denote the prespecified covariates of interest that define the relevant subpopulations. Thus, \(X\) is chosen to support a credible selection-on-observables strategy, whereas \(Z\) reflects the substantive heterogeneity question under study. This distinction is common in empirical work: researchers may require a relatively rich set of controls to account for treatment selection, while being primarily interested in how the treatment effect varies with a particular characteristic such as age, education, income, prior earnings, or baseline risk. In our empirical application, for example, maternal and pregnancy characteristics are used for selection adjustment, while the heterogeneity of interest is with respect to mother's age.

In this setting, a growing literature has developed methods for estimating and conducting inference on the CATE indexed by covariates of interest. \citet{abrevaya2015estimating} propose inverse-probability-weighted estimators and pointwise inference when the covariates defining the CATE form a subset of those used for selection adjustment. \citet{lee2017doubly} develop doubly robust estimators and uniform confidence bands, while \citet{fan2022estimation} accommodate high-dimensional first-step estimation through orthogonal scores and provide both pointwise and uniform inference. These methods allow researchers to estimate and visualize how treatment effects vary with \(Z\), together with the associated statistical uncertainty. A nonflat estimated CATE profile, however, need not by itself imply statistically significant heterogeneity. This observation motivates a direct specification question: whether the CATE is constant over \(Z\). Beyond constancy, one may further ask whether systematic variation in the CATE can be adequately summarized by a parsimonious functional form.

In this paper, we develop a nonparametric test for CATE heterogeneity with respect to prespecified covariates of interest. We first rewrite the homogeneity null as a conditional moment restriction. Following the integrated conditional moment (ICM) approach pioneered by \citet{bierens1982consistent}, we transform this restriction into a continuum of unconditional moment conditions indexed by the covariates of interest. The sample analog of these moments defines a marked empirical process, and we construct Kolmogorov--Smirnov and Cramér--von Mises test statistics as continuous functionals of this process. Importantly, the proposed procedure works directly with the moment restrictions and does not require nonparametrically estimating or smoothing the CATE function itself.

The main econometric difficulty arises because the conditional moment restriction depends on unknown nuisance functions, including the propensity score and the two conditional outcome regressions. We construct the test using an augmented inverse-probability-weighted, or doubly robust, score \citep{robins1994estimation,hahn1998role,bang2005doubly} and show that the resulting population moments are Neyman-orthogonal with respect to all three nuisance functions. They are therefore locally insensitive to first-order perturbations of these functions. We estimate the propensity score using the series logit estimator of \citet{hirano2003efficient} and the outcome regressions by sieve least squares. Under primitive smoothness and series-rate conditions, we establish a uniform feasible-to-oracle approximation: the feasible process based on the estimated nuisance functions is uniformly asymptotically equivalent to an oracle process constructed from the population doubly robust score. Building on this approximation, we derive the weak limit of the test process under the null, establish consistency against fixed alternatives, and obtain nontrivial power against alternatives approaching homogeneity at the \(n^{-1/2}\) rate. The same oracle representation also supports a simple and computationally efficient multiplier bootstrap. Because the effects of first-step estimation are uniformly asymptotically negligible, the bootstrap can be implemented by resampling the estimated score residuals while holding the nuisance estimates fixed, thereby avoiding repeated estimation of the propensity score and outcome regressions across bootstrap draws.

To accommodate related specification and identification questions, we consider two extensions. First, we develop tests of whether the CATE belongs to a prespecified parametric family, which may include, for example, linear or quadratic forms. Second, building on the local average treatment effect framework of \citet{imbens1994identification}, we adapt the proposed approach to test homogeneity of the conditional LATE when treatment is endogenous and a binary instrument is available. Monte Carlo experiments show that the tests have rejection frequencies close to nominal levels under the null and increasing power against a range of heterogeneous alternatives as the sample size grows. We then apply the method to North Carolina vital-statistics data to examine whether the effect of maternal smoking on infant birth weight varies with mother's age. The homogeneity null is not rejected for Black mothers but is strongly rejected for White mothers. For White mothers, a linear CATE in age is not rejected at the $5\%$ level, suggesting that the detected heterogeneity can be summarized approximately by a linear age profile.

The proposed procedures are related to a growing literature on formal tests of treatment-effect restrictions. A natural starting point is \citet{crump2008nonparametric}, an early and influential contribution that develops direct nonparametric tests of zero and homogeneous conditional average treatment effects. Their hypotheses concern the CATE conditional on the full covariate vector \(X\), and their statistics are constructed from series estimates of the treated and untreated conditional outcome regressions. Our procedure differs in both the conditioning structure and the construction of the test statistic. We use the full vector \(X\) for selection adjustment while studying heterogeneity with respect to prespecified covariates \(Z\subseteq X\), and construct a marked empirical process from a doubly robust score rather than directly from estimated outcome regression functions. Also working with restrictions conditional on the full covariate vector, \citet{chang2015nonparametric} develop kernel-based tests for a broad class of conditional equality and inequality restrictions, while \citet{delgado2013conditional} study bootstrap tests of conditional stochastic dominance.

When the heterogeneity of interest is indexed by selected covariates \(Z\subseteq X\), \citet{hsu2017consistent} tests whether the CATE is nonnegative for every value of \(Z\), formulating the null as a conditional moment inequality and drawing on \citet{andrews2013inference} to conduct inference through a collection of unconditional moment inequalities. \citet{sant2021nonparametric} develops tests of zero and homogeneous treatment effects for possibly right-censored duration outcomes, while \citet{cai2025nonparametric} study homogeneity of partially conditional quantile treatment effects using nonparametric estimators of the corresponding quantile-effect functions. More recently, \citet{dukes2026nonparametric} propose nonparametric tests of quantitative and qualitative CATE heterogeneity, motivated by applications in precision medicine and by the policy consequences of individualized treatment rules. Their framework evaluates heterogeneity through contrasts over prespecified classes of treatment rules, with the choice of rule class determining the alternatives to which the tests are most sensitive. We instead formulate constancy and more general prespecified functional restrictions on the mean CATE as conditional moment restrictions and use the ICM principle to construct Kolmogorov--Smirnov and Cramér--von Mises tests that are consistent against fixed alternatives.

The remainder of the paper is organized as follows. Section \ref{sec:cate} formulates the testing problem, introduces the orthogonal ICM process, and defines the KS and CvM statistics. Section \ref{sec:asy} establishes the feasible-to-oracle approximation and the null, fixed-alternative, and local-alternative limits. Section \ref{sec:mb} develops the multiplier bootstrap. Section \ref{sec:extension} presents the parametric-form and conditional-LATE extensions. Sections \ref{sec:simu} and \ref{sec:empirical} report the Monte Carlo study and empirical illustration, respectively. Proofs and additional simulation results are collected in the Online Appendix.

\textsc{Notation}. For an index set \(\mathcal A\), let \(\ell^\infty(\mathcal A)\) denote the Banach space of bounded real-valued functions on \(\mathcal A\), equipped with the supremum norm \(\|f\|_\infty:=\sup_{a\in\mathcal A}|f(a)|\). We write \(X_n\rightsquigarrow X\) for weak convergence of random elements in \(\ell^\infty(\mathcal A)\) in the Hoffmann--J{\o}rgensen sense; see, for example, Definition 1.3.3 of \citet{van1996weak}. Convergence in probability and in distribution are denoted by \(\xrightarrow{p}\) and \(\xrightarrow{d}\), respectively. For a deterministic sequence \(a_n>0\) and random elements \(X_n\) in a normed space, \(X_n=o_p(a_n)\) means that \(\|X_n\|/a_n\xrightarrow{p}0\), whereas \(X_n=O_p(a_n)\) means that \(\|X_n\|/a_n\) is bounded in probability.

\section{Testing Framework}
\label{sec:cate}

We first introduce the framework considered in this paper. We work with the potential-outcome setup of \citet{rubin1974estimating}. Let \(D\in\{0,1\}\) denote the treatment indicator, where \(D=1\) if an individual receives the treatment and \(D=0\) otherwise. Let \(Y(1)\) and \(Y(0)\) denote the potential outcomes under treatment and control, respectively. The observed outcome is \(Y=DY(1)+(1-D)Y(0)\). Let \(X\) be a vector of pretreatment covariates. Throughout the paper, we observe an i.i.d. sample \(W_i:=(Y_i,D_i,X_i)\), \(i=1,\ldots,n\).

Let \(Z\) denote a subvector of \(X\) containing the covariates of interest. In many applications, \(Z\) is low-dimensional even when \(X\) is relatively rich. Define the conditional average treatment effect given \(Z=z\) as $\tau(z):=\E[Y(1)-Y(0)\mid Z=z]$. In this section, we are interested in testing whether the treatment effect is heterogeneous with respect to \(Z\). Formally, we test
\begin{equation}
\label{eq:test_het}
    \mathbb H_0:\ \exists \tau\in\mathcal T \text{ such that } \tau(z)=\tau \text{ for a.e. } z\in\mathcal Z,
\quad \text{versus}\quad
\mathbb H_1:\ \mP\{\tau(Z)\neq \tau\}>0, \forall  \tau\in\mathbb R.
\end{equation}

To identify \(\tau(z)\) from the observed data, we impose the following standard conditions.

\begin{assumption}[Unconfoundedness]
\label{ass:unconfd}
\((Y(0),Y(1))\perp D\mid X\).
\end{assumption}

\begin{assumption}[Overlap]
\label{ass:overlap}
Let \(p(x):=\mP(D=1\mid X=x)\) denote the propensity score. There exists a constant \(\epsilon>0\) such that $\epsilon\le p(X)\le 1-\epsilon$ a.s.
\end{assumption}

Assumption \ref{ass:unconfd} is the selection-on-observables restriction introduced by \citet{rosenbaum1983central}: conditional on the vector of pre-treatment covariates \(X\), treatment assignment is independent of the potential outcomes. Assumption \ref{ass:overlap} rules out regions of the covariate support in which treatment status is deterministic. Together, these assumptions allow us to express the CATE in terms of observed-data nuisance functions.

Let \(\mu_d(x):=\E[Y\mid D=d,X=x]\), \(d=0,1\). 
To reduce the first-order impact of nuisance estimation on the subsequent testing procedure, we work with the doubly robust score representation (also known as the augmented inverse-probability-weighted form); see, e.g., \citet{robins1994estimation}, \citet{hahn1998role}, and \citet{bang2005doubly}. Its Neyman-orthogonality property will be formalized in Proposition \ref{prop:dr-gateaux} below. For a generic nuisance value \(\eta=(q,m_1,m_0)\), define
\[
\psi(W;\eta)
:=
m_1(X)-m_0(X)
+\frac{D\{Y-m_1(X)\}}{q(X)}
-\frac{(1-D)\{Y-m_0(X)\}}{1-q(X)}.
\]
Let \(\eta_0=(p,\mu_1,\mu_0)\) denote the true nuisance value. Under Assumptions \ref{ass:unconfd} and \ref{ass:overlap}, the CATE is identified by
\[
\tau(z)
=
\E[\psi(W;\eta_0)\mid Z=z].
\]
Consequently, the average treatment effect is identified as
\[
\tau_0:=\E[Y(1)-Y(0)]=\E[\tau(Z)]=\E[\psi(W;\eta_0)].
\]
Under the null in \eqref{eq:test_het}, the constant must equal \(\tau_0\) by the law of iterated expectations. Therefore, the testing problem in \eqref{eq:test_het} can be equivalently represented as the following conditional moment restriction:
\begin{equation}
    \label{eq:cate_cmr}
    \mathbb H_0:\E[\psi(W;\eta_0)-\tau_0\mid Z]=0
\quad \text{a.s.},\quad \text{versus}\quad \mathbb H_1:\mP\!\left(
\E[\psi(W;\eta_0)-\tau_0\mid Z]=0
\right)<1 .
\end{equation}

Following \citet{bierens1982consistent}, \citet{bierens1997asymptotic}, and \citet{stute1997nonparametric}, we adopt an integrated conditional moment (ICM) approach. Specifically, we test the conditional moment restriction in \eqref{eq:cate_cmr} through the family of unconditional moments
\begin{equation}
\E\!\left[\{\psi(W;\eta_0)-\tau_0\}w(Z,z)\right]=0,
\qquad z\in\Pi,
\label{eq:cate_umr}
\end{equation}
where \(\Pi\) is an indexing set and \(w(\cdot,z)\) is a measurable weighting function indexed by \(z\). Under suitable conditions on the class \(\{w(\cdot,z):z\in\Pi\}\), the family of unconditional moments in \eqref{eq:cate_umr} is equivalent to the conditional moment restriction in \eqref{eq:cate_cmr}; see \citet{bierens1997asymptotic} and \citet{escanciano2006goodness}. In this paper, following \citet{stute1997nonparametric}, we use the lower-orthant indicator weight \(w(Z,z)=\1\{Z\le z\}\) with \(\Pi=\mathcal Z\).\footnote{If \(Z\) is vector-valued, the inequality is understood componentwise, i.e., \(\1\{Z\le z\}=\1\{Z_1\le z_1,\ldots,Z_{d_z}\le z_{d_z}\}\).}

We now explain why we use the doubly robust score in the ICM moments. The key motivation is Neyman orthogonality. Moment conditions with this property have reduced first-order sensitivity to nuisance perturbations. Formally, Neyman orthogonality means that the moment condition has zero G\^ateaux derivative with respect to the nuisance functions. In the present setting, define, for each \(z\in\mathcal Z\),
$M(z;\eta,\tau)
:=
\E\!\left[
\{\psi(W;\eta)-\tau\}\1\{Z\le z\}
\right].$
The result is presented in the following proposition.

\begin{proposition}[Neyman orthogonality of the ICM moments]
\label{prop:dr-gateaux}
Suppose Assumptions \ref{ass:unconfd} and \ref{ass:overlap} hold. Let
\(\eta=(q,m_1,m_0)\) be an arbitrary nuisance value such that \(q\) is bounded away from zero and one. For \(r\in[0,1]\), define the path
\(\eta^r:=\eta_0+r(\eta-\eta_0)\), where \(\eta_0=(p,\mu_1,\mu_0)\). Suppose that, for each \(z\in\mathcal Z\),
\(\sup_{r\in[0,1]}\left|\partial[\{\psi(W;\eta^r)-\tau\}\1\{Z\le z\}]/\partial r\right|\) is integrable. Then
\[
\left.
\frac{\partial}{\partial r}
M(z;\eta^r,\tau)
\right|_{r=0}
=0,
\qquad z\in\mathcal Z,
\]
where \(\tau\) need not equal its true value. Moreover, for any fixed nuisance value \(\eta\) and any scalar direction \(h_\tau\in\mathbb R\),
\[
\left.
\frac{d}{dt}
M(z;\eta,\tau_0+t h_\tau)
\right|_{t=0}
=
-h_\tau F_Z(z),
\qquad z\in\mathcal Z.
\]
\end{proposition}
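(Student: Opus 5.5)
The plan is to compute the pathwise derivative explicitly and then kill each term by conditioning on \((D,X)\). Write \(\Delta=\eta-\eta_0=(\delta_q,\delta_1,\delta_0)\), with \(\delta_q=q-p\) and \(\delta_d=m_d-\mu_d\). Along the path we have \(q^r=p+r\delta_q\) and \(m_d^r=\mu_d+r\delta_d\). Since \(p\) and \(q\) are both bounded away from zero and one, so is \(q^r\) uniformly in \(r\in[0,1]\), by convexity. The integrability hypothesis then permits differentiation under the expectation by dominated convergence (mean value theorem plus the integrable envelope). This gives
\[
\left.\frac{\partial}{\partial r}M(z;\eta^r,\tau)\right|_{r=0}
=\E\!\left[\left.\frac{\partial}{\partial r}\psi(W;\eta^r)\right|_{r=0}\1\{Z\le z\}\right].
\]
The term \(\tau\) does not depend on \(r\), which is why \(\tau\) need not be the true value.

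Next I would compute the derivative of \(\psi\) at \(r=0\) directly:
\[
\delta_1(X)-\delta_0(X)-\frac{D\,\delta_1(X)}{p(X)}+\frac{(1-D)\,\delta_0(X)}{1-p(X)}
-\frac{D\{Y-\mu_1(X)\}\delta_q(X)}{p(X)^2}-\frac{(1-D)\{Y-\mu_0(X)\}\delta_q(X)}{\{1-p(X)\}^2}.
\]
Because \(Z\) is a subvector of \(X\), the weight \(\1\{Z\le z\}\) is \(X\)-measurable. I can therefore apply the law of iterated expectations conditioning on \(X\), and for the residual terms on \((D,X)\).

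Conditioning on \(X\), \(\E[D\mid X]=p(X)\) and \(\E[1-D\mid X]=1-p(X)\). So the \(\delta_1\) terms give \(\delta_1-\delta_1=0\), and the \(\delta_0\) terms give \(-\delta_0+\delta_0=0\). For the propensity-direction terms, \(D\{Y-\mu_1(X)\}\) has conditional mean \(D\{\E[Y\mid D=1,X]-\mu_1(X)\}=0\) given \((D,X)\), and similarly for the control term. Hence each term has zero expectation after multiplying by the \(X\)-measurable factor \(\delta_q(X)\1\{Z\le z\}/p(X)^2\), which establishes the first claim. Unconfoundedness is used only to interpret \(\mu_d\); the orthogonality itself relies only on the definitions of \(p\) and \(\mu_d\) as conditional expectations.

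The second claim is immediate, since \(M(z;\eta,\tau_0+th_\tau)=\E[\psi(W;\eta)\1\{Z\le z\}]-(\tau_0+th_\tau)F_Z(z)\) is affine in \(t\). Its derivative is \(-h_\tau\,\E[\1\{Z\le z\}]=-h_\tau F_Z(z)\), which requires only integrability of \(\psi(W;\eta)\1\{Z\le z\}\), implied by the boundedness of \(q\) together with the maintained moment conditions.

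The main obstacle is justifying the interchange of derivative and expectation and ensuring that every product term is integrable so that iterated expectations apply. The stated envelope condition is exactly what handles the interchange. For the remaining integrability, I would rely on the overlap bound \(1/p\le 1/\epsilon\). The other points are bookkeeping: the \(X\)-measurability of \(\1\{Z\le z\}\), which is crucial, and the sign conventions in the derivative of \(1/q^r\) and \(1/(1-q^r)\).
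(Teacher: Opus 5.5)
Your proposal is correct and follows essentially the same route as the paper. You interchange derivative and expectation via the stated envelope and compute the same pathwise derivative of \(\psi\) at \(r=0\). You then kill every term using the \(X\)-measurability of \(\1\{Z\le z\}\), together with \(\E[D\mid X]=p(X)\) and \(\E[Y-\mu_d(X)\mid D=d,X]=0\), and you treat the \(\tau\)-direction through affinity in \(t\); the paper does all of this too. The differences are cosmetic: the paper first collects the \(\delta_1,\delta_0\) terms into \(-\{\delta_1/p+\delta_0/(1-p)\}\{D-p(X)\}\) and conditions on \(X\) alone, whereas you condition on \((D,X)\) for the residual terms.
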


The first part of Proposition \ref{prop:dr-gateaux} gives Neyman orthogonality with respect to the infinite-dimensional nuisance functions. Thus, the ICM moments based on the doubly robust score are locally insensitive to first-order perturbations of \(p\), \(\mu_1\), and \(\mu_0\). The second part shows that estimation of the scalar parameter \(\tau_0\) contributes the projection term \(F_Z(z)\). This term will appear in the oracle asymptotic representation, although the test itself is constructed using the uncentered lower-orthant indicator.

We now construct the feasible test process. Let \(\hat\eta=(\hat p,\hat\mu_1,\hat\mu_0)\) be estimators of \(\eta_0=(p,\mu_1,\mu_0)\). Define the feasible doubly robust score
\begin{equation}
    \label{eq:hat_psi}
    \hat\psi_i
:=
\psi(W_i;\hat\eta)
=
\hat\mu_1(X_i)-\hat\mu_0(X_i)
+\frac{D_i\{Y_i-\hat\mu_1(X_i)\}}{\hat p(X_i)}
-\frac{(1-D_i)\{Y_i-\hat\mu_0(X_i)\}}{1-\hat p(X_i)}.
\end{equation}
The average treatment effect is estimated by
\begin{equation}
    \label{eq:hat_tau}
    \hat\tau:=\frac1n\sum_{i=1}^n \hat\psi_i.
\end{equation}
Multiplying the sample analog of \eqref{eq:cate_umr} by \(\sqrt n\), we obtain the feasible ICM process
\begin{equation}
\widehat R_n(z)
:=
\frac1{\sqrt n}\sum_{i=1}^n
(\hat\psi_i-\hat\tau)\1\{Z_i\le z\},
\qquad z\in\mathcal Z.
\label{eq:cate_process}
\end{equation}
Because \(n^{-1}\sum_{i=1}^n(\hat\psi_i-\hat\tau)=0\), \eqref{eq:cate_process} is algebraically equivalent to the empirical-centered form
\begin{equation}
    \label{eq:equiv}
    \widehat R_n(z)
    =\frac1{\sqrt n}\sum_{i=1}^n
    \hat\psi_i
    \{\1\{Z_i\le z\}-\widehat F_{Z,n}(z)\}=
    \frac1{\sqrt n}\sum_{i=1}^n
    (\hat\psi_i-\hat\tau)
    \{\1\{Z_i\le z\}-\widehat F_{Z,n}(z)\},
\end{equation}
where \(\widehat F_{Z,n}(z):=n^{-1}\sum_{i=1}^n\1\{Z_i\le z\}\). Hence the process is implemented with the intuitive uncentered indicator, while the centering induced by \(\hat\tau\) is built in automatically.

By the equivalence between the conditional moment restriction in \eqref{eq:cate_cmr} and the ICM moments in \eqref{eq:cate_umr}, departures from \(\mathbb H_0\) are reflected in systematic deviations of \(\widehat R_n\) from zero. We therefore use the Kolmogorov--Smirnov and Cram\'er--von Mises functionals
\[
KS_n:=\sup_{z\in\mathcal Z}|\widehat R_n(z)|,
\qquad
CvM_n:=\int_{\mathcal Z}\widehat R_n(z)^2\,d\widehat F_{Z,n}(z)
=
\frac1n\sum_{i=1}^n \widehat R_n(Z_i)^2.
\]
The null hypothesis is rejected for sufficiently large values of either statistic. 

The construction above is stated for generic first-step estimators. In the following section, we specify the sieve estimators used to construct \(\hat\eta\), substitute them into \eqref{eq:hat_psi}--\eqref{eq:cate_process}, and establish the feasible-to-oracle approximation that underlies the asymptotic theory.

\section{Asymptotic Theory}
\label{sec:asy}

We now specify the first-step estimators and derive the asymptotic properties of the feasible ICM process. The propensity score is estimated by the series logit estimator of \citet{hirano2003efficient}. This estimator has been widely used in semiparametric treatment-effect analysis, both for estimation and for testing. It combines flexible nonparametric approximation with the logistic link, thereby keeping the estimated propensity score in the unit interval; see also \citet{firpo2007efficient}, \citet{firpo2016identification}, \citet{hsu2017consistent}, \citet{sant2021nonparametric}, and \citet{cai2025nonparametric}. The outcome regressions are estimated by sieve least squares on the same series space. Formally, let \(r\) denote the dimension of \(X\), and let
\(\varphi=(\varphi_1,\ldots,\varphi_r)'\in\mathbb Z_+^r\) be a vector of nonnegative integers. 
Write \(|\varphi|=\sum_{j=1}^r\varphi_j\), and let \(\{\varphi(k)\}_{k=1}^\infty\) be a sequence containing all distinct elements of \(\mathbb Z_+^r\), ordered so that \(|\varphi(k)|\) is nondecreasing in \(k\). For \(x^\varphi=\prod_{j=1}^r x_j^{\varphi_j}\), define $R^K(x):=\bigl(x^{\varphi(1)},\ldots,x^{\varphi(K)}\bigr)'$,
where \(K\) denotes the number of terms in the series. Let \(\Lambda(a)=\exp(a)/(1+\exp(a))\). The series logit estimator of the propensity score is $\hat p(x)=\Lambda\{R^K(x)'\hat\pi_K\}$, where
\[
\hat\pi_K
=
\arg\max_{\pi_K}
\frac1n\sum_{i=1}^n
\left[
D_i\log\Lambda\{R^K(X_i)'\pi_K\}
+
(1-D_i)\log\{1-\Lambda(R^K(X_i)'\pi_K)\}
\right].
\]
Given \(\hat p\), we estimate the outcome regressions by sieve least squares. Specifically,
\[
\hat\mu_1(x)
=
R^K(x)'
\left(\sum_{i=1}^n R^K(X_i)R^K(X_i)'\right)^{-1}
\left(\sum_{i=1}^n \frac{D_iY_i}{\hat p(X_i)}R^K(X_i)\right),
\]
and
\[
\hat\mu_0(x)
=
R^K(x)'
\left(\sum_{i=1}^n R^K(X_i)R^K(X_i)'\right)^{-1}
\left(\sum_{i=1}^n \frac{(1-D_i)Y_i}{1-\hat p(X_i)}R^K(X_i)\right).
\]
These estimators define \(\hat\eta=(\hat p,\hat\mu_1,\hat\mu_0)\), and the feasible score, ATE estimator, and ICM process are then obtained from \eqref{eq:hat_psi}--\eqref{eq:cate_process}.

To establish the asymptotic properties of the feasible ICM process under the sieve estimators above, we impose the following additional regularity conditions.

\begin{assumption}[Distribution and outcome smoothness]
\label{ass:dist}
\quad
\begin{itemize}
    \setlength{\itemsep}{0pt} 
    \setlength{\parskip}{0pt} 
    \setlength{\parsep}{0pt}
    \item[(i)] The support of the \(r\)-dimensional covariate vector \(X\) is a Cartesian product of compact intervals, \(\mathcal X=\prod_{j=1}^{r}[x_{lj},x_{uj}]\). Moreover, \(X\) admits a density \(f\) on \(\mathcal X\) such that \(0<\underline c\le f(x)\le \bar c<\infty\) for all \(x\in\mathcal X\).
    
    \item[(ii)] For \(d=0,1\), \(\E[|Y(d)|^2]<\infty\), and \(\sup_{x\in\mathcal X}\Var(Y(d)\mid X=x)<\infty\). 
    
    \item[(iii)] The conditional mean functions \(\mu_d(x):=\E[Y\mid D=d,X=x]\), \(d=0,1\), are \(s_\mu\)-times continuously differentiable on \(\mathcal X\), with \(s_\mu>r/2\).
\end{itemize}
\end{assumption}

\begin{assumption}[Propensity-score smoothness]
\label{ass:pscore}
The propensity score \(p(x)\) is \(s\)-times continuously differentiable on \(\mathcal X\), with \(s>5r\).
\end{assumption}

\begin{assumption}[Rates for the series estimator]
\label{ass:SLE}
The number of terms in the series satisfies \(K=a\cdot n^\nu\) for some \(a>0\) and $\max\left\{
{r}/{(2s-4r)},
{r}{/(s+2s_\mu)}
\right\}
<\nu<1/6 $.
\end{assumption}

Similar assumptions have been adopted by \citet{hirano2003efficient}, \citet{crump2008nonparametric}, \citet{hsu2017consistent}, and \citet{sant2021nonparametric}, among others. Assumption \ref{ass:dist} requires the covariates in \(X\) to be continuous. This restriction is imposed mainly to keep the notation and arguments simple. At the expense of additional notation, the framework can accommodate covariates with both continuous and discrete components. In that case, the same procedure can be applied within cells defined by the discrete covariates, or equivalently by interacting the continuous power-series basis with indicators for the discrete covariate values. Similarly, if the covariate of interest \(Z\) contains discrete components, the lower-orthant instrument class can be augmented by indicators for the corresponding discrete cells. See Remark 2 of \citet{hirano2003efficient} and Remark 3.2 of \citet{hsu2017consistent} for related discussions.

The smoothness requirements for \(\mu_d(\cdot)\) and \(p(\cdot)\) in Assumptions \ref{ass:dist} and \ref{ass:pscore} reflect the doubly robust structure of the feasible process. In \citet{hirano2003efficient}, the estimator is IPW-based, and the primitive smoothness requirement is concentrated on the propensity score: \(p(\cdot)\) is assumed to be \(s\)-times continuously differentiable with \(s\ge 7r\), while the outcome regressions are only required to be continuously differentiable. By contrast, \citet{crump2008nonparametric} construct their tests from series estimators of the two outcome regression functions; they require \(\mu_w(\cdot)\) to be \(s\)-times continuously differentiable with \(s/r>25/4\), while the propensity score enters only through overlap. Our feasible ICM process uses both \(\hat p\) and \(\hat\mu_d\) through the doubly robust score. By Proposition \ref{prop:dr-gateaux}, the DR score removes their leading first-order effects in the ICM moments. The remaining plug-in effects are controlled uniformly through the convergence rates of the first-step estimators, which motivates the smoothness conditions \(s>5r\) for \(p(\cdot)\) and \(s_\mu>r/2\) for \(\mu_d(\cdot)\). Finally, the restrictions in Assumptions \ref{ass:dist} and \ref{ass:pscore} guarantee the existence of a \(\nu\) satisfying the conditions in Assumption \ref{ass:SLE}.

The next proposition establishes the key feasible-to-oracle approximation: under the maintained regularity conditions, the feasible ICM process based on estimated nuisance functions is uniformly asymptotically equivalent to its oracle counterpart based on the population doubly robust score.

\begin{proposition}
\label{prop:ICM-process-uniform}
Suppose Assumptions \ref{ass:unconfd}-\ref{ass:SLE} hold. Then, 
$$
\sup_{z\in\mathcal Z}
\left|
\widehat R_n(z)
-
\frac{1}{\sqrt n}\sum_{i=1}^n
\left(\psi(W_i;\eta_0)-\tau_0\right)\bigl(\1\{Z_i\le z\}-F_Z(z)\bigr)
\right|
=o_p(1).
$$
\end{proposition}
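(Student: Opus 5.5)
Using the empirical-centered form \eqref{eq:equiv}, write
\[
\widehat R_n(z)=\frac1{\sqrt n}\sum_{i=1}^n \hat\psi_i\{\1\{Z_i\le z\}-\widehat F_{Z,n}(z)\}.
\]
Set \(\psi_i:=\psi(W_i;\eta_0)\) and \(\Delta_i:=\hat\psi_i-\psi_i\). Then \(\widehat R_n(z)=A_n(z)+B_n(z)\), where
\[
A_n(z)=n^{-1/2}\sum_i\psi_i\{\1\{Z_i\le z\}-\widehat F_{Z,n}(z)\},
\qquad
B_n(z)=n^{-1/2}\sum_i\Delta_i\{\1\{Z_i\le z\}-\widehat F_{Z,n}(z)\}.
\]

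\textbf{The oracle term \(A_n\).} Since \(\sum_i\{\1\{Z_i\le z\}-\widehat F_{Z,n}(z)\}=0\), we may replace \(\psi_i\) by \(\psi_i-\tau_0\). Then
\[
A_n(z)-\text{oracle}(z)=-\Big(n^{-1/2}\sum_i(\psi_i-\tau_0)\Big)\{\widehat F_{Z,n}(z)-F_Z(z)\}.
\]
This is \(O_p(1)\cdot o_p(1)\) uniformly in \(z\), by the CLT (Assumption \ref{ass:dist}(ii) and overlap give \(\E\psi^2<\infty\)) and the Glivenko--Cantelli theorem for lower orthants.

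\textbf{The remainder \(B_n\).} It remains to show \(\sup_z|B_n(z)|=o_p(1)\). Write \(g_z(Z)=\1\{Z\le z\}-F_Z(z)\). Replacing \(\widehat F_{Z,n}\) by \(F_Z\) costs
\[
\sup_z|\widehat F_{Z,n}-F_Z|\cdot\Big|n^{-1/2}\sum_i\Delta_i\Big|.
\]
The second factor equals \(\sqrt n(\hat\tau-\tilde\tau)\), where \(\tilde\tau=n^{-1}\sum_i\psi_i\); this is \(O_p(1)\) (indeed \(o_p(1)\)) by the standard DR/series analysis, which is in fact the \(z=\infty\) special case of the argument below. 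So it suffices to control \(n^{-1/2}\sum_i\Delta_i g_z(Z_i)\) uniformly.

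Expanding \(\psi\) in \(\eta\) exactly, \(\Delta_i\) splits into three pieces.

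\emph{(a) Linear pieces.} These are
\[
(\hat\mu_1-\mu_1)(X_i)\{1-D_i/p(X_i)\},\quad
-(\hat\mu_0-\mu_0)(X_i)\{1-(1-D_i)/(1-p(X_i))\},\quad
-D_i\varepsilon_{1i}(\hat p-p)(X_i)/p(X_i)^2,
\]
together with the analogous control term. Here \(\varepsilon_{di}=Y_i-\mu_d(X_i)\), and all of these have conditional mean zero given \(X_i\).

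\emph{(b) Second-order pieces.} These are products such as \((\hat\mu_1-\mu_1)(\hat p-p)/(p\hat p)\), plus the term \(D\varepsilon_1(\hat p-p)^2/(p^2\hat p)\).

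For (b), take absolute values, which removes the \(z\)-dependence since \(|g_z|\le1\). Using \(\sup_x|\hat p-p|=O_p(\zeta(K)\{\sqrt{K/n}+K^{-s/(2r)}\})\) from \citet{hirano2003efficient}, with \(\zeta(K)=O(K)\) for power series, and the analogous \(L_2\) and sup rates for \(\hat\mu_d\), the product term is bounded by \(\sqrt n\,\|\hat p-p\|_\infty\|\hat\mu_d-\mu_d\|_{2,n}=o_p(1)\). This is exactly where the restrictions \(\nu>r/(s+2s_\mu)\), \(\nu>r/(2s-4r)\) and \(\nu<1/6\) of Assumption \ref{ass:SLE} enter. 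Proposition \ref{prop:dr-gateaux} is the population statement that no first-order terms survive, which is why only these products remain.

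\textbf{Main obstacle: the linear pieces (a) uniformly in \(z\).} The difficulty is that \(\hat\mu_d\) and \(\hat p\) are estimated on the same sample, so one cannot simply condition on the nuisances. I would handle them using the explicit sieve structure. Write \(\hat\mu_1-\mu_1=R^K{}'(\hat\beta-\beta_K)+(\text{approximation bias } b_K)\).

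The linear term then becomes
\[
(\hat\beta-\beta_K)'\,n^{-1/2}\sum_i R^K(X_i)\{1-D_i/p(X_i)\}g_z(Z_i).
\]
The vector process \(n^{-1/2}\sum_i R^K(X_i)\{1-D_i/p\}g_z(Z_i)\) has mean zero, and by a maximal inequality over the VC class of orthants its sup-norm is \(O_p(\zeta(K)\sqrt{K})\) in Euclidean norm, at worst. Combined with \(\|\hat\beta-\beta_K\|=O_p(\sqrt{K/n}+K^{-s_\mu/r})\), the product is \(o_p(1)\) when \(K^{5/2}/\sqrt n\to0\), which is implied by \(\nu<1/6\). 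The bias part \(b_K\) is handled by a Chebyshev/maximal-inequality argument, since \(b_K\) is deterministic and \(\sup|b_K|=O(K^{-s_\mu/r})\to0\), giving an \(o_p(1)\) empirical process.

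The \(\hat p\) linear term is treated the same way, using the Hirano--Imbens--Ridder linearization \(\hat\pi_K-\pi_K=\Sigma^{-1}n^{-1}\sum R^K(D-p)+\text{remainder}\). This yields a degenerate U-statistic-type term whose mean is zero because \(\E[D\varepsilon_1\mid X]=0\), with variance \(O(K/n)\) uniformly, handled by a chaining bound over \(z\). I expect this term to require the most careful bookkeeping.

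Collecting the pieces gives the stated uniform \(o_p(1)\) approximation.
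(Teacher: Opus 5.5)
Your architecture matches the paper's proof. You use the empirical-centered form, replace $\widehat F_{Z,n}$ by $F_Z$ at cost $\sup_z|\widehat F_{Z,n}-F_Z|$, and split $\Delta_i$ into conditionally centered linear pieces plus products. You then factor the random sieve coefficient out of a $z$-indexed vector process bounded by a maximal inequality over orthants, and handle the sieve bias with a deterministic envelope. Your exact expansion of $\Delta_i$ is a regrouping of the paper's IPW-minus-augmentation split: linearizing the term $\mu_1(X_i)\{1/\hat p(X_i)-1/p(X_i)\}\{D_i-p(X_i)\}$ of Lemma~\ref{lem:An112a-control} and combining it with the $(a_i+b_i^\phi)\{\hat p(X_i)-p(X_i)\}$ term of Lemma~\ref{lem:orthogonal-propensity-term} gives exactly your $-D_i\varepsilon_{1i}(\hat p-p)(X_i)/p(X_i)^2$.

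However, two of your rate steps do not close under Assumption~\ref{ass:SLE} as you state them.

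\textbf{First gap.} In (b) you bound the product term by $\sqrt n\,\|\hat p-p\|_\infty\|\hat\mu_d-\mu_d\|_{2,n}$. With $\|\hat p-p\|_\infty=O_p(\sqrt{K^3/n}+K^{1-s/(2r)})$, this contains $K^{3/2-s_\mu/r}$ and $n^{1/2}K^{1-s/(2r)-s_\mu/r}$. These vanish only if $s_\mu>3r/2$ and $\nu>r/(s+2s_\mu-2r)$, and neither is assumed. For example, with $r=1$, $s=6$, $s_\mu=1$, $\nu\in(1/8,1/6)$, all assumptions hold but $K^{3/2-s_\mu/r}=K^{1/2}\to\infty$. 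The fix is Cauchy--Schwarz with both empirical norms, $\sqrt n\,\|\hat\mu_d-\mu_d\|_{2,n}\|\hat p-p\|_{2,n}$. That is where $s_\mu>r/2$ and $\nu>r/(s+2s_\mu)$ actually enter. The sup norm is needed only for the quadratic term $D\varepsilon_1(\hat p-p)^2$, via $\sqrt n\|\hat p-p\|_\infty^2\to0$, which is where $\nu<1/6$ and $\nu>r/(2s-4r)$ enter.

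\textbf{Second gap.} For the $\hat\mu$ linear term you pair $\sup_z\|V_n(z)\|=O_p(\zeta(K)\sqrt K)$ with $\|\hat\beta-\beta_K\|=O_p(\sqrt{K/n}+K^{-s_\mu/r})$, and check only $K^{5/2}/\sqrt n\to0$. The bias component contributes $\zeta(K)\sqrt K\,K^{-s_\mu/r}$, of order $K^{3/2-s_\mu/r}$, which fails in the same example. Because the pseudo-response is $D_iY_i/\hat p(X_i)$, $\hat\beta$ also carries an $O_p(\|\hat p-p\|_{2,n})$ error, as in Lemma~\ref{lem:mu-rate}, but that one is harmless. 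The repair is the sharp bound $\sup_z\|V_n(z)\|=O_p(\sqrt K)$. Apply Pollard's maximal inequality coordinatewise with envelope $C|R^K_j(X)|$ and sum, using $\E[R^K(X)R^K(X)']=I_K$, so that $\E\sup_z\|V_n(z)\|^2\lesssim\sum_{j\le K}\E[R^K_j(X)^2]=K$. The product is then $O_p(\sqrt K\{\sqrt{K/n}+K^{-s/(2r)}+K^{-s_\mu/r}\})$, which is $o_p(1)$ precisely because $s_\mu>r/2$. This is Lemma~\ref{lem:An111a-control}.

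For the $\hat p$ linear piece you do not need the HIR influence-function linearization, a U-statistic, or chaining. The step you flag as hardest is the same one-line argument. As in Lemma~\ref{lem:orthogonal-propensity-term}, split $\hat p-p=(\hat p-p_K^*)+(p_K^*-p)$, with $p_K^*(x)=\Lambda\{R^K(x)'\pi_K^*\}$ the pseudo-true logit fit, and Taylor-expand $\hat p-p_K^*$ in $\hat\pi_K-\pi_K^*$. The linear part is $(\hat\pi_K-\pi_K^*)'V_n^p(z)$, where $V_n^p(z):=n^{-1/2}\sum_iR^K(X_i)\Lambda'\{R^K(X_i)'\pi_K^*\}D_i\varepsilon_{1i}p(X_i)^{-2}g_z(Z_i)$.
\begin{itemize}
\item This process is centered because $\E[D\varepsilon_1\mid X]=0$, so $\sup_z\|V_n^p(z)\|=O_p(\sqrt K)$ and the linear part is $O_p(\sqrt{K/n})\,O_p(\sqrt K)=O_p(K/\sqrt n)$.
\item The quadratic Taylor remainder is $O_p(K^2/\sqrt n)$.
\item The piece $p_K^*-p$ is deterministic, giving an empirical process of order $O_p(K^{-s/(2r)})$.
\end{itemize}
Only $\|\hat\pi_K-\pi_K^*\|=O_p(\sqrt{K/n})$ is used, so the U-statistic route buys nothing here.

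A minor point: the ``$z=\infty$ case'' of your argument is degenerate, since $g_z\equiv0$ there. For the $\widehat F_{Z,n}\to F_Z$ replacement in $B_n$ you only need $n^{-1/2}\sum_i\Delta_i=o_p(\sqrt n)$, which already follows from $n^{-1}\sum_i\Delta_i^2=o_p(1)$.
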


\begin{remark}
\label{rem:orthogonal-ipw}
Proposition \ref{prop:ICM-process-uniform} is the uniform sample-level realization of the population orthogonality in Proposition \ref{prop:dr-gateaux}. The latter shows that the ICM moments based on the doubly robust score are locally insensitive to first-order perturbations of \(p,\mu_1,\mu_0\); the former shows that this orthogonality removes the leading effect of nuisance estimation uniformly over \(z\in\mathcal Z\). Consequently, \(\widehat R_n\) has the same first-order representation as the oracle process based on \(\psi(W;\eta_0)\). The centering term \(F_Z(z)\) has a different source: it reflects the estimation of \(\tau_0\) by \(\hat\tau\). Although \(\widehat R_n\) is defined with the uncentered lower-orthant indicator, \eqref{eq:equiv} shows that it is equivalently an empirical-centered process; hence the population-centered weight \(\1\{Z\le z\}-F_Z(z)\) appears naturally in the oracle representation.

It is also useful to relate the DR formulation to the IPW representation used in \citet{hirano2003efficient}. Under unconfoundedness and overlap, the CATE is also identified by
\[
\tau(z)=\E[\phi(W,p)\mid Z=z],
\qquad
\phi(W,p):=\frac{Y\{D-p(X)\}}{p(X)\{1-p(X)\}}.
\]
However, if one constructs the plug-in process using \(\phi(W,\hat p)\), the estimation error in \(\hat p\) contributes a non-negligible first-order term. Its linearization produces the correction
\(-\{\mu_1(X)/p(X)+\mu_0(X)/(1-p(X))\}\{D-p(X)\}\), which is exactly the augmentation component embedded in the DR score. Consequently,
\[
\sup_{z\in\mathcal Z}
\left|
\frac1{\sqrt n}\sum_{i=1}^n
\{\phi(W_i,\hat p)-\hat\tau_{\mathrm{IPW}}\}\1\{Z_i\le z\}
-
\frac1{\sqrt n}\sum_{i=1}^n
\{\psi(W_i;\eta_0)-\tau_0\}\{\1\{Z_i\le z\}-F_Z(z)\}
\right|
=o_p(1),
\]
where \(\hat\tau_{\mathrm{IPW}}=n^{-1}\sum_{i=1}^n\phi(W_i,\hat p)\). Thus the feasible DR process and the appropriately linearized IPW process share the same oracle limit.

A formal statement and proof of this uniform IPW linearization are provided in Appendix \ref{sec:app_add_HIR}. This result may be of independent interest: it extends the estimated-propensity-score expansion of \citet{hirano2003efficient} from the ATE estimator to the present ICM process, uniformly over \(z\in\mathcal Z\). It also shows that the first-order propensity-score correction required by the plug-in IPW process coincides with the augmentation component of the DR score. We therefore use the DR formulation as our baseline, since it incorporates the required correction directly into the score.
\end{remark}

Building on Proposition \ref{prop:ICM-process-uniform}, the next theorem gives the weak limit of the feasible ICM process under the null hypothesis.

\begin{theorem}
\label{thm:ICM-Gaussian}
Suppose Assumptions \ref{ass:unconfd}-\ref{ass:SLE} hold. Then, under $\mathbb H_0$,
$$
\widehat R_n \rightsquigarrow \mG
\qquad \text{in } \ell^\infty(\mathcal Z),
$$
where $\mG$ is a mean-zero Gaussian process with covariance function
$$
K(z_1,z_2)
=
\E\!\left[
\sigma_\psi^2(Z)
\bigl(\1\{Z\le z_1\}-F_Z(z_1)\bigr)
\bigl(\1\{Z\le z_2\}-F_Z(z_2)\bigr)
\right],
\qquad z_1,z_2\in\mathcal Z,
$$
with $\sigma_\psi^2(Z)
:=
\Var\!\left(\psi(W;\eta_0)\mid Z\right).$
\end{theorem}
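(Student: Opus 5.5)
By Proposition \ref{prop:ICM-process-uniform}, $\widehat R_n = R_n^o + o_p(1)$ uniformly over $\mathcal Z$, where
\[
R_n^o(z):=\frac{1}{\sqrt n}\sum_{i=1}^n \varepsilon_i\bigl(\1\{Z_i\le z\}-F_Z(z)\bigr),\qquad \varepsilon_i:=\psi(W_i;\eta_0)-\tau_0 .
\]
The plan is therefore to prove $R_n^o\rightsquigarrow\mG$ in $\ell^\infty(\mathcal Z)$. The conclusion for $\widehat R_n$ then follows from the Slutsky-type lemma for weak convergence in $\ell^\infty$: if $X_n\rightsquigarrow X$ and $\|Y_n-X_n\|_\infty\to 0$ in outer probability, then $Y_n\rightsquigarrow X$ (Lemma 2.7 type results in van der Vaart and Wellner).

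Next, we make $R_n^o$ a standard empirical process. We write $R_n^o(z)=\mG_n f_z$, where $f_z(w):=\varepsilon(w)\{\1\{z_w\le z\}-F_Z(z)\}$ and $\mG_n=\sqrt n(\mP_n-P)$. This uses that $\E f_z(W)=\E[\varepsilon\,\1\{Z\le z\}]-F_Z(z)\E\varepsilon=0$ under $\mathbb H_0$. Indeed, $\E[\varepsilon\mid Z]=0$ a.s. by \eqref{eq:cate_cmr}, and $\E\varepsilon=0$. Weak convergence then follows once we establish two facts. First, the finite-dimensional distributions converge, which is the multivariate Lindeberg--L\'evy CLT. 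This requires $\E\varepsilon^2<\infty$, and that holds because Assumption \ref{ass:dist}(ii) and the overlap condition in Assumption \ref{ass:overlap} give $\E\psi(W;\eta_0)^2<\infty$; here $\mu_d$ is bounded on the compact $\mathcal X$ by continuity. Second, the class $\mathcal F:=\{f_z:z\in\mathcal Z\}$ is $P$-Donsker.

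For the Donsker property, $\mathcal F$ is contained in $\{\varepsilon\cdot g: g\in\mathcal G_1\}-\{\varepsilon\cdot c: c\in[0,1]\}$, where $\mathcal G_1=\{\1\{Z\le z\}\}$ is a VC class of lower orthants. Multiplying a VC class by the fixed function $\varepsilon$ preserves the uniform entropy bound, with envelope $|\varepsilon|$ that is square integrable. Adding the one-dimensional class $\{cF_Z\ \text{constant}\}\cdot\varepsilon$ and taking a difference keeps the uniform entropy integral finite. Thus Theorem 2.5.2 of van der Vaart and Wellner, or the permanence results in its Section 2.10, gives the Donsker property. The measurability needed is pointwise measurability, since $z$ may be restricted to rational points by right-continuity. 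The limit is a tight mean-zero Gaussian process with covariance $\E[f_{z_1}f_{z_2}]$. Using iterated expectations conditional on $Z$ together with $\E[\varepsilon\mid Z]=0$, we get $\E[\varepsilon^2\mid Z]=\Var(\psi(W;\eta_0)\mid Z)=\sigma_\psi^2(Z)$. This yields exactly $K(z_1,z_2)$.

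The substantive work is entirely in Proposition \ref{prop:ICM-process-uniform}, which we take as given. Within this proof, the only step needing care is the Donsker argument with an unbounded multiplier $\varepsilon$, which is only square integrable. The route above handles it, because the VC-type uniform entropy bound applies with an $L_2(P)$ envelope, so no moment beyond the second is required.
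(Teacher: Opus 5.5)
Your proposal is correct and follows essentially the same route as the paper. You reduce to the oracle process via Proposition \ref{prop:ICM-process-uniform}, write it as $\mathbb G_n f_z$ using $\E[\psi(W;\eta_0)-\tau_0\mid Z]=0$ under $\mathbb H_0$, and obtain the Donsker property from the lower-orthant class weighted by a square-integrable multiplier (the paper's Lemma \ref{lem:centered-lower-orthant-vc}); you then get the covariance by iterated expectations and conclude by Slutsky. Your extra checks, namely $\E\,\psi(W;\eta_0)^2<\infty$ from overlap and boundedness of $\mu_d$, and pointwise measurability, only fill in details the paper leaves implicit.
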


Theorem \ref{thm:ICM-Gaussian}, together with the continuous mapping theorem, gives the asymptotic null distributions of the KS and CvM statistics. For the CvM statistic, the replacement of \(F_Z\) by \(\widehat F_{Z,n}\) is asymptotically negligible; the argument is provided in Appendix.
\begin{corollary}
\label{cor:null-dist}
Suppose Assumptions \ref{ass:unconfd}--\ref{ass:SLE} hold. Then, under $\mathbb H_0$,
$$
KS_n \xrightarrow{d} \sup_{z\in\mathcal Z} |\mathbb G(z)|,
\qquad
CvM_n \xrightarrow{d} \int_{\mathcal Z} \mathbb G(z)^2\,d F_Z(z),
$$
where $\mathbb G(\cdot)$ is the centered Gaussian process defined in Theorem \ref{thm:ICM-Gaussian}.
\end{corollary}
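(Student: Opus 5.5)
The argument combines Theorem \ref{thm:ICM-Gaussian} with the continuous mapping theorem. For the KS statistic, the map \(g\mapsto \sup_{z\in\mathcal Z}|g(z)|\) is continuous, indeed \(1\)-Lipschitz, from \(\ell^\infty(\mathcal Z)\) to \(\mathbb R\). The weak convergence \(\widehat R_n\rightsquigarrow\mG\) in the Hoffmann--J{\o}rgensen sense then gives \(KS_n\xrightarrow{d}\sup_z|\mG(z)|\) directly, by the continuous mapping theorem for outer weak convergence (Theorem 1.3.6 of \citet{van1996weak}).

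For the CvM statistic I would split
\[
CvM_n=\int \widehat R_n^2\,dF_Z+\int \widehat R_n^2\,d(\widehat F_{Z,n}-F_Z)=:A_n+B_n.
\]
The functional \(g\mapsto\int g^2\,dF_Z\) is continuous on \(\ell^\infty(\mathcal Z)\), since
\[
\Bigl|\int g^2dF_Z-\int h^2dF_Z\Bigr|\le \|g-h\|_\infty(\|g\|_\infty+\|h\|_\infty).
\]
Hence \(A_n\xrightarrow{d}\int\mG^2\,dF_Z\), and it remains to show \(B_n=o_p(1)\).

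This last step is the main obstacle, because \(\widehat R_n^2\) is random and is integrated against the random signed measure \(\widehat F_{Z,n}-F_Z\); a sup-norm bound does not control an integral against a measure whose total variation does not vanish. I would first use Proposition \ref{prop:ICM-process-uniform} to replace \(\widehat R_n\) by the oracle process \(R_n^0\), with uniform error \(o_p(1)\). Since \(\|\widehat R_n\|_\infty=O_p(1)\) and each of \(\widehat F_{Z,n}\) and \(F_Z\) has total mass one, the replacement changes \(B_n\) by at most \(2\|\widehat R_n-R_n^0\|_\infty(\|\widehat R_n\|_\infty+\|R_n^0\|_\infty)=o_p(1)\).

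Next, I would use asymptotic tightness of \(R_n^0\) together with the a.s. continuity of the limiting paths of \(\mG\) with respect to the intrinsic semimetric \(\rho(z_1,z_2)\) (the \(L^2\) distance based on \(F_Z\)-weighted indicators). Given \(\varepsilon>0\), choose a finite partition of \(\mathcal Z\) into cells of small \(\rho\)-diameter, with \(\mP\bigl(\sup_{\rho(z_1,z_2)<\delta}|R_n^0(z_1)-R_n^0(z_2)|>\varepsilon\bigr)<\varepsilon\) for large \(n\). On this partition, approximate \((R_n^0)^2\) by a step function \(S_n\) that is constant on each cell. Then
\[
\Bigl|\int S_n\,d(\widehat F_{Z,n}-F_Z)\Bigr|\le \max_{\text{cells}}|R_n^0|^2\cdot\sum_{\text{cells}}|(\widehat F_{Z,n}-F_Z)(\text{cell})|=O_p(1)\cdot o_p(1)
\]
by the Glivenko--Cantelli theorem for rectangles. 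The approximation error \((R_n^0)^2-S_n\) is at most \(2\varepsilon\|R_n^0\|_\infty\) uniformly, and integrating it against a measure of total variation at most \(2\) keeps the error small.

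An alternative route is to view \(B_n\) as \(n^{-1}\sum_i(R_n^0(Z_i))^2-\int(R_n^0)^2dF_Z\) and handle it with an equicontinuity/Glivenko--Cantelli argument on the class \(\{g^2:g\text{ in a }\rho\text{-compact set}\}\). Either way, letting \(\varepsilon\downarrow0\) gives \(B_n=o_p(1)\). Slutsky's lemma then yields \(CvM_n\xrightarrow{d}\int\mG^2\,dF_Z\).
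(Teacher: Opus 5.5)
Your proof is correct. The KS part is the same as the paper's, but you handle $CvM_n$ by a different route.

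The paper passes to an almost-sure (Skorohod) representation of the pair $(\widehat R_n,\widehat F_{Z,n})$. It then splits $CvM_n-\int\mathbb G^2\,dF_Z$ into two terms:
\begin{itemize}
\item $\int(\widehat R_n^2-\mathbb G^2)\,d\widehat F_{Z,n}$, bounded by the sup norm because $\widehat F_{Z,n}$ is a probability measure;
\item $\int\mathbb G^2\,d(\widehat F_{Z,n}-F_Z)$, which vanishes by Helly--Bray because, path by path, $\mathbb G^2$ is a fixed bounded continuous function and $\widehat F_{Z,n}\to F_Z$ weakly.
\end{itemize}
Pairing the signed measure with the limit path rather than the prelimit process is what lets the paper sidestep the obstacle you isolate in $B_n$. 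The price is the representation theorem in the nonseparable space $\ell^\infty(\mathcal Z)$, applied jointly to the pair. The paper also needs Euclidean continuity of the paths of $\mathbb G$, which holds because $Z$ has a continuous distribution but is left implicit.

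Your argument stays on the original probability space. In effect it runs the Helly--Bray partition argument at the prelimit level, with asymptotic $\rho$-equicontinuity of the oracle Donsker process (reached through Proposition \ref{prop:ICM-process-uniform}) taking the place of path continuity of $\mathbb G$. It is more explicit and avoids the representation theorem.

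One small point of wording: your partition is finite and fixed in $n$, so you only need the law of large numbers for finitely many fixed Borel cells, not Glivenko--Cantelli for rectangles. If you do want rectangular cells, note that $\rho$ is uniformly continuous in the Euclidean metric on the compact $\mathcal Z$ because $F_Z$ is continuous. Hence sufficiently small rectangles have small $\rho$-diameter.
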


We next examine the behavior of the test under fixed alternatives. The following theorem shows that the feasible process has a deterministic drift of order \(\sqrt n\), which yields consistency of the proposed tests.
\begin{theorem}
\label{thm:fixed-alt}
Suppose Assumptions \ref{ass:unconfd}--\ref{ass:SLE} hold. Then, under the fixed alternative $\mathbb H_1$,
$$
\sup_{z\in\mathcal Z}
\left|
\frac{1}{\sqrt n}\widehat R_n(z)-\Gamma(z)
\right|
=o_p(1),
$$
where $\Gamma(z):=\E\left[\left(\psi(W;\eta_0)-\tau_0\right)\1\{Z\le z\}\right]$.
\end{theorem}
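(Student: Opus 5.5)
Divide Proposition \ref{prop:ICM-process-uniform} through by $\sqrt n$. That result is stated under Assumptions \ref{ass:unconfd}--\ref{ass:SLE} only and nowhere uses $\mathbb H_0$. Its proof controls the nuisance remainder terms using the rates of $\hat\eta$ together with the orthogonality in Proposition \ref{prop:dr-gateaux}, and that orthogonality holds at every $\tau$ and for every $z$, whether or not the null is true. So under $\mathbb H_1$ we still get
\[
\sup_{z\in\mathcal Z}\left|\frac1{\sqrt n}\widehat R_n(z)-\frac1n\sum_{i=1}^n\{\psi(W_i;\eta_0)-\tau_0\}\bigl(\1\{Z_i\le z\}-F_Z(z)\bigr)\right|=o_p(n^{-1/2})=o_p(1).
\]
I will go back through that proof to confirm the claim. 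The one place where the null could have entered is the cancellation of conditional-mean terms. There, the relevant identity is that the orthogonality-type terms have conditional mean zero given $X$ under $\eta_0$, for instance $\E[D\{Y-\mu_1(X)\}\mid X]=0$. This is a fact about the nuisance functions and has nothing to do with $\mathbb H_0$. The ATE estimator $\hat\tau$ is also consistent for $\tau_0$ without any null restriction.

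The problem then reduces to a uniform law of large numbers for the oracle average, which I split as
\[
\frac1n\sum_{i=1}^n\{\psi(W_i;\eta_0)-\tau_0\}\1\{Z_i\le z\}\;-\;F_Z(z)\,\frac1n\sum_{i=1}^n\{\psi(W_i;\eta_0)-\tau_0\}.
\]
For the second piece, $\E|\psi(W;\eta_0)|<\infty$ holds by Assumption \ref{ass:overlap} and Assumption \ref{ass:dist}(ii), and indeed the second moment is finite. The scalar LLN therefore makes the average $o_p(1)$, and $\sup_z F_Z(z)\le1$ carries this over uniformly in $z$. For the first piece, the class $\{w\mapsto(\psi(w;\eta_0)-\tau_0)\1\{z'\le z\}:z\in\mathcal Z\}$ is a fixed integrable envelope function multiplied by a VC class of lower-orthant indicators. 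It is therefore Glivenko--Cantelli; alternatively one can use a bracketing argument on the orthants together with the envelope's integrability. Hence the first piece converges uniformly almost surely to $\E[(\psi(W;\eta_0)-\tau_0)\1\{Z\le z\}]=\Gamma(z)$. Combining the two pieces with the display above gives the claim.

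The main obstacle is the first step: certifying that the $o_p(1)$ remainder in Proposition \ref{prop:ICM-process-uniform} does not depend on $\mathbb H_0$. If the proof of that proposition used the null anywhere, the natural candidate is the replacement of $\hat\tau$ by $\tau_0$. Even in that case the fix is straightforward. The $\hat\tau$ contribution is $\sqrt n(\hat\tau-\tau_0)\,\widehat F_{Z,n}(z)$, and after dividing by $\sqrt n$ this equals $(\hat\tau-\tau_0)\widehat F_{Z,n}(z)=o_p(1)$ uniformly in $z$. For the nuisance part, it suffices at the $1/\sqrt n$ scale to show $\sup_z|n^{-1}\sum_i\{\psi(W_i;\hat\eta)-\psi(W_i;\eta_0)\}\1\{Z_i\le z\}|\le n^{-1}\sum_i|\psi(W_i;\hat\eta)-\psi(W_i;\eta_0)|=o_p(1)$. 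This follows from the uniform consistency of $\hat p$ (bounded away from $0$ and $1$ with probability approaching one), the $L_2$ consistency of $\hat\mu_d$, and Cauchy--Schwarz. This weaker bound is a fallback that avoids orthogonality entirely.
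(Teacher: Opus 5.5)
Your proposal is correct and takes essentially the paper's route: you divide Proposition \ref{prop:ICM-process-uniform} by $\sqrt n$ (its proof indeed never invokes $\mathbb H_0$), then apply a Glivenko--Cantelli argument to the oracle average, using $\E[\psi(W;\eta_0)-\tau_0]=0$ to eliminate the $F_Z(z)$ centering term. Your fallback is also valid and more elementary, because at the $n^{-1/2}$ scale orthogonality is unnecessary: it suffices that $\hat\tau-\tau_0=o_p(1)$ and $n^{-1}\sum_{i=1}^n|\hat\psi_i-\psi(W_i;\eta_0)|=o_p(1)$.
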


Under \(\mathbb H_1\), the conditional moment restriction in \eqref{eq:cate_cmr} fails. By the equivalence between the conditional moment restriction and the ICM moments, this implies \(\Gamma\not\equiv0\). Hence \(\sup_{z\in\mathcal Z}|\Gamma(z)|>0\), and, for the CvM statistic, \(\int_{\mathcal Z}\Gamma(z)^2\,dF_Z(z)>0\). Consequently, \(KS_n\) and \(CvM_n\) diverge in probability under fixed alternatives, and the proposed tests are consistent.

We next consider local alternatives that approach the null at the \(n^{-1/2}\) rate. Specifically, let \(\mathbb H_{1n}\) be a sequence of alternatives of the form
\[
\tau_n(z)=\tau_0+n^{-1/2}\lambda(z),
\qquad z\in\mathcal Z,
\]
where \(\lambda:\mathcal Z\to\mathbb R\) is a measurable, non-a.s.-constant function satisfying
\(\E[\lambda(Z)^2]<\infty\).\footnote{If \(\lambda(Z)\) is a.s. constant, then
\(\tau_n(z)\) remains constant in \(z\), so \(\mathbb H_{1n}\) collapses to the null.}

\begin{theorem}
\label{thm:local-alt}
Suppose Assumptions \ref{ass:unconfd}--\ref{ass:SLE} hold. Under \(\mathbb H_{1n}\),
\[
\widehat R_n \rightsquigarrow \mathbb G+\Lambda
\qquad \text{in } \ell^\infty(\mathcal Z),
\]
where \(\mathbb G\) is the centered Gaussian process in Theorem \ref{thm:ICM-Gaussian} and $\Lambda(z):=
\E\!\left[\lambda(Z)\{\1\{Z\le z\}-F_Z(z)\}\right]$. Moreover, $KS_n \xrightarrow{d}\sup_{z\in\mathcal Z}|\mathbb G(z)+\Lambda(z)|$ and $CvM_n \xrightarrow{d}
\int_{\mathcal Z}\{\mathbb G(z)+\Lambda(z)\}^2\,dF_Z(z)$.
\end{theorem}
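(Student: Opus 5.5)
The plan is to reduce to the oracle process via Proposition \ref{prop:ICM-process-uniform}, and then treat the oracle process as a sum of a centered part and a deterministic drift. The data distribution now depends on \(n\), so we work in a triangular-array setting. We assume (as is implicit in \(\mathbb H_{1n}\)) that the nuisance functions and the regularity constants in Assumptions \ref{ass:unconfd}--\ref{ass:SLE} hold uniformly along the sequence. We take the local perturbation to act only through the conditional mean of \(\psi\) given \(Z\), i.e., \(\E_n[\psi(W;\eta_{0n})\mid Z]=\tau_0+n^{-1/2}\lambda(Z)\). Under these conventions the feasible-to-oracle approximation of Proposition \ref{prop:ICM-process-uniform} continues to hold. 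Its proof uses only rate bounds on \(\hat\eta-\eta_{0n}\) and orthogonality (Proposition \ref{prop:dr-gateaux}), and neither requires the null. Hence
\[
\widehat R_n(z)=\frac1{\sqrt n}\sum_{i=1}^n\{\psi(W_i;\eta_{0n})-\tau_{0n}\}\{\1\{Z_i\le z\}-F_Z(z)\}+o_p(1)
\]
uniformly in \(z\), where \(\tau_{0n}=\tau_0+n^{-1/2}\E[\lambda(Z)]\).

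Next I would decompose \(\psi_i-\tau_{0n}=\varepsilon_i+n^{-1/2}\{\lambda(Z_i)-\E\lambda(Z)\}\), where \(\varepsilon_i:=\psi(W_i;\eta_{0n})-\E_n[\psi\mid Z_i]\) satisfies \(\E[\varepsilon_i\mid Z_i]=0\). The \(\lambda\)-part equals
\[
\frac1n\sum_{i=1}^n\{\lambda(Z_i)-\E\lambda(Z)\}\{\1\{Z_i\le z\}-F_Z(z)\}.
\]
By a Glivenko--Cantelli argument it converges uniformly to \(\E[\lambda(Z)\{\1\{Z\le z\}-F_Z(z)\}]=\Lambda(z)\). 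The class \(\{\lambda\cdot\1\{\cdot\le z\}\}\) is GC because lower orthants form a VC class and \(\lambda\) is an integrable envelope. The cross term with \(\E\lambda(Z)\) contributes \(\E\lambda(Z)\cdot\{\widehat F_{Z,n}(z)-F_Z(z)\}\to0\) uniformly.

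The remaining part is \(n^{-1/2}\sum_i\varepsilon_i\{\1\{Z_i\le z\}-F_Z(z)\}\), a centered empirical process indexed by a VC-type class with square-integrable envelope \(|\varepsilon|\). For weak convergence under a triangular array I would invoke the Lindeberg-type functional CLT for row-independent processes (e.g., Theorem 2.11.22/2.11.23 of \citet{van1996weak}). The ingredients are: the bracketing entropy of \(\{\varepsilon(\1\{Z\le z\}-F_Z(z))\}\) is controlled uniformly in \(n\) by the monotonicity of orthant indicators; a Lindeberg condition holds on \(\varepsilon^2\), which I would obtain from a uniform moment bound; and covariances converge. The covariance at step \(n\) is \(\E_n[\sigma^2_{\psi,n}(Z)(\1\{Z\le z_1\}-F_Z(z_1))(\1\{Z\le z_2\}-F_Z(z_2))]\). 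It converges to \(K(z_1,z_2)\) because \(\Var_n(\psi\mid Z)\) differs from the null conditional variance only through the \(O(n^{-1/2})\) mean shift, assuming contiguity/continuity of the law along \(\mathbb H_{1n}\). Slutsky's lemma in \(\ell^\infty(\mathcal Z)\) then gives \(\widehat R_n\rightsquigarrow\mathbb G+\Lambda\).

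Finally, the KS limit follows from the continuous mapping theorem. For CvM, I would write \(\int\widehat R_n^2\,d\widehat F_{Z,n}-\int\widehat R_n^2\,dF_Z\). This is an integral of the tight sequence \(\widehat R_n^2\) against \(\widehat F_{Z,n}-F_Z\), which vanishes by the same argument the paper uses after Theorem \ref{thm:ICM-Gaussian} (asymptotic equicontinuity plus uniform convergence of \(\widehat F_{Z,n}\)). The map \(f\mapsto\int f^2dF_Z\) is continuous on \(\ell^\infty\). The main obstacle is the first step: verifying that the feasible-to-oracle approximation holds uniformly along the drifting sequence of data-generating processes. I would first try to re-run the proof of Proposition \ref{prop:ICM-process-uniform} and check that every bound depends only on constants assumed uniform in \(n\). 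The alternative route is to model \(\mathbb H_{1n}\) as contiguous to a null law and transfer \(o_p(1)\) statements via Le Cam's first lemma.
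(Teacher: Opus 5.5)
Your proposal is correct and follows essentially the same route as the paper. Both arguments reduce to the oracle process (with $\tau_0$ replaced by $\tau_0+n^{-1/2}\E\lambda(Z)$) via the feasible-to-oracle argument of Proposition~\ref{prop:ICM-process-uniform}, split it into a centered empirical process converging to $\mathbb G$ plus a drift equal to $\Lambda$, and finish with Slutsky, the continuous mapping theorem, and the $\widehat F_{Z,n}$-replacement step for CvM. The only difference is bookkeeping: the paper centers each summand at its $\E_n$-mean, so the drift is exactly $\sqrt n\,\E_n[\cdot]=\Lambda(z)$, whereas you split off the conditional-mean-zero noise and obtain $\Lambda$ through a uniform law of large numbers. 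You are also more explicit than the paper about two things both arguments tacitly need: a triangular-array functional CLT, and uniformity along the drifting sequence.
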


Since \(\lambda(Z)\) is not a.s. constant, \(\Lambda\not\equiv0\). Thus, under alternatives drifting toward the null at the parametric rate \(n^{-1/2}\), the test statistics have nondegenerate shifted limits, and the proposed tests exhibit nontrivial local power against \(\mathbb H_{1n}\).

From the above theorems, the limiting null distributions of the continuous functionals of \(\widehat R_n\), including \(KS_n\) and \(CvM_n\), depend on the unknown covariance structure of \(\mathbb G\). The limiting distributions are nonpivotal and therefore do not yield closed-form critical values. To address this issue, we use an easy-to-implement multiplier bootstrap procedure, described next.

\section{Computation of Critical Values}
\label{sec:mb}

In this section, we introduce the multiplier bootstrap used to approximate the null distributions of \(KS_n\) and \(CvM_n\). The procedure exploits the feasible first-order representation in Proposition \ref{prop:ICM-process-uniform}; it is computationally simple and avoids re-estimating the first-step nuisance functions across bootstrap replications. The procedure is as follows:
\begin{enumerate}
    \setlength{\itemsep}{0pt}
    \setlength{\parskip}{0pt}
    \setlength{\parsep}{0pt}

    \item Compute the feasible doubly robust scores $\hat\psi_i$, the average $\hat\tau=n^{-1}\sum_{i=1}^n\hat\psi_i$, and the empirical distribution function $\widehat F_{Z,n}(z)=n^{-1}\sum_{i=1}^n\1\{Z_i\le z\}$.

    \item Generate i.i.d. multipliers $\{V_i\}_{i=1}^n$, independent of the data, with zero mean, unit variance, and bounded support. A popular choice is the Mammen two-point distribution, under which $V_i$ takes the values $1-\kappa$ and $\kappa$ with probabilities $\kappa/\sqrt{5}$ and $1-\kappa/\sqrt{5}$, respectively, where $\kappa=(\sqrt{5}+1)/2$, as suggested by \citet{mammen1993bootstrap}.

    \item Compute the bootstrap process
    $$
    \widehat R_n^*(z)
    :=
    \frac1{\sqrt n}\sum_{i=1}^n
    V_i(\hat\psi_i-\hat\tau)\{\1\{Z_i\le z\}-\widehat F_{Z,n}(z)\},
    \qquad z\in\mathcal Z,
    $$
    together with
    $$
    KS_n^*:=\sup_{z\in\mathcal Z}|\widehat R_n^*(z)|,
    \qquad
    CvM_n^*:=\int_{\mathcal Z}\widehat R_n^*(z)^2\,d\widehat F_{Z,n}(z)
    =
    \frac1n\sum_{i=1}^n\widehat R_n^*(Z_i)^2.
    $$

    \item Repeat Steps 2--3 for $B$ times to obtain $\{KS_{n,b}^*,CvM_{n,b}^*\}_{b=1}^B$.

    \item For a given significance level $\alpha\in(0,1)$, let $c_{KS,1-\alpha}^*$ and $c_{CvM,1-\alpha}^*$ denote the empirical $(1-\alpha)$-quantiles of $\{KS_{n,b}^*\}_{b=1}^B$ and $\{CvM_{n,b}^*\}_{b=1}^B$, respectively. Reject $\mathbb H_0$ whenever $KS_n>c_{KS,1-\alpha}^*$ or $CvM_n>c_{CvM,1-\alpha}^*$.
\end{enumerate}

Denote by ``$\rightsquigarrow^*$ in probability'' the weak convergence under the bootstrap law, that is, conditional on the original sample $\{W_i\}_{i=1}^n$. The next theorem establishes the asymptotic validity of the proposed multiplier bootstrap procedure.

\begin{theorem}
\label{thm:bootstrap}
Suppose Assumptions \ref{ass:unconfd}--\ref{ass:SLE} hold. Under $\mathbb H_0$ or the sequence of local alternatives $\mathbb H_{1n}$,
$$
\widehat R_n^{*}\rightsquigarrow^{*}\mathbb G
\qquad
\text{in probability, in } \ell^\infty(\mathcal Z),
$$
where $\mathbb G$ is the centered Gaussian process in Theorem \ref{thm:ICM-Gaussian}.

Under the fixed alternative $\mathbb H_1$,
$$
\widehat R_n^{*}\rightsquigarrow^{*}\mathbb G_1
\qquad
\text{in probability, in } \ell^\infty(\mathcal Z),
$$
where $\mathbb G_1$ is a centered Gaussian process with covariance function
$$
K_1(z_1,z_2)
=
\E\!\left[
\{\psi(W;\eta_0)-\tau_0\}^2
\{\1\{Z\le z_1\}-F_Z(z_1)\}
\{\1\{Z\le z_2\}-F_Z(z_2)\}
\right].
$$
\end{theorem}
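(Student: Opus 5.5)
The strategy is to compare the feasible bootstrap process with an oracle multiplier process, and then apply a conditional multiplier central limit theorem. Write $\varepsilon_i:=\psi(W_i;\eta_0)-\tau_0$ and $\hat\varepsilon_i:=\hat\psi_i-\hat\tau$. Define the oracle bootstrap process
\[
R_n^{o*}(z):=\frac1{\sqrt n}\sum_{i=1}^n V_i\varepsilon_i\{\1\{Z_i\le z\}-F_Z(z)\}.
\]
The class $\{(w)\mapsto \varepsilon\,(\1\{Z\le z\}-F_Z(z)):z\in\mathcal Z\}$ is a fixed VC-type class (lower orthants times a fixed square-integrable envelope $|\varepsilon|$, shifted by a bounded deterministic function). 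It is therefore Donsker, with $\E[\varepsilon^2]<\infty$ by Assumption~\ref{ass:dist}(ii) and overlap. The conditional multiplier CLT (van der Vaart and Wellner, 1996, Theorem~2.9.6) applies because the $V_i$ have mean zero, unit variance and bounded support. It gives $R_n^{o*}\rightsquigarrow^*\mathbb G_1$ in probability, where the covariance is $\E[\varepsilon^2(\1\{Z\le z_1\}-F_Z(z_1))(\1\{Z\le z_2\}-F_Z(z_2))]=K_1$.

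Under $\mathbb H_0$ we have $\E[\varepsilon\mid Z]=0$, so $\E[\varepsilon^2\mid Z]=\sigma_\psi^2(Z)$ and $K_1=K$. This yields the first claim. Under $\mathbb H_{1n}$ the data-generating process depends on $n$, so I would instead use the triangular-array version of the multiplier CLT. One route is Lindeberg conditions together with uniform entropy of the orthant class, as in Theorem~2.11.1 or the Kosorok-type conditional multiplier theorem. Alternatively, one can note that $\E_n[\varepsilon^2\mid Z]=\sigma^2_{\psi}(Z)+O(n^{-1})$. In either case the conditional covariance converges to $K$, because the drift $n^{-1/2}\lambda(Z)$ contributes only $n^{-1}\E[\lambda(Z)^2]\to0$ to the second moments.

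The main step is to show $\sup_z|\widehat R_n^*(z)-R_n^{o*}(z)|=o_{p}(1)$ in probability under the joint law. Since it is conditional on the data, it suffices to bound the conditional second moment. I would decompose the difference as
\[
\frac1{\sqrt n}\sum_i V_i(\hat\varepsilon_i-\varepsilon_i)\{\1\{Z_i\le z\}-\widehat F_{Z,n}(z)\}
+\frac1{\sqrt n}\sum_i V_i\varepsilon_i\{F_Z(z)-\widehat F_{Z,n}(z)\}.
\]
The second term is bounded by $\sup_z|\widehat F_{Z,n}-F_Z|\cdot|n^{-1/2}\sum_iV_i\varepsilon_i|=O_p(n^{-1/2})O_p(1)$.

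For the first term, the process is a multiplier sum with weights bounded by $2|\hat\varepsilon_i-\varepsilon_i|$, and the indicator class is VC conditionally on the data. A conditional maximal inequality, such as symmetrization plus Hoeffding/chaining for bounded $V_i$, or simply Lemma~2.9.1-type bounds, controls its supremum by $C\{n^{-1}\sum_i(\hat\varepsilon_i-\varepsilon_i)^2\}^{1/2}$, up to a logarithmic/entropy constant. So it suffices to show $n^{-1}\sum_i(\hat\psi_i-\psi(W_i;\eta_0))^2=o_p(1)$ and $\hat\tau-\tau_0=o_p(1)$. The latter follows from Proposition~\ref{prop:ICM-process-uniform} evaluated at $z$ equal to the upper end of the support, together with Theorem~\ref{thm:fixed-alt} under $\mathbb H_1$. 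For the former I would expand $\psi(W;\hat\eta)-\psi(W;\eta_0)$ into terms of the form $(\hat\mu_d-\mu_d)(X)(1-D/\hat p)$ and $D(Y-\mu_1)(1/\hat p-1/p)$ (and their controls). The sup-norm rates $\sup_x|\hat p-p|=o_p(1)$ and $\sup_x|\hat\mu_d-\mu_d|=o_p(1)$ established in the proof of Proposition~\ref{prop:ICM-process-uniform} under Assumptions~\ref{ass:pscore}--\ref{ass:SLE}, together with $\hat p$ being bounded away from $0,1$ w.p.a.1 and $n^{-1}\sum(Y_i-\mu_{D_i}(X_i))^2=O_p(1)$, make each term $o_p(1)$. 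These rates do not use the null, so the same argument covers all three regimes, where $\tau_0$ is $\E_n[\psi]$ under $\mathbb H_{1n}$.

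I expect the main obstacle to be this uniform conditional negligibility step. The difficulty is that $\hat\varepsilon_i$ depend on the whole sample, so they are not independent across $i$. This is why I condition on the data and treat $\hat\varepsilon_i-\varepsilon_i$ as fixed weights, so that only the $V_i$ are random. This reduces the problem to a Rademacher/sub-Gaussian process over a VC class with fixed coefficients. Finally, because the bootstrap limit is conditional weak convergence, I would combine the two approximations via the bounded-Lipschitz metric characterization: $\sup_{h\in BL_1}|\E^*h(\widehat R_n^*)-\E h(\mathbb G_1)|\le \E^*\min(1,\|\widehat R_n^*-R_n^{o*}\|_\infty)+\sup_{h\in BL_1}|\E^*h(R_n^{o*})-\E h(\mathbb G_1)|$, and both terms are $o_p(1)$.
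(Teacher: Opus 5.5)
Your proposal is correct, and its central step takes a genuinely different and more elementary route than the paper's.

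\emph{How the routes differ.} The key step is showing that $n^{-1/2}\sum_i V_i\{\hat\psi_i-\psi(W_i;\eta_0)\}q_i(z)$ is uniformly negligible. The paper does this by re-running the entire nuisance decomposition behind Proposition~\ref{prop:ICM-process-uniform} with $q_i(z)$ replaced by $V_iq_i(z)$, i.e.\ starred analogues of Lemmas~\ref{lem:An111a-control}--\ref{lem:orthogonal-propensity-term}. It uses the bound \eqref{eq:dr-score-l2} only for the $\widehat F_{Z,n}$-centering term $T_{2n}^*$.

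You instead condition on the data and treat $c_i:=\hat\varepsilon_i-\varepsilon_i$ as fixed coefficients. Because the $V_i$ are bounded, mean zero and independent of the sample, $z\mapsto n^{-1/2}\sum_iV_ic_i\1\{Z_i\le z\}$ is sub-Gaussian. Its increments are $\|c\|_{2,n}$ times an $L_2(Q_c)$ distance between lower-orthant indicators, where $Q_c$ is the empirical measure reweighted by $c_i^2$. Dudley's bound with the uniform VC entropy then gives $\E^*\sup_z|\cdot|\le C\|c\|_{2,n}$, with $C$ depending only on $d_z$. The piece $\widehat F_{Z,n}(z)\,n^{-1/2}\sum_iV_ic_i$ is handled by its conditional second moment.

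\emph{What your route buys.} Bootstrap validity then rests only on $L_2$-consistency of the estimated score. It does not need Neyman orthogonality or the rate conditions of Assumption~\ref{ass:SLE}; those are needed only for the limit of $\widehat R_n$ itself. The paper's route shows the bootstrap inherits the oracle representation through the same orthogonality mechanism as $\widehat R_n$, at the cost of repeating all the sieve arguments. Your triangular-array treatment of $\mathbb H_{1n}$ is also more careful than the paper's one-line remark.

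\emph{Three details to repair.}

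First, Proposition~\ref{prop:ICM-process-uniform} and Theorem~\ref{thm:fixed-alt} evaluated at the upper end of the support say nothing about $\hat\tau$. There $\widehat R_n(z)=0$ identically, which is exactly what the $\hat\tau$-centering does, and the oracle process and $\Gamma$ vanish as well, so you only get $0=0$. Use instead
$|\hat\tau-\tau_0|\le\{n^{-1}\sum_i(\hat\psi_i-\psi(W_i;\eta_0))^2\}^{1/2}+|n^{-1}\sum_i\psi(W_i;\eta_0)-\tau_0|$, as the paper does.

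Second, the paper proves only an empirical-$L_2$ rate for $\hat\mu_d$ (Lemma~\ref{lem:mu-rate}), not sup-norm consistency. With only $s_\mu>r/2$, the usual power-series sup-norm bias bound $\zeta(K)K^{-s_\mu/r}$ need not vanish. You do not need it: $1-D/\hat p$ is bounded with probability approaching one, so $\|\hat\mu_d-\mu_d\|_{2,n}=o_p(1)$ together with $\sup_x|\hat p(x)-p(x)|=o_p(1)$ (Lemma~\ref{lem:p-rate}) suffices.

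Third, the conditional multiplier CLT you invoke applies to the $P$-centered process. Under $\mathbb H_1$, $\E[\varepsilon\{\1\{Z\le z\}-F_Z(z)\}]=\Gamma(z)\neq0$, so $R_n^{o*}$ also contains $\Gamma(z)\,n^{-1/2}\sum_iV_i$. That term is conditionally asymptotically uncorrelated with the centered part. Adding it is exactly what turns the centered covariance into the uncentered kernel $K_1$ you state.
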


On the one hand, under $\mathbb H_0$, Theorem \ref{thm:bootstrap} shows that the multiplier bootstrap process $\widehat R_n^*$ converges weakly to the same Gaussian process $\mathbb G$ as the observed process $\widehat R_n$. Hence, the bootstrap critical values deliver asymptotically correct size for tests based on $KS_n$ and $CvM_n$. Moreover, under the local alternatives $\mathbb H_{1n}$, the bootstrap process still mimics the null Gaussian component, while Theorem \ref{thm:local-alt} shows that the observed process is shifted by the deterministic drift $\Lambda$. Therefore, the multiplier bootstrap preserves the local power properties characterized in Theorem \ref{thm:local-alt}.

On the other hand, under fixed alternatives $\mathbb H_1$, Theorem \ref{thm:bootstrap} implies that $\widehat R_n^*$ remains stochastically bounded, although its covariance kernel generally differs from the null kernel because $\E[\psi(W;\eta_0)-\tau_0\mid Z]\neq0$. By contrast, Theorem \ref{thm:fixed-alt} shows that $KS_n$ and $CvM_n$ diverge in probability. Consequently, tests based on multiplier-bootstrap critical values are consistent against fixed alternatives. Therefore, the multiplier bootstrap procedure is asymptotically valid.

\section{Extensions}
\label{sec:extension}

The preceding sections focus on testing whether the CATE is constant with respect to the covariate of interest. We now discuss two extensions. The first concerns parametric restrictions on the CATE function. When the constant-effect null is rejected, researchers may further ask whether the detected heterogeneity can be summarized by a parsimonious and interpretable functional form, such as a linear or quadratic dependence on the covariate of interest. Such information may be useful for subsequent policy evaluation and statistical decision problems, where policy design may depend on how treatment effects vary across observable characteristics; see, for example, \citet{manski2004statistical}, \citet{kitagawa2018should}, and \citet{athey2021policy}. The second extension considers settings in which treatment assignment may be endogenous but a binary instrument is available. In this case, building on the local average treatment effect (LATE) framework pioneered by \citet{imbens1994identification} and its covariate-adjusted extensions, the proposed ICM approach can be adapted to test heterogeneity of a conditional local average treatment effect.

\subsection{Testing Parametric CATE Specifications}
\label{sec:ex-para}

Our main analysis tests whether the CATE function \(\tau(z)\) is constant in \(Z\). Once treatment-effect homogeneity is rejected, a natural follow-up question is whether \(\tau(z)\) can be described by a prespecified low-dimensional form. This subsection develops such a specification test. Let $\mathcal H:=\{h(\cdot,\theta):\theta\in\Theta\}$, where \(\Theta\subset\mathbb R^{d_\theta}\) is the parameter space and \(h:\mathcal Z\times\Theta\to\mathbb R\) is known up to \(\theta\). We consider the null hypothesis
\[
\mathbb H_0^\dagger:\ \exists\,\theta_0\in\Theta\ \text{such that}\ \tau(z)=h(z,\theta_0)\ \text{for all }z\in\mathcal Z.
\]
The null \(\mathbb H_0\) in the main text is the special case \(h(z,\theta)=\theta\).

Under Assumptions \ref{ass:unconfd} and \ref{ass:overlap}, $\mathbb H_0^\dagger$ is equivalently expressed as
$$
\E\!\left[\psi(W;\eta_0)-h(Z,\theta_0)\mid Z\right]=0
\quad\text{a.s.}
$$
Using the feasible doubly robust scores $\hat\psi_i$ defined in Section \ref{sec:cate}, we estimate $\theta_0$ by nonlinear least squares:
$$
Q_n(\theta)
:=
\frac1n\sum_{i=1}^n
\{\hat\psi_i-h(Z_i,\theta)\}^2,
\qquad
\hat\theta:=\arg\min_{\theta\in\Theta}Q_n(\theta).
$$
This leads to the empirical process
$$
\widehat R_n^\dagger(z)
:=
\frac1{\sqrt n}\sum_{i=1}^n
\{\hat\psi_i-h(Z_i,\hat\theta)\}\1\{Z_i\le z\},
\qquad z\in\mathcal Z.
$$
Large values of suitable functionals of \(\widehat R_n^\dagger\) provide evidence against \(\mathbb H_0^\dagger\). Analogously to Proposition \ref{prop:ICM-process-uniform}, we next derive a uniform first-order representation for \(\widehat R_n^\dagger\). For a function \(h(z,\theta)\) differentiable with respect to the finite-dimensional parameter \(\theta\), we write \(\dot h_\theta(z,\theta)\) and \(\ddot h_{\theta\theta}(z,\theta)\) for its first and second derivatives, respectively. We impose the following additional conditions.
\begin{assumption}[Parameter space and smoothness]\label{ass:param-space}
The parameter space $\Theta\subset\mathbb R^{d_\theta}$ is compact. For each $z\in\mathcal Z$, the map $\theta\mapsto h(z,\theta)$ is twice continuously differentiable on a neighborhood of $\Theta$.
\end{assumption}

\begin{assumption}[Identification]\label{ass:param-ident}
Under \(\mathbb H_0^\dagger\), there exists a unique \(\theta_0\) in the interior of \(\Theta\) such that \(\tau(z)=h(z,\theta_0)\) for all \(z\in\mathcal Z\).
\end{assumption}

\begin{assumption}[Moment bounds and nonsingularity]\label{ass:param-moment}
There exist measurable envelopes $H_0(Z)$, $H_1(Z)$, and $H_2(Z)$ such that $\sup_{\theta\in\Theta}|h(Z,\theta)|\le H_0(Z)$, $\sup_{\theta\in\Theta}\|\dot h_\theta(Z,\theta)\|\le H_1(Z)$, and $\sup_{\theta\in\Theta}\|\ddot h_{\theta\theta}(Z,\theta)\|\le H_2(Z)$ almost surely, with $\E[H_0(Z)^2]<\infty$, $\E[H_1(Z)^2]<\infty$, and $\E[H_2(Z)^2]<\infty$. In addition, $H:=\E[\dot h_\theta(Z,\theta_0)\dot h_\theta(Z,\theta_0)']$ is finite and nonsingular.
\end{assumption}

Assumptions \ref{ass:param-space}--\ref{ass:param-moment} are standard regularity conditions for finite-dimensional extremum estimation; see, for example, \citet{newey1994large}. They ensure that the parametric projection is locally identified, sufficiently smooth, and nonsingular, so that the effect of estimating \(\theta_0\) can be accounted for by the usual first-order projection term. The next proposition gives the resulting uniform feasible-to-oracle approximation for \(\widehat R_n^\dagger\).

\begin{proposition}
\label{prop:parametric-process-uniform}
Suppose Assumptions \ref{ass:unconfd}--\ref{ass:param-moment} hold. Under $\mathbb H_0^\dagger$,
$$
\sup_{z\in\mathcal Z}
\left|
\widehat R_n^\dagger(z)
-
\frac1{\sqrt n}\sum_{i=1}^n
\{\psi(W_i;\eta_0)-h(Z_i,\theta_0)\}
\Bigl\{\1(Z_i\le z)-G(z)'H^{-1}\dot h_\theta(Z_i,\theta_0)\Bigr\}
\right|
=o_p(1),
$$
where
$G(z):=\E[\dot h_\theta(Z,\theta_0)\1\{Z\le z\}]$, $H:=\E[\dot h_\theta(Z,\theta_0)\dot h_\theta(Z,\theta_0)']$.
\end{proposition}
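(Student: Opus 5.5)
The argument splits $\widehat R_n^\dagger$ into a nuisance-estimation part and a parameter-estimation part:
\[
\widehat R_n^\dagger(z)=\frac1{\sqrt n}\sum_{i=1}^n\{\hat\psi_i-\psi(W_i;\eta_0)\}\1\{Z_i\le z\}+\frac1{\sqrt n}\sum_{i=1}^n\{\psi(W_i;\eta_0)-h(Z_i,\hat\theta)\}\1\{Z_i\le z\}.
\]
Each piece is handled separately, and the two are recombined at the end.

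\textbf{Nuisance part.} Write $A_n(z)$ for the first term. I claim
\[
\sup_z|A_n(z)|=o_p(1).
\]
This claim contains no $\theta$ or $\tau$ at all, so it is purely a statement about the plug-in error of the doubly robust score, weighted by lower-orthant indicators.

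It should be exactly the intermediate step inside the proof of Proposition \ref{prop:ICM-process-uniform}. That proof must show that the error from replacing $\eta_0$ by $\hat\eta$ is uniformly negligible. The Neyman orthogonality of Proposition \ref{prop:dr-gateaux} kills the first-order term. The second-order product terms, such as $(\hat p-p)(\hat\mu_d-\mu_d)$, are $o_p(n^{-1/2})$ under the rates in Assumption \ref{ass:SLE}. The centered empirical-process term is $o_p(1)$ uniformly, by a stochastic equicontinuity argument over the VC class of indicators.

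If that step is not stated separately, I would re-derive it by decomposing $\psi(W;\hat\eta)-\psi(W;\eta_0)$ into three pieces:
\begin{itemize}
\item the propensity error terms $D(Y-\mu_1)(1/\hat p-1/p)$ and their control-arm analogues;
\item the regression error terms $(\hat\mu_1-\mu_1)(1-D/p)$ and their control-arm analogues;
\item the cross products.
\end{itemize}
For the first two pieces, the factor multiplying the estimation error has mean zero conditional on $X$, and hence conditional on $Z$. The cross products are bounded by $\sqrt n\,\|\hat p-p\|_\infty\|\hat\mu_d-\mu_d\|$, using the series rates. Because the indicator weight depends only on $Z\subseteq X$, the weight does not disturb these conditional-mean-zero arguments.

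\textbf{Parameter part.} The $\theta$ step is standard M-estimation.

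\emph{Consistency of $\hat\theta$.} By the nuisance step,
\[
n^{-1}\sum_i(\hat\psi_i-\psi_i)^2=o_p(1).
\]
Hence $Q_n(\theta)$ is uniformly close to
\[
n^{-1}\sum_i\{\psi_i-h(Z_i,\theta)\}^2,
\]
which in turn converges uniformly to $\E[\{\psi-h(Z,\theta)\}^2]$. This uses the envelopes in Assumption \ref{ass:param-moment} and compactness in Assumption \ref{ass:param-space}. The limit is uniquely minimised at $\theta_0$ by Assumption \ref{ass:param-ident}, since under $\mathbb H_0^\dagger$ we have $\E[\psi\mid Z]=h(Z,\theta_0)$.

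\emph{Asymptotic linearity.} Expanding the first-order condition gives
\[
\sqrt n(\hat\theta-\theta_0)=H^{-1}\frac1{\sqrt n}\sum_i\{\hat\psi_i-h(Z_i,\theta_0)\}\dot h_\theta(Z_i,\theta_0)+o_p(1).
\]
The feasible score inside this sum can be replaced by the oracle score $\psi(W_i;\eta_0)$. This follows by the same orthogonality argument as in the nuisance step, now with the finite-dimensional weight $\dot h_\theta(Z,\theta_0)$ in place of the indicator. Since the weight is a function of $Z$ with a square-integrable envelope, the same argument goes through.

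\emph{Expansion of the process.} A mean-value expansion of $h(Z_i,\hat\theta)$ in the process gives
\[
-\Bigl[n^{-1}\sum_i\dot h_\theta(Z_i,\bar\theta)\1\{Z_i\le z\}\Bigr]'\sqrt n(\hat\theta-\theta_0).
\]
The bracket converges to $G(z)$ uniformly in $z$. This holds by a Glivenko--Cantelli argument over the class $\{\dot h_\theta(\cdot,\theta)\1\{\cdot\le z\}\}$, which is controlled by the envelopes $H_1$ and $H_2$.

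\textbf{Combining.} Substituting the linear representation of $\sqrt n(\hat\theta-\theta_0)$ into this expansion produces the claimed term $-G(z)'H^{-1}\dot h_\theta(Z_i,\theta_0)$, while $A_n$ drops out by the nuisance step.

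\textbf{Main obstacle.} I expect the hardest part to be the uniform negligibility of $A_n$. This requires handling the dependence of $\hat\eta$ on the full sample inside a supremum over $z$, without sample splitting. I would first try to borrow the intermediate bounds from the proof of Proposition \ref{prop:ICM-process-uniform} directly. Failing that, I would bound the leading terms by writing them as series projections and controlling them through $\|\hat\pi_K-\pi_K\|$ times a uniformly $O_p(\sqrt K)$ vector process, as in \citet{hirano2003efficient}.
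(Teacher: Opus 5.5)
Your proposal follows essentially the same route as the paper. Both replace $\hat\psi_i$ by $\psi(W_i;\eta_0)$ under the uncentered indicator weight and under the weight $\dot h_\theta(Z_i,\theta_0)$, borrowing the nuisance decomposition and conditional-mean-zero arguments behind Proposition~\ref{prop:ICM-process-uniform}; then obtain consistency of $\hat\theta$ from uniform convergence of $Q_n$; linearize the first-order condition with Jacobian tending to $-H$; and finish with a Taylor expansion of $h(Z_i,\hat\theta)$ using $n^{-1}\sum_i\dot h_\theta(Z_i,\theta_0)\1\{Z_i\le z\}\to G(z)$ uniformly.

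One small correction to your fallback derivation: bound the cross products by Cauchy--Schwarz as $\sqrt n\,\|\hat p-p\|_{2,n}\|\hat\mu_d-\mu_d\|_{2,n}$, not via $\|\hat p-p\|_\infty$. The sup-norm route produces the term $\sqrt n\cdot\sqrt{K^3/n}\cdot K^{-s_\mu/r}=K^{3/2-s_\mu/r}$, which need not vanish when only $s_\mu>r/2$ is assumed. Also, the mean-value point $\bar\theta_i$ depends on $i$; this is harmless given the $H_2$ envelope.
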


Proposition \ref{prop:parametric-process-uniform} implies that, under \(\mathbb H_0^\dagger\),
\[
\widehat R_n^\dagger \rightsquigarrow \mathbb G^\dagger
\qquad\text{in }\ell^\infty(\mathcal Z),
\]
where \(\mathbb G^\dagger\) is a centered Gaussian process with covariance kernel
\[
K^\dagger(z_1,z_2)=
\E\Big[
\sigma_\psi^2(Z)
\Bigl\{\1(Z\le z_1)-G(z_1)'H^{-1}\dot h_\theta(Z,\theta_0)\Bigr\}
\Bigl\{\1(Z\le z_2)-G(z_2)'H^{-1}\dot h_\theta(Z,\theta_0)\Bigr\}
\Big],
\]
with \(\sigma_\psi^2(Z)=\Var\{\psi(W;\eta_0)\mid Z\}\). The KS and CvM statistics constructed from \(\widehat R_n^\dagger\) therefore converge to the corresponding continuous functionals of \(\mathbb G^\dagger\).

Critical values can be computed by the multiplier bootstrap process
\begin{equation}
    \label{eq:bootstrap_process_para}
    \widehat R_n^{\dagger,*}(z)
    :=
    \frac1{\sqrt n}\sum_{i=1}^n
    V_i\{\hat\psi_i-h(Z_i,\hat\theta)\}
    \Bigl\{\1(Z_i\le z)-\hat G(z)'\hat H^{-1}\dot h_\theta(Z_i,\hat\theta)\Bigr\},
\end{equation}
where \(\hat G(z):=n^{-1}\sum_{i=1}^n \dot h_\theta(Z_i,\hat\theta)\1\{Z_i\le z\}\) and \(\hat H:=n^{-1}\sum_{i=1}^n\dot h_\theta(Z_i,\hat\theta)\dot h_\theta(Z_i,\hat\theta)'\). The term \(\hat G(z)'\hat H^{-1}\dot h_\theta(Z_i,\hat\theta)\) is the projection adjustment for estimating \(\theta_0\). In the constant-effect case \(h(z,\theta)=\theta\), \(\dot h_\theta(Z_i,\theta)=1\), \(\hat H=1\), and \(\hat G(z)=\widehat F_{Z,n}(z)\); hence this adjustment reduces to the empirical centering in \eqref{eq:equiv}.

Under \(\mathbb H_0^\dagger\) and \(\mathbb H_{1n}^\dagger\),
\[
\widehat R_n^{\dagger,*}\rightsquigarrow^* \mathbb G^\dagger
\qquad
\text{in probability, in }\ell^\infty(\mathcal Z).
\]
Therefore, under \(\mathbb H_0^\dagger\), the bootstrap critical values yield asymptotically correct size for the KS and CvM tests. Under \(n^{-1/2}\)-local alternatives \(\mathbb H_{1n}^\dagger\), the bootstrap continues to approximate the null Gaussian component, whereas the observed test process is shifted by a nonzero deterministic projected drift; hence the tests have nontrivial local power. Under fixed alternatives, the observed statistics diverge while the bootstrap process remains stochastically bounded, so the tests are consistent.

\subsection{Testing without Unconfoundedness}
\label{sec:ex-clate}

The ICM framework can also be adapted to settings in which treatment assignment may be endogenous but a binary instrument is available. We follow the local average treatment effect framework of \citet{imbens1994identification}; see also \citet{abadie2003semiparametric} and \citet{frolich2007nonparametric} for covariate-adjusted and nonparametric formulations. Let \(W=(Y,D,A,X)\), where \(D\in\{0,1\}\) is the treatment, \(A\in\{0,1\}\) is a binary instrument, \(X\) is the full covariate vector, and \(Z\) is the covariate of interest, with \(Z\) a subvector of \(X\).

Let \(D(a)\) denote the potential treatment status when the instrument is set to \(a\), and let \(Y(d,a)\) denote the potential outcome when the treatment and instrument are set to \((d,a)\). Under the exclusion restriction below, we write \(Y(d)\equiv Y(d,1)=Y(d,0)\). Let \(\mathcal C:=\{D(1)>D(0)\}\) denote the complier group. The conditional local average treatment effect with respect to \(Z\) is
\[
\tau_L(z):=\E[Y(1)-Y(0)\mid \mathcal C,Z=z].
\]
We test whether the conditional LATE is homogeneous in \(Z\):
\begin{equation}
\label{eq:test_late}
\mathbb H_0^L:\ \exists \tau\in\mathcal T \text{ such that } \tau_L(z)=\tau \text{ for a.e. } z\in\mathcal Z,
\quad
\mathbb H_1^L:\ \mP\{\tau_L(Z)\neq \tau\}>0,\ \forall \tau\in\mathcal T.
\end{equation}

We impose the following standard conditional LATE assumptions.

\begin{assumption}[Conditional LATE model]
\label{ass:clate}
\quad
\begin{itemize}
    \setlength{\itemsep}{0pt}
    \setlength{\parskip}{0pt}
    \setlength{\parsep}{0pt}
    \item[(i)] Instrument independence: \(A\perp \bigl(Y(1,1),Y(1,0),Y(0,1),Y(0,0),D(1),D(0)\bigr)\mid X\).
    \item[(ii)] Exclusion restriction: \(Y(d,1)=Y(d,0)\) for \(d=0,1\).
    \item[(iii)] Monotonicity: \(\mP\{D(1)\ge D(0)\mid X\}=1\) a.s.
    \item[(iv)] First stage: for \(q(x):=\mP(A=1\mid X=x)\), there exists \(\epsilon>0\) such that \(\epsilon\le q(X)\le 1-\epsilon\) a.s.; moreover, \(\mP(\mathcal C\mid Z=z)>0\) for \(z\) in the region of interest.
\end{itemize}
\end{assumption}

Define \(\mu_{Y,a}(x):=\E[Y\mid A=a,X=x]\) and \(\mu_{D,a}(x):=\E[D\mid A=a,X=x]\), \(a=0,1\). The reduced-form and first-stage effects are identified by the doubly robust scores
\[
\psi_Y^L(W;\eta_0)
:=
\mu_{Y,1}(X)-\mu_{Y,0}(X)
+
\frac{A\{Y-\mu_{Y,1}(X)\}}{q(X)}
-
\frac{(1-A)\{Y-\mu_{Y,0}(X)\}}{1-q(X)},
\]
and
\[
\psi_D^L(W;\eta_0)
:=
\mu_{D,1}(X)-\mu_{D,0}(X)
+
\frac{A\{D-\mu_{D,1}(X)\}}{q(X)}
-
\frac{(1-A)\{D-\mu_{D,0}(X)\}}{1-q(X)}.
\]
Under Assumption \ref{ass:clate}, the conditional LATE is identified by the conditional Wald ratio
\[
\tau_L(z)
=
\frac{\E[\psi_Y^L(W;\eta_0)\mid Z=z]}
{\E[\psi_D^L(W;\eta_0)\mid Z=z]}.
\]

Under \(\mathbb H_0^L\), the common value of \(\tau_L(z)\) is invariant to the distribution used to average over \(Z\). For identification and interpretation, we average over the distribution of \(Z\) among compliers, rather than the marginal distribution of \(Z\). This yields the global LATE
\[
\tau_0
:=
\E[Y(1)-Y(0)\mid \mathcal C]
=
\frac{\E[\psi_Y^L(W;\eta_0)]}{\E[\psi_D^L(W;\eta_0)]}
=
\int_{\mathcal Z}\tau_L(z)\,dF_{Z\mid\mathcal C}(z).
\]
Thus the testing problem in \eqref{eq:test_late} is equivalently written as the conditional moment restriction
\begin{equation}
\label{eq:clate-cmr}
\E\!\left[
\psi_Y^L(W;\eta_0)-\tau_0\psi_D^L(W;\eta_0)
\mid Z
\right]
=0
\quad\text{a.s.}
\end{equation}

To construct the feasible process, estimate \(q\), \(\mu_{Y,a}\), and \(\mu_{D,a}\) by the same sieve procedures as in the main analysis, with the instrument \(A\) replacing the treatment indicator $D$. Let \(\widehat\psi_Y^L\) and \(\widehat\psi_D^L\) denote the resulting feasible reduced-form and first-stage scores, and estimate \(\tau_0\) by the sample Wald ratio
\[
\widehat\tau_L
:=
\frac{n^{-1}\sum_{i=1}^n \widehat\psi_Y^L(W_i)}
     {n^{-1}\sum_{i=1}^n \widehat\psi_D^L(W_i)}.
\]
The corresponding empirical process is
\begin{equation}
\label{eq:clate-process}
\widehat R_n^L(z)
:=
\frac1{\sqrt n}\sum_{i=1}^n
\left\{
\widehat\psi_Y^L(W_i)-\widehat\tau_L\widehat\psi_D^L(W_i)
\right\}
\1\{Z_i\le z\},
\qquad z\in\mathcal Z.
\end{equation}
Large values of KS or CvM functionals of \(\widehat R_n^L\) provide evidence against \(\mathbb H_0^L\). The next proposition gives the uniform feasible-to-oracle approximation for this process.

\begin{proposition}
\label{prop:clate-process-uniform}
Suppose Assumption \ref{ass:clate} holds, together with the analogues of Assumptions \ref{ass:dist}--\ref{ass:SLE} for the instrument propensity score \(q\) and the nuisance regression functions entering \(\psi_Y^L\) and \(\psi_D^L\). Then, under \(\mathbb H_0^L\),
\[
\sup_{z\in\mathcal Z}
\left|
\widehat R_n^L(z)
-
\frac1{\sqrt n}\sum_{i=1}^n
\left\{
\psi_Y^L(W_i;\eta_0)
-
\tau_0\psi_D^L(W_i;\eta_0)
\right\}
\left\{
\1(Z_i\le z)-F_{Z\mid \mathcal C}(z)
\right\}
\right|
=o_p(1),
\]
where \(F_{Z\mid \mathcal C}(z):=\mP(Z\le z\mid \mathcal C)\).
\end{proposition}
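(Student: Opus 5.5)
The plan is to reduce the statement to two applications of the machinery behind Proposition \ref{prop:ICM-process-uniform}, one for each score, and then to handle the estimated Wald ratio \(\widehat\tau_L\) by a delta-method argument. Write \(\psi_{Y,i}:=\psi_Y^L(W_i;\eta_0)\), \(\psi_{D,i}:=\psi_D^L(W_i;\eta_0)\), and \(\pi_D:=\E[\psi_D^L(W;\eta_0)]=\mP(\mathcal C)>0\). I will also use the identity \(\E[\psi_D^L\mid Z=z]=\mP(\mathcal C\mid Z=z)\), which holds under Assumption \ref{ass:clate}.

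\emph{Step 1: uniform feasible-to-oracle bounds for each score.} The scores \(\psi_Y^L\) and \(\psi_D^L\) have exactly the AIPW form of \(\psi\), with \(A\) in place of \(D\) and with outcomes \(Y\) and \(D\) respectively. The nuisance estimators are the same series logit and sieve least squares estimators. Hence the argument proving Proposition \ref{prop:ICM-process-uniform}, applied before any centering, gives for \(V\in\{Y,D\}\)
\[
\sup_{z\in\mathcal Z}\Bigl|\frac1{\sqrt n}\sum_{i=1}^n\{\widehat\psi_V^L(W_i)-\psi_V^L(W_i;\eta_0)\}\1\{Z_i\le z\}\Bigr|=o_p(1).
\]
That argument combines Neyman orthogonality (the analogue of Proposition \ref{prop:dr-gateaux}), the second-order remainder bounds from the sieve rates, and stochastic equicontinuity over the VC class \(\{\1\{\cdot\le z\}\}\). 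For \(V=D\) the outcome is bounded, so the moment conditions are immediate. I will invoke this bound rather than redo it; the only care needed is that the rate conditions transfer verbatim under the stated analogues of Assumptions \ref{ass:dist}--\ref{ass:SLE}. Taking \(z\) as the upper endpoint of \(\mathcal Z\) gives in particular \(\sqrt n(\bar{\widehat\psi}_V-\bar\psi_V)=o_p(1)\), where bars denote sample means.

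\emph{Step 2: linearize the ratio.} By Step 1 and the CLT, \(\widehat\tau_L-\tau_0=O_p(n^{-1/2})\). Expanding the ratio gives
\[
\sqrt n(\widehat\tau_L-\tau_0)=\pi_D^{-1}\frac1{\sqrt n}\sum_{i=1}^n(\psi_{Y,i}-\tau_0\psi_{D,i})+o_p(1),
\]
using \(\E[\psi_Y^L]=\tau_0\pi_D\) and the fact that \(\bar{\widehat\psi}_D\to_p\pi_D>0\).

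\emph{Step 3: decompose the process.} Write \(\widehat R_n^L(z)\) as
\[
\frac1{\sqrt n}\sum_{i=1}^n\{\widehat\psi_{Y,i}-\tau_0\widehat\psi_{D,i}\}\1\{Z_i\le z\}-\sqrt n(\widehat\tau_L-\tau_0)\,\frac1n\sum_{i=1}^n\widehat\psi_{D,i}\1\{Z_i\le z\}.
\]
By Step 1 the first term equals its oracle version plus \(o_p(1)\) uniformly in \(z\). For the second term, Step 1 combined with a Glivenko--Cantelli argument gives \(n^{-1}\sum_i\widehat\psi_{D,i}\1\{Z_i\le z\}\to\E[\psi_D^L\1\{Z\le z\}]\) uniformly in \(z\). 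By the identity above, this limit equals \(\E[\mP(\mathcal C\mid Z)\1\{Z\le z\}]=\pi_D F_{Z\mid\mathcal C}(z)\). Multiplying this limit by the \(O_p(1)\) expansion from Step 2, the factor \(\pi_D\) cancels, and the second term becomes \(F_{Z\mid\mathcal C}(z)\,n^{-1/2}\sum_i(\psi_{Y,i}-\tau_0\psi_{D,i})+o_p(1)\) uniformly in \(z\). Collecting terms yields exactly the stated representation.

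Note that \(\mathbb H_0^L\) is not actually needed for this linearization. It only ensures that the oracle summand has conditional mean zero given \(Z\), which is what makes the representation a centered limit for the subsequent weak-convergence results.

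\emph{Main obstacle.} The main difficulty is Step 1, specifically verifying that the uniform remainder bounds from Proposition \ref{prop:ICM-process-uniform} carry over to the two new scores. The key term is the product-of-errors remainder \(\sup_z|n^{-1/2}\sum_i(\hat q-q)(\hat\mu_{V,1}-\mu_{V,1})(X_i)\,\cdots\,\1\{Z_i\le z\}|\), which must be shown negligible. I would first check that this remainder is bounded by \(\sqrt n\,\|\hat q-q\|_\infty\|\hat\mu-\mu\|_\infty\), which is \(o(1)\) under the rate condition on \(\nu\). I would then check that the empirical-process terms are controlled by the same entropy bounds, since for \(\psi_D^L\) the outcome is bounded and the weights are unchanged.
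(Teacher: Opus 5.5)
Your proposal is correct and follows essentially the same route as the paper: uniform oracle replacement with the uncentered indicator, the delta-method expansion of $\widehat\tau_L$, the same two-term decomposition of $\widehat R_n^L(z)$, and the identification $\E[\psi_D^L\1\{Z\le z\}]/\E[\psi_D^L]=F_{Z\mid\mathcal C}(z)$. The paper applies the oracle replacement once to the combined score with outcome $Y-\tau_0D$, which by linearity of the sieve estimators is equivalent to your two separate applications, and you are right that $\mathbb H_0^L$ is never used. One caution on your ``main obstacle'' remark: the product remainder should be bounded as in the paper by $\sqrt n\,\|\hat\mu-\mu\|_{2,n}\|\hat q-q\|_{2,n}$. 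A sup-norm product bound need not be $o(1)$ under the maintained conditions, which give only empirical $L_2$ rates for the outcome regressions and allow $s_\mu$ just above $r/2$. Moreover, the genuinely delicate pieces are not this product but the first-order terms in $\hat\mu-\mu$ and $\hat q-q$, which the paper handles with its conditional-centering and maximal-inequality lemmas.
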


Proposition \ref{prop:clate-process-uniform} is the IV analogue of Proposition \ref{prop:ICM-process-uniform}. The reduced-form residual $\psi_Y^L(W;\eta_0)-\tau_0\psi_D^L(W;\eta_0)$ plays the role of the oracle score. The centering term is \(F_{Z\mid\mathcal C}(z)\), rather than \(F_Z(z)\), because \(\tau_0\) is the global LATE averaged over the complier distribution.

Under \(\mathbb H_0^L\), Proposition \ref{prop:clate-process-uniform} implies
\[
\widehat R_n^L \rightsquigarrow \mathbb G^L
\qquad\text{in } \ell^\infty(\mathcal Z),
\]
where \(\mathbb G^L\) is a centered Gaussian process with covariance kernel
\[
K^L(z_1,z_2)
=
\E\!\left[
\sigma_L^2(Z)
\left\{\1(Z\le z_1)-F_{Z\mid\mathcal C}(z_1)\right\}
\left\{\1(Z\le z_2)-F_{Z\mid\mathcal C}(z_2)\right\}
\right],
\]
with $\sigma_L^2(Z)
:=
\Var\!\left(
\psi_Y^L(W;\eta_0)-\tau_0\psi_D^L(W;\eta_0)
\mid Z
\right)$. Therefore, the KS and CvM statistics constructed from \(\widehat R_n^L\) converge to the corresponding continuous functionals of \(\mathbb G^L\). In implementation, critical values can be computed by the multiplier bootstrap process
\begin{equation}
    \label{eq:bootstrap_process_clate}
    \widehat R_n^{L,*}(z)
:=
\frac1{\sqrt n}\sum_{i=1}^n
V_i
\{\widehat\psi_Y^L(W_i)-\widehat\tau_L\widehat\psi_D^L(W_i)\}
\{\1\{Z_i\le z\}-\widehat F_{Z\mid\mathcal C}(z)\},
\end{equation}
where $\widehat F_{Z\mid\mathcal C}(z):=\sum_i \widehat\psi_D^L(W_i)\1\{Z_i\le z\}/\sum_i \widehat\psi_D^L(W_i)$. The validity of this bootstrap procedure follows by the same arguments as in the unconfounded and parametric-extension cases. We omit the details.
\section{Monte Carlo Simulation Study}\label{sec:simu}

In this section, we conduct a Monte Carlo study of the finite-sample performance of the proposed heterogeneity test. The designs examine empirical size when the CATE is constant and power when the CATE varies with the covariate of interest. Throughout, the covariate of interest is \(Z=X_1\), while the full covariate vector is \(X=(X_1,\ldots,X_r)'\) with \(r\in\{3,5,10\}\).

We consider sample sizes \(n\in\{500,1000,2000\}\). All results are based on \(5{,}000\) Monte Carlo replications, and bootstrap \(p\)-values are computed using \(B=2{,}000\) multiplier bootstrap draws. We report empirical rejection frequencies at the 1\%, 5\%, and 10\% nominal levels. In each replication, the propensity score is estimated by series logit, and the nuisance regressions \(\mu_1(X)\) and \(\mu_0(X)\) are estimated by sieve least squares using the same type of basis. To keep the number of series terms moderate as the dimension increases, we use a total-degree cubic polynomial basis when \(r=3\), a total-degree quadratic basis when \(r=5\), and an additive quadratic basis when \(r=10\). The additive quadratic basis contains an intercept, all linear terms, and all squared terms, but excludes interactions.

Across all designs, outcomes are generated by
\[
Y=\mu_0(X)+\tau(X_1)D+\varepsilon,
\qquad \tau_0=0.5.
\]
Under the null, \(\tau(X_1)\equiv\tau_0\). Under the alternatives, \(\tau(X_1)\) varies with \(X_1\). We consider both homoskedastic disturbances, \(\varepsilon\sim \mathcal N(0,\sigma_u^2)\) with \(\sigma_u=0.25\), and heteroskedastic disturbances,
\(\varepsilon\sim \mathcal N(0,\sigma_u^2\{1+|X_1-0.5|\}^2)\). The three data-generating processes differ in the covariate distribution, the treatment assignment mechanism, the baseline outcome function, and the form of treatment-effect heterogeneity, as described below.

\paragraph{Case I: Smooth logit design.}
Let \(X_j\stackrel{iid}{\sim}\mathrm{Unif}[0,1]\), \(j=1,\ldots,r\), and write \(U_j=X_j-0.5\). Treatment is generated from
\(D\mid X\sim\mathrm{Bernoulli}(\Lambda\{m(X)\})\). When \(r=3\),
\[
m(X)= -0.2+1.2U_1+0.8U_2+0.5U_3+0.6U_1U_2+0.4U_1^2+1.5U_1^3+0.4e^{-6U_2^2},
\]
and $\mu_0(X)=0.5U_2+0.3U_2U_3+0.2U_1^2+0.15U_1^3$.

For \(r>3\), we add \(0.2\bar U+0.2\bar E\) to \(m(X)\) and replace the last term in \(\mu_0(X)\) by \(0.15\bar U\), where
\(\bar U=(r-3)^{-1}\sum_{j=4}^r U_j\) and
\(\bar E=(r-3)^{-1}\sum_{j=4}^r e^{-6U_j^2}\). Under the alternative, $\tau(X_1)=\tau_0+0.2(X_1-0.5)$.

\paragraph{Case II: Threshold assignment design.}
Let \(X_j\stackrel{iid}{\sim}\mathrm{Unif}[0,1]\), \(j=1,\ldots,r\). Treatment is generated by
\(D=\1\{U<p(X)\}\), where \(U\sim\mathrm{Unif}[0,1]\) is independent of \(X\),
\(p(X)=0.1+0.4S(X)\), and
\[
S(X)=
\begin{cases}
(X_1+X_2+X_3)/3, & r=3,\\
0.8(X_1+X_2+X_3)/3+0.2(r-3)^{-1}\sum_{j=4}^r X_j, & r>3.
\end{cases}
\]
The baseline outcome is \(\mu_0(X)=0.5(X_2-0.5)+0.5(X_3-0.5)\) when \(r=3\), and adds \(0.15\bar U\) when \(r>3\), where \(\bar U=(r-3)^{-1}\sum_{j=4}^r(X_j-0.5)\). Under the alternative, $\tau(X_1)=\tau_0+0.4(X_1-0.5)\1\{X_1>0.5\}$.

\paragraph{Case III: Fractional latent-index design.}
Let \(X_j\stackrel{iid}{\sim}\mathrm{Beta}(2,2)\), \(j=1,\ldots,r\), and write \(U_j=X_j-0.5\). Treatment is generated by \(D=\1\{m(X)-e>0\}\), where \(e\sim \mathcal N(0,1)\), so that \(p(X)=\Phi\{m(X)\}\). The latent index is
\[
m(X)=
\frac{-0.10+0.10\sum_{j=1}^r X_j^2+0.50U_1+0.40U_2+0.30U_1U_2}
{\exp\{-0.10\sum_{j=1}^r X_j^2\}},
\]
and the baseline outcome is $\mu_0(X)=0.4U_1+0.3U_2+0.2U_3^2+0.10r^{-1}\sum_{j=1}^r U_j$. Under the alternative, $\tau(X_1)=\tau_0+0.8(X_1-0.4)^2$.

The main findings are as follows. Table \ref{tab:size} reports empirical rejection frequencies under the constant-CATE null. The results show accurate size control across a broad range of designs. This includes different treatment assignment mechanisms, from a smooth logit propensity score to a threshold assignment rule and a nonlinear latent-index model; different baseline outcome functions; dimensions \(r\in\{3,5,10\}\); and both homoskedastic and heteroskedastic disturbances. Across these settings, the rejection frequencies of both KS and CvM statistics remain close to the nominal 1\%, 5\%, and 10\% levels. Thus, the multiplier bootstrap appears to provide a stable finite-sample approximation to the null distribution, even when the treatment allocation rule and the outcome equation differ substantially across designs.

\begin{landscape}
\begingroup
\centering
\captionof{table}{Empirical size under homoskedastic and heteroskedastic errors.}
\label{tab:size}
\scriptsize
\setlength{\tabcolsep}{2.5pt}
\renewcommand{\arraystretch}{0.65}
\begin{tabular*}{\linewidth}{@{\extracolsep{\fill}} ll*{12}{c} @{}}
\toprule
& & \multicolumn{6}{c}{Homoskedastic errors} & \multicolumn{6}{c}{Heteroskedastic errors} \\
\cmidrule(lr){3-8}\cmidrule(lr){9-14}
& & \multicolumn{2}{c}{$n=500$} & \multicolumn{2}{c}{$n=1000$} & \multicolumn{2}{c}{$n=2000$}
& \multicolumn{2}{c}{$n=500$} & \multicolumn{2}{c}{$n=1000$} & \multicolumn{2}{c}{$n=2000$} \\
\cmidrule(lr){3-4}\cmidrule(lr){5-6}\cmidrule(lr){7-8}
\cmidrule(lr){9-10}\cmidrule(lr){11-12}\cmidrule(lr){13-14}
$r$ & $\alpha$ & KS & CvM & KS & CvM & KS & CvM & KS & CvM & KS & CvM & KS & CvM \\
\midrule
\multicolumn{14}{l}{\textit{Case I: Smooth logit design}} \\
\midrule
\multirow{3}{*}{3}
& 1\% & 1.02 & 0.94 & 1.18 & 1.32 & 1.26 & 1.24 & 1.04 & 1.08 & 1.14 & 1.24 & 1.30 & 1.22 \\
& 5\% & 5.42 & 4.98 & 5.40 & 5.18 & 5.98 & 5.72 & 5.50 & 5.20 & 5.28 & 4.98 & 5.86 & 5.80 \\
& 10\% & 10.52 & 10.36 & 10.06 & 10.40 & 11.50 & 11.00 & 10.90 & 10.64 & 10.12 & 10.58 & 11.20 & 10.86 \\
\addlinespace
\multirow{3}{*}{5}
& 1\% & 1.18 & 1.34 & 1.32 & 1.12 & 1.24 & 1.00 & 1.18 & 1.36 & 1.22 & 1.14 & 1.06 & 0.90 \\
& 5\% & 5.30 & 5.16 & 5.38 & 5.28 & 5.30 & 5.22 & 5.28 & 5.40 & 5.66 & 5.06 & 5.32 & 5.00 \\
& 10\% & 10.50 & 10.32 & 10.62 & 10.56 & 10.46 & 10.14 & 10.86 & 10.46 & 10.76 & 10.48 & 10.50 & 10.34 \\
\addlinespace
\multirow{3}{*}{10}
& 1\% & 1.42 & 1.46 & 0.96 & 1.12 & 1.16 & 0.88 & 1.46 & 1.54 & 1.06 & 1.16 & 1.06 & 0.84 \\
& 5\% & 5.88 & 5.62 & 5.64 & 5.50 & 4.98 & 5.22 & 5.68 & 5.46 & 5.40 & 5.42 & 5.06 & 5.20 \\
& 10\% & 11.08 & 10.46 & 10.52 & 10.42 & 10.60 & 10.40 & 10.82 & 10.52 & 10.72 & 10.58 & 10.40 & 10.24 \\
\midrule
\multicolumn{14}{l}{\textit{Case II: Threshold-assignment design}} \\
\midrule
\multirow{3}{*}{3}
& 1\% & 0.94 & 0.88 & 0.84 & 0.80 & 0.96 & 0.80 & 1.06 & 1.00 & 0.74 & 0.86 & 0.82 & 0.66 \\
& 5\% & 4.34 & 4.38 & 4.28 & 4.54 & 4.66 & 4.18 & 4.78 & 4.40 & 4.32 & 4.46 & 4.34 & 4.12 \\
& 10\% & 9.30 & 9.50 & 9.64 & 9.42 & 9.82 & 9.74 & 9.06 & 9.68 & 9.12 & 8.96 & 9.10 & 9.40 \\
\addlinespace
\multirow{3}{*}{5}
& 1\% & 0.62 & 0.66 & 0.98 & 0.88 & 1.02 & 1.14 & 0.70 & 0.88 & 1.02 & 1.00 & 1.00 & 1.00 \\
& 5\% & 3.94 & 3.94 & 4.38 & 4.52 & 5.04 & 5.00 & 4.34 & 4.56 & 4.72 & 4.82 & 4.94 & 5.00 \\
& 10\% & 8.70 & 8.72 & 9.34 & 9.70 & 10.18 & 9.78 & 9.20 & 9.08 & 9.68 & 9.50 & 9.74 & 9.66 \\
\addlinespace
\multirow{3}{*}{10}
& 1\% & 1.08 & 0.92 & 1.22 & 1.22 & 0.98 & 1.10 & 0.96 & 0.86 & 1.24 & 1.24 & 1.00 & 1.16 \\
& 5\% & 5.34 & 5.34 & 5.34 & 5.42 & 5.52 & 5.10 & 4.84 & 4.98 & 5.46 & 5.24 & 5.26 & 4.86 \\
& 10\% & 10.96 & 11.42 & 11.26 & 10.44 & 10.48 & 9.94 & 10.42 & 10.62 & 10.94 & 10.40 & 10.16 & 9.70 \\
\midrule
\multicolumn{14}{l}{\textit{Case III: Fractional latent-index design}} \\
\midrule
\multirow{3}{*}{3}
& 1\% & 0.94 & 0.94 & 1.06 & 1.16 & 1.08 & 0.92 & 0.94 & 1.06 & 1.20 & 1.12 & 1.00 & 0.94 \\
& 5\% & 4.96 & 4.96 & 5.04 & 5.22 & 4.80 & 4.46 & 4.92 & 5.04 & 5.20 & 5.26 & 4.86 & 4.60 \\
& 10\% & 10.80 & 10.32 & 10.32 & 10.34 & 9.30 & 9.72 & 10.78 & 10.52 & 10.50 & 10.34 & 9.88 & 9.94 \\
\addlinespace
\multirow{3}{*}{5}
& 1\% & 1.08 & 1.06 & 0.98 & 1.08 & 1.24 & 1.24 & 1.08 & 1.12 & 1.00 & 1.08 & 1.20 & 1.26 \\
& 5\% & 5.48 & 5.42 & 5.30 & 4.92 & 4.94 & 4.88 & 5.44 & 5.30 & 5.26 & 5.34 & 4.90 & 5.10 \\
& 10\% & 10.72 & 10.30 & 10.58 & 10.54 & 9.94 & 10.16 & 10.70 & 10.76 & 10.68 & 10.52 & 9.98 & 9.78 \\
\addlinespace
\multirow{3}{*}{10}
& 1\% & 1.32 & 1.28 & 1.00 & 0.78 & 1.02 & 1.08 & 1.22 & 1.26 & 0.94 & 0.72 & 1.10 & 1.10 \\
& 5\% & 5.90 & 6.38 & 4.96 & 4.74 & 5.18 & 5.26 & 5.88 & 6.10 & 4.90 & 4.78 & 5.50 & 5.12 \\
& 10\% & 11.70 & 11.56 & 10.16 & 9.66 & 10.56 & 10.14 & 11.48 & 11.58 & 9.68 & 9.76 & 10.36 & 10.28 \\
\bottomrule
\end{tabular*}

\vspace{0.35em}
\begin{minipage}{0.96\linewidth}
\scriptsize
\textit{Notes.} Entries report rejection frequencies (in percent) over 5,000 Monte Carlo replications under the constant-CATE null. The nominal significance levels are 1\%, 5\%, and 10\%. Bootstrap critical values are based on 2,000 multiplier bootstrap draws.
\end{minipage}
\par
\endgroup
\newpage
\begingroup
\centering
\captionof{table}{Empirical power under homoskedastic and heteroskedastic errors.}
\label{tab:power}
\scriptsize
\setlength{\tabcolsep}{2.5pt}
\renewcommand{\arraystretch}{0.65}
\begin{tabular*}{\linewidth}{@{\extracolsep{\fill}} ll*{12}{c} @{}}
\toprule
& & \multicolumn{6}{c}{Homoskedastic errors} & \multicolumn{6}{c}{Heteroskedastic errors} \\
\cmidrule(lr){3-8}\cmidrule(lr){9-14}
& & \multicolumn{2}{c}{$n=500$} & \multicolumn{2}{c}{$n=1000$} & \multicolumn{2}{c}{$n=2000$}
& \multicolumn{2}{c}{$n=500$} & \multicolumn{2}{c}{$n=1000$} & \multicolumn{2}{c}{$n=2000$} \\
\cmidrule(lr){3-4}\cmidrule(lr){5-6}\cmidrule(lr){7-8}
\cmidrule(lr){9-10}\cmidrule(lr){11-12}\cmidrule(lr){13-14}
$r$ & $\alpha$ & KS & CvM & KS & CvM & KS & CvM & KS & CvM & KS & CvM & KS & CvM \\
\midrule
\multicolumn{14}{l}{\textit{Case I: Smooth logit design}} \\
\midrule
\multirow{3}{*}{3}
& 1\% & 28.32 & 34.88 & 60.42 & 68.94 & 91.94 & 95.88 & 15.96 & 18.96 & 36.38 & 40.88 & 70.54 & 76.04 \\
& 5\% & 52.88 & 59.14 & 81.20 & 86.24 & 98.12 & 98.98 & 36.34 & 38.80 & 60.80 & 65.20 & 87.86 & 90.36 \\
& 10\% & 65.22 & 70.62 & 88.54 & 91.94 & 99.22 & 99.52 & 47.76 & 51.52 & 71.68 & 74.98 & 93.10 & 94.30 \\
\addlinespace
\multirow{3}{*}{5}
& 1\% & 32.74 & 39.72 & 64.56 & 72.90 & 93.96 & 96.78 & 18.66 & 21.16 & 38.72 & 43.68 & 73.34 & 77.96 \\
& 5\% & 56.78 & 62.86 & 84.20 & 88.42 & 98.70 & 99.38 & 38.60 & 42.00 & 63.26 & 67.52 & 89.20 & 92.10 \\
& 10\% & 69.46 & 73.86 & 90.68 & 93.68 & 99.44 & 99.70 & 50.54 & 53.94 & 73.60 & 77.46 & 93.76 & 95.54 \\
\addlinespace
\multirow{3}{*}{10}
& 1\% & 35.58 & 42.32 & 68.58 & 76.34 & 95.46 & 97.66 & 18.92 & 21.56 & 41.60 & 45.98 & 76.84 & 80.68 \\
& 5\% & 60.28 & 66.42 & 86.78 & 90.44 & 99.14 & 99.62 & 39.56 & 43.52 & 65.20 & 70.22 & 91.14 & 93.08 \\
& 10\% & 72.22 & 77.06 & 92.08 & 94.34 & 99.64 & 99.84 & 52.00 & 56.16 & 76.06 & 79.44 & 94.82 & 96.08 \\
\midrule
\multicolumn{14}{l}{\textit{Case II: Threshold-assignment design}} \\
\midrule
\multirow{3}{*}{3}
& 1\% & 17.30 & 17.54 & 46.72 & 46.62 & 87.70 & 85.86 & 10.84 & 11.08 & 29.12 & 28.26 & 67.34 & 63.42 \\
& 5\% & 36.94 & 38.88 & 70.86 & 71.08 & 96.12 & 95.72 & 26.22 & 27.50 & 52.48 & 53.44 & 86.10 & 83.90 \\
& 10\% & 49.04 & 51.58 & 80.64 & 81.42 & 98.24 & 97.86 & 37.82 & 39.72 & 64.74 & 65.68 & 92.04 & 91.46 \\
\addlinespace
\multirow{3}{*}{5}
& 1\% & 16.68 & 18.52 & 49.06 & 50.18 & 90.32 & 89.94 & 10.40 & 11.48 & 30.94 & 31.54 & 69.78 & 68.32 \\
& 5\% & 35.16 & 38.10 & 70.96 & 73.02 & 97.06 & 97.16 & 25.72 & 27.80 & 54.00 & 54.54 & 87.88 & 86.94 \\
& 10\% & 46.48 & 50.28 & 80.14 & 82.42 & 98.64 & 98.60 & 36.44 & 38.84 & 65.42 & 67.26 & 92.98 & 92.92 \\
\addlinespace
\multirow{3}{*}{10}
& 1\% & 25.90 & 28.38 & 63.72 & 66.70 & 95.40 & 96.20 & 15.46 & 16.36 & 38.66 & 38.86 & 76.82 & 76.32 \\
& 5\% & 48.98 & 52.98 & 84.10 & 86.14 & 98.84 & 99.08 & 33.96 & 36.14 & 63.18 & 64.78 & 91.62 & 91.60 \\
& 10\% & 61.96 & 65.40 & 90.48 & 92.02 & 99.46 & 99.60 & 46.32 & 48.56 & 74.46 & 76.00 & 95.08 & 95.76 \\
\midrule
\multicolumn{14}{l}{\textit{Case III: Fractional latent-index design}} \\
\midrule
\multirow{3}{*}{3}
& 1\% & 13.44 & 14.48 & 33.32 & 35.36 & 72.68 & 75.02 & 9.44 & 8.94 & 22.20 & 21.00 & 53.96 & 49.80 \\
& 5\% & 32.54 & 34.88 & 59.50 & 62.14 & 90.04 & 91.58 & 24.70 & 24.26 & 45.64 & 44.24 & 76.84 & 76.52 \\
& 10\% & 45.72 & 48.62 & 72.14 & 75.54 & 94.98 & 96.02 & 36.68 & 36.52 & 58.24 & 58.28 & 85.86 & 85.84 \\
\addlinespace
\multirow{3}{*}{5}
& 1\% & 16.72 & 17.78 & 37.42 & 39.04 & 76.22 & 76.38 & 11.14 & 10.94 & 25.32 & 22.96 & 55.98 & 51.22 \\
& 5\% & 37.76 & 40.08 & 64.04 & 66.02 & 92.20 & 93.44 & 28.10 & 28.14 & 47.92 & 47.12 & 78.30 & 76.56 \\
& 10\% & 50.98 & 53.04 & 75.54 & 77.96 & 96.42 & 97.24 & 40.48 & 40.68 & 61.22 & 60.92 & 87.02 & 86.74 \\
\addlinespace
\multirow{3}{*}{10}
& 1\% & 21.54 & 21.92 & 42.14 & 42.44 & 77.26 & 76.90 & 13.64 & 13.00 & 26.58 & 23.76 & 55.88 & 50.68 \\
& 5\% & 43.66 & 45.60 & 68.92 & 69.64 & 92.22 & 92.54 & 32.02 & 30.94 & 50.76 & 48.20 & 77.88 & 75.30 \\
& 10\% & 57.40 & 58.98 & 79.52 & 80.92 & 95.88 & 96.90 & 44.04 & 43.48 & 63.88 & 63.06 & 85.92 & 85.62 \\
\bottomrule
\end{tabular*}

\vspace{0.35em}
\begin{minipage}{0.96\linewidth}
\scriptsize
\textit{Notes.} Entries report rejection frequencies (in percent) over 5,000 Monte Carlo replications under heterogeneous alternatives. The nominal significance levels are 1\%, 5\%, and 10\%. Bootstrap critical values are based on 2,000 multiplier bootstrap draws.
\end{minipage}
\par
\endgroup
\end{landscape}
\FloatBarrier

Table \ref{tab:power} reports rejection frequencies under heterogeneous alternatives. The alternatives differ not only in the assignment and outcome equations but also in the form of treatment-effect heterogeneity: the smooth logit design uses a linear departure from homogeneity, the threshold-assignment design uses a one-sided threshold-type departure, and the fractional latent-index design uses a quadratic departure. In all cases, power increases clearly with the sample size, consistent with the fixed-alternative theory. Under homoskedastic errors, the tests have high power by \(n=2000\) in all three designs. Under heteroskedastic errors, power is lower, reflecting the reduced signal-to-noise ratio, but it still rises substantially with \(n\). Hence the tests remain informative across different forms of treatment allocation, outcome response, and treatment-effect heterogeneity. Comparing the two statistics, CvM is often slightly more powerful than KS for smooth alternatives, whereas the two statistics behave similarly in the threshold-type design. Overall, the results support the finite-sample reliability of the bootstrap approximation and the ability of the proposed tests to detect heterogeneous treatment effects across a range of allocation rules, outcome models, and heterogeneity patterns.

\section{Empirical Illustration}\label{sec:empirical}
In this section, we apply the proposed tests to examine whether the effect of maternal smoking during pregnancy on infant birth weight varies with mother's age. We use North Carolina vital-statistics records from 1988 to 2002. These data have been widely used to study treatment-effect heterogeneity in this setting; see, among others, \citet{abrevaya2015estimating}, \citet{lee2017doubly}, \citet{fan2022estimation}, and \citet{cai2025nonparametric}. Following the existing literature, we restrict the sample to first-time mothers and analyze Black and White mothers separately. We further follow \citet{cai2025nonparametric} in restricting attention to mothers aged 20--30, yielding samples of 79,412 Black mothers and 274,495 White mothers.

Low infant birth weight is associated with adverse health and human-capital outcomes throughout the life cycle \citep{black2007cradle,almond2011human}, and maternal smoking during pregnancy is widely regarded as an important preventable cause of low birth weight \citep{kramer1987intrauterine}. These concerns have motivated a growing literature on whether the effect of maternal smoking varies with mother's age. \citet{abrevaya2015estimating} and \citet{lee2017doubly} estimate age-specific CATEs, with their estimated profiles generally suggesting that the adverse effect of smoking becomes more pronounced at older ages, although the magnitude and statistical precision of the heterogeneity differ across studies. \citet{fan2022estimation} revisit the same question using high-dimensional methods for nuisance estimation and also reports evidence of age-related variation. Focusing instead on partially conditional quantile treatment effects, \citet{cai2025nonparametric} find significant age variation for White mothers but not for Black mothers. Motivated by these findings, we apply the proposed tests to assess, separately by race, whether the CATE is constant over mother's age. A rejection of the homogeneity null naturally raises the further question of whether the resulting age variation can be summarized by a parsimonious functional form. We therefore also test whether the CATE is linear in mother's age. In this way, our analysis complements the existing estimation and confidence-band evidence by providing direct tests of both CATE homogeneity and the linear specification.

Here, the outcome $Y$ is infant birth weight measured in grams, the treatment indicator $D$ equals one if the mother smoked during pregnancy, and the covariate of interest $Z$ is mother's age. Following \citet{abrevaya2015estimating}, the covariate vector $X$ contains 13 observed characteristics: mother's age and education, the month of the first prenatal visit, the number of prenatal visits, previous terminated pregnancies, and indicators for a male infant, marital status, missing father's age, gestational diabetes, hypertension, amniocentesis, ultrasound examinations, and alcohol use. We maintain unconfoundedness conditional on $X$, so that potential birth-weight outcomes are independent of maternal smoking status given these covariates; see \citet{abrevaya2015estimating} for a detailed discussion of this identifying assumption and the choice of controls.

We estimate the propensity score by series logit and the two conditional outcome regressions by sieve least squares. Our baseline nuisance basis contains an intercept, the 13 covariates, squared age, and interactions between age and the remaining covariates, for a total of 27 terms. This covariate expansion follows the empirical propensity-score implementations in \citet{abrevaya2015estimating} and \citet{cai2025nonparametric}. As a robustness check, we also use a 59-term age-mixed basis that adds interactions between the eight binary variables and the five continuous or count variables, together with additional centered-age terms. For each implementation, the same collection of basis terms is used in the propensity-score and outcome regressions.

Overlap is limited in this application. Among mothers aged 20--30, smokers account for only 8.3\% of the Black sample and 15.5\% of the White sample. Under the baseline basis, 40.3\% and 28.1\% of observations, respectively, have estimated propensity scores below 0.05. Following \citet{abrevaya2015estimating}, we therefore apply the data-driven trimming rule of \citet{crump2009dealing}. For each race and basis, we first estimate the propensity score, select the cutoff $\widehat{\alpha}$, and retain observations satisfying $\widehat{\alpha}\leq \widehat{p}(X)\leq 1-\widehat{\alpha}$. The baseline basis yields cutoffs of 0.036 for Black mothers and 0.070 for White mothers, close to those reported by \citet{abrevaya2015estimating}. We then re-estimate all nuisance functions on the retained sample before constructing the doubly robust scores and test statistics. As summarized in Table \ref{tab:empirical_sample}, the two bases retain approximately 55,000--57,000 Black mothers and 166,000--182,000 White mothers, corresponding to about 70--72\% and 61--66\% of the respective pre-trimming samples.

\begin{table}[!htbp]
\centering
\caption{Analysis samples and overlap trimming}
\label{tab:empirical_sample}
\begin{threeparttable}
\small
\begin{tabular}{llrrrr}
\toprule
Race & Nuisance basis & Analysis sample & Smokers (\%)
     & $\widehat\alpha$ & Retained sample \\
\midrule
Black & Baseline 27  & 79,412  & 8.3  & 0.036 & 56,838 \\
Black & Age-mixed 59 & 79,412  & 8.3  & 0.037 & 55,309 \\
White & Baseline 27  & 274,495 & 15.5 & 0.070 & 182,146 \\
White & Age-mixed 59 & 274,495 & 15.5 & 0.077 & 166,233 \\
\bottomrule
\end{tabular}
\begin{tablenotes}[flushleft]
\footnotesize
\item \textit{Notes.} The analysis sample contains first-time mothers aged 20--30 before overlap trimming. The smoking share is computed before trimming. The retained sample contains observations with estimated propensity scores in $[\widehat\alpha,1-\widehat\alpha]$. All nuisance functions are re-estimated after trimming.
\end{tablenotes}
\end{threeparttable}
\end{table}

Within each trimmed sample, we construct the doubly robust score and implement the KS and CvM tests using the Mammen multiplier bootstrap described in Section \ref{sec:mb}. We begin with the CATE homogeneity null \(\mathbb H_0\). Panel A of Table \ref{tab:empirical_tests} reports the bootstrap \(p\)-values. Under the baseline basis, the homogeneity test yields KS and CvM \(p\)-values of 0.392 and 0.621 for Black mothers, so the null of a constant CATE is not rejected. The conclusion is unchanged under the age-mixed basis, for which the corresponding \(p\)-values are 0.292 and 0.443. For White mothers, by contrast, both \(p\)-values equal 0.0005 under the baseline basis, and the same result is obtained under the age-mixed basis. The homogeneity null is therefore strongly rejected for White mothers under both bases. These separate-sample results should not be interpreted as a formal test that the Black and White CATE functions differ. The qualitative pattern is nevertheless consistent with \citet{cai2025nonparametric}, who find significant age variation in partially conditional quantile treatment effects for White mothers but not for Black mothers, although their estimand differs from the CATE considered here.

\begin{table}[!htbp]
\centering
\caption{Bootstrap \(p\)-values for CATE heterogeneity and linearity tests}
\label{tab:empirical_tests}
\begin{threeparttable}
\small
\renewcommand{\arraystretch}{1.08}
\begin{tabular*}{0.96\textwidth}{
    @{\extracolsep{\fill}}
    llrcccc
    @{}
}
\toprule
Race & Nuisance basis & \(n\) & KS & CvM
     & \(\widehat{\theta}_0\) & \(\widehat{\theta}_1\) \\
\midrule
\multicolumn{7}{l}{\textit{Panel A: CATE heterogeneity test}} \\
\addlinespace[2pt]
\multirow{2}{*}{Black}
  & Baseline 27  & 56,838  & 0.392  & 0.621  & -- & -- \\
  & Age-mixed 59 & 55,309  & 0.292  & 0.443  & -- & -- \\
\addlinespace[2pt]
\multirow{2}{*}{White}
  & Baseline 27  & 182,146 & 0.0005 & 0.0005 & -- & -- \\
  & Age-mixed 59 & 166,233 & 0.0005 & 0.0005 & -- & -- \\
\midrule
\multicolumn{7}{l}{
    \textit{Panel B: Linear CATE test and coefficient estimates}
} \\
\addlinespace[2pt]
\multirow{2}{*}{Black}
  & Baseline 27  & 56,838  & 0.238 & 0.382
  & -85.74 & -2.31 \\
  & Age-mixed 59 & 55,309  & 0.119 & 0.157
  & -58.71 & -3.37 \\
\addlinespace[2pt]
\multirow{2}{*}{White}
  & Baseline 27  & 182,146 & 0.088 & 0.120
  & -68.36 & -6.29 \\
  & Age-mixed 59 & 166,233 & 0.137 & 0.154
  & -70.64 & -6.05 \\
\bottomrule
\end{tabular*}

\begin{tablenotes}[flushleft]
\footnotesize
\item \textit{Notes.} Panel A reports bootstrap \(p\)-values for the
CATE homogeneity null, \(\tau(z)=\tau_0\). Panel B reports coefficient
estimates and bootstrap \(p\)-values for the linear CATE restriction,
\(\tau(z)=\theta_0+\theta_1 z\). The \(p\)-values are based on 2,000
Mammen multiplier bootstrap draws; 0.0005 is the smallest attainable
value.
\end{tablenotes}
\end{threeparttable}
\end{table}

The rejection of CATE homogeneity for White mothers motivates a more structured question: whether the age variation can be summarized by a linear function. We therefore apply the projected ICM test developed in Section \ref{sec:ex-para} to $\mathbb H_0^{\dagger}:\tau(z)=\theta_0+\theta_1 z$. Panel B of Table \ref{tab:empirical_tests} reports the corresponding coefficient estimates and bootstrap \(p\)-values. For White mothers, the baseline KS and CvM \(p\)-values are 0.088 and 0.120, respectively, while the age-mixed basis yields \(p\)-values of 0.137 and 0.154. Thus, the linear CATE restriction is not rejected at the 5\% level under either basis, although the baseline KS result is marginal at the 10\% level. Together with the strong rejection of homogeneity, these results are consistent with an approximately linear age profile over the retained age range. The estimated slopes are \(-6.29\) and \(-6.05\) grams per year under the baseline and age-mixed bases, respectively, implying that the estimated adverse effect of smoking becomes about 6 grams larger in magnitude for each additional year of mother's age. For completeness, the linear restriction is also not rejected for Black mothers, for whom the homogeneity null was already not rejected.

\begin{figure}[H]
    \centering
    \begin{minipage}{0.88\textwidth}
        \centering

        \begin{subfigure}[t]{0.485\linewidth}
            \centering
            \includegraphics[width=\linewidth]{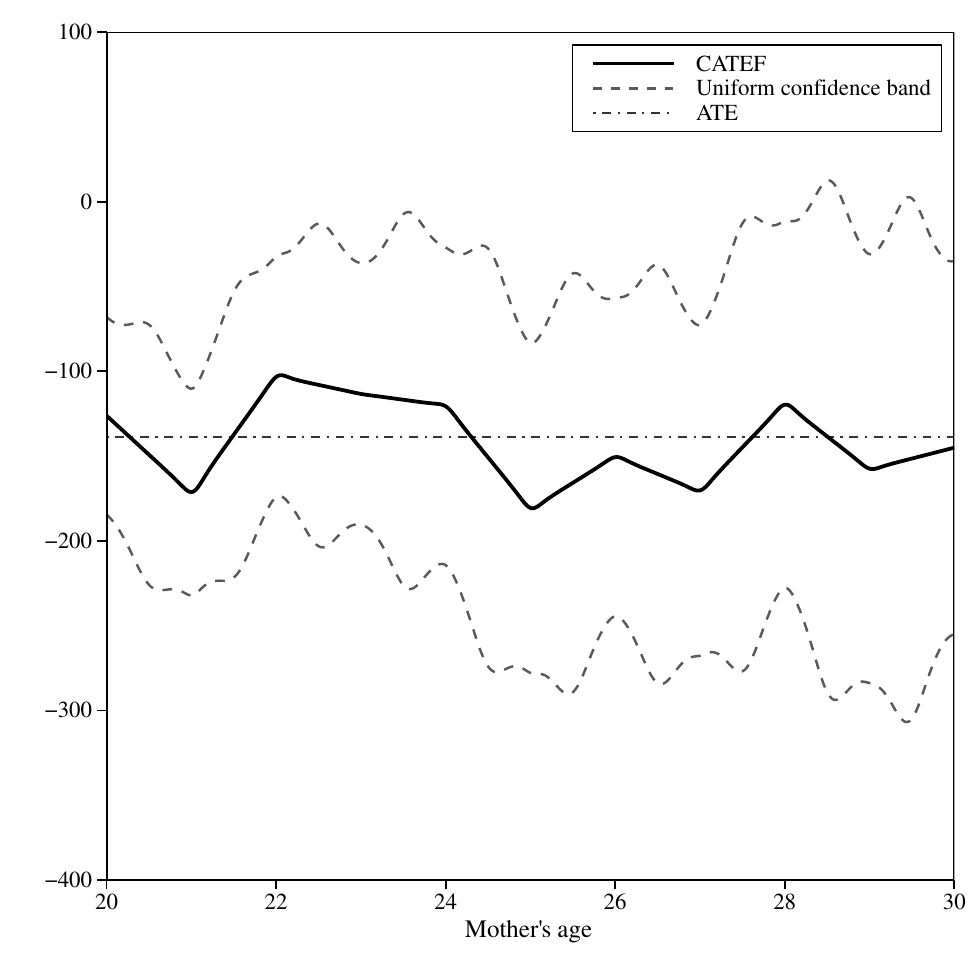}
            \caption{Black mothers}
            \label{fig:catef_black}
        \end{subfigure}
        \hfill
        \begin{subfigure}[t]{0.485\linewidth}
            \centering
            \includegraphics[width=\linewidth]{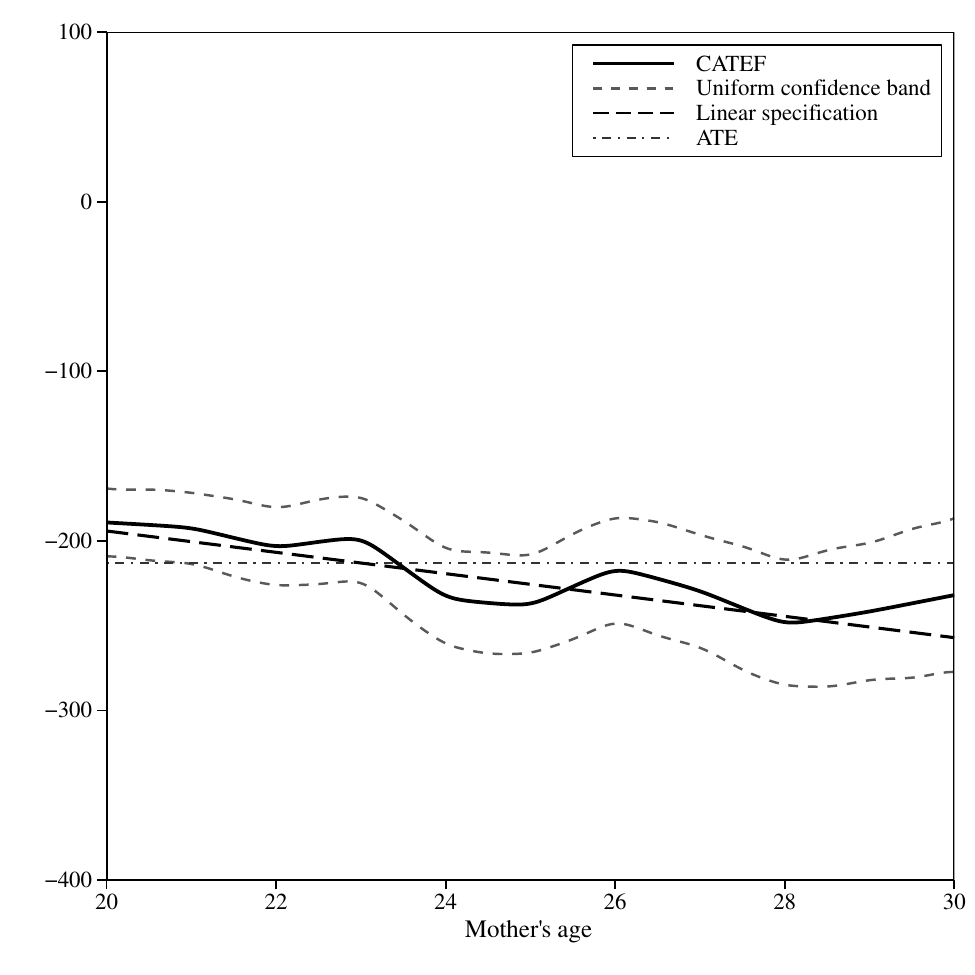}
            \caption{White mothers}
            \label{fig:catef_white}
        \end{subfigure}

    \end{minipage}
    \caption{Doubly robust local-linear CATEF estimates of the effect of maternal smoking on birth weight, with 95\% uniform confidence bands following \citet{lee2017doubly}. The horizontal line in each panel denotes the estimated ATE; the additional line for White mothers is the fitted linear CATE profile from the projected ICM procedure, corresponding to a specification not rejected at the 5\% level.}
    \label{fig:lee_catef_ucb_race}
\end{figure}

To complement the formal tests, Figure \ref{fig:lee_catef_ucb_race} plots doubly robust local-linear estimates of the CATEF, together with 95\% uniform confidence bands constructed using the method proposed by \citet{lee2017doubly}. The estimates use the baseline nuisance basis and the same overlap-trimmed samples as the formal tests. For Black mothers, the estimated CATEF fluctuates around the ATE without a clear systematic age profile, and the horizontal ATE line remains within the uniform confidence band over the displayed age range. For White mothers, the estimated effect becomes more negative with age, while the horizontal ATE line does not capture this movement and lies outside the uniform confidence band over parts of the age range. The additional line in the White-mother panel is the fitted linear CATE profile obtained from our projected ICM procedure. It corresponds to the linear restriction that is not rejected at the 5\% level and closely tracks the overall downward pattern of the nonparametric estimate. These graphical patterns are consistent with the non-rejection of CATE homogeneity for Black mothers and, for White mothers, the rejection of homogeneity together with the non-rejection of the linear restriction.

\section{Conclusion}\label{sec:con}

This paper develops orthogonal integrated conditional moment tests for CATE heterogeneity with respect to covariates of interest, while allowing a richer set of controls to be used for treatment-selection adjustment. Based on an augmented inverse-probability-weighted score, we construct Kolmogorov--Smirnov and Cramér--von Mises tests. Our main technical result establishes a uniform feasible-to-oracle approximation for the indexed process when the propensity score and outcome regressions are estimated by series logit and sieve least squares. This approximation yields the null asymptotic distributions, consistency against fixed alternatives, nontrivial power against \(n^{-1/2}\)-local alternatives, and a computationally simple multiplier bootstrap. The framework also accommodates projected specification tests of prespecified parametric CATE restrictions and conditional-LATE heterogeneity.

The simulations show accurate size and increasing power against heterogeneous alternatives. In the maternal-smoking application, CATE homogeneity with respect to mother's age is not rejected for Black mothers but is strongly rejected for White mothers. For White mothers, a linear age profile is not rejected at the 5\% level. These findings illustrate how direct specification tests can complement CATE estimation and visualization by assessing whether apparent heterogeneity is statistically significant and whether it admits a parsimonious representation.
\putbib 

\end{bibunit}
\clearpage

\begin{bibunit}
\appendix
\setstretch{1.0}
\renewcommand{\thesection}{ \Alph{section}}
\renewcommand{\thesubsection}{\thesection.\arabic{subsection}}
\counterwithin{equation}{section}
\counterwithin{figure}{section}
\counterwithin{table}{section}
\counterwithin{lemma}{section}

\setcounter{page}{1}
\pagenumbering{arabic}

\begin{center}
    {\LARGE Online Appendix for}\\[1ex]
    {\Large ``Orthogonal Integrated Conditional Moment Tests for Treatment Effect Heterogeneity''}\\[3ex]
    Haokun Lu \quad Xiaojun Song\\[1ex]
\end{center}

The Online Appendix is organized into three sections. Section \ref{sec:proofs} provides the proofs of all main results. Section \ref{sec:app_add_HIR} establishes the uniform IPW linearization result discussed in Remark \ref{rem:orthogonal-ipw}, which connects the plug-in IPW process with the doubly robust oracle process, and Section \ref{sec:app-parametric-sim} reports additional Monte Carlo results for the parametric-form extensions discussed in Section \ref{sec:ex-para}.

\vspace{1em}
\section{Proofs}
\label{sec:proofs}
\subsection{Proofs for Section \ref{sec:cate}}
\paragraph{Proof of Proposition \ref{prop:dr-gateaux}}
\begin{proof}
Fix \(z\in\mathcal Z\) and write \(I_z:=\1\{Z\le z\}\). Let
\[
\Delta_q(X):=q(X)-p(X),\qquad
\Delta_1(X):=m_1(X)-\mu_1(X),\qquad
\Delta_0(X):=m_0(X)-\mu_0(X).
\]
By the domination condition in the proposition, differentiation and expectation can be interchanged. Since \(\tau\) is fixed in the derivative with respect to \(r\),
\[
\left.
\frac{\partial}{\partial r}
M(z;\eta^r,\tau)
\right|_{r=0}
=
\E\!\left[
\left.
\frac{\partial}{\partial r}\psi(W;\eta^r)
\right|_{r=0}
I_z
\right].
\]
A direct differentiation along the joint path \(\eta^r=(q^r,m_1^r,m_0^r)\) gives
\[
\begin{aligned}
\left.
\frac{\partial}{\partial r}\psi(W;\eta^r)
\right|_{r=0}
&=
-\left\{
\frac{\Delta_1(X)}{p(X)}
+
\frac{\Delta_0(X)}{1-p(X)}
\right\}\{D-p(X)\} \\
&\quad
-\Delta_q(X)
\left[
\frac{D\{Y-\mu_1(X)\}}{p(X)^2}
+
\frac{(1-D)\{Y-\mu_0(X)\}}{\{1-p(X)\}^2}
\right].
\end{aligned}
\]
Therefore,
\[
\begin{aligned}
\left.
\frac{\partial}{\partial r}
M(z;\eta^r,\tau)
\right|_{r=0}
&=
-\E\!\left[
\left\{
\frac{\Delta_1(X)}{p(X)}
+
\frac{\Delta_0(X)}{1-p(X)}
\right\}\{D-p(X)\}I_z
\right] \\
&\quad
-\E\!\left[
\Delta_q(X)
\left\{
\frac{D\{Y-\mu_1(X)\}}{p(X)^2}
+
\frac{(1-D)\{Y-\mu_0(X)\}}{\{1-p(X)\}^2}
\right\}I_z
\right].
\end{aligned}
\]
Since \(Z\) is a subvector of \(X\), \(I_z\) is \(X\)-measurable. The first expectation is zero because
\[
\E[D-p(X)\mid X]=0.
\]
For the second expectation, by the definition of \(\mu_d(X)=\E[Y\mid D=d,X]\),
\[
\E[D\{Y-\mu_1(X)\}\mid X]
=
p(X)\E[Y-\mu_1(X)\mid D=1,X]
=0,
\]
and
\[
\E[(1-D)\{Y-\mu_0(X)\}\mid X]
=
\{1-p(X)\}\E[Y-\mu_0(X)\mid D=0,X]
=0.
\]
Because \(\Delta_q(X)I_z\) is \(X\)-measurable, the second expectation is also zero. Hence
\[
\left.
\frac{\partial}{\partial r}
M(z;\eta^r,\tau)
\right|_{r=0}
=0.
\]

Finally, for the scalar parameter \(\tau_0\), fix any nuisance value \(\eta\). Since \(\psi(W;\eta)\) does not depend on \(\tau\),
\[
\begin{aligned}
\left.
\frac{d}{dt}
M(z;\eta,\tau_0+t h_\tau)
\right|_{t=0}
&=
\left.
\frac{d}{dt}
\E[
\{\psi(W;\eta)-\tau_0-t h_\tau\}I_z]
\right|_{t=0} \\
&=
-h_\tau\E[I_z]
=
-h_\tau F_Z(z).
\end{aligned}
\]
\end{proof}

\subsection{Proofs for Section \ref{sec:asy}}

We next introduce the series notation used throughout the appendix. Following
\citet{hirano2003efficient} (hereafter HIR), we work with a power-series basis.
Let $\varphi=(\varphi_1,\ldots,\varphi_{r})'\in\mathbb Z_+^{r}$ be a
$r$-dimensional vector of nonnegative integers, and let
$|\varphi|=\sum_{j=1}^{r}\varphi_j$. Let
$\{\varphi(k)\}_{k=1}^\infty$ be an ordering of all distinct
$\varphi\in\mathbb Z_+^{r}$ such that $|\varphi(k)|$ is nondecreasing in
$k$. For $x^\varphi=\prod_{j=1}^{r}x_j^{\varphi_j}$, define the raw
power-series vector
$$
r^K(x):=\bigl(x^{\varphi(1)},\ldots,x^{\varphi(K)}\bigr)'.
$$
Equivalently, the sequence of powers begins with
$$
1,\ x_1,\ldots,x_{r},\ x_1^2,\ x_1x_2,\ldots,\ x_{r}^2,\ldots .
$$

As in HIR, we use a normalized version of this basis. Let $A_K$ be a
nonsingular $K\times K$ matrix and define
$$
R^K(x):=A_K r^K(x),
\qquad
\E[R^K(X)R^K(X)']=I_K.
$$
Since $A_K$ is nonsingular, $r^K$ and $R^K$ generate the same sieve space.
Finally, define
$$
\zeta(K):=\sup_{x\in\mathcal X}\|R^K(x)\|.
$$
For orthonormal polynomial bases, \cite{newey1997convergence} show that $\zeta(K)\lesssim K$.

Throughout the appendix, let
$\|f\|_{2,n}:=\{n^{-1}\sum_{i=1}^n f(X_i)^2\}^{1/2}$ denote the empirical
$L_2$ norm. For two nonnegative sequences $a_n$ and $b_n$, we write
$a_n\lesssim b_n$ if there exists a constant $C<\infty$, independent of $n$,
such that $a_n\le Cb_n$ for all sufficiently large $n$. We write
$a_n\asymp b_n$ if $a_n\lesssim b_n$ and $b_n\lesssim a_n$. In what follows,
$C$ denotes a generic positive constant that may vary from line to line. We next introduce several useful lemmas.

\begin{lemma}[Centered lower-orthant class]
\label{lem:centered-lower-orthant-vc}
Let $Z\in\mathbb R^{d_z}$, and for $z\in\mathcal Z\subset\mathbb R^{d_z}$, write
$$
F_Z(z):=\mathbb P(Z\le z)
=
\mathbb P(Z_1\le z_1,\ldots,Z_{d_z}\le z_{d_z}),
$$
where the inequality is understood componentwise. Define
$$
\mathcal H:=\{h_z:z\in\mathcal Z\},
\qquad
h_z(w):=\1\{z_w\le z\}-F_Z(z),
$$
where $w=(y,d,x,z_w)$. Then $\mathcal H$ is a VC-subgraph class with envelope $1$. Consequently, if $\chi(W)$ is measurable and $\E[\chi(W)^2]<\infty$, then
$$
\chi\mathcal H:=\{\chi(w)h_z(w):z\in\mathcal Z\}
$$
is $P$-Donsker with envelope $|\chi|$.
\end{lemma}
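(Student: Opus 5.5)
The plan has two parts. First I show directly that $\mathcal H$ is VC-subgraph by writing each subgraph as a Boolean combination of sets drawn from VC classes. Then I transfer the resulting uniform entropy bound to $\chi\mathcal H$ and apply the uniform-entropy Donsker theorem.

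\emph{Step 1: envelope and subgraph structure.} Since $\1\{z_w\le z\}\in\{0,1\}$ and $F_Z(z)\in[0,1]$, we have $|h_z(w)|\le 1$, so the constant $1$ is an envelope. For the VC-subgraph property, write $A_z:=\{w: z_w\le z\}$ and $c_z:=F_Z(z)$. The subgraph of $h_z$ is
$$
S_z=\{(w,t): t<h_z(w)\}
=\bigl[(A_z\times\mathbb R)\cap\{t<1-c_z\}\bigr]\cup\bigl[(A_z^c\times\mathbb R)\cap\{t<-c_z\}\bigr].
$$

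\emph{Step 2: $\mathcal H$ is VC-subgraph.} The following three classes are VC:
\begin{itemize}
\item $\mathcal C_1:=\{A_z\times\mathbb R: z\in\mathbb R^{d_z}\}$, the cylinders over lower orthants in $z_w$, with VC index at most $d_z+1$;
\item $\mathcal C_1^c$, the class of their complements;
\item $\mathcal C_2:=\{\{(w,t):t<a\}: a\in\mathbb R\}$, the half-lines in $t$.
\end{itemize}
By the permanence properties of VC classes (van der Vaart and Wellner, Lemma 2.6.17), the class
$$
\mathcal D:=\{(C\cap E)\cup(C'\cap E'): C\in\mathcal C_1,\ C'\in\mathcal C_1^c,\ E,E'\in\mathcal C_2\}
$$
is again VC. In $\mathcal D$ the four sets are chosen independently; in $S_z$ they are tied together through $z$. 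Even so, every $S_z$ belongs to $\mathcal D$, and any subclass of a VC class is VC. Hence $\{S_z:z\in\mathcal Z\}$ is a VC class, which means $\mathcal H$ is a VC-subgraph class.

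\emph{Step 3: Donsker property of $\chi\mathcal H$.} Since $\mathcal H$ is VC-subgraph with envelope $1$, Theorem 2.6.7 of van der Vaart and Wellner gives polynomial uniform covering numbers:
$$
\sup_Q N(\epsilon,\mathcal H,L_2(Q))\le C\epsilon^{-V}.
$$

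For multiplication by $\chi$, fix a finitely discrete $Q$ with $Q\chi^2>0$ and define $dQ_\chi:=\chi^2\,dQ/Q\chi^2$. Then
$$
\|\chi h-\chi h'\|_{Q,2}=\|\chi\|_{Q,2}\,\|h-h'\|_{Q_\chi,2},
$$
so
$$
N\bigl(\epsilon\|\chi\|_{Q,2},\chi\mathcal H,L_2(Q)\bigr)\le N(\epsilon,\mathcal H,L_2(Q_\chi))\le C\epsilon^{-V}.
$$
This gives a finite uniform entropy integral relative to the envelope $|\chi|$, and $P\chi^2<\infty$ by assumption.

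\emph{Step 4: measurability, the main obstacle.} Measurability is the step I expect to need the most care. Theorem 2.5.2 of van der Vaart and Wellner requires the relevant classes to be suitably measurable. I would verify pointwise measurability by approximating any $z$ by rational points $z_m\downarrow z$ componentwise. Then $\1\{z_w\le z_m\}\to\1\{z_w\le z\}$ by right-continuity, and $F_Z(z_m)\to F_Z(z)$ likewise. So a countable subclass of $\chi\mathcal H$ approximates every member pointwise, and Theorem 2.5.2 then delivers that $\chi\mathcal H$ is $P$-Donsker with envelope $|\chi|$.
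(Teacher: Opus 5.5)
Your proposal is correct and follows the paper's own argument. Both write the subgraph of $h_z$ using lower orthants, their complements and half-lines in $t$, and invoke the VC permanence properties. Both then transfer the uniform entropy bound to $\chi\mathcal H$ by a change of measure and apply the uniform-entropy Donsker theorem.

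You make explicit what the paper leaves implicit: the reweighted measure $Q_\chi$ and the pointwise-measurability check. One small adjustment there: the rational points $z_m\downarrow z$ need not lie in $\mathcal Z$. It is cleanest to run that step for the class indexed by all $z\in\mathbb R^{d_z}$ and then restrict, since subclasses of Donsker classes are Donsker.
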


\begin{proof}
Let $A_z:=\{w:z_w\le z\}$. Since $Z\in\mathbb R^{d_z}$, the class
$$
\mathcal A:=\{A_z:z\in\mathcal Z\}
$$
is the class of lower orthants in $\mathbb R^{d_z}$, and hence is VC. For each $z\in\mathcal Z$, the subgraph of $h_z$ is
$$
\{(w,t):t<h_z(w)\}
=
\bigl(A_z\times(-\infty,1-F_Z(z))\bigr)
\cup
\bigl(A_z^c\times(-\infty,-F_Z(z))\bigr).
$$
The class of lower half-lines on $\mathbb R$ is VC, and VC classes are stable under complements, Cartesian products, and finite unions. Hence $\mathcal H$ is VC-subgraph. Since $|h_z|\le1$, $1$ is an envelope for $\mathcal H$.

For the weighted class, a standard change-of-measure argument shows that multiplying a uniformly bounded VC-subgraph class by a fixed measurable function $\chi$ preserves the VC-type entropy bound, with envelope $|\chi|$. Since $P\chi^2<\infty$, the standard Donsker theorem for VC-type classes with square-integrable envelopes implies that $\chi\mathcal H$ is $P$-Donsker.
\end{proof}

\begin{lemma}[Rates for the series logit propensity score]
\label{lem:p-rate}
Suppose that the conditions of Lemmas 1 and 2 in Appendix A of
\citet{hirano2003efficient} hold for the series logit estimator. In particular,
the support of $X$ is compact, the propensity score $p(x)$ is $s$-times
continuously differentiable and bounded away from zero and one, the density of
$X$ is bounded away from zero on its support, and the series dimension satisfies
$K\to\infty$ and $K^4/n\to0$. Then
\begin{equation}
\label{eq:p-sup-rate-summary}
\sup_{x\in\mathcal X}|\hat p(x)-p(x)|
=
O_p\!\left(\sqrt{\frac{K^3}{n}}\right)
+
O\!\left(K^{1-s/(2r)}\right).
\end{equation}
Moreover,
\begin{equation}
\label{eq:p-l2-rate-summary}
\|\hat p-p\|_{2,n}
=
O_p\!\left(\sqrt{\frac{K}{n}}\right)
+
O_p\!\left(K^{-s/(2r)}\right).
\end{equation}
\end{lemma}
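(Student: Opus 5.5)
The plan is to follow the structure of Lemmas 1 and 2 in Appendix A of \citet{hirano2003efficient} and bound the estimation error through the orthonormal basis $R^K$. Let $\pi_K^*$ be the pseudo-true coefficient, i.e.\ the maximizer of the population log-likelihood over the sieve. Equivalently, it is a coefficient satisfying
\[
\sup_x|\Lambda\{R^K(x)'\pi_K^*\}-p(x)|=O(K^{-s/(2r)}),
\]
which is HIR's Lemma 1. That lemma rests on standard power-series approximation of $\Lambda^{-1}(p)$, which is $s$-times differentiable because $p$ is bounded away from $0$ and $1$. The first step is the decomposition
\[
\hat p-p=\bigl[\Lambda(R^{K\prime}\hat\pi_K)-\Lambda(R^{K\prime}\pi_K^*)\bigr]+\bigl[\Lambda(R^{K\prime}\pi_K^*)-p\bigr].
\]
The second bracket is the deterministic approximation error. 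We also record that the $L_\infty$ coefficient-approximation bound is $\|\pi_K^*-\pi_K\|$-type, giving a sup error of order $\zeta(K)K^{-s/(2r)}\lesssim K^{1-s/(2r)}$ wherever the coefficient-level rate is propagated.

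The second step is the coefficient rate
\[
\|\hat\pi_K-\pi_K^*\|=O_p(\sqrt{K/n})+O(K^{-s/(2r)}),
\]
which is the content of HIR Lemma 2. I would obtain it by the usual concavity argument. The score at $\pi_K^*$ is $n^{-1}\sum_i\{D_i-\Lambda(R^K(X_i)'\pi_K^*)\}R^K(X_i)$. Its stochastic part has second moment of order $K/n$ by orthonormality, and its bias part is bounded by the approximation error. The Hessian
\[
-n^{-1}\sum_i\Lambda'(\cdot)R^KR^{K\prime}
\]
has eigenvalues bounded away from zero with probability approaching one. This holds because $\|n^{-1}\sum R^KR^{K\prime}-I_K\|=O_p(\zeta(K)\sqrt{K/n})=o_p(1)$ under $K^4/n\to0$, and because $\Lambda'$ stays bounded below on the relevant neighborhood, since $\zeta(K)\|\hat\pi_K-\pi_K^*\|\to0$ keeps the fitted index uniformly bounded. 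A concave objective then forces $\hat\pi_K$ into a shrinking ball around $\pi_K^*$.

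The third step converts these bounds into function rates. For the sup norm, $\Lambda$ is Lipschitz, so
\[
\sup_x|\hat p-p|\le\zeta(K)\|\hat\pi_K-\pi_K^*\|+O(K^{-s/(2r)})\lesssim K\sqrt{K/n}+K^{1-s/(2r)}.
\]
This gives \eqref{eq:p-sup-rate-summary}. For the empirical $L_2$ norm, I would use
\[
\|R^{K\prime}(\hat\pi_K-\pi_K^*)\|_{2,n}^2=(\hat\pi_K-\pi_K^*)'\Bigl(n^{-1}\textstyle\sum R^KR^{K\prime}\Bigr)(\hat\pi_K-\pi_K^*)\le(1+o_p(1))\|\hat\pi_K-\pi_K^*\|^2.
\]
Adding the approximation error, which is $O(K^{-s/(2r)})$ pointwise and hence also in $\|\cdot\|_{2,n}$, yields \eqref{eq:p-l2-rate-summary}.

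The main obstacle is the localization step in the coefficient-rate argument. The Hessian lower bound requires the curvature of $\Lambda$ at fitted values to be bounded below, and that in turn requires a preliminary sup-norm consistency of the fitted index. I would handle this as HIR do. First, restrict to a ball of radius $\delta_n=C(\sqrt{K/n}+K^{-s/(2r)})$ with $\zeta(K)\delta_n\to0$, which is guaranteed by $K^4/n\to0$ and $s$ large. On this ball the objective is strictly concave with a uniform curvature bound. Then show that the objective on the boundary of the ball is below its value at $\pi_K^*$ with probability approaching one. Concavity then puts the global maximizer $\hat\pi_K$ inside the ball.
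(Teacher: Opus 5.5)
Your architecture matches the paper's. You center at the pseudo-true coefficient $\pi_K^*$ and take $\|\hat\pi_K-\pi_K^*\|=O_p(\sqrt{K/n})$ from HIR's Lemma 2; you re-derive it by concavity and localization, whereas the paper simply cites it. You then propagate it to the sup norm through $\zeta(K)\lesssim K$, and to $\|\cdot\|_{2,n}$ through $\lambda_{\max}(\hat S_K)=O_p(1)$. Those steps are fine, and the sup-norm conclusion survives even with the corrected approximation bound discussed below. One side remark: the ``bias part'' of the score at $\pi_K^*$ is identically zero, because $\pi_K^*$ solves the population first-order condition $\E[\{p(X)-\Lambda(R^K(X)'\pi_K^*)\}R^K(X)]=0$. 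The extra $O(K^{-s/(2r)})$ you carry in the coefficient rate is therefore spurious, but harmless, since it is absorbed in both final bounds.

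The gap is in the approximation component of the empirical $L_2$ bound. You assert that $\pi_K^*$ satisfies $\sup_x|p_K^*(x)-p(x)|=O(K^{-s/(2r)})$ ``by HIR's Lemma 1'', and you call this ``equivalent'' to being the population maximizer, which it is not. You then pass this sup bound to $\|\cdot\|_{2,n}$. HIR's Lemma 1 does not give that bound. It gives a log-odds approximation coefficient $\pi_K$ with $\sup_x|\log\{p(x)/(1-p(x))\}-R^K(x)'\pi_K|=O(K^{-s/r})$ and $\|\pi_K^*-\pi_K\|=O(K^{-s/(2r)})$. Hence it yields only $\sup_x|p_K^*(x)-p(x)|=O(\zeta(K)K^{-s/(2r)})=O(K^{1-s/(2r)})$, carrying the very $\zeta(K)$ factor you mention in your next sentence. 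Using only that, your argument gives $\|\hat p-p\|_{2,n}=O_p(\sqrt{K/n})+O(K^{1-s/(2r)})$, which falls short of the claimed rate by a factor of $K$.

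The missing idea is to avoid the sup norm for this piece. Write $p_K^*-p=\{p_K^*-\Lambda(R^K(\cdot)'\pi_K)\}+\{\Lambda(R^K(\cdot)'\pi_K)-p\}$. Bound the first term exactly as you bounded the stochastic part: $\|p_K^*-\Lambda(R^K(\cdot)'\pi_K)\|_{2,n}^2\le C(\pi_K^*-\pi_K)'\hat S_K(\pi_K^*-\pi_K)=O_p(K^{-s/r})$. Bound the second term in sup norm by $O(K^{-s/r})$, since $\Lambda$ is Lipschitz. This is how the paper obtains $\|p_K^*-p\|_{2,n}=O_p(K^{-s/(2r)})$.

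Alternatively, you could rescue your sup claim by sharpening $\|\pi_K^*-\pi_K\|$ to $O(K^{-s/r})$ with a score-plus-curvature argument, using Bessel's inequality for the population score at $\pi_K$. That gives $\sup_x|p_K^*-p|=O(K^{1-s/r})=O(K^{-s/(2r)})$ when $s\ge 2r$. However, this is an extra argument you do not supply, and it is not the content of HIR's Lemma 1.
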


\begin{proof}
The maintained conditions imply the assumptions of Lemmas 1 and 2 in Appendix A of \citet{hirano2003efficient}. Hence we may use their
approximation and convergence results for the series logit estimator. 
Let $\hat p(x)=\Lambda\{R^K(x)'\hat\pi_K\}$, where
$\Lambda(a)=\exp(a)/(1+\exp(a))$. We follow the notation in Appendix A of HIR.
Since the log-odds ratio $\log\{p(x)/(1-p(x))\}$ is $s$-times continuously
differentiable on $\mathcal X$, the power-series approximation argument in HIR
(see equation (24) in their Appendix A) implies that there exists a coefficient
vector $\pi_K$ such that
$$
\sup_{x\in\mathcal X}
\left|
\log\frac{p(x)}{1-p(x)}
-
R^K(x)'\pi_K
\right|
=
O(K^{-s/r}).
$$
This $\pi_K$ is the deterministic approximation coefficient for the population
log-odds ratio.

Let $\pi_K^*$ denote the population pseudo-true logit coefficient,
\begin{equation}
    \label{eq:piKstar}
    \pi_K^*
:=
\arg\max_{\pi\in\mathbb R^K}
\E\left[
p(X)\log\Lambda\{R^K(X)'\pi\}
+
\{1-p(X)\}\log\{1-\Lambda(R^K(X)'\pi)\}
\right],
\end{equation}
and define the corresponding pseudo-true propensity score as
$$
p_K^*(x):=\Lambda\{R^K(x)'\pi_K^*\}.
$$

The sup-norm bound in \eqref{eq:p-sup-rate-summary} follows directly from
Appendix A of HIR. In particular, their Lemma 1 gives
$\|\pi_K-\pi_K^*\|=O(K^{-s/(2r)})$ and
$\sup_{x\in\mathcal X}|p(x)-p_K^*(x)|
=O(\zeta(K)K^{-s/(2r)})$, while their Lemma 2 gives
$\|\hat\pi_K-\pi_K^*\|=O_p(\sqrt{K/n})$. Therefore, by the boundedness of
$\Lambda'$ and $\zeta(K)\lesssim K$,
$$
\begin{aligned}
\sup_{x\in\mathcal X}|\hat p(x)-p(x)|
&\le
C\zeta(K)\|\hat\pi_K-\pi_K^*\|
+
\sup_{x\in\mathcal X}|p_K^*(x)-p(x)|  \\
&=
O_p\!\left(\sqrt{\frac{K^3}{n}}\right)
+
O\!\left(K^{1-s/(2r)}\right),
\end{aligned}
$$
which proves \eqref{eq:p-sup-rate-summary}.

It remains to prove the empirical $L_2$ bound. By the triangle inequality,
$$
\|\hat p-p\|_{2,n}
\le
\|\hat p-p_K^*\|_{2,n}
+
\|p_K^*-p\|_{2,n}.
$$
For the stochastic component, the mean value theorem and the boundedness of
$\Lambda'$ imply
$|\hat p(X_i)-p_K^*(X_i)|
\le C|R^K(X_i)'(\hat\pi_K-\pi_K^*)|$. Hence
$$
\begin{aligned}
\|\hat p-p_K^*\|_{2,n}^2
&\le
C(\hat\pi_K-\pi_K^*)'
\left\{
\frac1n\sum_{i=1}^n R^K(X_i)R^K(X_i)'
\right\}
(\hat\pi_K-\pi_K^*) \\
&=
C(\hat\pi_K-\pi_K^*)'\hat S_K(\hat\pi_K-\pi_K^*),
\end{aligned}
$$
where $\hat S_K:=n^{-1}\sum_{i=1}^n R^K(X_i)R^K(X_i)'$. By equation (33) in Appendix A of HIR, $\|\hat S_K-I_K\|=O_p(\zeta (K)\sqrt{K/n})=o_p(1)$ under the maintained Assumption \ref{ass:SLE} on $K$. Hence $\lambda_{\max}(\hat S_K)=O_p(1)$, and therefore
$$
\|\hat p-p_K^*\|_{2,n}
=
O_p(\|\hat\pi_K-\pi_K^*\|)
=
O_p\!\left(\sqrt{\frac{K}{n}}\right).
$$

For the approximation component, again by the triangle inequality,
$$
\|p_K^*-p\|_{2,n}
\le
\|p_K^*-\Lambda(R^K(\cdot)'\pi_K)\|_{2,n}
+
\|\Lambda(R^K(\cdot)'\pi_K)-p\|_{2,n}.
$$
For the first term, the mean value theorem gives
$$
\begin{aligned}
\|p_K^*-\Lambda(R^K(\cdot)'\pi_K)\|_{2,n}^2
&\le
C(\pi_K^*-\pi_K)'\hat S_K(\pi_K^*-\pi_K) \\
&=
O_p(\|\pi_K^*-\pi_K\|^2)
=
O_p\!\left(K^{-s/r}\right).
\end{aligned}
$$
 For the second term, the log-odds approximation and the
boundedness of $\Lambda'$ imply
$$
\|\Lambda(R^K(\cdot)'\pi_K)-p\|_{2,n}
\le
\sup_{x\in\mathcal X}
|\Lambda(R^K(x)'\pi_K)-p(x)|
=
O(K^{-s/r}).
$$
Therefore,
$\|p_K^*-p\|_{2,n}=O_p(K^{-s/(2r)})$. Combining the stochastic and approximation
components yields
$$
\|\hat p-p\|_{2,n}
=
O_p\!\left(\sqrt{\frac{K}{n}}\right)
+
O_p\!\left(K^{-s/(2r)}\right),
$$
which proves \eqref{eq:p-l2-rate-summary}.
\end{proof}
\begin{lemma}[Rates for the sieve outcome regressions]
\label{lem:mu-rate}
Suppose that the conditions of Lemma \ref{lem:p-rate} hold. Suppose further that, for $d=0,1$, $\mu_d(x):=\E[Y(d)\mid X=x]$ is $s_\mu$-times continuously differentiable on $\mathcal X$, and
$\sup_{x\in\mathcal X}\E[Y(d)^2\mid X=x]<\infty$. Then
\begin{equation}
\label{eq:mu-rate-summary}
\|\hat\mu_d-\mu_d\|_{2,n}
=
O_p\!\left(\sqrt{\frac{K}{n}}\right)
+
O_p\!\left(K^{-s/(2r)}\right)
+
O\!\left(K^{-s_\mu/r}\right),
\qquad d=0,1.
\end{equation}
\end{lemma}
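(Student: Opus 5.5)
We decompose $\hat\mu_d-\mu_d$ for $d=1$; the case $d=0$ is symmetric. Write $\hat Q_K:=\hat S_K=n^{-1}\sum_i R^K(X_i)R^K(X_i)'$. The estimator is
$$\hat\mu_1(x)=R^K(x)'\hat Q_K^{-1}n^{-1}\sum_i R^K(X_i)D_iY_i/\hat p(X_i).$$
We introduce two intermediate objects. The first is the infeasible regression $\tilde\mu_1(x):=R^K(x)'\hat Q_K^{-1}n^{-1}\sum_i R^K(X_i)D_iY_i/p(X_i)$, which uses the true propensity score. The second is the population series projection $\mu_{1K}(x):=R^K(x)'\beta_K$, with $\sup_x|\mu_1(x)-\mu_{1K}(x)|=O(K^{-s_\mu/r})$; this bound follows from the power-series approximation result for $s_\mu$-smooth functions on the product of compact intervals, as used in HIR and \citet{newey1997convergence}. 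The triangle inequality then gives
$$\|\hat\mu_1-\mu_1\|_{2,n}\le \|\hat\mu_1-\tilde\mu_1\|_{2,n}+\|\tilde\mu_1-\mu_{1K}\|_{2,n}+\|\mu_{1K}-\mu_1\|_{2,n},$$
and the last term is $O(K^{-s_\mu/r})$.

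\textbf{Middle term.} Under unconfoundedness and overlap, $\E[DY/p(X)\mid X]=\mu_1(X)$. So $U_i:=D_iY_i/p(X_i)-\mu_1(X_i)$ is conditionally mean zero, and $\sup_x\E[U^2\mid X=x]<\infty$ by Assumption \ref{ass:dist}(ii) and $p\ge\epsilon$. Write $\tilde\mu_1-\mu_{1K}=R^K(\cdot)'\hat Q_K^{-1}n^{-1}\sum_i R^K(X_i)\{U_i+\mu_1(X_i)-\mu_{1K}(X_i)\}$. Because this is a least-squares projection, the empirical norm is bounded by
$$\lambda_{\max}(\hat Q_K^{-1})^{1/2}\Bigl\|n^{-1}\sum_iR^K(X_i)U_i\Bigr\|+\|\mu_1-\mu_{1K}\|_{2,n}.$$
Since $\|\hat Q_K-I_K\|=o_p(1)$ (equation (33) of HIR, as invoked in the proof of Lemma \ref{lem:p-rate}), $\lambda_{\min}(\hat Q_K)$ is bounded away from zero w.p.a.1. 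The conditional second-moment computation $\E\|n^{-1}\sum R^K(X_i)U_i\|^2\le C\,\mathrm{tr}(I_K)/n=CK/n$ then gives $O_p(\sqrt{K/n})+O(K^{-s_\mu/r})$. This is the standard Newey (1997) argument.

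\textbf{First term.} We have
$$\hat\mu_1-\tilde\mu_1=R^K(\cdot)'\hat Q_K^{-1}n^{-1}\sum_iR^K(X_i)D_iY_i\Bigl\{\frac{1}{\hat p(X_i)}-\frac1{p(X_i)}\Bigr\}.$$
Its empirical norm equals the norm of the empirical $L_2$ projection of $g_i:=D_iY_i\{p(X_i)-\hat p(X_i)\}/\{\hat p(X_i)p(X_i)\}$ onto the sieve space. Since a projection is a contraction, it is at most $\{n^{-1}\sum g_i^2\}^{1/2}$. By Lemma \ref{lem:p-rate}, the sup-norm bound on $\hat p-p$ is $o_p(1)$ under Assumption \ref{ass:SLE}. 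Hence $\hat p\ge\epsilon/2$ w.p.a.1, and $|g_i|\le C|Y_i|\,|\hat p(X_i)-p(X_i)|$.

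The main obstacle is that $Y_i$ is unbounded and correlated with the estimation error in $\hat p$, so we cannot simply factor out $\max_i|Y_i|$. The first attempt is to split $\hat p-p=(\hat p-p_K^*)+(p_K^*-p)$ as in the proof of Lemma \ref{lem:p-rate}.

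The deterministic part is handled directly. Since $\sup_x\E[Y^2\mid X=x]<\infty$, we get $n^{-1}\sum Y_i^2(p_K^*-p)^2(X_i)\le \sup_x|p_K^*-p|^2\cdot O_p(1)$. This is of order $\zeta(K)^2K^{-s/r}$ in HIR's bound, which is too crude as stated. Instead we would take a conditional expectation given the $X$'s, so that the quantity is $O_p(\E[(p_K^*-p)^2(X)])$, and use the $L_2$ approximation rate $O(K^{-s/(2r)})$ from the proof of Lemma \ref{lem:p-rate}.

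For the stochastic part, use $|\hat p-p_K^*|(X_i)\le C|R^K(X_i)'(\hat\pi_K-\pi_K^*)|$. This gives $n^{-1}\sum Y_i^2(\hat p-p_K^*)^2(X_i)\le C(\hat\pi_K-\pi_K^*)'\{n^{-1}\sum Y_i^2R^K(X_i)R^K(X_i)'\}(\hat\pi_K-\pi_K^*)$. The matrix $n^{-1}\sum Y_i^2R^KR^{K\prime}$ has expectation dominated by $C I_K$ via the conditional second-moment bound. Its deviation from that expectation is controlled by $\zeta(K)^2\sqrt{K/n}$-type bounds after truncating $Y$, and $K^4/n\to0$ suffices for this. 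Hence this part is $O_p(\|\hat\pi_K-\pi_K^*\|^2)=O_p(K/n)$.

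Collecting the three pieces yields \eqref{eq:mu-rate-summary}.
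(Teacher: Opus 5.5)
\textbf{There is a genuine gap, in the stochastic piece of your first term.} Your architecture matches the paper's, and the parts up to that point are fine. You compare $\hat\mu_1$ with the infeasible fit $\tilde\mu_1$ built from the true $p$. You treat $\tilde\mu_1-\mu_1$ by the standard series least-squares argument, and your explicit middle-term bound is correct. You then use the contraction property of the empirical projection, which is equivalent to the paper's $\Delta_K'\hat S_K^{-1}\Delta_K$ computation, to reduce the first term to $\{n^{-1}\sum_i D_iY_i^2(\hat p(X_i)-p(X_i))^2\}^{1/2}$.

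The problem is how you bound this last quantity. For the stochastic piece you need
\[
(\hat\pi_K-\pi_K^*)'\Bigl\{n^{-1}\sum_iY_i^2R^K(X_i)R^K(X_i)'\Bigr\}(\hat\pi_K-\pi_K^*)=O_p(\|\hat\pi_K-\pi_K^*\|^2),
\]
and you get it from operator-norm control of the $Y^2$-weighted Gram matrix after truncating $Y$. The lemma only bounds $\E[Y(d)^2\mid X=x]$, and under that hypothesis this control is false in general. Here is a counterexample.
\begin{itemize}
\item Take $r=1$, $X$ uniform, and $Y(d)=\mu_d(X)+\varepsilon$ with $\varepsilon$ independent of $(X,D)$ and $\mP(\varepsilon^2>t)\asymp 1/(t\log^2 t)$.
\item Let $i^*$ maximize $\varepsilon_i^2$. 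Then $Y_{i^*}^2\gtrsim n/\log^2 n$ with probability bounded away from zero.
\item $X_{i^*}$ is an independent uniform draw, at which $\|R^K(X_{i^*})\|^2\gtrsim K$ (the reciprocal Christoffel function of the Legendre system).
\item Hence $\lambda_{\max}\{n^{-1}\sum_iY_i^2R^K(X_i)R^K(X_i)'\}\gtrsim K/\log^2 n\to\infty$.
\end{itemize}
Truncation does not rescue this. The tail remainder $n^{-1}\sum_iY_i^2\1\{|Y_i|>M_n\}R^K(X_i)R^K(X_i)'$ can only be bounded by $\zeta(K)^2n^{-1}\sum_iY_i^2\1\{|Y_i|>M_n\}$. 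That needs $\E[Y^2\1\{|Y|>M_n\}]=O(K^{-2})$, a tail rate second moments do not supply, while the bounded part needs $M_n\zeta(K)\sqrt{K/n}\to0$. Your route would go through under a bounded conditional fourth moment, but the lemma does not assume one.

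\textbf{The missing idea is the paper's conditioning device.} It also makes your split through $p_K^*$ unnecessary. The estimator $\hat p$ is a function of $\{(D_j,X_j)\}_{j=1}^n$ only. Conditional on that $\sigma$-field, the $Y_i$ are independent with $\E[Y_i^2\mid D_i,X_i]\le C$, by unconfoundedness and $\sup_x\E[Y(d)^2\mid X=x]<\infty$. Hence
\[
\E\Bigl[\frac1n\sum_{i=1}^nD_iY_i^2\{\hat p(X_i)-p(X_i)\}^2\,\Big|\,(D_j,X_j)_{j=1}^n\Bigr]\le C\,\|\hat p-p\|_{2,n}^2 .
\]
The conditional Markov inequality then gives $O_p(\|\hat p-p\|_{2,n}^2)$, and Lemma~\ref{lem:p-rate} yields the $O_p(\sqrt{K/n})+O_p(K^{-s/(2r)})$ contribution. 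The ``correlation'' between $Y_i$ and $\hat p-p$ that you flagged as the main obstacle disappears once you condition on the $D$'s as well as the $X$'s. This is the same step you already used, conditioning on the $X$'s alone, for your deterministic piece.
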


\begin{proof}
We prove the result for $\hat\mu_1$; the proof for $\hat\mu_0$ is identical. The argument follows the same series-regression consistency argument used in the proof of Theorem 2 of \citet{hirano2003efficient}, but we work with an empirical $L_2$ norm rather than a sup norm. Define
$$
\hat S_K:=\frac1n\sum_{i=1}^n R^K(X_i)R^K(X_i)',
\qquad
\widehat\Phi_K:=\frac1n\sum_{i=1}^n \frac{D_iY_i}{\hat p(X_i)}R^K(X_i),
\qquad
\Phi_K:=\frac1n\sum_{i=1}^n \frac{D_iY_i}{p(X_i)}R^K(X_i).
$$
Then
$$
\hat\mu_1(x)=\widehat\Phi_K'\hat S_K^{-1}R^K(x),
\qquad
\tilde\mu_1(x):=\Phi_K'\hat S_K^{-1}R^K(x).
$$
By the triangle inequality,
\begin{equation}
\label{eq:mu-l2-decomp}
\|\hat\mu_1-\mu_1\|_{2,n}
\le
\|\hat\mu_1-\tilde\mu_1\|_{2,n}
+
\|\tilde\mu_1-\mu_1\|_{2,n}.
\end{equation}

We first consider the ideal term. Since $\E[D_iY_i/p(X_i)\mid X_i=x]=\mu_1(x)$, $\tilde\mu_1$ is the fitted value in a least-squares series regression of $D_iY_i/p(X_i)$ on $R^K(X_i)$. By the standard empirical $L_2$ rate for series least squares,
\begin{equation}
\label{eq:ideal-mu-l2-rate}
\|\tilde\mu_1-\mu_1\|_{2,n}
=
O_p\!\left(\sqrt{\frac{K}{n}}\right)
+
O\!\left(K^{-s_\mu/r}\right).
\end{equation}
It remains to control the feasible-minus-ideal term. Let $\Delta_K:=\widehat\Phi_K-\Phi_K$. Then
$$
\hat\mu_1(x)-\tilde\mu_1(x)
=
\Delta_K'\hat S_K^{-1}R^K(x).
$$
Therefore,
$$
\begin{aligned}
\|\hat\mu_1-\tilde\mu_1\|_{2,n}^2
&=
\frac1n\sum_{i=1}^n
\{\Delta_K'\hat S_K^{-1}R^K(X_i)\}^2  \\
&=
\Delta_K'\hat S_K^{-1}
\left\{\frac1n\sum_{i=1}^nR^K(X_i)R^K(X_i)'\right\}
\hat S_K^{-1}\Delta_K  \\
&=
\Delta_K'\hat S_K^{-1}\Delta_K.
\end{aligned}
$$
Since $\|\hat S_K-I_K\|=o_p(1)$, $\|\hat S_K^{-1}\|=O_p(1)$, and
\begin{equation}
\label{eq:mu-fi-l2-delta}
\|\hat\mu_1-\tilde\mu_1\|_{2,n}
=
O_p(\|\Delta_K\|).
\end{equation}

Now
$$
\Delta_K
=
\frac1n\sum_{i=1}^n
D_iY_i
\left\{
\frac1{\hat p(X_i)}-\frac1{p(X_i)}
\right\}R^K(X_i).
$$
Under overlap and Lemma \ref{lem:p-rate}, $p(X_i)$ and $\hat p(X_i)$ are bounded away from zero uniformly in $i$ with probability approaching one. Hence
$|1/\hat p(X_i)-1/p(X_i)|\le C|\hat p(X_i)-p(X_i)|$. For any vector $b\in\mathbb R^K$ with $\|b\|=1$,
$$
\begin{aligned}
|b'\Delta_K|
&\le
C\frac1n\sum_{i=1}^n
D_i|Y_i|\,|\hat p(X_i)-p(X_i)|\,|b'R^K(X_i)| \\
&\le
C
\left\{
\frac1n\sum_{i=1}^n
D_iY_i^2|\hat p(X_i)-p(X_i)|^2
\right\}^{1/2}
\left\{
\frac1n\sum_{i=1}^n
(b'R^K(X_i))^2
\right\}^{1/2}.
\end{aligned}
$$
The second factor is bounded uniformly over $\|b\|=1$ because
$$
\sup_{\|b\|=1}\frac1n\sum_{i=1}^n(b'R^K(X_i))^2
=
\lambda_{\max}(\hat S_K)
=
O_p(1).
$$
For the first factor, since $\hat p$ is estimated from $(D_i,X_i)_{i=1}^n$ and $\sup_{x\in\mathcal X}\E[Y^2\mid D=1,X=x]<\infty$, conditional Markov's inequality gives
$$
\frac1n\sum_{i=1}^n
D_iY_i^2|\hat p(X_i)-p(X_i)|^2
=
O_p\!\left(
\frac1n\sum_{i=1}^n|\hat p(X_i)-p(X_i)|^2
\right).
$$
It follows that
\begin{equation}
\label{eq:DeltaK-l2-rate}
\|\Delta_K\|
=
O_p\!\left(\|\hat p-p\|_{2,n}\right).
\end{equation}
Combining \eqref{eq:mu-fi-l2-delta} and \eqref{eq:DeltaK-l2-rate} yields
$
\|\hat\mu_1-\tilde\mu_1\|_{2,n}
=
O_p\!\left(\|\hat p-p\|_{2,n}\right).
$
By Lemma \ref{lem:p-rate},
$$
\|\hat p-p\|_{2,n}
=
O_p\!\left(\sqrt{\frac{K}{n}}\right)
+
O_p\!\left(K^{-s/(2r)}\right).
$$
Therefore,
\begin{equation}
\label{eq:mu-fi-l2-rate-final}
\|\hat\mu_1-\tilde\mu_1\|_{2,n}
=
O_p\!\left(\sqrt{\frac{K}{n}}\right)
+
O_p\!\left(K^{-s/(2r)}\right).
\end{equation}

Finally, combining \eqref{eq:mu-l2-decomp}, \eqref{eq:ideal-mu-l2-rate}, and \eqref{eq:mu-fi-l2-rate-final} gives
$$
\|\hat\mu_1-\mu_1\|_{2,n}
=
O_p\!\left(\sqrt{\frac{K}{n}}\right)
+
O_p\!\left(K^{-s/(2r)}\right)
+
O\!\left(K^{-s_\mu/r}\right).
$$
The same argument applies to $\hat\mu_0$, replacing $D_iY_i/\hat p(X_i)$ by $(1-D_i)Y_i/\{1-\hat p(X_i)\}$ and using the overlap bound for $1-p(X_i)$. This proves the lemma.
\end{proof}

\begin{lemma}
\label{lem:An111a-control}
Suppose that the conditions of Lemmas \ref{lem:p-rate} and \ref{lem:mu-rate} hold. Define
$$
r_{p,n}:=\sqrt{\frac K n}+K^{-s/(2r)},
\qquad
r_{\mu,n}:=\sqrt{\frac K n}+K^{-s/(2r)}+K^{-s_\mu/r}.
$$
Suppose further that $\sqrt n\,r_{\mu,n}r_{p,n}\to0$, $
\sqrt K\,r_{\mu,n}\to0$. Then
$$
\sup_{z\in\mathcal Z}
\left|
\frac1{\sqrt n}\sum_{i=1}^n
\frac{\hat\mu_1(X_i)-\mu_1(X_i)}{\hat p(X_i)}
\{D_i-p(X_i)\}
\{\1\{Z_i\le z\}-F_Z(z)\}
\right|
=o_p(1).
$$
The same conclusion holds with $\mu_0$ and $1-\hat p$ in place of $\mu_1$ and $\hat p$.
\end{lemma}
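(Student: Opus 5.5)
The plan is to split the weight $1/\hat p$ and the regression error $\hat\mu_1-\mu_1$ so that every data-dependent factor is isolated: the pieces that depend on the estimated nuisances are controlled either by a Cauchy--Schwarz product of rates or by a finite-dimensional coefficient error. The remaining empirical processes are then indexed only by $z$ and by fixed functions, where Lemma~\ref{lem:centered-lower-orthant-vc} applies. Write $h_z(Z):=\1\{Z\le z\}-F_Z(z)$, so $|h_z|\le1$, and set $\delta(X):=\hat\mu_1(X)-\mu_1(X)$.

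\emph{Step 1 (swap $\hat p$ for $p$).} Write $1/\hat p=1/p+(1/\hat p-1/p)$. Since $\hat p$ and $p$ are bounded away from zero with probability approaching one (Lemma~\ref{lem:p-rate} and Assumption~\ref{ass:overlap}), and $|D_i-p(X_i)|\le1$, $|h_z|\le 1$, the remainder term is bounded uniformly in $z$ by
\[
C\sqrt n\,\frac1n\sum_{i=1}^n|\delta(X_i)|\,|\hat p(X_i)-p(X_i)|\le C\sqrt n\,\|\hat\mu_1-\mu_1\|_{2,n}\|\hat p-p\|_{2,n}.
\]
By Lemmas~\ref{lem:p-rate} and \ref{lem:mu-rate} this is $O_p(\sqrt n\,r_{\mu,n}r_{p,n})=o_p(1)$. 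It therefore remains to handle $\sup_z|\mG_n[\delta\,(D-p)h_z/p]|$, where $\mG_n$ is the empirical process. This is a genuine centered process, since $\E[D-p(X)\mid X]=0$, but $\delta$ is data-dependent.

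\emph{Step 2 (sieve decomposition).} Let $b_K$ be the $L_2(P_X)$ projection coefficient of $\mu_1$ on $R^K$, so that $\sup_x|R^K(x)'b_K-\mu_1(x)|\lesssim K^{-s_\mu/r}$ by the standard power-series approximation bound. Since $\hat\mu_1(x)=R^K(x)'\hat b$ with $\hat b:=\hat S_K^{-1}\widehat\Phi_K$, we can write
\[
\delta(x)=R^K(x)'(\hat b-b_K)+g_K(x),\qquad g_K:=R^K{}'b_K-\mu_1 .
\]

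For the approximation part, $g_K$ is deterministic, so $\{g_K(D-p)h_z/p:z\in\mathcal Z\}$ is a VC-type class with envelope $C|g_K|$ by Lemma~\ref{lem:centered-lower-orthant-vc}. A maximal inequality for VC-type classes (e.g.\ van der Vaart--Wellner Thm.~2.14.1) then gives $\E\sup_z|\mG_n(\cdot)|\lesssim\|g_K\|_{L_2}\lesssim K^{-s_\mu/r}\to0$.

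For the series part,
\[
\sup_z\bigl|(\hat b-b_K)'\mG_n[R^K(D-p)h_z/p]\bigr|\le\|\hat b-b_K\|\Bigl(\sum_{k=1}^K\sup_z|\mG_n[R_k(D-p)h_z/p]|^2\Bigr)^{1/2}.
\]
Apply the same maximal inequality coordinatewise, with envelope $C|R_k|$ and $\E R_k^2=1$ from orthonormality. This gives $\E\sup_z|\mG_n[R_k(D-p)h_z/p]|^2\lesssim1$ uniformly in $k$, with a universal VC constant, so the bracketed sum is $O_p(K)$ and its square root is $O_p(\sqrt K)$.

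\emph{Step 3 (coefficient rate and conclusion).} Since $\hat\mu_1-R^K{}'b_K=R^K{}'(\hat b-b_K)$,
\[
\|\hat\mu_1-R^K{}'b_K\|_{2,n}^2=(\hat b-b_K)'\hat S_K(\hat b-b_K)\ge\lambda_{\min}(\hat S_K)\|\hat b-b_K\|^2,
\]
and $\lambda_{\min}(\hat S_K)\ge1/2$ with probability approaching one because $\|\hat S_K-I_K\|=o_p(1)$. The triangle inequality with Lemma~\ref{lem:mu-rate} then gives $\|\hat b-b_K\|=O_p(r_{\mu,n}+K^{-s_\mu/r})=O_p(r_{\mu,n})$. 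The series part is therefore $O_p(\sqrt K\,r_{\mu,n})=o_p(1)$ by the second rate condition. Combining Steps 1--3 proves the claim. The $\mu_0$ case is identical after replacing $D-p$ by $-(D-p)$, $\hat p$ by $1-\hat p$, and using overlap for $1-p$.

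The main obstacle is Step~2's series part: the vector-valued process must be bounded uniformly in $z$ at order $\sqrt K$, with no extra $\log$ or $\zeta(K)$ factors. This requires a maximal inequality whose constant does not depend on $k$, which holds because each coordinate class is a fixed function times the same VC class. A secondary point is that $\hat b$ depends on $(D_i,Y_i)$; the decomposition avoids any independence requirement by bounding $\|\hat b-b_K\|$ and the supremum separately.
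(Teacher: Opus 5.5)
Your proposal is correct and follows the paper's proof essentially step for step. It uses the same replacement of $1/\hat p$ by $1/p$, with a remainder of order $\sqrt n\,r_{\mu,n}r_{p,n}$, and the same split of $\hat\mu_1-\mu_1$ into a coefficient term bounded by $\|\hat\gamma_K\|\sup_z\|V_n(z)\|=O_p(\sqrt K\,r_{\mu,n})$ and a deterministic sieve-approximation term. The coefficient term is controlled by coordinatewise Pollard bounds together with $\E[R^K(X)R^K(X)']=I_K$, and the approximation term by the maximal inequality with envelope $|\mu_{1K}-\mu_1|/p$. One minor point: the sup-norm rate you attribute to the $L_2(P_X)$ projection is not justified in general, since it holds for the best uniform approximant. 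Your argument only ever uses the $L_2$ rate of $g_K$, which does hold, so nothing breaks.
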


\begin{proof}
We prove the result for the treated outcome regression. Write
$p_i:=p(X_i)$, $\hat p_i:=\hat p(X_i)$, and $I_i(z):=\1\{Z_i\le z\}$. Let
$$
A_{n,111}^a(z)
:=
\frac1{\sqrt n}\sum_{i=1}^n
\frac{\hat\mu_1(X_i)-\mu_1(X_i)}{\hat p_i}
(D_i-p_i)\{I_i(z)-F_Z(z)\}.
$$
By overlap and Lemma \ref{lem:p-rate}, $\hat p_i$ and $p_i$ are bounded away from zero uniformly in $i$ with probability approaching one. On this event,
$
1/{\hat p_i}
=
1/{p_i}
+
\left(1/{\hat p_i}-1/{p_i}\right).
$
Accordingly,
$$
A_{n,111}^a(z)=A_{n,111,0}^a(z)+A_{n,111,p}^a(z),
$$
where
$$
A_{n,111,0}^a(z)
:=
\frac1{\sqrt n}\sum_{i=1}^n
\frac{\hat\mu_1(X_i)-\mu_1(X_i)}{p_i}
(D_i-p_i)\{I_i(z)-F_Z(z)\},
$$
and
$$
A_{n,111,p}^a(z)
:=
\frac1{\sqrt n}\sum_{i=1}^n
\{\hat\mu_1(X_i)-\mu_1(X_i)\}
\left(\frac1{\hat p_i}-\frac1{p_i}\right)
(D_i-p_i)\{I_i(z)-F_Z(z)\}.
$$

We first control the denominator-replacement remainder. Since
$|\hat p_i^{-1}-p_i^{-1}|\le C|\hat p_i-p_i|$, $|D_i-p_i|\le1$, and
$|I_i(z)-F_Z(z)|\le1$,
$$
\begin{aligned}
\sup_{z\in\mathcal Z}|A_{n,111,p}^a(z)|
&\le
\frac C{\sqrt n}\sum_{i=1}^n
|\hat\mu_1(X_i)-\mu_1(X_i)|\,|\hat p_i-p_i|  \\
&\le
C\sqrt n\,
\|\hat\mu_1-\mu_1\|_{2,n}\|\hat p-p\|_{2,n}.
\end{aligned}
$$
By Lemmas \ref{lem:p-rate} and \ref{lem:mu-rate}, the last display is
$O_p(\sqrt n\,r_{\mu,n}r_{p,n})=o_p(1)$.

It remains to control $A_{n,111,0}^a(z)$. Let $\mu_{1K}$ be a sieve approximation to $\mu_1$ in the span of $R^K$ such that
$$
\|\mu_{1K}-\mu_1\|_{2,n}=O_p(K^{-s_\mu/r})
$$
and, equivalently in population $L_2$ norm, $\|\mu_{1K}-\mu_1\|_2=O(K^{-s_\mu/r})$. Decompose
$$
A_{n,111,0}^a(z)
=
A_{n,111,s}^a(z)+A_{n,111,b}^a(z),
$$
where
$$
A_{n,111,s}^a(z)
:=
\frac1{\sqrt n}\sum_{i=1}^n
\frac{\hat\mu_1(X_i)-\mu_{1K}(X_i)}{p_i}
(D_i-p_i)\{I_i(z)-F_Z(z)\},
$$
and
$$
A_{n,111,b}^a(z)
:=
\frac1{\sqrt n}\sum_{i=1}^n
\frac{\mu_{1K}(X_i)-\mu_1(X_i)}{p_i}
(D_i-p_i)\{I_i(z)-F_Z(z)\}.
$$

We first consider the stochastic sieve component. Since both $\hat\mu_1$ and $\mu_{1K}$ belong to the span of $R^K$, there exists a random vector $\hat\gamma_K\in\mathbb R^K$ such that
$$
\hat\mu_1(x)-\mu_{1K}(x)=R^K(x)'\hat\gamma_K.
$$
Therefore,
$
A_{n,111,s}^a(z)=\hat\gamma_K'V_n(z),
$
where
$$
V_n(z)
:=
\frac1{\sqrt n}\sum_{i=1}^n
R^K(X_i)\frac{D_i-p_i}{p_i}\{I_i(z)-F_Z(z)\}.
$$

We claim that $\sup_{z\in\mathcal Z}\|V_n(z)\|=O_p(\sqrt K)$. To see this, let $V_{n,j}(z)$ denote the $j$-th coordinate of $V_n(z)$. 
For each $j$, Lemma \ref{lem:centered-lower-orthant-vc} implies that the class indexing
$V_{n,j}(z)$ is centered and of VC type with envelope
$$
\left|R_j^K(X)\frac{D-p(X)}{p(X)}\right|.
$$
Hence, by Pollard's maximal inequality (see Theorem 2.14.1 in \cite{van1996weak}),
$$
\E\left[\sup_{z\in\mathcal Z}|V_{n,j}(z)|^2\right]
\lesssim
\E\left[
\left\{R_j^K(X)\frac{D-p(X)}{p(X)}\right\}^2
\right].
$$
By overlap, the right-hand side is bounded by $C\E[R_j^K(X)^2]$. Hence
$$
\E\left[\sup_{z\in\mathcal Z}\|V_n(z)\|^2\right]
\le
\sum_{j=1}^K
\E\left[\sup_{z\in\mathcal Z}|V_{n,j}(z)|^2\right]
\le
C\sum_{j=1}^K\E[R_j^K(X)^2].
$$
Under the HIR normalization, $\E[R^K(X)R^K(X)']=I_K$, so
$\sum_{j=1}^K\E[R_j^K(X)^2]=K$. Thus
$
\sup_{z\in\mathcal Z}\|V_n(z)\|=O_p(\sqrt K).
$

Next, note that
$$
\|\hat\mu_1-\mu_{1K}\|_{2,n}^2
=
\hat\gamma_K'\hat S_K\hat\gamma_K,
\qquad
\hat S_K:=\frac1n\sum_{i=1}^nR^K(X_i)R^K(X_i)'.
$$
Since $\lambda_{\min}(\hat S_K)$ is bounded away from zero with probability approaching one,
$
\|\hat\gamma_K\|
=
O_p(\|\hat\mu_1-\mu_{1K}\|_{2,n}).
$
Moreover, by the triangle inequality,
$
\|\hat\mu_1-\mu_{1K}\|_{2,n}
\le
\|\hat\mu_1-\mu_1\|_{2,n}
+
\|\mu_1-\mu_{1K}\|_{2,n}.
$
Lemma \ref{lem:mu-rate} gives
$\|\hat\mu_1-\mu_1\|_{2,n}=O_p(r_{\mu,n})$. Moreover, the sieve approximation satisfies
$\|\mu_1-\mu_{1K}\|_{2,n}=O_p(K^{-s_\mu/r})=O_p(r_{\mu,n})$. Hence
$$
\sup_{z\in\mathcal Z}|A_{n,111,s}^a(z)|
\le
\|\hat\gamma_K\|\sup_{z\in\mathcal Z}\|V_n(z)\|
=
O_p(\sqrt K\,r_{\mu,n})
=
o_p(1),
$$
by the condition $\sqrt K\,r_{\mu,n}\to0$.

It remains to control the approximation component. Let
$$
h_K(X):=\frac{\mu_{1K}(X)-\mu_1(X)}{p(X)}.
$$
Then
$$
A_{n,111,b}^a(z)
=
\frac1{\sqrt n}\sum_{i=1}^n
h_K(X_i)(D_i-p_i)\{I_i(z)-F_Z(z)\}.
$$
Since $\E[D_i-p_i\mid X_i]=0$, the process $A_{n,111,b}^a(z)$ is centered conditional on $X_i$. Applying the same Pollard's maximal inequality with envelope $|h_K(X)|$ gives
$$
\sup_{z\in\mathcal Z}|A_{n,111,b}^a(z)|
=
O_p(\|h_K\|_2).
$$
By overlap and the approximation property of $\mu_{1K}$,
$\|h_K\|_2\le C\|\mu_{1K}-\mu_1\|_2=O(K^{-s_\mu/r})=o(1)$.
Thus
$$
\sup_{z\in\mathcal Z}|A_{n,111,b}^a(z)|=o_p(1).
$$
Together with the bounds for $A_{n,111,p}^a(z)$ and $A_{n,111,s}^a(z)$, this proves
$$
\sup_{z\in\mathcal Z}|A_{n,111}^a(z)|=o_p(1).
$$
The proof for the control outcome term is identical after replacing $\mu_1$, $p$, and $\hat p$ by $\mu_0$, $1-p$, and $1-\hat p$, respectively.
\end{proof}

\begin{lemma}
\label{lem:An112a-control}
Suppose that the conditions of Lemmas \ref{lem:p-rate} and \ref{lem:mu-rate} hold. Define
$$
r_{\infty,n}:=\sqrt{\frac{K^3}{n}}+K^{1-s/(2r)}.
$$
Suppose further that $\sqrt n\,r_{\infty,n}^2\to0$,
then
$$
\sup_{z\in\mathcal Z}
\left|
\frac1{\sqrt n}\sum_{i=1}^n
\mu_1(X_i)
\left\{\frac1{\hat p(X_i)}-\frac1{p(X_i)}\right\}
\{D_i-p(X_i)\}
\{\1(Z_i\le z)-F_Z(z)\}
\right|
=o_p(1).
$$
The same conclusion holds with $\mu_0$ and $1-\hat p$ in place of $\mu_1$ and $\hat p$.
\end{lemma}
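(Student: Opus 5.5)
The plan is to linearize $1/\hat p_i-1/p_i$ around the pseudo-true logit coefficient $\pi_K^*$ of \eqref{eq:piKstar}, so that the leading part becomes a finite-dimensional random vector times a $K$-vector empirical process, and the remainder is quadratic. Write $p_i=p(X_i)$, $\hat p_i=\hat p(X_i)$, $p_{Ki}^*=p_K^*(X_i)$, $I_i(z)=\1\{Z_i\le z\}$, $\epsilon_i:=D_i-p_i$, and $h_{iz}:=\{I_i(z)-F_Z(z)\}$.

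The first step is the exact split
\[
\frac1{\hat p_i}-\frac1{p_i}
=-\frac{\hat p_i-p_i}{p_i^2}+\frac{(\hat p_i-p_i)^2}{\hat p_i p_i^2}.
\]
The quadratic piece is handled crudely. Since $|\mu_1|$ is bounded on compact $\mathcal X$ (by continuity), $|\epsilon_i|\le1$, $|h_{iz}|\le 1$, and $\hat p_i$ is bounded away from zero with probability approaching one by Lemma \ref{lem:p-rate}, its contribution is bounded uniformly in $z$ by
\[
C\sqrt n\,\sup_x|\hat p-p|^2=O_p(\sqrt n\,r_{\infty,n}^2)=o_p(1).
\]

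The second step treats the linear piece. Split $\hat p_i-p_i=(\hat p_i-p_{Ki}^*)+(p_{Ki}^*-p_i)$.

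For the deterministic bias part, set $g_K(X):=\mu_1(X)\{p_K^*(X)-p(X)\}/p(X)^2$. This gives a centered process
\[
n^{-1/2}\sum_i g_K(X_i)\epsilon_i h_{iz},
\]
because $\E[\epsilon\mid X]=0$. By Lemma \ref{lem:centered-lower-orthant-vc} and Pollard's maximal inequality, exactly as for $A^a_{n,111,b}$ in Lemma \ref{lem:An111a-control}, its supremum is
\[
O_p(\|g_K\|_2)\le O(\sup_x|p_K^*-p|)=O(K^{1-s/(2r)})=o(1).
\]

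For the stochastic part, use a second-order Taylor expansion of $\Lambda$:
\[
\hat p_i-p_{Ki}^*=\Lambda'(R^K(X_i)'\pi_K^*)\,R^K(X_i)'(\hat\pi_K-\pi_K^*)+O\bigl(\{R^K(X_i)'(\hat\pi_K-\pi_K^*)\}^2\bigr).
\]
The Taylor remainder is at most $C\zeta(K)^2\|\hat\pi_K-\pi_K^*\|^2\le CK^3/n$, so after multiplying by $\sqrt n$ it is $O_p(r_{\infty,n}^2\sqrt n)=o_p(1)$. The leading term equals $(\hat\pi_K-\pi_K^*)'V_n(z)$, where
\[
V_n(z):=n^{-1/2}\sum_i R^K(X_i)\,\frac{\mu_1(X_i)\Lambda'(R^K(X_i)'\pi_K^*)}{p_i^2}\,\epsilon_i h_{iz}.
\]
Each coordinate of $V_n$ is a centered VC-type process with envelope $C|R^K_j(X)|$. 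As in the proof of Lemma \ref{lem:An111a-control}, the maximal inequality plus the normalization $\E[R^KR^{K\prime}]=I_K$ give $\sup_z\|V_n(z)\|=O_p(\sqrt K)$. Combined with $\|\hat\pi_K-\pi_K^*\|=O_p(\sqrt{K/n})$ from HIR Lemma 2, this term is $O_p(K/\sqrt n)=o_p(1)$, since $\nu<1/6$.

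The main obstacle is that $\hat\pi_K$ is built from the same $(D_i,X_i)$ that enter $\epsilon_i$. The argument therefore cannot condition on $\hat p$. This is exactly why the linearization through $\pi_K^*$ is used: it isolates the data dependence into the bilinear form $(\hat\pi_K-\pi_K^*)'V_n(z)$, which is bounded by Cauchy--Schwarz uniformly in $z$ without any independence.

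The third step checks the rates. $K^3/n\to0$ holds because $\nu<1/6$, and $\sqrt n K^{2-s/r}\to0$ follows from $\nu>r/(2s-4r)$. These are covered by the maintained hypothesis $\sqrt n\,r_{\infty,n}^2\to0$. The control-group statement is identical, with $1-p$, $1-\hat p$, and $\mu_0$ replacing $p$, $\hat p$, and $\mu_1$, and $-\epsilon_i$ appearing in place of $\epsilon_i$.
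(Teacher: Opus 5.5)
Your proposal is correct and follows the paper's proof essentially step by step: the second-order expansion of $1/\hat p_i$ with a sup-norm bound $O_p(\sqrt n\,r_{\infty,n}^2)$ on the quadratic remainder, the split of $\hat p_i-p_i$ through the pseudo-true $p_K^*$, Pollard's inequality for the centered bias process, and the bilinear bound $\|\hat\pi_K-\pi_K^*\|\sup_z\|V_n(z)\|=O_p(K/\sqrt n)$. The only difference is minor. You bound the $\Lambda$-Taylor remainder crudely by $\sqrt n\,\zeta(K)^2\|\hat\pi_K-\pi_K^*\|^2=O_p(K^3/\sqrt n)$, whereas the paper uses the quadratic form in $\hat S_K$ to get $O_p(K/\sqrt n)$; your bound is still covered by the hypothesis $\sqrt n\,r_{\infty,n}^2\to0$.
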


\begin{proof}
We prove the first assertion; the second is identical after replacing $p$ by
$1-p$. Write $p_i:=p(X_i)$, $\hat p_i:=\hat p(X_i)$,
$e_i:=\hat p_i-p_i$, and $q_i(z):=\1\{Z_i\le z\}-F_Z(z)$. Define
$$
A_{n,112}^a(z)
:=
\frac1{\sqrt n}\sum_{i=1}^n
\mu_1(X_i)
\left(\frac1{\hat p_i}-\frac1{p_i}\right)
(D_i-p_i)q_i(z).
$$
By overlap, Lemma \ref{lem:p-rate}, and the boundedness of $\mu_1$ on the compact support, all denominators below are bounded away from zero with probability approaching one and
$m_1(x):=\mu_1(x)/p(x)^2$ is bounded.

A Taylor expansion of $u\mapsto 1/u$ around $p_i$ gives
$$
\frac1{\hat p_i}-\frac1{p_i}
=
-\frac{e_i}{p_i^2}+r_i,
\qquad
|r_i|\le C e_i^2
$$
with probability approaching one. Hence
$$
A_{n,112}^a(z)
=
A_{n,112,1}^a(z)+A_{n,112,2}^a(z),
$$
where
$$
A_{n,112,1}^a(z)
:=
-\frac1{\sqrt n}\sum_{i=1}^n
m_1(X_i)e_i(D_i-p_i)q_i(z),
$$
and
$$
A_{n,112,2}^a(z)
:=
\frac1{\sqrt n}\sum_{i=1}^n
\mu_1(X_i)r_i(D_i-p_i)q_i(z).
$$
The second term is negligible because $|D_i-p_i|\le1$, $|q_i(z)|\le1$, and $\mu_1$ is bounded:
$$
\sup_{z\in\mathcal Z}|A_{n,112,2}^a(z)|
\le
C\sqrt n\,\|\hat p-p\|_\infty^2
=
O_p(\sqrt n\,r_{\infty,n}^2)
=
o_p(1).
$$

It remains to control $A_{n,112,1}^a(z)$. Let
$p_K^*(x):=\Lambda\{R^K(x)'\pi_K^*\}$ be the population pseudo-true series-logit approximation. Decompose
$e_i=\{\hat p_i-p_K^*(X_i)\}+\{p_K^*(X_i)-p_i\}$ and write
$$
A_{n,112,1}^a(z)
=
A_{n,112,11}^a(z)+A_{n,112,12}^a(z),
$$
where
$$
A_{n,112,11}^a(z)
:=
-\frac1{\sqrt n}\sum_{i=1}^n
m_1(X_i)\{\hat p_i-p_K^*(X_i)\}(D_i-p_i)q_i(z),
$$
and
$$
A_{n,112,12}^a(z)
:=
-\frac1{\sqrt n}\sum_{i=1}^n
m_1(X_i)\{p_K^*(X_i)-p_i\}(D_i-p_i)q_i(z).
$$

We first handle the approximation component. Since $\E[D_i-p_i\mid X_i]=0$, the process $A_{n,112,12}^a(z)$ is centered conditional on $X_i$. Applying the same Pollard's maximal inequality as in Lemma \ref{lem:An111a-control}, with envelope
$|m_1(X)\{p_K^*(X)-p(X)\}|$, gives
$$
\sup_{z\in\mathcal Z}|A_{n,112,12}^a(z)|
=
O_p\!\left(\|m_1(p_K^*-p)\|_2\right).
$$
Since $m_1$ is bounded and Lemma \ref{lem:p-rate} gives
$\|p_K^*-p\|_2=O(K^{-s/(2r)})$, it follows that
$$
\sup_{z\in\mathcal Z}|A_{n,112,12}^a(z)|
=
O_p(K^{-s/(2r)})
=
o_p(1).
$$

Next consider $A_{n,112,11}^a(z)$. A Taylor expansion of
$\Lambda\{R^K(X_i)'\pi\}$ around $\pi_K^*$ gives
$$
\hat p_i-p_K^*(X_i)
=
\Lambda'\{R^K(X_i)'\pi_K^*\}R^K(X_i)'(\hat\pi_K-\pi_K^*)
+\rho_i,
\qquad
|\rho_i|\le C|R^K(X_i)'(\hat\pi_K-\pi_K^*)|^2.
$$
Therefore,
$$
A_{n,112,11}^a(z)
=
A_{n,112,111}^a(z)+A_{n,112,112}^a(z),
$$
where
$$
A_{n,112,111}^a(z)
:=
-(\hat\pi_K-\pi_K^*)'
\frac1{\sqrt n}\sum_{i=1}^n
R^K(X_i)\Lambda'\{R^K(X_i)'\pi_K^*\}
m_1(X_i)(D_i-p_i)q_i(z),
$$
and
$$
A_{n,112,112}^a(z)
:=
-\frac1{\sqrt n}\sum_{i=1}^n
m_1(X_i)\rho_i(D_i-p_i)q_i(z).
$$

For $A_{n,112,111}^a(z)$, define
$$
V_n(z):=
\frac1{\sqrt n}\sum_{i=1}^n
R^K(X_i)\Lambda'\{R^K(X_i)'\pi_K^*\}
m_1(X_i)(D_i-p_i)q_i(z).
$$
By Lemma \ref{lem:centered-lower-orthant-vc} and the same variance--trace argument as in Lemma \ref{lem:An111a-control},
$$
\sup_{z\in\mathcal Z}\|V_n(z)\|=O_p(\sqrt K),
$$
where we use the boundedness of $\Lambda'$, overlap, boundedness of $m_1$, and the HIR normalization.
Hence, by Lemma \ref{lem:p-rate} and the condition $K^4/n\to0$ of Lemma \ref{lem:p-rate},
$$
\sup_{z\in\mathcal Z}|A_{n,112,111}^a(z)|
\le
\|\hat\pi_K-\pi_K^*\|\sup_{z\in\mathcal Z}\|V_n(z)\|
=
O_p\!\left(\sqrt{\frac K n}\right)O_p(\sqrt K)
=
O_p\!\left(\frac K{\sqrt n}\right)
=
o_p(1).
$$

For the Taylor remainder, using $|D_i-p_i|\le1$, $|q_i(z)|\le1$, and boundedness of $m_1$,
$$
\sup_{z\in\mathcal Z}|A_{n,112,112}^a(z)|
\le
\frac C{\sqrt n}\sum_{i=1}^n
|R^K(X_i)'(\hat\pi_K-\pi_K^*)|^2.
$$
Thus
$$
\sup_{z\in\mathcal Z}|A_{n,112,112}^a(z)|
\le
C\sqrt n\,
(\hat\pi_K-\pi_K^*)'\hat S_K(\hat\pi_K-\pi_K^*),
$$
where $\hat S_K:=n^{-1}\sum_{i=1}^nR^K(X_i)R^K(X_i)'$. Since
$\lambda_{\max}(\hat S_K)=O_p(1)$ and
$\|\hat\pi_K-\pi_K^*\|=O_p(\sqrt{K/n})$ by Lemma \ref{lem:p-rate},
$$
\sup_{z\in\mathcal Z}|A_{n,112,112}^a(z)|
=
O_p\!\left(\frac K{\sqrt n}\right)
=
o_p(1).
$$
Combining the bounds for
$A_{n,112,2}^a$, $A_{n,112,12}^a$,
$A_{n,112,111}^a$, and $A_{n,112,112}^a$ proves
$$
\sup_{z\in\mathcal Z}|A_{n,112}^a(z)|=o_p(1).
$$
The proof for the control outcome term is identical after replacing $\mu_1$, $p$, and $\hat p$ by $\mu_0$, $1-p$, and $1-\hat p$, respectively.
\end{proof}

\begin{lemma}
\label{lem:orthogonal-propensity-term}
Suppose that the conditions of Lemmas \ref{lem:p-rate} and \ref{lem:mu-rate} hold. Let
$$
b_i^\phi
:=
-Y_i\left\{
\frac{D_i}{p(X_i)^2}
+
\frac{1-D_i}{\{1-p(X_i)\}^2}
\right\},
\qquad
a_i
:=
\frac{\mu_1(X_i)}{p(X_i)}
+
\frac{\mu_0(X_i)}{1-p(X_i)}.
$$
Then
$$
\sup_{z\in\mathcal Z}
\left|
\frac1{\sqrt n}\sum_{i=1}^n
(a_i+b_i^\phi)\{\hat p(X_i)-p(X_i)\}
\{\1(Z_i\le z)-F_Z(z)\}
\right|
=o_p(1).
$$
\end{lemma}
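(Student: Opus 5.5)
The key observation is that $a_i+b_i^\phi$ has conditional mean zero given $X_i$. Indeed,
\[
\E[b_i^\phi\mid X_i]=-\frac{p\mu_1}{p^2}-\frac{(1-p)\mu_0}{(1-p)^2}=-a_i ,
\]
so $\xi_i:=a_i+b_i^\phi$ satisfies $\E[\xi_i\mid X_i]=0$. Moreover, $\E[\xi_i^2\mid X_i]$ is bounded uniformly in $X_i$, by overlap and Assumption \ref{ass:dist}(ii). This is exactly the population orthogonality behind Proposition \ref{prop:dr-gateaux}. The plan is to mimic the proof of Lemma \ref{lem:An112a-control}, with $\xi_i$ playing the role of $m_1(X_i)(D_i-p_i)$. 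The only differences are that $\xi_i$ is no longer bounded, merely conditionally square integrable, and that the weight multiplying $e_i:=\hat p_i-p_i$ is $1$ rather than $m_1$.

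Write $q_i(z)=\1\{Z_i\le z\}-F_Z(z)$ and decompose
\[
e_i=\{p_K^*(X_i)-p_i\}+\Lambda'\{R^K(X_i)'\pi_K^*\}R^K(X_i)'(\hat\pi_K-\pi_K^*)+\rho_i,
\qquad |\rho_i|\le C|R^K(X_i)'(\hat\pi_K-\pi_K^*)|^2 .
\]
This splits the process into three pieces.

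\emph{Approximation piece.} The term $n^{-1/2}\sum_i\xi_i\{p_K^*(X_i)-p_i\}q_i(z)$ is centered. Apply Lemma \ref{lem:centered-lower-orthant-vc} with envelope $|\xi\,(p_K^*-p)|$ and Pollard's maximal inequality, as in Lemma \ref{lem:An111a-control}. This bounds it uniformly by $O_p(\|\xi(p_K^*-p)\|_2)\le C\|p_K^*-p\|_\infty=O(K^{1-s/(2r)})=o(1)$, using the conditional second-moment bound on $\xi$.

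\emph{Linear piece.} This equals $(\hat\pi_K-\pi_K^*)'V_n(z)$, where
\[
V_n(z):=n^{-1/2}\sum_i R^K(X_i)\Lambda'\{R^K(X_i)'\pi_K^*\}\xi_i q_i(z).
\]
Each coordinate is a centered VC-indexed process, so the variance--trace argument gives
\[
\E\sup_z\|V_n(z)\|^2\le C\sum_j\E[R_j^K(X)^2\E(\xi^2\mid X)]\le CK .
\]
Hence the linear piece is $O_p(\sqrt{K/n})\,O_p(\sqrt K)=O_p(K/\sqrt n)=o_p(1)$ under $K^4/n\to0$.

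\emph{Quadratic remainder.} Since $|q_i(z)|\le1$, the remainder is bounded by $Cn^{-1/2}\sum_i|\xi_i|\,|R^K(X_i)'(\hat\pi_K-\pi_K^*)|^2$. This is the main obstacle, because $\xi_i$ is unbounded and the data enter both $\xi_i$ and $\hat\pi_K$. I would first try bounding the remainder by
\[
\sqrt n\,\max_i|R^K(X_i)'(\hat\pi_K-\pi_K^*)|\cdot\frac1n\sum_i|\xi_i|\,|R^K(X_i)'(\hat\pi_K-\pi_K^*)| .
\]
By Cauchy--Schwarz and $\lambda_{\max}(\hat S_K)=O_p(1)$, the average is at most $\{n^{-1}\sum\xi_i^2\}^{1/2}O_p(\|\hat\pi_K-\pi_K^*\|)$. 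With $n^{-1}\sum_i\xi_i^2=O_p(1)$ by Markov, this gives $\sqrt n\,\zeta(K)(K/n)=O_p(K^2/\sqrt n)=o_p(1)$, using $\zeta(K)\lesssim K$ and $\nu<1/6$. If that bound turned out too crude, the fallback is to condition on $\{(D_i,X_i)\}$, since $\hat\pi_K$ depends only on these, and replace $|\xi_i|$ by its conditional expectation, which is bounded. The argument of Lemma \ref{lem:An112a-control} then goes through verbatim and yields $O_p(K/\sqrt n)$.

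Combining the three bounds proves the uniform $o_p(1)$ claim. This relies on the rate conditions implied by Assumption \ref{ass:SLE}, in particular $\nu<1/6$ and $s>2r$.
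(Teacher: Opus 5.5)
Your proposal is correct and follows essentially the same route as the paper. You use the same three-way split of $\hat p_i-p_i$: the pseudo-true approximation error, handled by conditional centering of $a_i+b_i^\phi$ and Pollard's inequality; the term linear in $\hat\pi_K-\pi_K^*$, handled by the variance--trace bound $\sup_z\|V_n(z)\|=O_p(\sqrt K)$; and the Taylor remainder, handled by a crude $O_p(K^2/\sqrt n)$ bound. The differences are cosmetic: you use the sup-norm rather than the $L_2$ rate for $p_K^*-p$, and a max-times-Cauchy--Schwarz bound where the paper uses $n^{-1}\sum_i|c_i|\|R^K(X_i)\|^2=O_p(K)$. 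Your conditioning fallback would in fact sharpen the remainder to $O_p(K/\sqrt n)$.
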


\begin{proof}
The argument is the uniform analogue of the estimated-propensity-score expansion in the HIR addendum; see also \url{https://keihirano.github.io/papers/AddendumProof_02nov10.pdf}. The present case is simpler because the HIR projection terms are driven by the conditional mean of the multiplier given $X$, which is identically zero here.

Write $p_i:=p(X_i)$, $\hat p_i:=\hat p(X_i)$, $I_i(z):=\1(Z_i\le z)$ and $q_i(z):=I_i(z)-F_Z(z)$. Define $c_i:=a_i+b_i^\phi$.
By the definitions of $a_i$ and $b_i^\phi$,
$$
\E[b_i^\phi\mid X_i]
=
-\frac{\mu_1(X_i)}{p_i}
-
\frac{\mu_0(X_i)}{1-p_i},
$$
and hence $\E[c_i\mid X_i]=0$. Thus the leading multiplier is conditionally centered. Let
$$
M_n(z):=
\frac1{\sqrt n}\sum_{i=1}^n
c_i(\hat p_i-p_i)q_i(z).
$$
Let $p_K^*(x):=\Lambda\{R^K(x)'\pi_K^*\}$ be the population pseudo-true series-logit approximation, with $\pi_K^*$ defined the same as in \eqref{eq:piKstar}. Decompose
$$
\hat p_i-p_i
=
\{\hat p_i-p_K^*(X_i)\}
+
\{p_K^*(X_i)-p_i\},
$$
and write $M_n(z)=M_{n,1}(z)+M_{n,2}(z)$,
where
$$
M_{n,1}(z)
:=
\frac1{\sqrt n}\sum_{i=1}^n
c_i\{\hat p_i-p_K^*(X_i)\}q_i(z),
$$
and
$$
M_{n,2}(z)
:=
\frac1{\sqrt n}\sum_{i=1}^n
c_i\{p_K^*(X_i)-p_i\}q_i(z).
$$

We first control the approximation component. Since $\E[c_i\mid X_i]=0$, the process $M_{n,2}(z)$ is centered. Applying the same Pollard's maximal inequality as in Lemma \ref{lem:An111a-control}, with envelope
$|c(W)|\,|p_K^*(X)-p(X)|$, gives
$$
\sup_{z\in\mathcal Z}|M_{n,2}(z)|
=
O_p\!\left(\|(p_K^*-p)c\|_2\right).
$$
By overlap, the maintained second-moment condition for $Y$, and boundedness of $\mu_1,\mu_0$, we have $\E[c(W)^2\mid X]\le C$. Therefore,
$$
\|(p_K^*-p)c\|_2
\le
C\|p_K^*-p\|_2
=
O(K^{-s/(2r)})
=
o(1),
$$
where the last rate follows from Lemma \ref{lem:p-rate}. Hence $\sup_{z\in\mathcal Z}|M_{n,2}(z)|=o_p(1)$.

It remains to control $M_{n,1}(z)$. By a Taylor expansion of
$\Lambda\{R^K(X_i)'\pi\}$ around $\pi_K^*$,
$$
\hat p_i-p_K^*(X_i)
=
\Lambda'\{R^K(X_i)'\pi_K^*\}R^K(X_i)'(\hat\pi_K-\pi_K^*)
+
\rho_i,
$$
where $|\rho_i|
\le
C|R^K(X_i)'(\hat\pi_K-\pi_K^*)|^2$. Accordingly,
$$
M_{n,1}(z)=M_{n,11}(z)+M_{n,12}(z),
$$
where
$$
M_{n,11}(z)
:=
(\hat\pi_K-\pi_K^*)'
\frac1{\sqrt n}\sum_{i=1}^n
R^K(X_i)\Lambda'\{R^K(X_i)'\pi_K^*\}c_iq_i(z),
$$
and
$$
M_{n,12}(z)
:=
\frac1{\sqrt n}\sum_{i=1}^n
c_i\rho_iq_i(z).
$$

For the linear stochastic term, define
$$
V_n(z)
:=
\frac1{\sqrt n}\sum_{i=1}^n
R^K(X_i)\Lambda'\{R^K(X_i)'\pi_K^*\}c_iq_i(z).
$$
By Lemma \ref{lem:centered-lower-orthant-vc} and the same variance--trace argument as above, $\sup_{z\in\mathcal Z}\|V_n(z)\|=O_p(\sqrt K)$. Therefore, by Lemma \ref{lem:p-rate},
$$
\sup_{z\in\mathcal Z}|M_{n,11}(z)|
\le
\|\hat\pi_K-\pi_K^*\|
\sup_{z\in\mathcal Z}\|V_n(z)\|
=
O_p\!\left(\sqrt{\frac K n}\right)O_p(\sqrt K)
=
O_p\!\left(\frac K{\sqrt n}\right)
=
o_p(1).
$$

For the Taylor remainder, using $|q_i(z)|\le1$,
$$
\sup_{z\in\mathcal Z}|M_{n,12}(z)|
\le
\frac C{\sqrt n}\sum_{i=1}^n
|c_i|\,|R^K(X_i)'(\hat\pi_K-\pi_K^*)|^2.
$$
Hence
$$
\sup_{z\in\mathcal Z}|M_{n,12}(z)|
\le
C\sqrt n\,
(\hat\pi_K-\pi_K^*)'
\left\{
\frac1n\sum_{i=1}^n |c_i|R^K(X_i)R^K(X_i)'
\right\}
(\hat\pi_K-\pi_K^*).
$$
Moreover,
$$
\frac1n\sum_{i=1}^n |c_i|\|R^K(X_i)\|^2=O_p(K),
$$
because $\E[|c_i|\mid X_i]\le C$ and $\E\|R^K(X)\|^2=K$. Therefore,
$$
\sup_{z\in\mathcal Z}|M_{n,12}(z)|
\le
O_p(\sqrt n\,K\|\hat\pi_K-\pi_K^*\|^2)
=
O_p\!\left(\frac{K^2}{\sqrt n}\right)
=
o_p(1),
$$
where the last equality follows from the maintained series-dimension condition.

Combining the bounds for $M_{n,2}$, $M_{n,11}$, and $M_{n,12}$, we obtain
$$
\sup_{z\in\mathcal Z}|M_n(z)|=o_p(1),
$$
which proves the lemma.
\end{proof}

 \paragraph{Proof of Proposition \ref{prop:ICM-process-uniform}.}
\begin{proof}
Let
$$
I_i(z):=\1\{Z_i\le z\},
\qquad
\widehat F_{Z,n}(z):=\frac1n\sum_{i=1}^n I_i(z),
$$
and write $\psi_i:=\psi(W_i;\eta_0)$. Since
$\sum_{i=1}^n(\hat\psi_i-\hat\tau)=0$, we have, for every $z\in\mathcal Z$,
$$
\widehat R_n(z)
=
\frac1{\sqrt n}\sum_{i=1}^n
(\hat\psi_i-\hat\tau)I_i(z)
=
\frac1{\sqrt n}\sum_{i=1}^n
(\hat\psi_i-\hat\tau)\{I_i(z)-\widehat F_{Z,n}(z)\}.
$$
Therefore,
\begin{equation}
    \label{eq:feas_vs_infeas}
    \widehat R_n(z)
-
\frac1{\sqrt n}\sum_{i=1}^n
(\psi_i-\tau_0)\{I_i(z)-F_Z(z)\}=A_n(z)+B_n(z)+C_n(z),
\end{equation}
where
$$
A_n(z)
:=
\frac1{\sqrt n}\sum_{i=1}^n
(\hat\psi_i-\psi_i)
\{I_i(z)-\widehat F_{Z,n}(z)\},
$$
\begin{equation}
    \label{eq:B_n}
    B_n(z)
:=
-\frac1{\sqrt n}(\hat\tau-\tau_0)
\sum_{i=1}^n
\{I_i(z)-\widehat F_{Z,n}(z)\},
\end{equation}
$$
C_n(z)
:=
\frac1{\sqrt n}\sum_{i=1}^n
(\psi_i-\tau_0)
\{F_Z(z)-\widehat F_{Z,n}(z)\}.
$$

The term $B_n(z)$ is identically zero for every $z\in\mathcal Z$, because $\sum_{i=1}^n\{I_i(z)-\widehat F_{Z,n}(z)\}=0$.

Next,
$$
C_n(z)
=
\{F_Z(z)-\widehat F_{Z,n}(z)\}
\frac1{\sqrt n}\sum_{i=1}^n(\psi_i-\tau_0).
$$
Since the centered lower-orthant class is VC-type, $\sup_{z\in\mathcal Z}|\widehat F_{Z,n}(z)-F_Z(z)|=O_p(n^{-1/2})$.
Also, by $\E[\psi(W;\eta_0)]=\tau_0$ and the maintained moment conditions,
$n^{-1/2}\sum_{i=1}^n(\psi_i-\tau_0)=O_p(1)$. Hence
\begin{equation}
    \label{eq:C_n}
    \sup_{z\in\mathcal Z}|C_n(z)|=o_p(1).
\end{equation}

It remains to prove $\sup_{z\in\mathcal Z}|A_n(z)|=o_p(1)$. Define
$$
\phi(W,p):=\frac{Y\{D-p(X)\}}{p(X)\{1-p(X)\}},
\qquad
a(X):=\frac{\mu_1(X)}{p(X)}+\frac{\mu_0(X)}{1-p(X)}.
$$
Let $\phi_i:=\phi(W_i,p)$, $\hat\phi_i:=\phi(W_i,\hat p)$, $a_i:=a(X_i)$, and
$\hat a_i:=\hat\mu_1(X_i)/\hat p(X_i)+\hat\mu_0(X_i)/\{1-\hat p(X_i)\}$. Then
$\psi_i=\phi_i-a_i\{D_i-p(X_i)\}$, $\hat\psi_i=\hat\phi_i-\hat a_i\{D_i-\hat p(X_i)\}$. Hence
\begin{equation}
\label{eq:A_n}
    A_n(z)
=
A_n^\phi(z)-A_n^a(z),
\end{equation}
where
$$
A_n^\phi(z)
:=
\frac1{\sqrt n}\sum_{i=1}^n(\hat\phi_i-\phi_i)
\{I_i(z)-\widehat F_{Z,n}(z)\},
$$
and
$$
A_n^a(z)
:=
\frac1{\sqrt n}\sum_{i=1}^n
\left[
\hat a_i\{D_i-\hat p(X_i)\}
-
a_i\{D_i-p(X_i)\}
\right]
\{I_i(z)-\widehat F_{Z,n}(z)\}.
$$
We first analyze $A_n^\phi(z)$.

\medskip
\noindent\textbf{Expansion of $A_n^\phi(z)$.}

Let $e_i:=\hat p(X_i)-p(X_i)$. Since
$$
\hat\phi_i-\phi_i
=
D_iY_i\left\{\frac1{\hat p(X_i)}-\frac1{p(X_i)}\right\}
-
(1-D_i)Y_i
\left\{\frac1{1-\hat p(X_i)}-\frac1{1-p(X_i)}\right\},
$$
we can write $\hat\phi_i-\phi_i=e_i b_i^\phi$, where
$$
b_i^\phi
:=
-Y_i\left[
\frac{D_i}{p(X_i)\hat p(X_i)}
+
\frac{1-D_i}{\{1-p(X_i)\}\{1-\hat p(X_i)\}}
\right].
$$
Therefore,
$$
A_n^\phi(z)
=
\underbrace{\frac1{\sqrt n}\sum_{i=1}^n e_i b_i^\phi\{I_i(z)-F_Z(z)\}}_{A_{n,1}^\phi(z)}
+
\underbrace{\frac1{\sqrt n}\sum_{i=1}^n e_i b_i^\phi\{F_Z(z)-\widehat F_{Z,n}(z)\}}_{A_{n,2}^\phi(z)}.
$$

The second term is negligible. Indeed, by overlap and Lemma \ref{lem:p-rate}, $p(X_i)$ and $\hat p(X_i)$ are bounded away from zero and one uniformly in $i$ with probability approaching one, so $|b_i^\phi|\le C|Y_i|$ on this event. Hence
$$
\sup_{z\in\mathcal Z}|A_{n,2}^\phi(z)|
\le
\sup_{z\in\mathcal Z}|\widehat F_{Z,n}(z)-F_Z(z)|
\cdot
\frac1{\sqrt n}\sum_{i=1}^n |e_i|\,|b_i^\phi|.
$$
We have
$\sup_{z\in\mathcal Z}|\widehat F_{Z,n}(z)-F_Z(z)|=O_p(n^{-1/2})$. Moreover,
$$
\frac1n\sum_{i=1}^n |e_i|\,|b_i^\phi|
\le
C\left(\frac1n\sum_{i=1}^n e_i^2\right)^{1/2}
\left(\frac1n\sum_{i=1}^n Y_i^2\right)^{1/2}
=o_p(1),
$$
where the last step follows from Lemma \ref{lem:p-rate} and the maintained moment condition for $Y$. Therefore,
$$
\sup_{z\in\mathcal Z}|A_{n,2}^\phi(z)|=O_p(n^{-1/2})\cdot o_p(n^{1/2})=o_p(1).
$$

It remains to analyze $A_{n,1}^\phi(z)$. Decompose
$b_i^\phi=b_i^{\phi,0}+r_i^\phi$, where
$$
b_i^{\phi,0}
:=
-Y_i\left[
\frac{D_i}{p(X_i)^2}
+
\frac{1-D_i}{\{1-p(X_i)\}^2}
\right].
$$
Then
$$
A_{n,1}^\phi(z)
=
A_{n,11}^\phi(z)+A_{n,12}^\phi(z),
$$
where
$$
A_{n,11}^\phi(z)
:=
\frac1{\sqrt n}\sum_{i=1}^n
e_i b_i^{\phi,0}\{I_i(z)-F_Z(z)\},\quad A_{n,12}^\phi(z)
:=
\frac1{\sqrt n}\sum_{i=1}^n
e_i r_i^\phi\{I_i(z)-F_Z(z)\}.
$$
The term $A_{n,11}^\phi(z)$ is the leading first-step contribution from estimating the propensity score, which will be combined below with the corresponding leading term from $A_n^a(z)$.

The remainder $A_{n,12}^\phi(z)$ is of second order. On the same high-probability overlap event, the mean value theorem gives $|r_i^\phi|
\le C |Y_i|\,|e_i|$. Therefore,
$$
\sup_{z\in\mathcal Z}|A_{n,12}^\phi(z)|
\le
\frac1{\sqrt n}\sum_{i=1}^n |e_i|\,|r_i^\phi|
\le
C\sqrt n\,\|\hat p-p\|_\infty^2
\left(\frac1n\sum_{i=1}^n |Y_i|\right).
$$
By the maintained moment condition, $n^{-1}\sum_{i=1}^n |Y_i|=O_p(1)$. Hence
$$
\sup_{z\in\mathcal Z}|A_{n,12}^\phi(z)|
=
O_p\!\left(\sqrt n\,\|\hat p-p\|_\infty^2\right).
$$
By Lemma \ref{lem:p-rate}, $\|\hat p-p\|_\infty
=
O_p\!\left(\sqrt{{K^3}/{n}}+K^{1-s/(2r)}\right)$. Indeed, under $K\asymp n^\nu$,
$$
n^{1/4}\left(\sqrt{\frac{K^3}{n}}+K^{1-s/(2r)}\right)
\asymp
n^{3\nu/2-1/4}
+
n^{1/4+\nu\{1-s/(2r)\}}
\to0
$$
by Assumptions \ref{ass:pscore} and \ref{ass:SLE}. Thus
$\|\hat p-p\|_\infty=o_p(n^{-1/4})$, and consequently $\sup_{z\in\mathcal Z}|A_{n,12}^\phi(z)|=o_p(1)$. Therefore,
\begin{equation}
    \label{eq:A_n_phi}
    A_n^\phi(z)=A_{n,11}^\phi(z)+o_p(1)
\end{equation}
uniformly in $z\in\mathcal Z$.

\medskip
\noindent\textbf{Expansion of $A_n^a(z)$.}

We next turn to the term involving the augmentation component. Recall that
$$
A_n^a(z)
=
\frac1{\sqrt n}\sum_{i=1}^n
\left[
\hat a_i\{D_i-\hat p(X_i)\}
-
a_i\{D_i-p(X_i)\}
\right]
\{I_i(z)-\widehat F_{Z,n}(z)\},
$$
where
$$
a_i:=\frac{\mu_1(X_i)}{p(X_i)}+\frac{\mu_0(X_i)}{1-p(X_i)},
\qquad
\hat a_i:=\frac{\hat\mu_1(X_i)}{\hat p(X_i)}
+\frac{\hat\mu_0(X_i)}{1-\hat p(X_i)}.
$$
Since
$$
\hat a_i\{D_i-\hat p(X_i)\}-a_i\{D_i-p(X_i)\}
=
(\hat a_i-a_i)\{D_i-p(X_i)\}-\hat a_i e_i,
$$
where $e_i:=\hat p(X_i)-p(X_i)$, we write
\begin{equation}
    \label{eq:A_na_decomp}
    A_n^a(z)=A_{n,1}^a(z)-A_{n,2}^a(z),
\end{equation}
with
$$
A_{n,1}^a(z)
:=
\frac1{\sqrt n}\sum_{i=1}^n
(\hat a_i-a_i)\{D_i-p(X_i)\}
\{I_i(z)-\widehat F_{Z,n}(z)\},
$$
and
$$
A_{n,2}^a(z)
:=
\frac1{\sqrt n}\sum_{i=1}^n
\hat a_i e_i
\{I_i(z)-\widehat F_{Z,n}(z)\}.
$$

We first decompose $A_{n,2}^a(z)$, since this term contains the leading first-step contribution. Adding and subtracting $F_Z(z)$, we obtain $A_{n,2}^a(z)
=
A_{n,21}^a(z)+A_{n,22}^a(z)$, where
$$
A_{n,21}^a(z)
:=
\frac1{\sqrt n}\sum_{i=1}^n
\hat a_i e_i\{I_i(z)-F_Z(z)\},\quad A_{n,22}^a(z)
:=
\frac1{\sqrt n}\sum_{i=1}^n
\hat a_i e_i\{F_Z(z)-\widehat F_{Z,n}(z)\}.
$$
The second term is negligible. Indeed,
$$
\sup_{z\in\mathcal Z}|A_{n,22}^a(z)|
\le
\sup_{z\in\mathcal Z}|\widehat F_{Z,n}(z)-F_Z(z)|
\left|
\frac1{\sqrt n}\sum_{i=1}^n\hat a_i e_i
\right|.
$$
We have
$\sup_{z\in\mathcal Z}|\widehat F_{Z,n}(z)-F_Z(z)|=O_p(n^{-1/2})$. Moreover, by Cauchy--Schwarz,
$$
\left|
\frac1{\sqrt n}\sum_{i=1}^n\hat a_i e_i
\right|
\le
\sqrt n
\left(\frac1n\sum_{i=1}^n\hat a_i^2\right)^{1/2}
\|\hat p-p\|_{2,n}.
$$
Under overlap, Lemma \ref{lem:mu-rate}, and the maintained moment conditions,
$n^{-1}\sum_{i=1}^n\hat a_i^2=O_p(1)$. Hence
$$
\sup_{z\in\mathcal Z}|A_{n,22}^a(z)|
=
O_p(\|\hat p-p\|_{2,n})
=
o_p(1),
$$
where the last equality follows from Lemma \ref{lem:p-rate}, Assumption \ref{ass:pscore} and \ref{ass:SLE}.

It remains to study $A_{n,21}^a(z)$. Decompose $A_{n,21}^a(z)=
A_{n,211}^a(z)+A_{n,212}^a(z)$, where
$$
A_{n,211}^a(z)
:=
\frac1{\sqrt n}\sum_{i=1}^n
a_i e_i\{I_i(z)-F_Z(z)\},\quad A_{n,212}^a(z)
:=
\frac1{\sqrt n}\sum_{i=1}^n
(\hat a_i-a_i)e_i\{I_i(z)-F_Z(z)\}.
$$
The first term $A_{n,211}^a(z)$ is the leading contribution from estimating the propensity score in the augmentation part. It will be combined with $A_{n,11}^\phi(z)$ below. The second term is of higher order: since $|I_i(z)-F_Z(z)|\le 1$,
$$
\sup_{z\in\mathcal Z}|A_{n,212}^a(z)|
\le
\sqrt n\,\|\hat a-a\|_{2,n}\|\hat p-p\|_{2,n}.
$$
By overlap and the definitions of $a$ and $\hat a$,
$$
\|\hat a-a\|_{2,n}
=
O_p\!\left(
\|\hat\mu_1-\mu_1\|_{2,n}
+
\|\hat\mu_0-\mu_0\|_{2,n}
+
\|\hat p-p\|_{2,n}
\right).
$$
By Lemmas \ref{lem:p-rate} and \ref{lem:mu-rate},
$$
\|\hat p-p\|_{2,n}
=
O_p\!\left(\sqrt{\frac K n}+K^{-s/(2r)}\right),
\qquad
\|\hat\mu_d-\mu_d\|_{2,n}
=
O_p\!\left(\sqrt{\frac K n}+K^{-s/(2r)}+K^{-s_\mu/r}\right),
$$
for $d=0,1$. Hence
$$
\sqrt n\,\|\hat a-a\|_{2,n}\|\hat p-p\|_{2,n}
=
O_p\!\left[
\sqrt n
\left(\sqrt{\frac K n}+K^{-s/(2r)}+K^{-s_\mu/r}\right)
\left(\sqrt{\frac K n}+K^{-s/(2r)}\right)
\right].
$$
Under $K\asymp n^\nu$, the deterministic rate inside the bracket is bounded by a constant multiple of
$$
\frac K{\sqrt n}
+
K^{1/2-s/(2r)}
+
K^{1/2-s_\mu/r}
+
\sqrt n K^{-s/r}
+
\sqrt n K^{-s/(2r)-s_\mu/r}.
$$
Equivalently, it is of order
$$
n^{\nu-1/2}
+
n^{\nu\{1/2-s/(2r)\}}
+
n^{\nu(1/2-s_\mu/r)}
+
n^{1/2-\nu s/r}
+
n^{1/2-\nu\{s/(2r)+s_\mu/r\}}.
$$
By the maintained rate conditions in Assumption \ref{ass:dist}(c), \ref{ass:pscore} and \ref{ass:SLE}, each exponent is negative. Therefore, $\sqrt n\,\|\hat a-a\|_{2,n}\|\hat p-p\|_{2,n}=o_p(1)$, and consequently $\sup_{z\in\mathcal Z}|A_{n,212}^a(z)|=o_p(1)$. Consequently,
\begin{equation}
    \label{eq:A_n2a_expansion}
    A_{n,2}^a(z)
    =
    A_{n,211}^a(z)+o_p(1)
\end{equation}
uniformly in $z\in\mathcal Z$, where
$$
A_{n,211}^a(z)
=
\frac1{\sqrt n}\sum_{i=1}^n
a_i\{\hat p(X_i)-p(X_i)\}\{I_i(z)-F_Z(z)\}.
$$
We now control $A_{n,1}^a(z)$. Adding and subtracting $F_Z(z)$, write
$$
A_{n,1}^a(z)
=
A_{n,11}^a(z)+A_{n,12}^a(z),
$$
where
$$
A_{n,11}^a(z)
:=
\frac1{\sqrt n}\sum_{i=1}^n
(\hat a_i-a_i)\{D_i-p(X_i)\}\{I_i(z)-F_Z(z)\},
$$
and
$$
A_{n,12}^a(z)
:=
\frac1{\sqrt n}\sum_{i=1}^n
(\hat a_i-a_i)\{D_i-p(X_i)\}\{F_Z(z)-\widehat F_{Z,n}(z)\}.
$$
The second term is negligible. Indeed,
$$
\sup_{z\in\mathcal Z}|A_{n,12}^a(z)|
\le
\sup_{z\in\mathcal Z}|\widehat F_{Z,n}(z)-F_Z(z)|
\left|
\frac1{\sqrt n}\sum_{i=1}^n
(\hat a_i-a_i)\{D_i-p(X_i)\}
\right|.
$$
First, $\sup_{z\in\mathcal Z}|\widehat F_{Z,n}(z)-F_Z(z)|=O_p(n^{-1/2})$. Moreover, by Cauchy--Schwarz and $|D_i-p(X_i)|\le 1$,
$$
\left|
\frac1{\sqrt n}\sum_{i=1}^n
(\hat a_i-a_i)\{D_i-p(X_i)\}
\right|
\le
\sqrt n\,\|\hat a-a\|_{2,n}=o_p(\sqrt{n}).
$$
Therefore, $\sup_{z\in\mathcal Z}|A_{n,12}^a(z)|=o_p(1)$.

It remains to show that $\sup_{z\in\mathcal Z}|A_{n,11}^a(z)|=o_p(1)$. By definition,
$$
\hat a_i-a_i
=
\left\{\frac{\hat\mu_1(X_i)}{\hat p(X_i)}-\frac{\mu_1(X_i)}{p(X_i)}\right\}
+
\left\{\frac{\hat\mu_0(X_i)}{1-\hat p(X_i)}-\frac{\mu_0(X_i)}{1-p(X_i)}\right\}.
$$
Equivalently,
$$
\hat a_i-a_i
=
\frac{\hat\mu_1(X_i)-\mu_1(X_i)}{\hat p(X_i)}
+
\mu_1(X_i)\left\{\frac1{\hat p(X_i)}-\frac1{p(X_i)}\right\}
+
\frac{\hat\mu_0(X_i)-\mu_0(X_i)}{1-\hat p(X_i)}
+
\mu_0(X_i)\left\{\frac1{1-\hat p(X_i)}-\frac1{1-p(X_i)}\right\}.
$$
Accordingly, decompose
$$
A_{n,11}^a(z)
=
\sum_{\ell=1}^4 A_{n,11\ell}^a(z),
$$
where
$$
A_{n,111}^a(z)
:=
\frac1{\sqrt n}\sum_{i=1}^n
\frac{\hat\mu_1(X_i)-\mu_1(X_i)}{\hat p(X_i)}
\{D_i-p(X_i)\}\{I_i(z)-F_Z(z)\},
$$
$$
A_{n,112}^a(z)
:=
\frac1{\sqrt n}\sum_{i=1}^n
\mu_1(X_i)
\left\{\frac1{\hat p(X_i)}-\frac1{p(X_i)}\right\}
\{D_i-p(X_i)\}\{I_i(z)-F_Z(z)\},
$$
$$
A_{n,113}^a(z)
:=
\frac1{\sqrt n}\sum_{i=1}^n
\frac{\hat\mu_0(X_i)-\mu_0(X_i)}{1-\hat p(X_i)}
\{D_i-p(X_i)\}\{I_i(z)-F_Z(z)\},
$$
$$
A_{n,114}^a(z)
:=
\frac1{\sqrt n}\sum_{i=1}^n
\mu_0(X_i)
\left\{\frac1{1-\hat p(X_i)}-\frac1{1-p(X_i)}\right\}
\{D_i-p(X_i)\}\{I_i(z)-F_Z(z)\}.
$$

We now verify that the four terms are uniformly negligible. By Lemma \ref{lem:An111a-control}, applied respectively to the treated and control outcome regressions, 
$$
\sup_{z\in\mathcal Z}|A_{n,111}^a(z)|=o_p(1),
\qquad
\sup_{z\in\mathcal Z}|A_{n,113}^a(z)|=o_p(1),
$$
provided that $\sqrt n\,r_{\mu,n}r_{p,n}\to0$ and $\sqrt K\,r_{\mu,n}\to0$. These rate conditions are implied by Assumptions \ref{ass:dist}--\ref{ass:SLE}. Similarly, by Lemma \ref{lem:An112a-control},
$$
\sup_{z\in\mathcal Z}|A_{n,112}^a(z)|=o_p(1),
\qquad
\sup_{z\in\mathcal Z}|A_{n,114}^a(z)|=o_p(1),
$$
since $\sqrt n\,r_{\infty,n}^2\to0$ and $K/\sqrt n\to0$ under the same assumptions. Hence,
$$
\sup_{z\in\mathcal Z}|A_{n,11}^a(z)|=o_p(1),
$$
and consequently,
\begin{equation}
\label{eq:A_n1a}
    \sup_{z\in\mathcal Z}|A_{n,1}^a(z)|=o_p(1).
\end{equation}
Combining \eqref{eq:A_na_decomp}, \eqref{eq:A_n1a} and \eqref{eq:A_n2a_expansion} gives 
\begin{equation}
    \label{eq:A_na}
    A_{n}^a(z)=-A_{n,211}^a(z)+o_p(1)
\end{equation}
uniformly in $z\in\mathcal{Z}$. Combining \eqref{eq:feas_vs_infeas}, \eqref{eq:B_n}, \eqref{eq:C_n}, \eqref{eq:A_n}, \eqref{eq:A_n_phi} and \eqref{eq:A_na} gives
\begin{align}
     &\quad\widehat R_n(z)-\frac{1}{\sqrt{n}}\sum_{i=1}^n(\psi_i-\tau_0)\left\{I_i(z)-F_Z(z)\right\} \notag\\
     &=A_{n,11}^\phi(z)+A_{n,211}^a(z)+o_p(1) \notag\\
     &=\frac{1}{\sqrt{n}}\sum_{i=1}^n\left(a_i+b_i^\phi\right)\left\{\hat p(X_i)-p(X_i)\right\}\left\{I_i(z)-F_Z(z)\right\}+o_p(1),
\end{align}
uniformly in $z\in\mathcal{Z}$. Finally, by Lemma \ref{lem:orthogonal-propensity-term},
$$
\sup_{z\in\mathcal Z}
\left|
A_{n,11}^\phi(z)+A_{n,211}^a(z)
\right|
=o_p(1).
$$
Together with the preceding expansion,
$$
\sup_{z\in\mathcal Z}
\left|
\widehat R_n(z)-
\frac1{\sqrt n}\sum_{i=1}^n
(\psi_i-\tau_0)\{I_i(z)-F_Z(z)\}
\right|
=o_p(1),
$$
as claimed.

\end{proof}

\paragraph{Proof of Theorem \ref{thm:ICM-Gaussian}}
\begin{proof}
Let $\chi(W):=\psi(W;\eta_0)-\tau_0$. By Proposition \ref{prop:ICM-process-uniform}, it suffices to study
$$
R_n^0(z):=
\frac1{\sqrt n}\sum_{i=1}^n
\chi(W_i)\{\1(Z_i\le z)-F_Z(z)\},
\qquad z\in\mathcal Z.
$$
Under $\mathbb H_0$, \eqref{eq:cate_cmr} gives $\E[\chi(W)\mid Z]=0$. Hence
$$
R_n^0(z)=\mathbb G_n f_z,
\qquad
f_z(w):=\chi(w)\{\1(Z(w)\le z)-F_Z(z)\}.
$$
By Lemma \ref{lem:centered-lower-orthant-vc} and $\E[\chi(W)^2]<\infty$, the class
$\mathcal F:=\{f_z:z\in\mathcal Z\}$ is $P$-Donsker. Therefore
$$
R_n^0 \rightsquigarrow \mG
\qquad \text{in } \ell^\infty(\mathcal Z),
$$
where $\mG$ is mean zero with covariance
$$
\E[f_{z_1}(W)f_{z_2}(W)]
=
\E\!\left[
\chi(W)^2
\{\1(Z\le z_1)-F_Z(z_1)\}
\{\1(Z\le z_2)-F_Z(z_2)\}
\right].
$$
Since $\E[\chi(W)\mid Z]=0$, $\E[\chi(W)^2\mid Z]=\Var(\psi(W;\eta_0)\mid Z)=\sigma_\psi^2(Z)$, giving the stated covariance kernel. The result follows from Proposition \ref{prop:ICM-process-uniform} and Slutsky's theorem.
\end{proof}

\paragraph{Proof of Corollary \ref{cor:null-dist}.}
\begin{proof}
The weak convergence of $KS_n$ follows directly from Theorem \ref{thm:ICM-Gaussian} and the continuous mapping theorem, since the map $z\mapsto \sup_{u\in\mathcal Z}|z(u)|$ is continuous on $\ell^\infty(\mathcal Z)$.

We now prove the convergence of $CvM_n$. By Theorem \ref{thm:ICM-Gaussian},
$$
\widehat R_n \rightsquigarrow \mathbb G
\qquad \text{in } \ell^\infty(\mathcal Z),
$$
and, by the Glivenko--Cantelli theorem,
$$
\sup_{z\in\mathcal Z}|\widehat F_{Z,n}(z)-F_Z(z)|=o_p(1),
$$
where $\mathbb P_n(z):=n^{-1}\sum_{i=1}^n \1\{Z_i\le z\}$.

By the Skorohod representation theorem, there exists a probability space on which one can construct versions of $\widehat R_n$, $\mathbb G$, and $\widehat F_{Z,n}$ such that
$$
\sup_{z\in\mathcal Z}|\widehat R_n(z)-\mathbb G(z)|\to 0
\quad \text{a.s.},
\qquad
\sup_{z\in\mathcal Z}|\widehat F_{Z,n}(z)-F_Z(z)|\to 0
\quad \text{a.s.}
$$
Hence,
\begin{align*}
&\left|
\int_{\mathcal Z}\widehat R_n(z)^2\,d\widehat F_{Z,n}(z)
-
\int_{\mathcal Z}\mathbb G(z)^2\,dF_Z(z)
\right| \\
&\le
\left|
\int_{\mathcal Z}\bigl(\widehat R_n(z)^2-\mathbb G(z)^2\bigr)\,d\widehat F_{Z,n}(z)
\right|
+
\left|
\int_{\mathcal Z}\mathbb G(z)^2\,d\bigl(\widehat F_{Z,n}-F_Z\bigr)(z)
\right|.
\end{align*}
For the first term,
$$
\left|
\int_{\mathcal Z}\bigl(\widehat R_n(z)^2-\mathbb G(z)^2\bigr)\,d\widehat F_{Z,n}(z)
\right|
\le
\sup_{z\in\mathcal Z}\bigl|\widehat R_n(z)^2-\mathbb G(z)^2\bigr|
\to 0
\quad \text{a.s.}
$$
For the second term, note that the sample paths of $\mathbb G$ are almost surely bounded and continuous on $\mathcal Z$. Since $\widehat F_{Z,n}$ converges weakly to $F_Z$ almost surely, the Helly--Bray theorem implies
$$
\int_{\mathcal Z}\mathbb G(z)^2\,d\widehat F_{Z,n}(z)
-
\int_{\mathcal Z}\mathbb G(z)^2\,dF_Z(z)
\to 0
\quad \text{a.s.}
$$
Therefore,
$$
CvM_n
=
\int_{\mathcal Z}\widehat R_n(z)^2\,d\widehat F_{Z,n}(z)
\to
\int_{\mathcal Z}\mathbb G(z)^2\,dF_Z(z)
\quad \text{a.s.},
$$
on the Skorohod space, which implies
$$
CvM_n \xrightarrow{d} \int_{\mathcal Z}\mathbb G(z)^2\,dF_Z(z).
$$
This completes the proof.
\end{proof}

\paragraph{Proof of Theorem \ref{thm:fixed-alt}}
\begin{proof}
Let $\chi(W):=\psi(W;\eta_0)-\tau_0$. By Proposition \ref{prop:ICM-process-uniform},
$$
\sup_{z\in\mathcal Z}
\left|
\widehat R_n(z)-
\frac1{\sqrt n}\sum_{i=1}^n
\chi(W_i)\{ \1(Z_i\le z)-F_Z(z)\}
\right|
=o_p(1).
$$
Dividing by $\sqrt n$, it suffices to study
$$
\frac1n\sum_{i=1}^n
\chi(W_i)\{ \1(Z_i\le z)-F_Z(z)\}.
$$
By Lemma \ref{lem:centered-lower-orthant-vc} and $\E[\chi(W)^2]<\infty$, the class
$$
\left\{
w\mapsto \chi(w)\{\1(Z(w)\le z)-F_Z(z)\}:z\in\mathcal Z
\right\}
$$
is Glivenko--Cantelli. Hence
$$
\sup_{z\in\mathcal Z}
\left|
\frac1n\sum_{i=1}^n
\chi(W_i)\{\1(Z_i\le z)-F_Z(z)\}
-
\E\!\left[\chi(W)\{\1(Z\le z)-F_Z(z)\}\right]
\right|
=o_p(1).
$$
Since $\E[\chi(W)]=0$, the expectation above equals
$$
\E[\chi(W)\1\{Z\le z\}]
=
\Gamma(z).
$$
Therefore,
$$
\sup_{z\in\mathcal Z}
\left|
\frac{1}{\sqrt n}\widehat R_n(z)-\Gamma(z)
\right|
=o_p(1).
$$

It remains to show that $\Gamma$ is not identically zero under $\mathbb H_1$. If $\Gamma(z)=0$ for all $z\in\mathcal Z$, then
$$
\E\!\left[\{\psi(W;\eta_0)-\tau_0\}\1\{Z\le z\}\right]=0
\qquad \forall z\in\mathcal Z.
$$
By the equivalence between the conditional moment restriction and the indicator-weighted unconditional moments, this implies
$$
\E[\psi(W;\eta_0)-\tau_0\mid Z]=0
\quad\text{a.s.},
$$
or equivalently $\tau(Z)=\tau_0$ a.s., contradicting $\mathbb H_1$. Thus $\Gamma\not\equiv0$.

Finally, by dominated convergence, $\Gamma$ is continuous on $\mathcal Z$ under the maintained continuity conditions for $Z$. Hence $\sup_{z\in\mathcal Z}|\Gamma(z)|>0$, and
$$
\frac{KS_n}{\sqrt n}
=
\sup_{z\in\mathcal Z}
\left|
\frac1{\sqrt n}\widehat R_n(z)
\right|
\xrightarrow{p}
\sup_{z\in\mathcal Z}|\Gamma(z)|>0.
$$
Similarly,
$$
\frac{CvM_n}{n}
=
\int_{\mathcal Z}
\left(
\frac1{\sqrt n}\widehat R_n(z)
\right)^2
\,d\widehat F_{Z,n}(z)
\xrightarrow{p}
\int_{\mathcal Z}\Gamma(z)^2\,dF_Z(z)>0.
$$
Therefore, $KS_n$ and $CvM_n$ diverge in probability, proving consistency against fixed alternatives.
\end{proof}

\paragraph{Proof of Theorem \ref{thm:local-alt}}
\begin{proof}
Let $\E_n$ denote expectation under the local alternative. Under $\mathbb H_{1n}$,
$$
\E_n[\psi(W;\eta_0)\mid Z]
=
\tau_n(Z)
=
\tau_0+n^{-1/2}\lambda(Z).
$$
Hence the population average treatment effect under the local alternative is
$$
\tau_{P,n}:=\E_n[\psi(W;\eta_0)]
=
\tau_0+n^{-1/2}\bar\lambda,
\qquad
\bar\lambda:=\E_n[\lambda(Z)].
$$
By the same feasible-to-oracle argument as in Proposition \ref{prop:ICM-process-uniform}, with
$\tau_0$ replaced by $\tau_{P,n}$,
$$
\sup_{z\in\mathcal Z}
\left|
\widehat R_n(z)
-
\frac1{\sqrt n}\sum_{i=1}^n
\{\psi(W_i;\eta_0)-\tau_{P,n}\}
\{ \1\{Z_i\le z\}-F_Z(z)\}
\right|
=o_p(1).
$$
It therefore suffices to study
$$
R_n^0(z):=
\frac1{\sqrt n}\sum_{i=1}^n
\{\psi(W_i;\eta_0)-\tau_{P,n}\}
\{ \1\{Z_i\le z\}-F_Z(z)\}.
$$

Decompose
$$
\begin{aligned}
R_n^0(z)
&=
\frac1{\sqrt n}\sum_{i=1}^n
\Bigl[
\{\psi(W_i;\eta_0)-\tau_{P,n}\}
\{ \1\{Z_i\le z\}-F_Z(z)\}  \\
&\qquad\qquad
-
\E_n\!\left[
\{\psi(W;\eta_0)-\tau_{P,n}\}
\{ \1\{Z\le z\}-F_Z(z)\}
\right]
\Bigr]  \\
&\quad+
\sqrt n\,
\E_n\!\left[
\{\psi(W;\eta_0)-\tau_{P,n}\}
\{ \1\{Z\le z\}-F_Z(z)\}
\right].
\end{aligned}
$$
For the drift term, since
$$
\E_n[\psi(W;\eta_0)-\tau_{P,n}\mid Z]
=
n^{-1/2}\{\lambda(Z)-\bar\lambda\},
$$
we obtain
$$
\begin{aligned}
&\sqrt n\,
\E_n\!\left[
\{\psi(W;\eta_0)-\tau_{P,n}\}
\{ \1\{Z\le z\}-F_Z(z)\}
\right] \\
&\qquad
=
\E_n\!\left[
\{\lambda(Z)-\bar\lambda\}
\{ \1\{Z\le z\}-F_Z(z)\}
\right] \\
&\qquad
=
\E_n\!\left[
\lambda(Z)\{ \1\{Z\le z\}-F_Z(z)\}
\right]
=
\Lambda(z),
\end{aligned}
$$
where the term involving $\bar\lambda$ vanishes because
$\E_n[\1\{Z\le z\}-F_Z(z)]=0$.

For the centered empirical process term, the same Donsker argument as in the proof of
Theorem \ref{thm:ICM-Gaussian} applies to the class
$$
\left\{
w\mapsto
\{\psi(w;\eta_0)-\tau_{P,n}\}
\{ \1\{Z(w)\le z\}-F_Z(z)\}:z\in\mathcal Z
\right\}.
$$
Under the local alternatives, the conditional mean is shifted only by order $n^{-1/2}$, so the stochastic part has the same weak limit as under the null. Therefore,
$$
\frac1{\sqrt n}\sum_{i=1}^n
\Bigl[
\{\psi(W_i;\eta_0)-\tau_{P,n}\}
\{ \1\{Z_i\le z\}-F_Z(z)\}
-
\E_n\{\cdot\}
\Bigr]
\rightsquigarrow \mathbb G
$$
in $\ell^\infty(\mathcal Z)$, where $\mathbb G$ is the centered Gaussian process in
Theorem \ref{thm:ICM-Gaussian}. Consequently,
$$
R_n^0 \rightsquigarrow \mathbb G+\Lambda
\qquad \text{in } \ell^\infty(\mathcal Z).
$$
Combining this with Slutsky's theorem gives
$$
\widehat R_n \rightsquigarrow \mathbb G+\Lambda
\qquad \text{in } \ell^\infty(\mathcal Z).
$$

It remains to justify that $\Lambda\not\equiv0$ when $\lambda(Z)$ is not a.s. constant. If
$\Lambda\equiv0$, then
$$
\E\!\left[\lambda(Z)\{\1\{Z\le z\}-F_Z(z)\}\right]=0
\qquad \forall z\in\mathcal Z,
$$
which implies
$$
\E[\lambda(Z)-\E\lambda(Z)\mid Z]=0
\quad\text{a.s.},
$$
and hence $\lambda(Z)=\E\lambda(Z)$ a.s., a contradiction. Therefore
$\Lambda\not\equiv0$ whenever $\lambda(Z)$ is not a.s. constant.

The weak limits of $KS_n$ and $CvM_n$ follow from the continuous mapping theorem. The CvM conclusion uses the same random-measure replacement argument as in Corollary \ref{cor:null-dist}. This proves the theorem.
\end{proof}

\subsection{Proofs for Section \ref{sec:mb}}

\paragraph{Proof of Theorem \ref{thm:bootstrap}}
\begin{proof}
Write
\[
q_i(z):=\1\{Z_i\le z\}-F_Z(z),
\qquad
\hat q_i(z):=\1\{Z_i\le z\}-\widehat F_{Z,n}(z).
\]
By definition,
\[
\widehat R_n^*(z)
=
\frac1{\sqrt n}\sum_{i=1}^n
V_i(\hat\psi_i-\hat\tau)\hat q_i(z).
\]
We first show that
\begin{equation}
\label{eq:bootstrap-feasible-oracle}
\sup_{z\in\mathcal Z}
\left|
\widehat R_n^*(z)
-
\frac1{\sqrt n}\sum_{i=1}^n
V_i\{\psi(W_i;\eta_0)-\tau_0\}q_i(z)
\right|
=o_p(1).
\end{equation}

Decompose
\[
\widehat R_n^*(z)
-
\frac1{\sqrt n}\sum_{i=1}^n
V_i\{\psi(W_i;\eta_0)-\tau_0\}q_i(z)
=
T_{1n}^*(z)+T_{2n}^*(z),
\]
where
\[
T_{1n}^*(z):=
\frac1{\sqrt n}\sum_{i=1}^n
V_i\{(\hat\psi_i-\hat\tau)-(\psi(W_i;\eta_0)-\tau_0)\}q_i(z),
\]
and
\[
T_{2n}^*(z):=
\frac1{\sqrt n}\sum_{i=1}^n
V_i(\hat\psi_i-\hat\tau)\{\hat q_i(z)-q_i(z)\}.
\]

We first control \(T_{1n}^*(z)\). Since
\[
(\hat\psi_i-\hat\tau)-(\psi(W_i;\eta_0)-\tau_0)
=
\{\hat\psi_i-\psi(W_i;\eta_0)\}-(\hat\tau-\tau_0),
\]
we have
\[
T_{1n}^*(z)
=
\frac1{\sqrt n}\sum_{i=1}^n
V_i\{\hat\psi_i-\psi(W_i;\eta_0)\}q_i(z)
-
(\hat\tau-\tau_0)
\frac1{\sqrt n}\sum_{i=1}^n V_iq_i(z).
\]
The first term is uniformly \(o_p(1)\). Indeed, the nuisance decomposition used in the proof of Proposition \ref{prop:ICM-process-uniform} applies with the instrument \(q_i(z)\) replaced by \(V_iq_i(z)\). Since the multipliers are independent of the data and have bounded support, the same bounds apply to the corresponding starred terms:
\[
A_{n,1}^{\phi,*}(z)=A_{n,11}^{\phi,*}(z)+o_p(1),
\qquad
A_{n,11}^{a,*}(z)=o_p(1),
\qquad
A_{n,21}^{a,*}(z)=A_{n,211}^{a,*}(z)+o_p(1),
\]
uniformly in \(z\in\mathcal Z\), where the starred terms are defined as their unstarred counterparts with \(q_i(z)\) replaced by \(V_iq_i(z)\). Hence
\[
\frac1{\sqrt n}\sum_{i=1}^n
V_i\{\hat\psi_i-\psi(W_i;\eta_0)\}q_i(z)
=
A_{n,11}^{\phi,*}(z)+A_{n,211}^{a,*}(z)+o_p(1)
\]
uniformly in \(z\in\mathcal Z\). Moreover,
\[
A_{n,11}^{\phi,*}(z)+A_{n,211}^{a,*}(z)
=
\frac1{\sqrt n}\sum_{i=1}^n
V_i(a_i+b_i^\phi)\{\hat p(X_i)-p(X_i)\}q_i(z).
\]
The last term is uniformly \(o_p(1)\) by the same orthogonality argument as in Lemma \ref{lem:orthogonal-propensity-term}; boundedness and independence of \(V_i\) leave the envelope, maximal-inequality, and conditional-centering arguments unchanged. Therefore,
\[
\sup_{z\in\mathcal Z}
\left|
\frac1{\sqrt n}\sum_{i=1}^n
V_i\{\hat\psi_i-\psi(W_i;\eta_0)\}q_i(z)
\right|
=o_p(1).
\]
For the second term in \(T_{1n}^*(z)\), \(\hat\tau-\tau_0=o_p(1)\), and the same Pollard maximal-inequality argument applied to the VC-type class \(\{V_iq_i(z):z\in\mathcal Z\}\) gives
\[
\sup_{z\in\mathcal Z}
\left|
\frac1{\sqrt n}\sum_{i=1}^nV_iq_i(z)
\right|
=O_p(1).
\]
Consequently,
\[
\sup_{z\in\mathcal Z}|T_{1n}^*(z)|=o_p(1).
\]

Next consider \(T_{2n}^*(z)\). Since
\(\hat q_i(z)-q_i(z)=F_Z(z)-\widehat F_{Z,n}(z)\),
\[
T_{2n}^*(z)
=
\{F_Z(z)-\widehat F_{Z,n}(z)\}
\frac1{\sqrt n}\sum_{i=1}^n V_i(\hat\psi_i-\hat\tau).
\]
Hence
\[
\sup_{z\in\mathcal Z}|T_{2n}^*(z)|
\le
\sup_{z\in\mathcal Z}|\widehat F_{Z,n}(z)-F_Z(z)|
\left|
\frac1{\sqrt n}\sum_{i=1}^n V_i(\hat\psi_i-\hat\tau)
\right|.
\]
The first factor is \(O_p(n^{-1/2})\). It remains to show that the second factor is \(O_p(1)\).

On the high-probability overlap event, a direct expansion of the DR score gives
\[
|\hat\psi_i-\psi(W_i;\eta_0)|
\le
C\{|\hat\mu_1(X_i)-\mu_1(X_i)|
+
|\hat\mu_0(X_i)-\mu_0(X_i)|\}
+
C(1+|Y_i|)|\hat p(X_i)-p(X_i)|.
\]
Thus,
\[
\begin{aligned}
\frac1n\sum_{i=1}^n\{\hat\psi_i-\psi(W_i;\eta_0)\}^2
&\le
C\|\hat\mu_1-\mu_1\|_{2,n}^2
+
C\|\hat\mu_0-\mu_0\|_{2,n}^2  \\
&\quad+
C\frac1n\sum_{i=1}^n(1+Y_i^2)\{\hat p(X_i)-p(X_i)\}^2.
\end{aligned}
\]
By Lemmas \ref{lem:p-rate} and \ref{lem:mu-rate}, and by the same conditional Markov argument used in the proof of Lemma \ref{lem:mu-rate},
\begin{equation}
    \label{eq:dr-score-l2}
    \frac1n\sum_{i=1}^n\{\hat\psi_i-\psi(W_i;\eta_0)\}^2=o_p(1).
\end{equation}
Moreover,
\[
|\hat\tau-\tau_0|
\le
\left|\frac1n\sum_{i=1}^n\{\hat\psi_i-\psi(W_i;\eta_0)\}\right|
+
\left|\frac1n\sum_{i=1}^n\{\psi(W_i;\eta_0)-\tau_0\}\right|
=o_p(1).
\]
Therefore,
\[
\frac1n\sum_{i=1}^n
\left[
(\hat\psi_i-\hat\tau)-\{\psi(W_i;\eta_0)-\tau_0\}
\right]^2
=o_p(1),
\]
and hence
\[
\frac1n\sum_{i=1}^n(\hat\psi_i-\hat\tau)^2
\le
\frac2n\sum_{i=1}^n\{\psi(W_i;\eta_0)-\tau_0\}^2
+
o_p(1)
=
O_p(1),
\]
where the last equality follows from \(\E[\{\psi(W;\eta_0)-\tau_0\}^2]<\infty\).

Conditional on the sample, using \(\E[V_i]=0\), \(\E[V_i^2]=1\), and independence of the multipliers,
\[
\E\!\left[
\left.
\left(
\frac1{\sqrt n}\sum_{i=1}^n V_i(\hat\psi_i-\hat\tau)
\right)^2
\right|\,W_1,\ldots,W_n
\right]
=
\frac1n\sum_{i=1}^n(\hat\psi_i-\hat\tau)^2
=
O_p(1).
\]
By Markov's inequality,
\[
\frac1{\sqrt n}\sum_{i=1}^n V_i(\hat\psi_i-\hat\tau)=O_p(1).
\]
Therefore,
\[
\sup_{z\in\mathcal Z}|T_{2n}^*(z)|=o_p(1).
\]
Combining the bounds for \(T_{1n}^*(z)\) and \(T_{2n}^*(z)\) proves \eqref{eq:bootstrap-feasible-oracle}.

It remains to obtain the weak limit of the infeasible multiplier process
$$
R_n^{*,0}(z)
:=
\frac1{\sqrt n}\sum_{i=1}^n
V_i\{\psi(W_i;\eta_0)-\tau_0\}
\{\1\{Z_i\le z\}-F_Z(z)\}.
$$
Under $\mathbb H_0$ or $\mathbb H_{1n}$, the class
$$
\left\{
w\mapsto
\{\psi(w;\eta_0)-\tau_0\}
\{\1(Z(w)\le z)-F_Z(z)\}:z\in\mathcal Z
\right\}
$$
is $P$-Donsker by Lemma \ref{lem:centered-lower-orthant-vc} and the square-integrability of $\psi(W;\eta_0)-\tau_0$. Under $\mathbb H_{1n}$, the conditional mean is shifted only by order $n^{-1/2}$, which does not affect the covariance limit of the multiplier process. Hence, by the multiplier central limit theorem applied to process $R_n^{*,0}(z)$; see \cite{van1996weak} (Theorem 2.9.2, p. 179),
$$
R_n^{*,0}\rightsquigarrow^* \mathbb G
\qquad
\text{in probability, in }\ell^\infty(\mathcal Z),
$$
where $\mathbb G$ is the Gaussian process in Theorem \ref{thm:ICM-Gaussian}. Together with \eqref{eq:bootstrap-feasible-oracle}, this gives
$$
\widehat R_n^*\rightsquigarrow^* \mathbb G
\qquad
\text{in probability, in }\ell^\infty(\mathcal Z).
$$

Under the fixed alternative $\mathbb H_1$, the multiplier process remains centered because $\E[V_i]=0$, but the covariance kernel is generally different from the null kernel. The same argument gives
$$
\widehat R_n^*\rightsquigarrow^* \mathbb G_1
\qquad
\text{in probability, in }\ell^\infty(\mathcal Z),
$$
where $\mathbb G_1$ is a mean-zero Gaussian process with covariance function
$$
K_1(z_1,z_2)
=
\E\!\left[
\{\psi(W;\eta_0)-\tau_0\}^2
\{\1\{Z\le z_1\}-F_Z(z_1)\}
\{\1\{Z\le z_2\}-F_Z(z_2)\}
\right].
$$
This proves the theorem.
\end{proof}
\subsection{Proofs for Section \ref{sec:extension}}

\paragraph{Proof of Proposition \ref{prop:parametric-process-uniform}}
\begin{proof}
Write
$$
I_i(z):=\1\{Z_i\le z\},\qquad
\psi_i:=\psi(W_i;\eta_0),\qquad
h_i(\theta):=h(Z_i,\theta),
$$
and let $\dot h_i(\theta):=\dot h_\theta(Z_i,\theta)$ and
$\ddot h_i(\theta):=\ddot h_{\theta\theta}(Z_i,\theta)$. Also write $\chi_i^\dagger:=\psi_i-h_i(\theta_0)$.

We first record the following oracle-replacement bounds for the feasible DR score:
\begin{equation}
\label{eq:param-oracle-replacement-indicator}
\sup_{z\in\mathcal Z}
\left|
\frac1{\sqrt n}\sum_{i=1}^n
(\hat\psi_i-\psi_i)I_i(z)
\right|
=o_p(1),
\end{equation}
and
\begin{equation}
\label{eq:param-oracle-replacement-hdot}
\left\|
\frac1{\sqrt n}\sum_{i=1}^n
(\hat\psi_i-\psi_i)\dot h_i(\theta_0)
\right\|
=o_p(1).
\end{equation}
They follow from the nuisance decomposition used in the proof of Proposition \ref{prop:ICM-process-uniform}. To see the first claim, define
$$
\widetilde A_n(z)
:=
\frac1{\sqrt n}\sum_{i=1}^n
(\hat\psi_i-\psi_i)\{I_i(z)-F_Z(z)\}.
$$
Using the notation in the proof of Proposition \ref{prop:ICM-process-uniform},
$$
\widetilde A_n(z)
=
A_{n,1}^{\phi}(z)-A_{n,11}^{a}(z)+A_{n,21}^{a}(z).
$$
The proof of Proposition \ref{prop:ICM-process-uniform} shows that
$$
A_{n,1}^{\phi}(z)=A_{n,11}^{\phi}(z)+o_p(1),
\qquad
A_{n,11}^{a}(z)=o_p(1),
\qquad
A_{n,21}^{a}(z)=A_{n,211}^{a}(z)+o_p(1),
$$
uniformly in $z\in\mathcal Z$. Hence $\widetilde A_n(z)
=
A_{n,11}^{\phi}(z)+A_{n,211}^{a}(z)+o_p(1)$ uniformly in $z\in\mathcal Z$, where
$$
A_{n,11}^{\phi}(z)+A_{n,211}^{a}(z)
=
\frac1{\sqrt n}\sum_{i=1}^n
(a_i+b_i^\phi)\{\hat p(X_i)-p(X_i)\}\{I_i(z)-F_Z(z)\},
$$
with
$$
a_i:=
\frac{\mu_1(X_i)}{p(X_i)}
+
\frac{\mu_0(X_i)}{1-p(X_i)},
\qquad
b_i^\phi:=
-Y_i\left\{
\frac{D_i}{p(X_i)^2}
+
\frac{1-D_i}{\{1-p(X_i)\}^2}
\right\}.
$$
By Lemma \ref{lem:orthogonal-propensity-term},
$$
\sup_{z\in\mathcal Z}
\left|A_{n,11}^{\phi}(z)+A_{n,211}^{a}(z)\right|
=o_p(1),
$$
and therefore
$$
\sup_{z\in\mathcal Z}
\left|
\frac1{\sqrt n}\sum_{i=1}^n
(\hat\psi_i-\psi_i)\{I_i(z)-F_Z(z)\}
\right|
=o_p(1).
$$

The same argument remains valid when $I_i(z)-F_Z(z)$ is replaced by $I_i(z)$. The bounds for $A_{n,1}^{\phi}(z)$, $A_{n,11}^{a}(z)$, and $A_{n,21}^{a}(z)$ use only the same nuisance-rate bounds, boundedness of the instrument, and Pollard's maximal inequality for the lower-orthant VC class. The final leading term also remains conditionally centered because $Z_i\subset X_i$ and
$$
\E[a_i+b_i^\phi\mid X_i]=0,
\qquad
\E\!\left[(a_i+b_i^\phi)I_i(z)\mid X_i\right]
=
I_i(z)\E[a_i+b_i^\phi\mid X_i]
=0.
$$
Thus the same argument as in Lemma A.6 gives
$$
\sup_{z\in\mathcal Z}
\left|
\frac1{\sqrt n}\sum_{i=1}^n
(a_i+b_i^\phi)\{\hat p(X_i)-p(X_i)\}I_i(z)
\right|
=o_p(1),
$$
which proves \eqref{eq:param-oracle-replacement-indicator}. The finite-dimensional bound
\eqref{eq:param-oracle-replacement-hdot} follows by the same reasoning with $I_i(z)$
replaced componentwise by $\dot h_i(\theta_0)$; the moment bound in Assumption
\ref{ass:param-moment} supplies the required square-integrable envelope.

We next establish consistency of $\hat\theta$. Define
$$
Q_n(\theta):=\frac1n\sum_{i=1}^n\{\hat\psi_i-h_i(\theta)\}^2,\quad
Q_n^0(\theta):=\frac1n\sum_{i=1}^n\{\psi_i-h_i(\theta)\}^2,\quad Q(\theta):=\E\!\left[\{\psi(W;\eta_0)-h(Z,\theta)\}^2\right].
$$
By Assumption \ref{ass:param-space}, $\theta\mapsto \{\psi(W;\eta_0)-h(Z,\theta)\}^2$ is continuous almost surely. Moreover, 
$
\sup_{\theta\in\Theta} \{\psi(W;\eta_0)-h(Z,\theta)\}^2
\le
2\,\psi(W;\eta_0)^2+2H_0(Z)^2,
$
which is integrable by Assumption \ref{ass:param-moment}. Therefore, by Lemma 2.4 of \cite{newey1994large},
$$
\sup_{\theta\in\Theta}|Q_n^0(\theta)-Q(\theta)|=o_p(1).
$$
Moreover, by the elementary inequality
$$
\sup_{\theta\in\Theta}|Q_n(\theta)-Q_n^0(\theta)|
\le
\frac1n\sum_{i=1}^n(\hat\psi_i-\psi_i)^2
+
2\left\{\frac1n\sum_{i=1}^n(\hat\psi_i-\psi_i)^2\right\}^{1/2}
\left\{\frac1n\sum_{i=1}^n\sup_{\theta\in\Theta}|\psi_i-h_i(\theta)|^2\right\}^{1/2},
$$
together with \eqref{eq:dr-score-l2} and the moment bounds, we have
$$
\sup_{\theta\in\Theta}|Q_n(\theta)-Q_n^0(\theta)|=o_p(1).
$$
Therefore,
$$
\sup_{\theta\in\Theta}|Q_n(\theta)-Q(\theta)|=o_p(1).
$$
Since $Q(\theta)$ is uniquely minimized at $\theta_0$ under Assumption \ref{ass:param-ident}, the argmin theorem (Theorem 2.1 of \cite{newey1994large}) implies
$\hat\theta\xrightarrow{p}\theta_0$.

Since $\theta_0$ is an interior point of $\Theta$, the first-order condition holds:
$$
0=
\frac1n\sum_{i=1}^n
\{\hat\psi_i-h_i(\hat\theta)\}\dot h_i(\hat\theta).
$$

We now derive the linear representation of $\hat\theta$. Let
$$
S_n(\theta):=\frac1n\sum_{i=1}^n
\{\hat\psi_i-h_i(\theta)\}\dot h_i(\theta).
$$
By the mean-value expansion of $S_n(\hat\theta)$ around $\theta_0$,
$$
0
=
S_n(\theta_0)
+
\dot S_{n,\theta}(\bar\theta)(\hat\theta-\theta_0),
$$
where $\bar\theta$ lies on the line segment joining $\hat\theta$ and $\theta_0$. First,
$$
S_n(\theta_0)
=
\frac1n\sum_{i=1}^n\chi_i^\dagger\dot h_i(\theta_0)
+
\frac1n\sum_{i=1}^n(\hat\psi_i-\psi_i)\dot h_i(\theta_0),
$$
and hence, by \eqref{eq:param-oracle-replacement-hdot},
$$
S_n(\theta_0)
=
\frac1n\sum_{i=1}^n\chi_i^\dagger\dot h_i(\theta_0)
+
o_p(n^{-1/2}).
$$
Next,
$$
\dot S_{n,\theta}(\theta)
=
\frac1n\sum_{i=1}^n
\left[
-\dot h_i(\theta)\dot h_i(\theta)'
+
\{\hat\psi_i-h_i(\theta)\}\ddot h_i(\theta)
\right].
$$
We claim that
\begin{equation}
\label{eq:param-jacobian-conv}
\dot S_{n,\theta}(\bar\theta)\xrightarrow{p}-H.
\end{equation}
To prove this claim, define the infeasible sample Jacobian
$$
\dot S_{n,\theta}^0(\theta)
:=
\frac1n\sum_{i=1}^n
\left[
-\dot h_i(\theta)\dot h_i(\theta)'
+
\{\psi_i-h_i(\theta)\}\ddot h_i(\theta)
\right],
$$
and its population counterpart
$$
\dot S_\theta(\theta)
:=
\E\left[
-\dot h_\theta(Z,\theta)\dot h_\theta(Z,\theta)'
+
\{\psi(W;\eta_0)-h(Z,\theta)\}\ddot h_{\theta\theta}(Z,\theta)
\right].
$$
Fix $\varepsilon>0$ and let $\mathcal N_\varepsilon:=\{\theta\in\Theta:\|\theta-\theta_0\|\le\varepsilon\}$. On the event $\{\bar\theta\in\mathcal N_\varepsilon\}$,
$$
\begin{aligned}
\|\dot S_{n,\theta}(\bar\theta)-\dot S_\theta(\theta_0)\|
&\le
\sup_{\theta\in\mathcal N_\varepsilon}
\|\dot S_{n,\theta}(\theta)-\dot S_{n,\theta}^0(\theta)\|  \\
&\quad+
\sup_{\theta\in\mathcal N_\varepsilon}
\|\dot S_{n,\theta}^0(\theta)-\dot S_\theta(\theta)\|
+
\sup_{\theta\in\mathcal N_\varepsilon}
\|\dot S_\theta(\theta)-\dot S_\theta(\theta_0)\|.
\end{aligned}
$$
The first term is $o_p(1)$. Indeed, by Cauchy--Schwarz,
$$
\begin{aligned}
\sup_{\theta\in\mathcal N_\varepsilon}
\|\dot S_{n,\theta}(\theta)-\dot S_{n,\theta}^0(\theta)\|
&\le
\frac1n\sum_{i=1}^n
|\hat\psi_i-\psi_i|
\sup_{\theta\in\Theta}\|\ddot h_i(\theta)\|  \\
&\le
\left\{\frac1n\sum_{i=1}^n(\hat\psi_i-\psi_i)^2\right\}^{1/2}
\left\{\frac1n\sum_{i=1}^n H_2(Z_i)^2\right\}^{1/2}
=o_p(1),
\end{aligned}
$$
where we use \eqref{eq:dr-score-l2} and Assumption \ref{ass:param-moment}. The second term is $o_p(1)$ by the uniform law of large numbers applied to the class indexed by $\theta\in\mathcal N_\varepsilon$, whose envelope is controlled by $H_1(Z)^2+\{|\psi(W;\eta_0)|+H_0(Z)\}H_2(Z)$. This envelope is integrable by the maintained moment conditions and Cauchy--Schwarz. The third term can be made arbitrarily small by taking $\varepsilon$ sufficiently small, by dominated convergence and Assumptions \ref{ass:param-space} and \ref{ass:param-moment}. Since $\bar\theta\xrightarrow{p}\theta_0$, we have $\mP(\bar\theta\in\mathcal N_\varepsilon)\to1$, and hence $\dot S_{n,\theta}(\bar\theta)-\dot S_\theta(\theta_0)=o_p(1)$.

Finally, under $\mathbb H_0^\dagger$, $\E[\psi(W;\eta_0)-h(Z,\theta_0)\mid Z]=0$. Therefore,
$$
\begin{aligned}
\dot S_\theta(\theta_0)
&=
-\E[\dot h_\theta(Z,\theta_0)\dot h_\theta(Z,\theta_0)']
+
\E\!\left[
\{\psi(W;\eta_0)-h(Z,\theta_0)\}
\ddot h_{\theta\theta}(Z,\theta_0)
\right] \\
&=
-\E[\dot h_\theta(Z,\theta_0)\dot h_\theta(Z,\theta_0)']
= -H.
\end{aligned}
$$
This proves \eqref{eq:param-jacobian-conv}.
The nonsingularity of $H$ then implies
\begin{equation}
\label{eq:param-theta-linear}
\sqrt n(\hat\theta-\theta_0)
=
H^{-1}\frac1{\sqrt n}\sum_{i=1}^n
\chi_i^\dagger\dot h_i(\theta_0)
+
o_p(1).
\end{equation}
In particular, $\sqrt n(\hat\theta-\theta_0)=O_p(1)$.

We now expand the test process. We have
$$
\widehat R_n^\dagger(z)
=
\frac1{\sqrt n}\sum_{i=1}^n
\{\hat\psi_i-h_i(\theta_0)\}I_i(z)
-
\frac1{\sqrt n}\sum_{i=1}^n
\{h_i(\hat\theta)-h_i(\theta_0)\}I_i(z).
$$
By \eqref{eq:param-oracle-replacement-indicator},
$$
\sup_{z\in\mathcal Z}
\left|
\frac1{\sqrt n}\sum_{i=1}^n
\{\hat\psi_i-h_i(\theta_0)\}I_i(z)
-
\frac1{\sqrt n}\sum_{i=1}^n
\chi_i^\dagger I_i(z)
\right|
=o_p(1).
$$
For the second term, Taylor expansion gives
$$
h_i(\hat\theta)-h_i(\theta_0)
=
\dot h_i(\theta_0)'(\hat\theta-\theta_0)+r_i,
\qquad
|r_i|\le H_2(Z_i)\|\hat\theta-\theta_0\|^2.
$$
Therefore,
$$
\sup_{z\in\mathcal Z}
\left|
\frac1{\sqrt n}\sum_{i=1}^n r_i I_i(z)
\right|
\le
\left(\frac1n\sum_{i=1}^n H_2(Z_i)\right)
\sqrt n\,\|\hat\theta-\theta_0\|^2
=o_p(1).
$$
Define
$$
G_n(z):=\frac1n\sum_{i=1}^n\dot h_i(\theta_0)I_i(z),
\qquad
G(z):=\E[\dot h_\theta(Z,\theta_0)\1\{Z\le z\}].
$$
By Lemma \ref{lem:centered-lower-orthant-vc} and the moment bound for $H_1$, $\sup_{z\in\mathcal Z}\|G_n(z)-G(z)\|=o_p(1).$ Since $\sqrt n(\hat\theta-\theta_0)=O_p(1)$, it follows that
$$
\sup_{z\in\mathcal Z}
\left|
\frac1{\sqrt n}\sum_{i=1}^n
\{h_i(\hat\theta)-h_i(\theta_0)\}I_i(z)
-
G(z)'\sqrt n(\hat\theta-\theta_0)
\right|
=o_p(1).
$$
Combining the preceding displays,
$$
\widehat R_n^\dagger(z)
=
\frac1{\sqrt n}\sum_{i=1}^n
\chi_i^\dagger I_i(z)
-
G(z)'\sqrt n(\hat\theta-\theta_0)
+
o_p(1)
$$
uniformly in $z\in\mathcal Z$. Substituting \eqref{eq:param-theta-linear} yields
$$
\widehat R_n^\dagger(z)
=
\frac1{\sqrt n}\sum_{i=1}^n
\chi_i^\dagger I_i(z)
-
G(z)'H^{-1}
\frac1{\sqrt n}\sum_{i=1}^n
\chi_i^\dagger\dot h_i(\theta_0)
+
o_p(1),
$$
uniformly in $z\in\mathcal Z$. Hence
$$
\sup_{z\in\mathcal Z}
\left|
\widehat R_n^\dagger(z)
-
\frac1{\sqrt n}\sum_{i=1}^n
\chi_i^\dagger
\{I_i(z)-G(z)'H^{-1}\dot h_i(\theta_0)\}
\right|
=o_p(1).
$$
Recalling that $\chi_i^\dagger=\psi(W_i;\eta_0)-h(Z_i,\theta_0)$ proves the lemma.
\end{proof}

\paragraph{Proof of Proposition \ref{prop:clate-process-uniform}}
\begin{proof}
Write
\[
\psi_i^L
:=
\psi_Y^L(W_i;\eta_0)-\tau_0\psi_D^L(W_i;\eta_0),
\qquad
\widehat\psi_i^L(\tau)
:=
\widehat\psi_Y^L(W_i)-\tau\widehat\psi_D^L(W_i),
\]
and let \(I_i(z):=\1\{Z_i\le z\}\). Define
\[
\kappa(z):=\E[\psi_D^L(W;\eta_0)\1\{Z\le z\}],
\qquad
\kappa_P:=\E[\psi_D^L(W;\eta_0)].
\]
By local relevance, \(\kappa_P=\mP(\mathcal{C})>0\).

First, under the maintained IV analogues of the first-step smoothness and rate conditions, the oracle-replacement argument used in Proposition \ref{prop:ICM-process-uniform} continues to apply. In particular, equation \eqref{eq:param-oracle-replacement-indicator}, applied with \(A\) as the binary assignment variable and \(Y-\tau_0D\) as the outcome, gives
\[
\sup_{z\in\mathcal Z}
\left|
\frac1{\sqrt n}\sum_{i=1}^n
\{\widehat\psi_i^L(\tau_0)-\psi_i^L\}I_i(z)
\right|
=o_p(1).
\]
Similarly, applying the same nuisance-replacement argument to the first-stage score yields
\[
\sup_{z\in\mathcal Z}
\left|
\frac1n\sum_{i=1}^n
\{\widehat\psi_D^L(W_i)-\psi_D^L(W_i;\eta_0)\}I_i(z)
\right|
=o_p(1),
\qquad
\frac1n\sum_{i=1}^n
\{\widehat\psi_D^L(W_i)-\psi_D^L(W_i;\eta_0)\}
=o_p(1).
\]
Together with the uniform law of large numbers for the lower-orthant class, this implies
\[
\sup_{z\in\mathcal Z}
\left|
\frac1n\sum_{i=1}^n \widehat\psi_D^L(W_i)I_i(z)-\kappa(z)
\right|
=o_p(1),
\qquad
\frac1n\sum_{i=1}^n \widehat\psi_D^L(W_i)=\kappa_P+o_p(1).
\]

Next, since
\[
\widehat\tau-\tau_0
=
\frac{n^{-1}\sum_{i=1}^n\widehat\psi_i^L(\tau_0)}
     {n^{-1}\sum_{i=1}^n\widehat\psi_D^L(W_i)},
\]
the oracle replacement with \(I_i(z)\equiv1\), together with the preceding convergence of the denominator, gives
\[
\sqrt n(\widehat\tau-\tau_0)
=
\frac1{\kappa_P}\frac1{\sqrt n}\sum_{i=1}^n\psi_i^L
+o_p(1).
\]

Now decompose
\[
\widehat R_n^L(z)
=
\frac1{\sqrt n}\sum_{i=1}^n
\widehat\psi_i^L(\tau_0)I_i(z)
-
\sqrt n(\widehat\tau-\tau_0)
\left\{
\frac1n\sum_{i=1}^n\widehat\psi_D^L(W_i)I_i(z)
\right\}.
\]
Using the preceding displays, uniformly over \(z\in\mathcal Z\),
\[
\widehat R_n^L(z)
=
\frac1{\sqrt n}\sum_{i=1}^n\psi_i^L I_i(z)
-
\frac{\kappa(z)}{\kappa_P}
\frac1{\sqrt n}\sum_{i=1}^n\psi_i^L
+o_p(1).
\]
Equivalently,
\[
\sup_{z\in\mathcal Z}
\left|
\widehat R_n^L(z)
-
\frac1{\sqrt n}\sum_{i=1}^n
\psi_i^L
\left\{
I_i(z)-\frac{\kappa(z)}{\kappa_P}
\right\}
\right|
=o_p(1).
\]

Finally, under the conditional IV assumptions and monotonicity, \(\psi_D^L(W;\eta_0)\) identifies the first-stage complier weight. Hence
\[
\frac{\kappa(z)}{\kappa_P}
=
\frac{\E[\psi_D^L(W;\eta_0)\1\{Z\le z\}]}
{\E[\psi_D^L(W;\eta_0)]}
=
F_{Z\mid \mathcal{C}}(z).
\]
Substituting this identity into the preceding expansion gives the claimed representation.
\end{proof}
\newpage

\section{Uniform IPW Linearization}
\label{sec:app_add_HIR}

This appendix provides the formal uniform IPW linearization result referenced in the main text. The main text develops the test from the doubly robust score. The result below shows that the same infeasible process is obtained if one instead starts from the plug-in IPW process and accounts for the first-order effect of estimating the propensity score. The proof has two steps. We first derive the desired uniform representation by combining the ATE expansion of \citet{hirano2003efficient} with a uniform version of their estimated-propensity-score expansion. We then verify the uniform expansion needed in the first step.

\newtheorem*{assumptionstar}{Assumption}
The following assumptions are imposed only for the IPW linearization in this appendix. 
They coincide with the low-level regularity conditions used in \citet{hirano2003efficient} for the series logit estimator.

\begin{assumptionstar}[\ref{ass:dist}$'$: Distribution of $X,Y(0),Y(1)$]
\label{ass:dist_ipw}
(a) The support of the $r$-dimensional covariate vector $X$ is a Cartesian product of compact intervals,
$\mathcal X=\prod_{j=1}^{r}[x_{lj},x_{uj}]$. Moreover, $X$ admits a density $f$ on $\mathcal X$ such that
$0<\underline c\le f(x)\le \bar c<\infty$ for all $x\in\mathcal X$. (b) For $d=0,1$, $\E[|Y(d)|^{2+\delta}]<\infty$ for some $\delta>0$, and
$\sup_{x\in\mathcal X}\Var(Y(d)\mid X=x)<\infty$. (c) The conditional mean functions $\mu_d(x):=\E[Y(d)\mid X=x]$, $d=0,1$, are continuously differentiable on $\mathcal X$.
\end{assumptionstar}

\begin{assumptionstar}[\ref{ass:pscore}$'$: Selection probability]
\label{ass:pscore_ipw}
The propensity score $p(x):=\mP(D=1\mid X=x)$ is $s$-times continuously differentiable on $\mathcal X$, with $s\ge 7r$.
\end{assumptionstar}

\begin{assumptionstar}[\ref{ass:SLE}$'$: Series logit estimator]
\label{ass:SLE_ipw}
The series logit estimator of $p(x)$ uses a power series with $K=a n^\nu$ for some $a>0$ and ${r}/{4(s-r)}<\nu<1/9$.
\end{assumptionstar}

\begin{lemma}
\label{lem:ipw-linearization-uniform}
Suppose Assumptions \ref{ass:unconfd}, \ref{ass:overlap}, 3', 4' and 5' hold. Let
$$
\phi(W,p)
:=
\frac{Y\{D-p(X)\}}{p(X)\{1-p(X)\}},
$$
and define the plug-in IPW estimator of the average treatment effect by
$$
\hat\tau_{\mathrm{IPW}}
:=
\frac1n\sum_{i=1}^n \phi(W_i,\hat p).
$$
Then
$$
\sup_{z\in\mathcal Z}
\left|
\frac1{\sqrt n}\sum_{i=1}^n
\{\phi(W_i,\hat p)-\hat\tau_{\mathrm{IPW}}\}\1\{Z_i\le z\}
-
\frac1{\sqrt n}\sum_{i=1}^n
\left(\psi(W_i;\eta_0)-\tau_0\right)\{\1\{Z_i\le z\}-F_Z(z)\}
\right|
=o_p(1).
$$
\end{lemma}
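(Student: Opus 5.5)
The plan is to reduce the IPW process to the oracle process by exactly the decomposition already used in the proof of Proposition \ref{prop:ICM-process-uniform}, except that no outcome-regression estimates appear. First I would center empirically. Since $\sum_i\{\phi(W_i,\hat p)-\hat\tau_{\mathrm{IPW}}\}=0$, the process equals $n^{-1/2}\sum_i\{\phi(W_i,\hat p)-\hat\tau_{\mathrm{IPW}}\}\{I_i(z)-\widehat F_{Z,n}(z)\}=n^{-1/2}\sum_i\phi(W_i,\hat p)\{I_i(z)-\widehat F_{Z,n}(z)\}$. I would then write $\phi(W_i,\hat p)=\phi_i+e_ib_i^\phi$, with $e_i=\hat p(X_i)-p(X_i)$ and $b_i^\phi$ as in that proof. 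Replacing $\widehat F_{Z,n}$ by $F_Z$ costs only terms of the form $O_p(n^{-1/2})$ times $n^{-1/2}\sum_i(\phi_i-\tau_0)$ or times $n^{-1/2}\sum_i|e_ib_i^\phi|$, handled exactly as $C_n$ and $A^\phi_{n,2}$ were. Next I would replace $b_i^\phi$ by $b_i^{\phi,0}$ at cost $O_p(\sqrt n\|\hat p-p\|_\infty^2)$. By Lemma \ref{lem:p-rate}, $\|\hat p-p\|_\infty=O_p(\sqrt{K^3/n}+K^{1-s/(2r)})=o_p(n^{-1/4})$ under $\nu<1/9$ (indeed $<1/6$ suffices) and $\nu>r/\{4(s-r)\}$ together with $s\ge7r$. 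This gives, uniformly in $z$,
\[
\text{LHS process}=\frac1{\sqrt n}\sum_i(\phi_i-\tau_0)q_i(z)+\frac1{\sqrt n}\sum_i b_i^{\phi,0}e_iq_i(z)+o_p(1),\qquad q_i(z)=I_i(z)-F_Z(z).
\]

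Second, I would add and subtract the augmentation. Since $\psi_i=\phi_i-a_i(D_i-p_i)$, the display equals the oracle process plus
\[
\frac1{\sqrt n}\sum_i a_i(D_i-p_i)q_i(z)+\frac1{\sqrt n}\sum_i(a_i+b_i^{\phi,0})e_iq_i(z)-\frac1{\sqrt n}\sum_i a_ie_iq_i(z).
\]
The middle sum is uniformly $o_p(1)$ by Lemma \ref{lem:orthogonal-propensity-term}. Its proof only used $\E[c_i\mid X_i]=0$, boundedness of $\mu_d$ (here from compact support and continuous differentiability), $\E[Y^2\mid X]\le C$, and $K^2/\sqrt n\to0$, all valid under 3$'$--5$'$. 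So the claim reduces to the uniform HIR-type expansion
\begin{equation*}
\sup_{z\in\mathcal Z}\Bigl|\frac1{\sqrt n}\sum_i a_i\{\hat p(X_i)-p(X_i)\}q_i(z)-\frac1{\sqrt n}\sum_i a_i\{D_i-p(X_i)\}q_i(z)\Bigr|=o_p(1),
\end{equation*}
the uniform analogue of the estimated-propensity-score expansion in HIR with multiplier $g_z(X)=a(X)q(z)$, which is $X$-measurable.

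Third, and this is the main obstacle, I would prove that expansion. I would linearize $\hat p-p_K^*$ as $\Lambda'_iR^K(X_i)'(\hat\pi_K-\pi_K^*)+\rho_i$ and use the first-order condition of the series logit to write $\hat\pi_K-\pi_K^*=\Sigma_K^{-1}n^{-1}\sum_jR^K(X_j)(D_j-p_K^*(X_j))+$ remainder, where $\Sigma_K=\E[\Lambda'R^KR^{K\prime}]$. This gives
\[
\frac1{\sqrt n}\sum_i g_z(X_i)\Lambda'_iR^K(X_i)'\Sigma_K^{-1}\frac1n\sum_jR^K(X_j)(D_j-p_j).
\]
I would replace $n^{-1}\sum_ig_z\Lambda'R^{K\prime}$ by its mean $\gamma_K(z)'$ uniformly in $z$. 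The coefficient error is $\sup_z\|\cdot\|=O_p(\zeta(K)\sqrt{K/n})$ via the maximal inequality of Lemma \ref{lem:centered-lower-orthant-vc} coordinatewise, and it multiplies a vector of norm $O_p(\sqrt K)$, giving $O_p(K^{2}/\sqrt n)$. Then $\gamma_K(z)'\Sigma_K^{-1}R^K(x)\Lambda'(x)$ is the weighted $L_2$ projection of $g_z$ onto the sieve. The hard part is bounding $\sup_z$ of $n^{-1/2}\sum_j\{\mathrm{proj}_Kg_z-g_z\}(X_j)(D_j-p_j)$. Since $q(z)$ is an indicator in $z$, the function $g_z$ is not smooth, so I cannot use a sup-norm approximation rate. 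Instead I would use a conditionally centered empirical process with a VC-type index set: $\{\mathrm{proj}_Kg_z\}$ is a subset of a $K$-dimensional linear space, with entropy of order $K\log$. The chaining bound is then $\sqrt{K}\,\sup_z\|\mathrm{proj}_Kg_z-g_z\|_2$ plus bracketing terms. I would try to show the $L_2$ approximation error of indicators tends to zero uniformly along the orthant class, with a rate like $K^{-1/(2r)}$, which must beat $\sqrt K$. If that fails, I would fall back on smoothing the indicator at bandwidth $h_n$ and trading off. The strict rate restrictions $\nu<1/9$ and $s\ge7r$ are what I expect to pay for here. The leftover approximation term $p_K^*-p$ and the Taylor remainder $\rho_i$ are handled as in Lemma \ref{lem:An112a-control}.
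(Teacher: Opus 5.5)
Your first two steps are sound, up to a rate caveat noted at the end. They give a somewhat leaner reduction than the paper's: the paper invokes HIR's Theorem 1 for $\hat\tau_{\mathrm{IPW}}$ and then re-runs the HIR addendum decomposition (4)--(9) with the indicator inserted. Both routes land on the same crux, a uniform bound on $n^{-1/2}\sum_j(\Pi_Kg_z-g_z)(X_j)\{D_j-p(X_j)\}$, which is the paper's term (8).

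The tool you propose for this step cannot work. Bounding the entropy by the dimension $K$ of the sieve space gives, after chaining, at best $\sqrt K\,\sup_z\|\Pi_Kg_z-g_z\|_2$ up to logarithms. But $g_z$ jumps across the orthant boundary, so its $L_2$ approximation error by polynomials of total degree $\asymp K^{1/r}$ is at least of order $K^{-1/(2r)}$. Already for $r=1$, the best degree-$m$ $L_2$ approximation of a step has error $\asymp m^{-1/2}$. Hence $\sqrt K\cdot K^{-1/(2r)}=K^{(r-1)/(2r)}$ does not vanish for any $r\ge1$, and the requirement that the rate ``beat $\sqrt K$'' fails by construction, not for want of a sharper approximation lemma. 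Smoothing the indicator does not rescue this framework. The projected smoothing error $\Pi_K(g_z-g_z^h)$ again costs $\sqrt K\cdot\sqrt h$, forcing $Kh\to0$. Approximating $g_z^h$ by polynomials of degree $K^{1/r}$ needs $K^{1/r}h\to\infty$. These are incompatible since $K^{1/r}\le K$.

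The missing idea, which is how the paper handles term (8), is that this $z$-indexed class has covering numbers controlled by the lower-orthant class \emph{uniformly in $K$}. Because $\Pi_K$ is linear and $z$ enters only through $\1\{x_Z\le z\}$ and the scalar $F_Z(z)$, you can write $\Pi_Kg_z(x)=\int\mathbb K_K(x,v)a(v)\1\{v_Z\le z\}\,dF_0(v)-F_Z(z)\Pi_Ka(x)$, with $\mathbb K_K$ the weighted projection kernel. Set $\Delta(v)=\1\{v_Z\le z\}-\1\{v_Z\le z'\}$ and $S_K(x)=\int|\mathbb K_K(x,v)a(v)|\,dF_0(v)$. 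Cauchy--Schwarz gives $\bigl(\int\mathbb K_K(x,v)a(v)\Delta(v)\,dF_0(v)\bigr)^2\le S_K(x)\int|\mathbb K_K(x,v)a(v)|\Delta(v)^2\,dF_0(v)$. This bounds $L_2(Q)$ distances in the class by $\|F_K\|_{Q,2}$ times $L_2(\widetilde Q)$ distances between orthant indicators, for a change of measure $\widetilde Q$. So the class is VC type with constants free of $K$, and only the envelope $F_K\lesssim\zeta(K)^2$ grows.

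A maximal inequality for VC-type classes with small variance and large envelope (Corollary 5.1 of CCK14, as in the paper) then gives a bound of order $\sigma_K\sqrt{\log K}+\zeta(K)^2\log K/\sqrt n$. This vanishes under $\nu<1/9$ as soon as $\sup_z\|\Pi_Kg_z-g_z\|_2=o(1/\sqrt{\log K})$. Your $K^{-1/(2r)}$ rate is therefore more than enough once the $\sqrt K$ loss is removed.

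Two smaller points. First, the leftover term $n^{-1/2}\sum_ig_z(X_i)\{p_K^*(X_i)-p(X_i)\}$ in your third step has no factor $D_i-p(X_i)$. It is not conditionally centered, so it cannot be treated as in Lemma \ref{lem:An112a-control}; its drift $\sqrt n\,\E[g_z(X)\{p_K^*(X)-p(X)\}]$ needs a separate bound (the paper's term (6)). Second, $\|\hat p-p\|_\infty=o_p(n^{-1/4})$ via Lemma \ref{lem:p-rate} requires $\nu>r/\{2(s-2r)\}$, which $\nu>r/\{4(s-r)\}$ does not imply (take $s=7r$, $\nu=1/20$). The paper's bound on its term (4) carries the same requirement.
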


We now prove Lemma \ref{lem:ipw-linearization-uniform}. The only nonstandard ingredient relative to the ATE result in \citet{hirano2003efficient} is that the IPW functional is indexed by $z$ through the lower-orthant indicator $\1\{Z_i\le z\}$. The uniform version of this first-step expansion is established after the proof.

\paragraph{Proof of Lemma \ref{lem:ipw-linearization-uniform}.}
\begin{proof}
Write
$$
\phi(W,p):=\frac{Y\left\{D-p(X)\right\}}{p(X)\left\{1-p(X)\right\}}
,\quad M_n(p;z):=\frac{1}{n}\sum_{i=1}^n \phi(W_i,p)\1\{Z_i\le z\},
\quad
\widehat F_{Z,n}(z):=\frac{1}{n}\sum_{i=1}^n \1\{Z_i\le z\}.
$$
Then
$$
\widehat R_n(z)=\sqrt n\bigl(M_n(\hat p;z)-\hat\tau\,\widehat F_{Z,n}(z)\bigr).
$$
Adding and subtracting $M_n(p;z)$, $\tau_0$, and $F_Z(z)$, we obtain
\begin{align*}
\widehat R_n(z)
&=\sqrt n\bigl(M_n(p;z)-\tau_0F_Z(z)\bigr)
+\sqrt n\bigl(M_n(\hat p;z)-M_n(p;z)\bigr) \\
&\qquad
-F_Z(z)\sqrt n(\hat\tau-\tau_0)
-\tau_0\sqrt n\bigl(\widehat F_{Z,n}(z)-F_Z(z)\bigr) \\
&\qquad
-\sqrt n(\hat\tau-\tau_0)\bigl(\widehat F_{Z,n}(z)-F_Z(z)\bigr).
\end{align*}
Since
$$
M_n(p;z)-\tau_0F_Z(z)
=
\frac{1}{n}\sum_{i=1}^n\bigl(\phi(W_i,p)-\tau_0\bigr)\1\{Z_i\le z\}
+\tau_0\bigl(\widehat F_{Z,n}(z)-F_Z(z)\bigr),
$$
the terms involving $\tau_0(\widehat F_{Z,n}(z)-F_Z(z))$ cancel, so that
\begin{align*}
\widehat R_n(z)
&=
\frac{1}{\sqrt n}\sum_{i=1}^n\bigl(\phi(W_i,p)-\tau_0\bigr)\1\{Z_i\le z\}
+\sqrt n\bigl(M_n(\hat p;z)-M_n(p;z)\bigr) \\
&\qquad
-F_Z(z)\sqrt n(\hat\tau-\tau_0)
-\sqrt n(\hat\tau-\tau_0)\bigl(\widehat F_{Z,n}(z)-F_Z(z)\bigr).
\end{align*}

Now invoke three standard ingredients. First, by Theorem 1 of \cite{hirano2003efficient}, the asymptotic linear representation for $\hat\tau$ is given by,
$$
\sqrt n(\hat\tau-\tau_0)=\frac{1}{\sqrt n}\sum_{i=1}^n \xi(W_i)+o_p(1).
$$
Second, the required uniform linearization of the first-step estimation effect follows from the uniform extension of the HIR expansion established below. In our setting, the relevant weighted IPW functional is indexed by $z\in\mathcal Z$ through the lower-orthant indicator
$$
g_z(Z):=\1\{Z\le z\},
$$
and the corresponding first-step adjustment takes the usual HIR form with
$$
a(X):=\frac{\mu_1(X)}{p(X)}+\frac{\mu_0(X)}{1-p(X)}.
$$
The argument below shows that the HIR remainder remains negligible uniformly over $z$; closely related uniform expansions appear in Lemma 3.1 of \citet{hsu2017consistent} and Lemma 2 of \citet{firpo2016identification}. Therefore,
$$
\sup_{z\in\mathcal Z}
\left|
\sqrt n\bigl(M_n(\hat p;z)-M_n(p;z)\bigr)
+\frac{1}{\sqrt n}\sum_{i=1}^n \1\{Z_i\le z\}\,a(X_i)\{D_i-p(X_i)\}
\right|
=o_p(1).
$$
Third, by the multivariate Glivenko--Cantelli lemma,
$$
\sup_{z\in\mathcal Z}\bigl|\widehat F_{Z,n}(z)-F_Z(z)\bigr|=o_p(1).
$$

Substituting the first two expansions above into the previous display gives
\begin{align*}
\widehat R_n(z)
&=
\frac{1}{\sqrt n}\sum_{i=1}^n
\Bigl[\phi(W_i,p)-\tau_0-a(X_i)\{D_i-p(X_i)\}\Bigr]\1\{Z_i\le z\} \\
&\qquad
-\frac{F_Z(z)}{\sqrt n}\sum_{i=1}^n
\Bigl[\phi(W_i,p)-\tau_0-a(X_i)\{D_i-p(X_i)\}\Bigr] \\
&\qquad
-\sqrt n(\hat\tau-\tau_0)\bigl(\widehat F_{Z,n}(z)-F_Z(z)\bigr)
+o_p(1) \\
&=
\frac{1}{\sqrt n}\sum_{i=1}^n \xi(W_i)\bigl(\1\{Z_i\le z\}-F_Z(z)\bigr)
-\sqrt n(\hat\tau-\tau_0)\bigl(\widehat F_{Z,n}(z)-F_Z(z)\bigr)
+o_p(1).
\end{align*}
Taking suprema over $z\in\mathcal Z$, the last term is $o_p(1)$ because $\sqrt n(\hat\tau-\tau_0)=O_p(1)$ and $\sup_{z\in\mathcal Z}\bigl|\widehat F_{Z,n}(z)-F_Z(z)\bigr|=o_p(1)$. Hence
$$
\sup_{z\in\mathcal Z}
\left|
\widehat R_n(z)-
\frac{1}{\sqrt n}\sum_{i=1}^n
\xi(W_i)\bigl(\1\{Z_i\le z\}-F_Z(z)\bigr)
\right|
=o_p(1),
$$
as claimed.
\end{proof}

It remains to justify the uniform first-step expansion invoked in the proof. This is the part that extends the estimated-propensity-score expansion in Theorem 1 of \cite{hirano2003efficient} to the present setting, where the IPW functional is indexed by $z\in\mathcal Z$. To simplify the exposition, we only treat the contribution from the treated sample; the corresponding argument for the untreated sample is entirely analogous. Our objective is to show that the HIR remainder remains asymptotically negligible uniformly over $z$. Specifically, we prove that
$$
\sup_{z\in\mathcal Z}
\left|
\frac{1}{\sqrt N}\sum_{i=1}^N \frac{T_iY_i\mathbf 1\{Z_i\le z\}}{\hat p_K(X_i)}
-
\frac{1}{\sqrt N}\sum_{i=1}^N \left[\frac{T_iY_i\mathbf 1\{Z_i\le z\}}{p(X_i)}
-
\1\left\{Z_i\le z\right\}
\frac{\mu_1(X_i)}{p(X_i)}\left(T_i-p(X_i)\right)\right]
\right|
=o_p(1).
$$

For ease of comparison, we follow the notation in the HIR addendum throughout this verification, even when it differs slightly from the notation used in the main text. We do not rewrite the entire HIR proof. Instead, we use exactly the same decomposition as in HIR and explain, term by term, what additional argument is needed to pass from the pointwise case to the uniform case. The original pointwise proof can be found in the HIR addendum; see \url{https://keihirano.github.io/papers/AddendumProof_02nov10.pdf}.
\begin{proof}
With $\hat p_K(X_i)=L(R^K(X_i)^\prime\hat\pi_K)$, following the decomposition of HIR, we write
\begin{align}
&
\qquad\frac{1}{\sqrt N}\sum_{i=1}^N \frac{T_iY_i\mathbf 1\{Z_i\le z\}}{\hat p_K(X_i)}
\nonumber\\
&\quad=
\frac{1}{\sqrt N}\sum_{i=1}^N
\left(
\frac{T_iY_i}{\hat p_K(X_i)}
-
\frac{T_iY_i}{p(X_i)}
+
\frac{T_iY_i}{p(X_i)^2}\bigl(\hat p_K(X_i)-p(X_i)\bigr)
\right)\mathbf 1\{Z_i\le z\}
\tag{4}\\
&\quad\quad+
\frac{1}{\sqrt N}\sum_{i=1}^N
\left(
-\frac{T_iY_i\mathbf 1\{Z_i\le z\}}{p(X_i)^2}\bigl(\hat p_K(X_i)-p(X_i)\bigr)
+
\int_\mathcal{X} \frac{\mu_1(u)}{p(u)}\bigl(\hat p_K(u)-p(u)\bigr)\1(u_Z\le z)\,dF_0(u)
\right)
\tag{5}\\
&\quad\quad-
\sqrt N\int_\mathcal{X} \frac{\mu_1(u)}{p(u)}\bigl(\hat p_K(u)-p(u)\bigr)\1(u_Z\le z)\,dF_0(u)
-
\frac{1}{\sqrt N}\sum_{i=1}^N
\tilde\delta_K(z;X_i)\frac{T_i-p_K(X_i)}{\sqrt{p_K(X_i)(1-p_K(X_i))}}
\tag{6}
\\
&\quad\quad+
\frac{1}{\sqrt N}\sum_{i=1}^N
\bigl(\tilde\delta_K(z;X_i)-\delta_K(z;X_i)\bigr)
\frac{T_i-p_K(X_i)}{\sqrt{p_K(X_i)(1-p_K(X_i))}}
\tag{7}
\\
&\quad\quad+
\frac{1}{\sqrt N}\sum_{i=1}^N
\left(
\delta_K(z;X_i)\frac{T_i-p_K(X_i)}{\sqrt{p_K(X_i)(1-p_K(X_i))}}
-
\delta_0(z;X_i)\frac{T_i-p(X_i)}{\sqrt{p(X_i)(1-p(X_i))}}
\right)
\tag{8}
\\
&\quad\quad+
\frac{1}{\sqrt N}\sum_{i=1}^N
\left\{
\frac{T_iY_i\mathbf 1\{Z_i\le z\}}{p(X_i)}
+
\delta_0(z;X_i)\frac{T_i-p(X_i)}{\sqrt{p(X_i)(1-p(X_i))}}
\right\}.
\tag{9}
\end{align}
Here $F_0$ denotes the population distribution of $X$, and the objects $\tilde\delta_K(z;\cdot)$, $\delta_K(z;\cdot)$, and $\delta_0(z;\cdot)$ are defined by
\begin{align*}
\widetilde{\Psi}_K(z)
&:=
-\int
\frac{\mu_1(u)}{p(u)}
L'\!\bigl(R^K(u)^\prime \widehat{\pi}_K\bigr)
R^K(u)\,
\mathbf 1\{u_Z\le z\}\,dF_0(u),\\
\Psi_K(z)
&:=
-\int
\frac{\mu_1(u)}{p(u)}
L'\!\bigl(R^K(u)^\prime \pi_K\bigr)
R^K(u)\,
\mathbf 1\{u_Z\le z\}\,dF_0(u),\\
\widetilde{\delta}_K(z;u)
&:=
\widetilde{\Psi}_K(z)^\prime
\widetilde{\Sigma}_K^{-1}
R^K(u)\sqrt{p_K(u)\bigl(1-p_K(u)\bigr)},\\
\delta_K(z;u)
&:=
\Psi_K(z)^\prime
\Sigma_K^{-1}
R^K(u)\sqrt{p_K(u)\bigl(1-p_K(u)\bigr)},\\
\delta_0(z;u)
&:=
-\mathbf 1\{u_Z\le z\}\,
\frac{\mu_1(u)}{p(u)}
\sqrt{p(u)\bigl(1-p(u)\bigr)}.
\end{align*}
Thus, to establish the desired uniform asymptotic linearization, it suffices to show that terms \textup{(4)}--\textup{(8)} are each $o_p(1)$ uniformly over $z\in\mathcal Z$. In the sequel, we verify these bounds one by one, emphasizing only the modifications needed for the uniform case.

\paragraph{Bound on (4).}
Let
$$
A_{4,N}(z):=
\frac{1}{\sqrt N}\sum_{i=1}^N
\left(
\frac{T_iY_i}{\hat p_K(X_i)}
-
\frac{T_iY_i}{p(X_i)}
+
\frac{T_iY_i}{p(X_i)^2}\bigl(\hat p_K(X_i)-p(X_i)\bigr)
\right)\mathbf 1\{Z_i\le z\}.
$$
As in HIR, we use the identity
$$
\frac{TY}{\hat p_K(X)}
-
\frac{TY}{p(X)}
+
\frac{TY}{p(X)^2}\bigl(\hat p_K(X)-p(X)\bigr)
=
\frac{TY}{p(X)^2\hat p_K(X)}\bigl(\hat p_K(X)-p(X)\bigr)^2,
$$
and then decompose
$$
\bigl(\hat p_K(X_i)-p(X_i)\bigr)^2
=
\bigl(\hat p_K(X_i)-p_K(X_i)\bigr)^2
+
\bigl(p_K(X_i)-p(X_i)\bigr)^2
+
2\bigl(\hat p_K(X_i)-p_K(X_i)\bigr)\bigl(p_K(X_i)-p(X_i)\bigr).
$$
The only additional issue in the present uniform setting is the indicator $\mathbf 1\{Z_i\le z\}$. This causes no difficulty, since
$
0\le \mathbf 1\{Z_i\le z\}\le 1
\quad\text{for all }z\in\mathcal Z.
$
Hence
\begin{align*}
\sup_{z\in\mathcal Z}|A_{4,N}(z)|
&\le
\frac{1}{\sqrt N}\sum_{i=1}^N
\left|
\frac{T_iY_i}{\hat p_K(X_i)}
-
\frac{T_iY_i}{p(X_i)}
+
\frac{T_iY_i}{p(X_i)^2}\bigl(\hat p_K(X_i)-p(X_i)\bigr)
\right|,
\end{align*}
so the original HIR argument applies verbatim after replacing the pointwise term by the RHS above. In particular, the bounds for the three components corresponding to \textup{(14)}--\textup{(16)} in HIR remain unchanged, and we obtain
$$
\sup_{z\in\mathcal Z}|A_{4,N}(z)|
=
O_p\!\left(\frac{\zeta(K)^3}{\sqrt N}\right)
+
O_p\!\left(\sqrt N\,\zeta(K)^2K^{-s/r}\right)
+
O_p\!\left(\zeta(K)^{5/2}K^{-s/(2r)}\right).
$$
Under the same regularity conditions as in HIR, the RHS is $o_p(1)$. Therefore,
$
\sup_{z\in\mathcal Z}|A_{4,N}(z)|=o_p(1).
$

\paragraph{Bound on (5).}
Following the decomposition in HIR, write the term in \textup{(5)} as
$$
A_{5,N}(z)=A_{5,N}^{(23)}(z)+A_{5,N}^{(24)}(z),
$$
where
\begin{align*}
A_{5,N}^{(23)}
&:=
\frac{1}{\sqrt N}\sum_{i=1}^N
\left(
-\frac{T_iY_i}{p(X_i)^2}\bigl(\hat p_K(X_i)-p_K(X_i)\bigr)\mathbf 1\{Z_i\le z\}
+
\int_{\mathcal X}\frac{\mu_1(u)}{p(u)}\bigl(\hat p_K(u)-p_K(u)\bigr)\mathbf 1\{u_Z\le z\}\,dF_0(u)
\right),\\
A_{5,N}^{(24)}
&:=
\frac{1}{\sqrt N}\sum_{i=1}^N
\left(
-\frac{T_iY_i}{p(X_i)^2}\bigl(p_K(X_i)-p(X_i)\bigr)\mathbf 1\{Z_i\le z\}
+
\int_{\mathcal X}\frac{\mu_1(u)}{p(u)}\bigl(p_K(u)-p(u)\bigr)\mathbf 1\{u_Z\le z\}\,dF_0(u)
\right).
\end{align*}

We first consider $A_{5,N}^{(24)}(z)$. This term contains no estimation effect, and is therefore an empirical process indexed by the lower-orthant class. Define
$$
\xi_K(W)
:=
-\frac{TY}{p(X)^2}\bigl(p_K(X)-p(X)\bigr),
\qquad
g_z(t):=\mathbf 1\{t\le z\},
\qquad
\mathcal G_Z:=\{g_z:z\in\mathcal Z\}.
$$
Then
$$
A_{5,N}^{(24)}(z)
=
\mathbb G_N\!\left(\xi_K(W)\,g_z(Z)\right).
$$
Since $\mathcal G_Z$ is a VC class, the class
$
\mathcal F_{24,K}
:=
\{\xi_K(W)\,g_z(Z):z\in\mathcal Z\}
$
is a VC-subgraph class with envelope
$
F_{24,K}(W):=|\xi_K(W)|.
$
Hence Pollard's maximal inequality (see Theorem 2.14.1 in \cite{van1996weak}) yields
$$
\E\!\left[\sup_{z\in\mathcal Z}|A_{5,N}^{(24)}(z)|\right]
\lesssim
\|F_{24,K}\|_{P,2}.
$$
Now
$$
\|F_{24,K}\|_{P,2}^2
=
\E\!\left[
\frac{T Y^2}{p(X)^4}\bigl(p_K(X)-p(X)\bigr)^2
\right]
\le
C\,\E\!\left[
\frac{E(Y^2\mid X,T=1)}{p(X)^3}\bigl(p_K(X)-p(X)\bigr)^2
\right].
$$
Exactly as in HIR, using $E(Y^2)<\infty$, overlap, and the approximation bound for $p_K-p$, we obtain
$$
\|F_{24,K}\|_{P,2}
=
O\!\left(\zeta(K)K^{-s/(2r)}\right).
$$
Therefore
$$
\sup_{z\in\mathcal Z}|A_{5,N}^{(24)}(z)|
=
O_p\!\left(\zeta(K)K^{-s/(2r)}\right).
$$

We next consider $A_{5,N}^{(23)}(z)$. By the mean value theorem,
$$
\hat p_K(u)-p_K(u)
=
L'\!\bigl(R^K(u)^\prime \widetilde\pi_K(u)\bigr)\,R^K(u)^\prime(\hat\pi_K-\pi_K),
$$
where $\widetilde\pi_K(u)$ lies on the segment joining $\hat\pi_K$ and $\pi_K$. Hence
$$
A_{5,N}^{(23)}(z)
=
W_K(z)^\prime(\hat\pi_K-\pi_K),
$$
where
\begin{align*}
W_K(z)
:=
\frac{1}{\sqrt N}\sum_{i=1}^N
\Bigg[
&-\frac{T_iY_i}{p(X_i)^2}
L'\!\bigl(R^K(X_i)^\prime \widetilde\pi_K(X_i)\bigr)
R^K(X_i)\mathbf 1\{Z_i\le z\} \\
&\qquad\qquad
+\int_{\mathcal X}\frac{\mu_1(u)}{p(u)}
L'\!\bigl(R^K(u)^\prime \widetilde\pi_K(u)\bigr)
R^K(u)\mathbf 1\{u_Z\le z\}\,dF_0(u)
\Bigg].
\end{align*}
A second application of the mean value theorem gives, exactly as in HIR,
$$
W_K(z)=W_{1K}(z)-W_{2K}(z)+W_{3K}(z),
$$
where $W_{1K}(z)$ is the analogue of HIR's \textup{(33)} with the indicator $\mathbf 1\{Z_i\le z\}$ inserted, and $W_{2K}(z)$ and $W_{3K}(z)$ are the analogues of HIR's \textup{(34)}--\textup{(35)}.

The term $W_{1K}(z)$ is the only genuinely indexed empirical-process term. For each coordinate $j=1,\dots,K$, let $W_{1K,j}(z)$ denote its $j$th component. Then
$
\{W_{1K,j}(z):z\in\mathcal Z\}
$
is again indexed by the lower-orthant class. By Pollard's maximal inequality, for each $j$,
$$
\E\!\left[\sup_{z\in\mathcal Z}|W_{1K,j}(z)|^2\right]
\lesssim
P\!\left[\xi_{K,j}(W)^2\right],
$$
where $\xi_{K,j}(W)$ is the corresponding $j$th coordinate envelope. Summing over $j$ and using
$$
\sup_{z\in\mathcal Z}\|W_{1K}(z)\|^2
\le
\sum_{j=1}^K \sup_{z\in\mathcal Z}|W_{1K,j}(z)|^2,
$$
we obtain exactly the same order as in HIR:
$
\E\!\left[\sup_{z\in\mathcal Z}\|W_{1K}(z)\|\right]
\le
C\zeta(K).
$

For $W_{2K}(z)$ and $W_{3K}(z)$, the indicator $\mathbf 1\{Z_i\le z\}$ causes no additional difficulty, since it is bounded by one. Thus the original HIR bounds continue to hold uniformly over $z$, and we obtain
$
\E\!\left[\sup_{z\in\mathcal Z}\|W_{2K}(z)\|\right]
\le
C\zeta(K)^{3/2},
\
\E\!\left[\sup_{z\in\mathcal Z}\|W_{3K}(z)\|\right]
\le
C\zeta(K)^{3/2}.
$
Combining these bounds with the HIR rate for $\|\hat\pi_K-\pi_K\|$, we obtain
$$
\sup_{z\in\mathcal Z}|A_{5,N}^{(23)}(z)|
=
O_p\!\left(\frac{\zeta(K)^2}{\sqrt N}\right).
$$

Putting the two parts together, we conclude that
$$
\sup_{z\in\mathcal Z}|A_{5,N}(z)|
=
O_p\!\left(\zeta(K)K^{-s/(2r)}\right)
+
O_p\!\left(\frac{\zeta(K)^2}{\sqrt N}\right).
$$
This is of the same order as in HIR. Under the maintained regularity conditions,
$
\sup_{z\in\mathcal Z}|A_{5,N}(z)|=o_p(1).
$

\paragraph{Bound on (6).}
Let
$$
A_{6,N}(z)
:=
-\sqrt N\int_{\mathcal X}\frac{\mu_1(u)}{p(u)}\bigl(\hat p_K(u)-p(u)\bigr)\mathbf 1\{u_Z\le z\}\,dF_0(u)
-
\frac{1}{\sqrt N}\sum_{i=1}^N
\tilde\delta_K(z;X_i)\frac{T_i-p_K(X_i)}{\sqrt{p_K(X_i)(1-p_K(X_i))}}.
$$
As in HIR, we decompose the first integral as
\begin{align*}
-\sqrt N\int_{\mathcal X}\frac{\mu_1(u)}{p(u)}\bigl(\hat p_K(u)-p(u)\bigr)\mathbf 1\{u_Z\le z\}\,dF_0(u)
&=
-\sqrt N\int_{\mathcal X}\frac{\mu_1(u)}{p(u)}\bigl(\hat p_K(u)-p_K(u)\bigr)\mathbf 1\{u_Z\le z\}\,dF_0(u)\\
&\quad
-\sqrt N\int_{\mathcal X}\frac{\mu_1(u)}{p(u)}\bigl(p_K(u)-p(u)\bigr)\mathbf 1\{u_Z\le z\}\,dF_0(u).
\end{align*}
The first term on the right-hand side is handled exactly as in HIR. Using the mean value expansion for $\hat p_K-p_K$ together with the definition of $\widetilde{\Psi}_K(z)$ and $\widetilde{\delta}_K(z;\cdot)$, one obtains the same identity as in the pointwise proof,
$$
-\sqrt N\int_{\mathcal X}\frac{\mu_1(u)}{p(u)}\bigl(\hat p_K(u)-p_K(u)\bigr)\mathbf 1\{u_Z\le z\}\,dF_0(u)
=
\frac{1}{\sqrt N}\sum_{i=1}^N
\tilde\delta_K(z;X_i)\frac{T_i-p_K(X_i)}{\sqrt{p_K(X_i)(1-p_K(X_i))}}.
$$
Hence these two terms cancel, and we are left with
$$
A_{6,N}(z)
=
\sqrt N\int_{\mathcal X}\frac{\mu_1(u)}{p(u)}\bigl(p_K(u)-p(u)\bigr)\mathbf 1\{u_Z\le z\}\,dF_0(u).
$$

Thus the only issue in the uniform case is the indicator $\mathbf 1\{u_Z\le z\}$. Since $0\le \mathbf 1\{u_Z\le z\}\le 1$, we obtain
\begin{align*}
\sup_{z\in\mathcal Z}|A_{6,N}(z)|
&\le
\sqrt N\int_{\mathcal X}
\left|\frac{\mu_1(u)}{p(u)}\right|
|p_K(u)-p(u)|\,dF_0(u) \\
&\le
C\sqrt N\,\|p_K-p\|_\infty.
\end{align*}
Using the HIR bound
$$
\|p_K-p\|_\infty
=
O\!\left(\zeta(K)K^{-s/(2r)}\right),
$$
we conclude that
$$
\sup_{z\in\mathcal Z}|A_{6,N}(z)|
=
O\!\left(\sqrt N\,\zeta(K)K^{-s/(2r)}\right).
$$
This is exactly the same order as in HIR. Under the maintained regularity conditions, $\sup_{z\in\mathcal Z}|A_{6,N}(z)|=o_p(1)$.

\paragraph{Bound on (7).}
Let
$$
A_{7,N}(z)
:=
\frac{1}{\sqrt N}\sum_{i=1}^N
\bigl(\widetilde\delta_K(z;X_i)-\delta_K(z;X_i)\bigr)
\frac{T_i-p_K(X_i)}{\sqrt{p_K(X_i)(1-p_K(X_i))}}.
$$
As in HIR, write
$$
A_{7,N}(z)
=
\bigl(\widetilde\Psi_K(z)^\prime \widetilde\Sigma_K^{-1}
-
\Psi_K(z)^\prime \Sigma_K^{-1}\bigr)V_K,
\qquad
V_K:=\frac{1}{\sqrt N}\sum_{i=1}^N R^K(X_i)\bigl(T_i-p_K(X_i)\bigr).
$$
Then, decompose
\begin{align*}
A_{7,N}(z)
&=
\bigl(\widetilde\Psi_K(z)-\Psi_K(z)\bigr)^\prime \widetilde\Sigma_K^{-1}V_K
+
\Psi_K(z)^\prime\bigl(\widetilde\Sigma_K^{-1}-\Sigma_K^{-1}\bigr)V_K.
\end{align*}
Hence
\begin{align*}
\sup_{z\in\mathcal Z}|A_{7,N}(z)|
&\le
\sup_{z\in\mathcal Z}
\left|
\bigl(\widetilde\Psi_K(z)-\Psi_K(z)\bigr)^\prime \widetilde\Sigma_K^{-1}V_K
\right|
+
\sup_{z\in\mathcal Z}
\left|
\Psi_K(z)^\prime\bigl(\widetilde\Sigma_K^{-1}-\Sigma_K^{-1}\bigr)V_K
\right|.
\end{align*}

We first bound the difference $\widetilde\Psi_K(z)-\Psi_K(z)$. By definition,
\begin{align*}
\widetilde\Psi_K(z)-\Psi_K(z)
&=
-\int_{\mathcal X}
\frac{\mu_1(u)}{p(u)}
\Bigl[
L'\!\bigl(R^K(u)^\prime\widehat\pi_K\bigr)
-
L'\!\bigl(R^K(u)^\prime\pi_K\bigr)
\Bigr]
R^K(u)\mathbf 1\{u_Z\le z\}\,dF_0(u).
\end{align*}
Since $L''$ is bounded, the mean value theorem gives
$$
\left|
L'\!\bigl(R^K(u)^\prime\widehat\pi_K\bigr)
-
L'\!\bigl(R^K(u)^\prime\pi_K\bigr)
\right|
\le
C\,\|R^K(u)\|\,\|\widehat\pi_K-\pi_K\|.
$$
Therefore, using $\mathbf 1\{u_Z\le z\}\le 1$,
\begin{align*}
\sup_{z\in\mathcal Z}\|\widetilde\Psi_K(z)-\Psi_K(z)\|
&\le
C\int_{\mathcal X}\|R^K(u)\|^2\,dF_0(u)\,\|\widehat\pi_K-\pi_K\| \\
&\le
C\,\zeta(K)^2\|\widehat\pi_K-\pi_K\|.
\end{align*}
Likewise, again using $\mathbf 1\{u_Z\le z\}\le 1$, we have
\begin{align*}
\sup_{z\in\mathcal Z}\|\Psi_K(z)\|
&\le
\int_{\mathcal X}
\left|\frac{\mu_1(u)}{p(u)}\right|
L'\!\bigl(R^K(u)^\prime\pi_K\bigr)\|R^K(u)\|\,dF_0(u)
\le
C\,\zeta(K).
\end{align*}

We now bound the two terms above. By the matrix norm inequality,
\begin{align*}
\sup_{z\in\mathcal Z}
\left|
\bigl(\widetilde\Psi_K(z)-\Psi_K(z)\bigr)^\prime \widetilde\Sigma_K^{-1}V_K
\right|
&\le
\frac{1}{\lambda_{\min}(\widetilde\Sigma_K)}
\|V_K\|\sup_{z\in\mathcal Z}\|\widetilde\Psi_K(z)-\Psi_K(z)\| \\
&\le
C\,\zeta(K)^2\|\widehat\pi_K-\pi_K\|\,\|V_K\|.
\end{align*}
Similarly,
\begin{align*}
\sup_{z\in\mathcal Z}
\left|
\Psi_K(z)^\prime\bigl(\widetilde\Sigma_K^{-1}-\Sigma_K^{-1}\bigr)V_K
\right|
&\le
\sup_{z\in\mathcal Z}\|\Psi_K(z)\|\,
\bigl\|\bigl(\widetilde\Sigma_K^{-1}-\Sigma_K^{-1}\bigr)V_K\bigr\| \\
&\le
C\,\zeta(K)\,
\bigl\|\bigl(\widetilde\Sigma_K^{-1}-\Sigma_K^{-1}\bigr)V_K\bigr\|.
\end{align*}

At this point the argument is exactly the same as in HIR. The bounds on $\|V_K\|$ and on
$
\bigl\|\bigl(\widetilde\Sigma_K^{-1}-\Sigma_K^{-1}\bigr)V_K\bigr\|
$
do not involve the index $z$, and therefore remain unchanged. Combining these with the bounds above yields the same stochastic order as in HIR:
$$
\sup_{z\in\mathcal Z}|A_{7,N}(z)|
=
O_p\!\left(\frac{\zeta(K)^{9/2}}{\sqrt N}\right).
$$
Therefore, under the maintained regularity conditions, 
$
\sup_{z\in\mathcal Z}|A_{7,N}(z)|=o_p(1).
$
\paragraph{Bound on (8).}
The treatment of \textup{(8)} requires a different argument from the pointwise proof in HIR. In HIR, the key step is a smooth approximation bound for a single function of the evaluation variable. Using $u$ to denote that variable, the argument there relies on a bound of the form
$$
\sup_{u\in\mathcal X}|\delta_K(u)-\delta_0(u)|\le CK^{-t/r}.
$$
In the present uniform setting, however, the relevant object is the indexed family
$$
\{\delta_0(z;\cdot):z\in\mathcal Z\},
\qquad
\delta_0(z;u)=-\mathbf 1\{u_Z\le z\}h_0(u),
$$
and what would be needed to mimic the HIR proof is a bound of the form
$$
\sup_{z\in\mathcal Z}\sup_{u\in\mathcal X}|\delta_K(z;u)-\delta_0(z;u)|\le CK^{-t/r}.
$$
Such a bound is not available in general. The reason is that, for each fixed $z$, the map $u\mapsto\delta_0(z;u)$ is no longer a smooth function, but a lower-orthant indicator multiplied by a smooth weight, so the family is indexed by a moving discontinuity. Consequently, the smooth approximation argument used in HIR does not directly extend to the present uniform case. We therefore proceed differently: we rewrite \textup{(8)} as the sum of three empirical-process terms, show that the associated classes are of VC type, and then apply a maximal inequality to obtain a uniform $o_p(1)$ bound.

Define
$$
A_{8,N}(z):=
\frac{1}{\sqrt N}\sum_{i=1}^N
\left\{
\delta_K(z;X_i)\frac{T_i-p_K(X_i)}{\sqrt{p_K(X_i)(1-p_K(X_i))}}
-
\delta_0(z;X_i)\frac{T_i-p(X_i)}{\sqrt{p(X_i)(1-p(X_i))}}
\right\}.
$$
Let
$$
Z_0(W):=\frac{T-p(X)}{\sqrt{p(X)(1-p(X))}},
\qquad
Z_K(W):=\frac{T-p_K(X)}{\sqrt{p_K(X)(1-p_K(X))}},
\qquad
W:=(Y,T,X).
$$
To align the notation with the construction in \cite{hirano2003efficient}, define
$$
h_0(u):=\frac{\mu_1(u)}{p(u)}\sqrt{p(u)(1-p(u))},
\qquad
\bar h_K(u):=\frac{\mu_1(u)}{p(u)}\sqrt{p_K(u)(1-p_K(u))},
$$
and
$$
\delta_0(z;u):=-\mathbf 1\{u_Z\le z\}h_0(u),
\qquad
\bar\delta_{0,K}(z;u):=-\mathbf 1\{u_Z\le z\}\bar h_K(u).
$$
Further, let
$$
b_K(u):=\sqrt{w_K(u)}\,R^K(u),
\qquad
w_K(u):=p_K(u)(1-p_K(u)),
\qquad
\Sigma_K:=\E\!\left[b_K(X)b_K(X)^\prime\right],
$$
and define, for any square-integrable $f$,
$$
\Pi_K f(u):=b_K(u)^\prime \Sigma_K^{-1}\E\!\left[b_K(X)f(X)\right].
$$
Then
$$
\delta_K(z;\cdot)=\Pi_K\bar\delta_{0,K}(z;\cdot).
$$
Indeed, since
$$
b_K(v)\bar\delta_{0,K}(z;v)
=
-\mathbf 1\{v_Z\le z\}\frac{\mu_1(v)}{p(v)}\,p_K(v)(1-p_K(v))\,R^K(v),
$$
we have
$$
\Pi_K\bar\delta_{0,K}(z;u)
=
-\;b_K(u)^\prime\Sigma_K^{-1}
\int
\mathbf 1\{v_Z\le z\}\frac{\mu_1(v)}{p(v)}\,p_K(v)(1-p_K(v))\,R^K(v)\,dF_0(v)=\delta_K(z;u).
$$
Accordingly, write
$$
A_{8,N}(z)=A_{8,N}^{(1)}(z)+A_{8,N}^{(2)}(z)+A_{8,N}^{(3)}(z),
$$
where
$$
A_{8,N}^{(1)}(z):=
\frac{1}{\sqrt N}\sum_{i=1}^N
\bigl(\delta_K(z;X_i)-\bar\delta_{0,K}(z;X_i)\bigr)Z_0(W_i),
$$
$$
A_{8,N}^{(2)}(z):=
\frac{1}{\sqrt N}\sum_{i=1}^N
\bigl(\bar\delta_{0,K}(z;X_i)-\delta_0(z;X_i)\bigr)Z_0(W_i),
$$
$$
A_{8,N}^{(3)}(z):=
\frac{1}{\sqrt N}\sum_{i=1}^N
\delta_K(z;X_i)\bigl(Z_K(W_i)-Z_0(W_i)\bigr).
$$
Define the associated classes
$$
\eta_{1,z}(W):=\bigl(\delta_K(z;X)-\bar\delta_{0,K}(z;X)\bigr)Z_0(W),
$$
$$
\eta_{2,z}(W):=\bigl(\bar\delta_{0,K}(z;X)-\delta_0(z;X)\bigr)Z_0(W),
$$
$$
\eta_{3,z}(W):=\delta_K(z;X)\bigl(Z_K(W)-Z_0(W)\bigr),
$$
and
$$
\mathcal H_{j,K}:=\{\eta_{j,z}:z\in\mathcal Z\},
\qquad j=1,2,3.
$$

We first establish the covering number for $\mathcal H_{1,K}$. Let
$$
\mathbb K_K(u,v):=b_K(u)^\prime\Sigma_K^{-1}b_K(v),
\qquad
g_z(t):=\mathbf 1\{t\le z\},
\qquad
\mathcal G_Z:=\{g_z:z\in\mathcal Z\}.
$$
Since $\mathcal G_Z$ is the lower-orthant class on $\mathcal Z$, it is a VC class. Hence there exist constants $A_0,v_0>0$, depending only on $d_z$, such that for every probability measure $\mu$ on $\mathcal Z$ and every $0<\varepsilon\le1$,
$$
N\bigl(\mathcal G_Z,L_2(\mu),\varepsilon\bigr)
\le
\left(\frac{A_0}{\varepsilon}\right)^{v_0}.
$$
By linearity of $\Pi_K$,
$$
\delta_K(z;u)
=
-\int \mathbb K_K(u,v)\,\mathbf 1\{v_Z\le z\}\,\bar h_K(v)\,dF_0(v),
$$
and therefore
$$
\eta_{1,z}(W)
=
Z_0(W)\left[
\mathbf 1\{Z\le z\}\bar h_K(X)
-
\int \mathbb K_K(X,v)\,\mathbf 1\{v_Z\le z\}\,\bar h_K(v)\,dF_0(v)
\right].
$$
Next define
\begin{equation}
\label{eq:envelope1}
  S_{K}(u):=\int |\mathbb K_K(u,v)|\,|\bar h_K(v)|\,dF_0(v),
\qquad
F_{1,K}(W):=|Z_0(W)|\bigl(|\bar h_K(X)|+S_{K}(X)\bigr).  
\end{equation}
Then $F_{1,K}$ is an envelope for $\mathcal H_{1,K}$.

Fix any probability measure $Q$ on the sample space of $W$. For $z,z'\in\mathcal Z$, let $\Delta_{z,z'}:=g_z-g_{z'}$. Then
$$
\eta_{1,z}(W)-\eta_{1,z'}(W)
=
Z_0(W)\left[
\bar h_K(X)\Delta_{z,z'}(Z)
-
\int \mathbb K_K(X,v)\,\bar h_K(v)\Delta_{z,z'}(v_Z)\,dF_0(v)
\right].
$$
Hence, by $(a-b)^2\le 2a^2+2b^2$ and Cauchy--Schwarz,
\begin{align*}
\|\eta_{1,z}-\eta_{1,z'}\|_{Q,2}^2
&\le
2\int Z_0(W)^2 \bar h_K(X)^2 |\Delta_{z,z'}(Z)|^2\,dQ(W) \\
&\quad
+
2\int Z_0(W)^2 S_{K}(X)
\left(
\int |\mathbb K_K(X,v)|\,|\bar h_K(v)|\,|\Delta_{z,z'}(v_Z)|^2\,dF_0(v)
\right)dQ(W).
\end{align*}
Define finite measures $\nu_{1,Q}$ and $\nu_{2,Q}$ on $\mathcal Z$ by
$$
\nu_{1,Q}(A):=
\int Z_0(W)^2 \bar h_K(X)^2\mathbf 1\{Z\in A\}\,dQ(W),
$$
$$
\nu_{2,Q}(A):=
\int Z_0(W)^2 S_{K}(X)
\left(
\int |\mathbb K_K(X,v)|\,|\bar h_K(v)|\,\mathbf 1\{v_Z\in A\}\,dF_0(v)
\right)dQ(W).
$$
Let $\nu_Q:=\nu_{1,Q}+\nu_{2,Q}$. Then
$$
\|\eta_{1,z}-\eta_{1,z'}\|_{Q,2}^2
\le
2\int |\Delta_{z,z'}(t)|^2\,d\nu_Q(t).
$$
Moreover,
$$
\nu_Q(\mathcal Z)
=
\int Z_0(W)^2\bigl(\bar h_K(X)^2+S_{K}(X)^2\bigr)\,dQ(W)
\le
\|F_{1,K}\|_{Q,2}^2.
$$
If $\nu_Q(\mathcal Z)=0$, the claim is immediate. Otherwise, define the probability measure $\widetilde Q:=\nu_Q/\nu_Q(\mathcal Z)$. Then
$$
\|\eta_{1,z}-\eta_{1,z'}\|_{Q,2}
\le
\sqrt{2}\,\|F_{1,K}\|_{Q,2}\,\|g_z-g_{z'}\|_{L_2(\widetilde Q)}.
$$
Therefore,
$$
N\!\left(
\mathcal H_{1,K},
L_2(Q),
\sqrt{2}\,\varepsilon\,\|F_{1,K}\|_{Q,2}
\right)
\le
N\!\left(
\mathcal G_Z,
L_2(\widetilde Q),
\varepsilon
\right)
\le
\left(\frac{A_0}{\varepsilon}\right)^{v_0},
\qquad 0<\varepsilon\le1.
$$
Absorbing the factor $\sqrt{2}$ into the generic constant yields
$$
N\!\left(
\mathcal H_{1,K},
L_2(Q),
\varepsilon\,\|F_{1,K}\|_{Q,2}
\right)
\le
\left(\frac{A}{\varepsilon}\right)^v,
\qquad 0<\varepsilon\le1,
$$
for some constants $A,v>0$. Thus, $\mathcal H_{1,K}$ is of VC type.

For $\mathcal H_{2,K}$, note that
$
\eta_{2,z}(W)
=
-\mathbf 1\{Z\le z\}\bigl(\bar h_K(X)-h_0(X)\bigr)Z_0(W).
$
Hence $\mathcal H_{2,K}$ is a weighted lower-orthant class. With envelope
\begin{equation}
\label{eq:envelope2}
    F_{2,K}(W):=|\bar h_K(X)-h_0(X)|\,|Z_0(W)|,
\end{equation}
the same VC bound follows immediately from that of $\mathcal G_Z$:
$$
N\!\left(
\mathcal H_{2,K},
L_2(Q),
\varepsilon\,\|F_{2,K}\|_{Q,2}
\right)
\le
\left(\frac{A}{\varepsilon}\right)^v,
\qquad 0<\varepsilon\le1.
$$

Finally, for $\mathcal H_{3,K}$, write
$$
\eta_{3,z}(W)
=
-\bigl(Z_K(W)-Z_0(W)\bigr)
\int \mathbb K_K(X,v)\,\mathbf 1\{v_Z\le z\}\,\bar h_K(v)\,dF_0(v).
$$
Thus $\mathcal H_{3,K}$ has the same structure as $\mathcal H_{1,K}$, with $Z_0$ replaced by $Z_K-Z_0$. Defining
$
F_{3,K}(W):=|Z_K(W)-Z_0(W)|\,S_{K}(X),
$
the same argument as for $\mathcal H_{1,K}$ gives
$$
N\!\left(
\mathcal H_{3,K},
L_2(Q),
\varepsilon\,\|F_{3,K}\|_{Q,2}
\right)
\le
\left(\frac{A}{\varepsilon}\right)^v,
\qquad 0<\varepsilon\le1.
$$
Hence each of the three classes $\mathcal H_{1,K}$, $\mathcal H_{2,K}$, and $\mathcal H_{3,K}$ is of VC type, with complexity inherited from the lower-orthant class. Since the corresponding envelopes are $K$-dependent and need not remain uniformly bounded, a direct application of Pollard's maximal inequality (see Theorem 2.14.1 in \cite{van1996weak}) is not well suited. We therefore invoke Corollary 5.1 of \cite{chernozhukov2014gaussian} (Hereafter CCK14), which applies to VC-type classes with square-integrable envelopes.

We first treat $\mathcal H_{1,K}$. Recall that
$$
\eta_{1,z}(W)
=
\bigl(\delta_K(z;X)-\bar\delta_{0,K}(z;X)\bigr)Z_0(W),
\qquad
\mathcal H_{1,K}:=\{\eta_{1,z}:z\in\mathcal Z\}.
$$
Since $\E[Z_0(W)\mid X]=0$, we have $\E[\eta_{1,z}(W)]=0$ for every $z\in\mathcal Z$, and therefore
$
\sup_{z\in\mathcal Z}|A_{8,N}^{(1)}(z)|
=
\|\mathbb G_N\|_{\mathcal H_{1,K}}.
$ We begin with the variance term required by Corollary 5.1 of CCK 14. By definition,
\begin{align*}
\sup_{z\in\mathcal Z}P\eta_{1,z}^2
&=
\sup_{z\in\mathcal Z}
\E\Bigl[
\bigl(\delta_K(z;X)-\bar\delta_{0,K}(z;X)\bigr)^2 Z_0(W)^2
\Bigr].
\end{align*}
Under overlap, $Z_0$ is uniformly bounded, so
$
\sup_{z\in\mathcal Z}P\eta_{1,z}^2
\le
C\sup_{z\in\mathcal Z}
\|\delta_K(z;\cdot)-\bar\delta_{0,K}(z;\cdot)\|_{L_2(F_0)}^2.
$
We therefore set
$
r_{1,K}
:=
\sup_{z\in\mathcal Z}
\|\delta_K(z;\cdot)-\bar\delta_{0,K}(z;\cdot)\|_{L_2(F_0)}.
$
To bound $r_{1,K}$, write
$$
\delta_K(z;\cdot)-\bar\delta_{0,K}(z;\cdot)
=
\bigl(\delta_K(z;\cdot)-\delta_0(z;\cdot)\bigr)
+
\bigl(\delta_0(z;\cdot)-\bar\delta_{0,K}(z;\cdot)\bigr).
$$
Since $\delta_K(z;\cdot)=\Pi_K\bar\delta_{0,K}(z;\cdot)$ and $\Pi_K$ is a contraction on $L_2(F_0)$,
\begin{align*}
\|\delta_K(z;\cdot)-\bar\delta_{0,K}(z;\cdot)\|_{L_2(F_0)}
&=
\|\Pi_K\bar\delta_{0,K}(z;\cdot)-\bar\delta_{0,K}(z;\cdot)\|_{L_2(F_0)} \\
&\le\|\Pi_K\bar\delta_{0,K}(z;\cdot)-\Pi_K\delta_{0}(z;\cdot)\|_{L_2(F_0)}+\|\Pi_K\delta_0(z;\cdot)-\delta_0(z;\cdot)\|_{L_2(F_0)}\\
&\quad +\|\delta_0(z;\cdot)-\bar\delta_{0,K}(z;\cdot)\|_{L_2(F_0)}\\
&\le
\|\Pi_K\delta_0(z;\cdot)-\delta_0(z;\cdot)\|_{L_2(F_0)}
+
2\|\bar\delta_{0,K}(z;\cdot)-\delta_0(z;\cdot)\|_{L_2(F_0)}.
\end{align*}
We now bound these two terms. First, since
$$
\delta_0(z;u)=-\mathbf 1\{u_Z\le z\}h_0(u),
\qquad
\bar\delta_{0,K}(z;u)=-\mathbf 1\{u_Z\le z\}\bar h_K(u),
$$
we have
$$
\bar\delta_{0,K}(z;u)-\delta_0(z;u)
=
-\mathbf 1\{u_Z\le z\}\bigl(\bar h_K(u)-h_0(u)\bigr),
$$
and therefore
$$
\|\bar\delta_{0,K}(z;\cdot)-\delta_0(z;\cdot)\|_{L_2(F_0)}\le \|\bar h_K-h_0\|_{L_2(F_0)}\le \|\bar h_K-h_0\|_\infty.
$$
Now
$$
h_0(u)=\frac{\mu_1(u)}{p(u)}\sqrt{p(u)(1-p(u))},
\qquad
\bar h_K(u)=\frac{\mu_1(u)}{p(u)}\sqrt{p_K(u)(1-p_K(u))}.
$$
Since the map $p\mapsto \sqrt{p(1-p)}$ is Lipschitz on any compact subinterval of $(0,1)$, overlap and boundedness of $\mu_1/p$ imply $\|\bar h_K-h_0\|_\infty \le C\|p_K-p\|_\infty$. Thus,
$$
\sup_{z\in\mathcal Z}\|\bar\delta_{0,K}(z;\cdot)-\delta_0(z;\cdot)\|_{L_2(F_0)}=O(\|p_K-p\|_\infty)=O(\zeta(K)K^{-s/2r}).
$$

The first term is the genuine projection approximation error for the lower-orthant weighted class, we use the crude bound
$$
\sup_{z\in\mathcal Z}
\|\Pi_K\delta_0(z;\cdot)-\delta_0(z;\cdot)\|_{L_2(F_0)}=O\!\left(K^{-1/2r}\right).
$$
Combining the two displays above gives
$$
r_{1,K}
=
O\!\left(K^{-1/2r}+\zeta(K)K^{-s/2r}\right)=O(K^{-1/2r}),
$$
and hence, in Corollary 5.1, $\sigma_{1,K}^2=O(K^{-1/r})$.

We next turn to the envelope. Recall \eqref{eq:envelope1}, an envelope for $\mathcal H_{1,K}$ is
$
F_{1,K}(W)
:=
|Z_0(W)|\bigl(S_{K}(X)+|\bar h_K(X)|\bigr).
$ We now compute its order. Under overlap, $|Z_0(W)|\le C$. Further, compact support and overlap imply
$
\sup_{u\in\mathcal X}|\bar h_K(u)|\le C.
$
Next, using
$
|\mathbb K_K(u,v)|
\le
\|b_K(u)\|\,\|\Sigma_K^{-1}\|\,\|b_K(v)\|,
$
we obtain
$$
 S_{K}(u)
\le
\|b_K(u)\|\,\|\Sigma_K^{-1}\|
\int \|b_K(v)\|\,|\bar h_K(v)|\,dF_0(v).
$$
Since $\sup_{u\in\mathcal X}\|b_K(u)\|\lesssim \zeta(K)$ and $\sup_{u\in\mathcal X}|\bar h_K(u)|\le C$, it follows that
$
\sup_{u\in\mathcal X}S_{K}(u)\lesssim \zeta(K)^2.
$
Consequently,
$
F_{1,K}(W)\lesssim \zeta(K)^2
\quad\text{a.s.},
$
and hence
$
\|F_{1,K}\|_{P,2}=O(\zeta(K)^2)=O(K^2).
$

Let
$
M_{1,K}:=\max_{1\le i\le N}F_{1,K}(W_i).
$
Since $F_{1,K}(W)\lesssim \zeta(K)^2$ almost surely, we immediately have $\|M_{1,K}\|_2=O(\zeta(K)^2)=O(K^2)$.

We are now in a position to apply Corollary 5.1 of CCK14. Since $\mathcal H_{1,K}$ is of VC type, there exist constants $A,v>0$ such that
$$
\sup_Q
N\!\left(
\mathcal H_{1,K},
e_Q,
\varepsilon\|F_{1,K}\|_{Q,2}
\right)
\le
\left(\frac{A}{\varepsilon}\right)^v,
\qquad
0<\varepsilon\le 1.
$$
Hence
\begin{align*}
\E\bigl[\|\mathbb G_N\|_{\mathcal H_{1,K}}\bigr]
&\lesssim
\sqrt{
v\,\sigma_{1,K}^2
\log\!\left(\frac{A\|F_{1,K}\|_{P,2}}{\sigma_{1,K}}\right)
}
+
\frac{v\|M_{1,K}\|_2}{\sqrt N}
\log\!\left(\frac{A\|F_{1,K}\|_{P,2}}{\sigma_{1,K}}\right).
\end{align*}
Substituting the bounds just derived yields
\begin{align*}
\E\bigl[\|\mathbb G_N\|_{\mathcal H_{1,K}}\bigr]
&\lesssim
\sqrt{
v\,r_{1,K}^2
\log\!\left(\frac{A\zeta(K)^2}{r_{1,K}}\right)
}
+
\frac{v\zeta(K)^2}{\sqrt N}
\log\!\left(\frac{A\zeta(K)^2}{r_{1,K}}\right) \\
&\lesssim
r_{1,K}\sqrt{\log\!\left(\frac{A\zeta(K)^2}{r_{1,K}}\right)}
+
\frac{\zeta(K)^2}{\sqrt N}
\log\!\left(\frac{A\zeta(K)^2}{r_{1,K}}\right)\\
&\lesssim
K^{-1/(2r)}
\sqrt{
\log\!\left(
A K^{2+1/(2r)}
\right)
}
+
\frac{K^2}{\sqrt N}
\log\!\left(
A K^{2+1/(2r)}
\right) \\
&\lesssim
K^{-1/(2r)}\sqrt{\log K}
+
\frac{K^2}{\sqrt N}\log K.
\end{align*}
The first term converges to zero whenever $K\to\infty$. The second converges to zero provided $K^2\log K/\sqrt{N}\to0$, which is valid under the assumption  $K=K_N\asymp N^\vartheta$ with $\vartheta<1/9$. Therefore, $\E\bigl[\|\mathbb G_N\|_{\mathcal H_{1,K}}\bigr]\to 0$,
an application of Markov's inequality gives $\sup_{z\in\mathcal Z}|A_{8,N}^{(1)}(z)|=o_p(1)$.

We next turn to $\mathcal H_{2,K}$. Recall that
$$
\eta_{2,z}(W)
=
\bigl(\bar\delta_{0,K}(z;X)-\delta_0(z;X)\bigr)Z_0(W),
\qquad
\mathcal H_{2,K}:=\{\eta_{2,z}:z\in\mathcal Z\}.
$$
Since $\E[Z_0(W)\mid X]=0$, we have $\E[\eta_{2,z}(W)]=0$ for every $z\in\mathcal Z$, and therefore
$
\sup_{z\in\mathcal Z}|A_{8,N}^{(2)}(z)|
=
\|\mathbb G_N\|_{\mathcal H_{2,K}}.
$
We first bound the variance term. Since $|Z_0(W)|\le C$ under overlap,
$$
\sup_{z\in\mathcal Z}P\eta_{2,z}^2=\sup_{z\in\mathcal Z}
\E\Bigl[
\mathbf 1\{Z\le z\}\bigl(\bar h_K(X)-h_0(X)\bigr)^2 Z_0(W)^2
\Bigr]\le C\,\E\Bigl[\bigl(\bar h_K(X)-h_0(X)\bigr)^2\Bigr].
$$
Since $\|\bar h_K-h_0\|_\infty\le C\|p_K-p\|_\infty$ and $\|p_K-p\|_\infty=O\!\left(\zeta(K)K^{-s/2r}\right)$, we may therefore take $\sigma_{2,K}^2=O\!\left(\zeta(K)^2K^{-s/r}\right)$.

As shown in \eqref{eq:envelope2}, an envelope for $\mathcal H_{2,K}$ is
$F_{2,K}(W):=|\bar h_K(X)-h_0(X)|\,|Z_0(W)|$.
Again using the bound above,
$$
F_{2,K}(W)\le C\|\bar h_K-h_0\|_\infty
\le C\|p_K-p\|_\infty
\qquad\text{a.s.},
$$
so that $\|F_{2,K}\|_{P,2}
=
O\!\left(\|p_K-p\|_\infty\right)
=
O\!\left(\zeta(K)K^{-s/2r}\right)$.
Let $M_{2,K}:=\max_{1\le i\le N}F_{2,K}(W_i)$, then the same bound gives
$\|M_{2,K}\|_2
=
O\!\left(\|p_K-p\|_\infty\right)
=
O\!\left(\zeta(K)K^{-s/2r}\right).
$

Since $\mathcal H_{2,K}$ is of VC type, Corollary 5.1 of CCK14 implies
\begin{align*}
E\bigl[\|\mathbb G_N\|_{\mathcal H_{2,K}}\bigr]
&\lesssim
\sqrt{
v\,\sigma_{2,K}^2
\log\!\left(\frac{A\|F_{2,K}\|_{P,2}}{\sigma_{2,K}}\right)
}
+
\frac{v\|M_{2,K}\|_2}{\sqrt N}
\log\!\left(\frac{A\|F_{2,K}\|_{P,2}}{\sigma_{2,K}}\right).
\end{align*}
Since $\sigma_{2,K}$, $\|F_{2,K}\|_{P,2}$ and $\|M_{2,K}\|_2$ are all of the same order,
$$
\log\!\left(\frac{A\|F_{2,K}\|_{P,2}}{\sigma_{2,K}}\right)=O(1),
$$
and therefore
$$
\E\bigl[\|\mathbb G_N\|_{\mathcal H_{2,K}}\bigr]
\lesssim
\sigma_{2,K}
+
\frac{\|M_{2,K}\|_2}{\sqrt N}
=
O\!\left(\zeta(K)K^{-s/2r}\right).
$$
For power series, $\zeta(K)\asymp K$, so
$$
\E\bigl[\|\mathbb G_N\|_{\mathcal H_{2,K}}\bigr]
=
O\!\left(K^{1-s/2r}\right)\to 0,
$$
since $s/r\ge 7$. Hence, by Markov's inequality, $\sup_{z\in\mathcal Z}|A_{8,N}^{(2)}(z)|=o_p(1)$.

We finally consider $\mathcal H_{3,K}$. Recall that
$$
\eta_{3,z}(W)
=
\delta_K(z;X)\bigl(Z_K(W)-Z_0(W)\bigr),
\qquad
\mathcal H_{3,K}:=\{\eta_{3,z}:z\in\mathcal Z\}.
$$
We first note that this class is centered. Indeed, since $\delta_K(z;\cdot)\in \mathrm{span}\{b_K(\cdot)\}$, there exists a coefficient vector $\alpha_K(z)$ such that
$$
\delta_K(z;u)=b_K(u)^\prime \alpha_K(z),
\qquad
b_K(u)=\sqrt{p_K(u)(1-p_K(u))}\,R^K(u).
$$
Moreover,
$$
\E[Z_K(W)-Z_0(W)\mid X]
=
\frac{p(X)-p_K(X)}{\sqrt{p_K(X)(1-p_K(X))}}.
$$
Hence
$$
\E[\eta_{3,z}(W)]=\E\!\left[
\delta_K(z;X)\frac{p(X)-p_K(X)}{\sqrt{p_K(X)(1-p_K(X))}}
\right] = \alpha_K(z)^\prime \E\!\left[R^K(X)\bigl(p(X)-p_K(X)\bigr)\right].
$$

By the first-order condition defining the pseudo-true SLE, $\E\!\left[R^K(X)\bigl(T-p_K(X)\bigr)\right]=0$, and therefore $\E\!\left[R^K(X)\bigl(p(X)-p_K(X)\bigr)\right]=0$. It follows that $\E[\eta_{3,z}(W)]=0$ for every $z\in\mathcal Z$, and thus $\sup_{z\in\mathcal Z}|A_{8,N}^{(3)}(z)|=
\|\mathbb G_N\|_{\mathcal H_{3,K}}$.

We next bound the variance term. 
As in the discussion above, under overlap,
$$
|Z_K(W)-Z_0(W)|\le C\,|p_K(X)-p(X)|\le C\|p_K-p\|_\infty.
$$
Further, from the representation
$$
\delta_K(z;u)
=
-\int \mathbb K_K(u,v)\,\mathbf 1\{v_Z\le z\}\,\bar h_K(v)\,dF_0(v),
$$
we have
$$
|\delta_K(z;u)|\le S_K(u),
\qquad
S_K(u):=\int |\mathbb K_K(u,v)|\,|\bar h_K(v)|\,dF_0(v).
$$
Hence
$$
\sup_{z\in\mathcal Z}P\eta_{3,z}^2\le\E\!\left[(Z_K(W)-Z_0(W))^2S_K(X)^2\right]\le C\|p_K-p\|_\infty^2\E[S_K(X)^2].
$$
Using the previously established bound $S_K(u)\lesssim \zeta(K)^2$, we obtain
$$
\sup_{z\in\mathcal Z}P\eta_{3,z}^2
=
O\!\left(\zeta(K)^4\|p_K-p\|_\infty^2\right).
$$
Thus, in Corollary 5.1 of CCK14, we may take
$
\sigma_{3,K}^2
=
O\!\left(\zeta(K)^4\|p_K-p\|_\infty^2\right).
$

An envelope for $\mathcal H_{3,K}$ is
$
F_{3,K}(W):=|Z_K(W)-Z_0(W)|\,S_K(X).
$
By the same bounds, $F_{3,K}(W)\lesssim\zeta(K)^2\|p_K-p\|_\infty$, so that $\|F_{3,K}\|_{P,2}=O\!\left(\zeta(K)^2\|p_K-p\|_\infty\right).
$
Likewise, 
$
\|M_{3,K}\|_2
=
O\!\left(\zeta(K)^2\|p_K-p\|_\infty\right).
$

Since $\mathcal H_{3,K}$ is of VC type, Corollary 5.1 of CCK14 yields
\begin{align*}
E\bigl[\|\mathbb G_N\|_{\mathcal H_{3,K}}\bigr]
&\lesssim
\sqrt{
v\,\sigma_{3,K}^2
\log\!\left(\frac{A\|F_{3,K}\|_{P,2}}{\sigma_{3,K}}\right)
}
+
\frac{v\|M_{3,K}\|_2}{\sqrt N}
\log\!\left(\frac{A\|F_{3,K}\|_{P,2}}{\sigma_{3,K}}\right).
\end{align*}
Since $\sigma_{3,K}$, $\|F_{3,K}\|_{P,2}$, and $\|M_{3,K}\|_2$ are of the same order, the logarithmic factor is $O(1)$, and hence
$$
\E\bigl[\|\mathbb G_N\|_{\mathcal H_{3,K}}\bigr]
=
O\!\left(\zeta(K)^2\|p_K-p\|_\infty\right)=O\!\left(K^{3-s/2r}\right)\to 0,
$$
since $s/r\ge 7$. Therefore, by Markov's inequality, $\sup_{z\in\mathcal Z}|A_{8,N}^{(3)}(z)|=o_p(1)$.  Combining this with the bounds for $A_{8,N}^{(1)}(z)$ and $A_{8,N}^{(2)}(z)$ established above, we finally conclude that
$
\sup_{z\in\mathcal Z}|A_{8,N}(z)|=o_p(1).
$

Combining the bounds for \textup{(4)}--\textup{(8)} established above, we conclude that each of these terms is $o_p(1)$ uniformly over $z\in\mathcal Z$. Therefore, by Slutsky's theorem, the desired uniform asymptotic linearization follows.
\end{proof}
\newpage
\section{Additional Simulations for the Extensions in Section \ref{sec:extension}}
\label{sec:app-extension-sim}

This section reports additional simulation results for the two extensions discussed in Section \ref{sec:extension}. Subsection \ref{sec:app-parametric-sim} studies the finite-sample performance of the parametric specification tests for the CATE, while Subsection \ref{sec:app-late} considers the CLATE heterogeneity test in settings where treatment assignment is endogenous but a binary instrument is available.

\subsection{Additional Simulations for Parametric-Form Tests}
\label{sec:app-parametric-sim}

This subsection reports additional Monte Carlo results for the parametric-form extensions discussed in Section \ref{sec:ex-para}. We consider two specification tests: a linear-form test of
$$
\mathbb H_0^{\mathrm{lin}}:\ \tau(x)=\theta_0+\theta_1x,
$$
and a quadratic-form test of
$$
\mathbb H_0^{\mathrm{quad}}:\ \tau(x)=\theta_0+\theta_1x+\theta_2x^2,
$$
where $x=X_1$. The treatment assignment mechanisms, covariate dimensions, nuisance estimation, and multiplier bootstrap implementation are the same as in the main simulation design. Thus, the purpose of this appendix is not to introduce new assignment designs, but to examine whether the proposed projected ICM processes correctly detect misspecification of low-dimensional parametric CATE forms.

Across all designs, the outcome is generated as
$$
Y=\mu_0(X)+\tau(X_1)D+\varepsilon,
\qquad \tau_0=0.5,
$$
with $\varepsilon\sim \mathcal{N}(0,\sigma_u^2)$ with $\sigma_u=0.25$. We vary only the functional form of $\tau(\cdot)$. Let $u=x-0.5$. The seven CATE specifications are
$$
\begin{array}{ll}
\text{constant:} & \tau(x)=\tau_0,\\
\text{linear:} & \tau(x)=\tau_0+\delta u,\\
\text{cubic:} & \tau(x)=\tau_0+\delta(0.5u+6u^3),\\
\text{linear threshold:} & \tau(x)=\tau_0+1.5\delta u\,\1\{x>0.5\},\\
\text{quadratic:} & \tau(x)=\tau_0+0.3\delta u-2\delta u^2,\\
\text{quadratic exponential:} &
\tau(x)=\tau_0+0.3\delta u-\delta u^2
+0.8\delta\{\exp(3u)-1-3u\},\\
\text{quadratic bump:} &
\tau(x)=\tau_0+0.3\delta u-\delta u^2+0.8\delta\exp(-40u^2).
\end{array}
$$
Figure \ref{fig:cate-forms-parametric} plots these seven functions.

\begin{figure}[!htbp]
    \centering
    \includegraphics[width=0.92\textwidth]{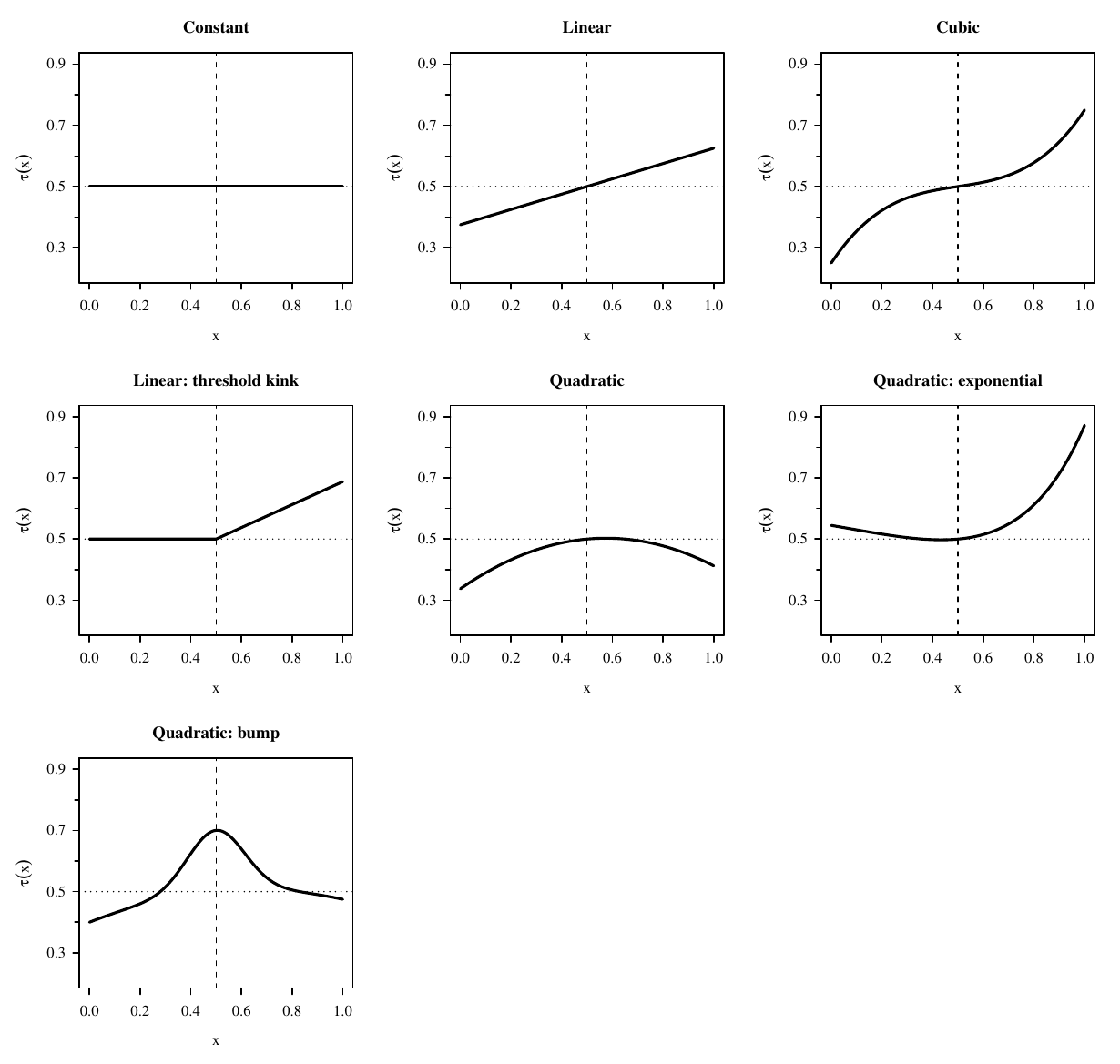}
    \caption{CATE functions used in the parametric-form simulation designs. The horizontal dotted line marks the baseline level $\tau_0=0.5$, and the vertical dashed line marks $x=0.5$.}
    \label{fig:cate-forms-parametric}
\end{figure}

For the linear specification test, the constant and linear designs satisfy $H_0^{\mathrm{lin}}$ and are therefore used to evaluate size, while the others are alternatives and are used to evaluate power. For the quadratic specification test, the constant, linear, and quadratic designs satisfy $H_0^{\mathrm{quad}}$, whereas the others violate the null. In each case, $\hat\theta$ is estimated by least squares using the feasible doubly robust pseudo-outcome, and inference is based on the multiplier bootstrap process described in equation \eqref{eq:bootstrap_process_para} in Section \ref{sec:ex-para}.

\begin{table}[!htbp]
\centering
\scriptsize
\setlength{\tabcolsep}{5.0pt}
\renewcommand{\arraystretch}{0.96}
\begin{threeparttable}
\caption{Rejection frequencies for the CvM linear-form test. Constant and Linear correspond to size, and the remaining CATE forms to power.}
\label{tab:linear-form-cvm}
\begin{tabular}{ll l *{9}{c}}
\toprule
& & & \multicolumn{3}{c}{$n=500$} & \multicolumn{3}{c}{$n=1000$} & \multicolumn{3}{c}{$n=2000$} \\
\cmidrule(lr){4-6}\cmidrule(lr){7-9}\cmidrule(lr){10-12}
Family & $d$ & CATE form & 1\% & 5\% & 10\% & 1\% & 5\% & 10\% & 1\% & 5\% & 10\% \\
\midrule
\multirow{21}{*}{Smooth logit} & \multirow{7}{*}{3} & Constant  & 0.7 & 4.7 & 9.2 & 0.9 & 5.4 & 11.1 & 0.9 & 4.2 & 10.5 \\
 &  & Linear  & 1.1 & 5.8 & 11.7 & 0.5 & 5.2 & 10.0 & 1.0 & 4.7 & 10.0 \\
 &  & Cubic  & 2.3 & 11.0 & 21.6 & 3.1 & 15.9 & 25.7 & 12.0 & 34.3 & 47.7 \\
 &  & Linear threshold  & 7.6 & 18.4 & 27.8 & 16.1 & 32.8 & 44.1 & 36.6 & 57.4 & 70.4 \\
 &  & Quadratic  & 13.1 & 31.5 & 40.9 & 33.2 & 56.2 & 67.6 & 68.5 & 85.4 & 91.0 \\
 &  & Quadratic exponential  & 41.3 & 63.2 & 75.4 & 81.8 & 93.0 & 96.2 & 99.1 & 99.9 & 100.0 \\
 &  & Quadratic bump  & 74.3 & 89.6 & 94.6 & 98.4 & 99.5 & 99.9 & 100.0 & 100.0 & 100.0 \\
\addlinespace[0.20em]
 & \multirow{7}{*}{5} & Constant  & 1.3 & 5.5 & 10.6 & 0.6 & 4.2 & 9.4 & 1.1 & 4.4 & 9.4 \\
 &  & Linear  & 1.3 & 5.7 & 11.4 & 1.0 & 4.9 & 10.9 & 0.9 & 4.4 & 8.7 \\
 &  & Cubic  & 1.9 & 10.4 & 19.1 & 4.4 & 18.8 & 30.6 & 10.2 & 32.9 & 47.9 \\
 &  & Linear threshold  & 7.6 & 17.5 & 26.7 & 14.1 & 30.4 & 43.8 & 38.8 & 60.9 & 72.6 \\
 &  & Quadratic  & 13.9 & 31.6 & 42.9 & 31.4 & 57.5 & 68.2 & 68.7 & 85.7 & 90.7 \\
 &  & Quadratic exponential  & 41.7 & 62.4 & 72.9 & 81.6 & 92.8 & 96.4 & 99.5 & 100.0 & 100.0 \\
 &  & Quadratic bump  & 79.1 & 91.7 & 95.1 & 99.0 & 99.9 & 99.9 & 100.0 & 100.0 & 100.0 \\
\addlinespace[0.20em]
 & \multirow{7}{*}{10} & Constant  & 1.1 & 5.6 & 10.0 & 1.4 & 5.7 & 12.1 & 0.9 & 5.6 & 11.7 \\
 &  & Linear  & 0.6 & 4.6 & 9.1 & 0.9 & 4.1 & 9.0 & 1.1 & 6.2 & 11.0 \\
 &  & Cubic  & 2.0 & 10.6 & 20.2 & 4.0 & 17.6 & 30.6 & 10.2 & 34.6 & 48.9 \\
 &  & Linear threshold  & 4.7 & 14.9 & 22.9 & 15.4 & 34.8 & 45.1 & 35.9 & 59.1 & 70.8 \\
 &  & Quadratic  & 12.7 & 30.8 & 43.2 & 31.9 & 56.0 & 66.4 & 68.3 & 84.7 & 90.0 \\
 &  & Quadratic exponential  & 40.2 & 61.3 & 73.6 & 81.7 & 93.9 & 96.8 & 99.4 & 100.0 & 100.0 \\
 &  & Quadratic bump  & 75.0 & 89.5 & 94.0 & 98.8 & 99.5 & 99.8 & 100.0 & 100.0 & 100.0 \\
\midrule
\multirow{21}{*}{Threshold} & \multirow{7}{*}{3} & Constant  & 0.4 & 4.2 & 8.5 & 1.9 & 6.5 & 11.7 & 1.4 & 5.2 & 11.3 \\
 &  & Linear  & 1.4 & 5.6 & 10.6 & 0.6 & 4.6 & 9.5 & 1.3 & 6.2 & 10.8 \\
 &  & Cubic  & 2.4 & 9.2 & 16.8 & 3.8 & 14.8 & 25.2 & 8.4 & 28.4 & 43.8 \\
 &  & Linear threshold  & 6.8 & 18.7 & 27.6 & 13.4 & 31.8 & 43.5 & 32.8 & 56.4 & 67.8 \\
 &  & Quadratic  & 10.3 & 27.4 & 37.6 & 27.4 & 48.4 & 61.2 & 58.7 & 79.2 & 86.5 \\
 &  & Quadratic exponential  & 35.4 & 57.6 & 70.6 & 73.3 & 88.8 & 93.1 & 98.1 & 99.6 & 99.7 \\
 &  & Quadratic bump  & 68.3 & 86.3 & 91.2 & 96.6 & 99.1 & 99.6 & 100.0 & 100.0 & 100.0 \\
\addlinespace[0.20em]
 & \multirow{7}{*}{5} & Constant  & 1.3 & 5.6 & 11.2 & 1.6 & 6.0 & 13.4 & 1.8 & 5.5 & 9.1 \\
 &  & Linear  & 1.0 & 4.9 & 10.5 & 1.5 & 4.9 & 9.4 & 0.8 & 4.1 & 8.9 \\
 &  & Cubic  & 1.8 & 8.2 & 15.4 & 4.0 & 13.9 & 24.4 & 9.3 & 26.8 & 41.9 \\
 &  & Linear threshold  & 3.6 & 12.5 & 21.4 & 10.8 & 26.8 & 36.6 & 30.0 & 53.2 & 65.2 \\
 &  & Quadratic  & 9.9 & 26.1 & 37.0 & 24.9 & 45.7 & 59.5 & 57.7 & 77.0 & 85.5 \\
 &  & Quadratic exponential  & 25.7 & 46.7 & 57.4 & 72.4 & 88.0 & 92.6 & 97.7 & 99.3 & 99.8 \\
 &  & Quadratic bump  & 62.0 & 81.8 & 87.5 & 95.4 & 98.4 & 99.7 & 100.0 & 100.0 & 100.0 \\
\addlinespace[0.20em]
 & \multirow{7}{*}{10} & Constant  & 1.3 & 5.1 & 10.5 & 1.9 & 5.4 & 10.5 & 0.9 & 5.4 & 10.3 \\
 &  & Linear  & 1.1 & 5.7 & 10.1 & 0.9 & 4.2 & 8.9 & 1.2 & 5.4 & 11.3 \\
 &  & Cubic  & 2.3 & 8.8 & 14.7 & 3.1 & 14.2 & 23.8 & 10.1 & 29.3 & 42.0 \\
 &  & Linear threshold  & 5.8 & 14.6 & 20.9 & 11.6 & 28.6 & 40.3 & 29.2 & 51.7 & 63.4 \\
 &  & Quadratic  & 9.4 & 22.7 & 33.2 & 27.1 & 47.6 & 59.9 & 59.3 & 79.4 & 85.6 \\
 &  & Quadratic exponential  & 26.5 & 50.2 & 60.9 & 71.3 & 87.2 & 91.8 & 97.4 & 99.5 & 99.8 \\
 &  & Quadratic bump  & 58.5 & 78.8 & 87.3 & 95.3 & 98.7 & 99.4 & 100.0 & 100.0 & 100.0 \\
\midrule
\multirow{21}{*}{Fractional} & \multirow{7}{*}{3} & Constant  & 1.4 & 3.9 & 8.8 & 1.1 & 4.6 & 9.2 & 1.0 & 5.2 & 10.0 \\
 &  & Linear  & 0.8 & 5.0 & 11.0 & 1.3 & 5.3 & 10.0 & 1.6 & 5.1 & 9.5 \\
 &  & Cubic  & 2.9 & 10.9 & 20.3 & 4.9 & 20.7 & 33.2 & 16.4 & 44.6 & 61.4 \\
 &  & Linear threshold  & 15.4 & 33.9 & 46.4 & 33.6 & 55.3 & 66.2 & 67.4 & 85.4 & 90.8 \\
 &  & Quadratic  & 15.9 & 35.0 & 47.4 & 39.5 & 63.1 & 75.7 & 76.8 & 89.8 & 93.6 \\
 &  & Quadratic exponential  & 44.8 & 67.9 & 78.3 & 85.7 & 96.2 & 98.3 & 99.5 & 99.8 & 99.9 \\
 &  & Quadratic bump  & 99.4 & 99.9 & 99.9 & 100.0 & 100.0 & 100.0 & 100.0 & 100.0 & 100.0 \\
\addlinespace[0.20em]
 & \multirow{7}{*}{5} & Constant  & 0.5 & 4.5 & 9.7 & 1.0 & 4.9 & 8.9 & 0.7 & 4.5 & 11.1 \\
 &  & Linear  & 1.2 & 4.9 & 9.7 & 1.9 & 5.9 & 11.6 & 0.6 & 4.9 & 10.5 \\
 &  & Cubic  & 2.2 & 11.3 & 20.0 & 4.8 & 18.8 & 31.5 & 17.2 & 42.4 & 56.2 \\
 &  & Linear threshold  & 16.6 & 34.6 & 44.9 & 35.0 & 60.8 & 71.2 & 66.5 & 84.9 & 90.7 \\
 &  & Quadratic  & 16.2 & 31.5 & 42.7 & 37.3 & 58.2 & 68.0 & 76.9 & 90.8 & 94.7 \\
 &  & Quadratic exponential  & 46.5 & 69.1 & 78.1 & 84.8 & 94.7 & 97.2 & 99.5 & 100.0 & 100.0 \\
 &  & Quadratic bump  & 99.5 & 99.5 & 99.8 & 100.0 & 100.0 & 100.0 & 100.0 & 100.0 & 100.0 \\
\addlinespace[0.20em]
 & \multirow{7}{*}{10} & Constant  & 1.3 & 5.9 & 9.8 & 0.9 & 6.2 & 11.6 & 1.3 & 5.6 & 10.3 \\
 &  & Linear  & 1.2 & 5.9 & 11.2 & 1.3 & 5.4 & 10.6 & 1.6 & 5.7 & 10.6 \\
 &  & Cubic  & 2.7 & 12.3 & 20.5 & 6.6 & 20.4 & 31.2 & 16.4 & 42.3 & 57.0 \\
 &  & Linear threshold  & 15.1 & 33.2 & 43.5 & 30.9 & 53.7 & 64.6 & 66.1 & 83.9 & 90.4 \\
 &  & Quadratic  & 11.8 & 27.5 & 38.1 & 34.1 & 56.3 & 66.2 & 71.2 & 87.2 & 93.0 \\
 &  & Quadratic exponential  & 49.1 & 70.4 & 79.6 & 85.4 & 94.3 & 97.4 & 99.5 & 99.9 & 100.0 \\
 &  & Quadratic bump  & 98.8 & 99.7 & 99.8 & 100.0 & 100.0 & 100.0 & 100.0 & 100.0 & 100.0 \\
\bottomrule
\end{tabular}
\begin{tablenotes}
\footnotesize
\item Notes: Rejection frequencies are reported in percent. The nominal levels are 1\%, 5\%, and 10\%. The CATE forms are plotted in Figure \ref{fig:cate-forms-parametric}.
\end{tablenotes}
\end{threeparttable}
\end{table}

\begin{table}[!htbp]
\centering
\scriptsize
\setlength{\tabcolsep}{5.0pt}
\renewcommand{\arraystretch}{0.96}
\begin{threeparttable}
\caption{Rejection frequencies for the KS linear-form test. Constant and Linear correspond to size, and the remaining CATE forms to power.}
\label{tab:linear-form-ks}
\begin{tabular}{ll l *{9}{c}}
\toprule
& & & \multicolumn{3}{c}{$n=500$} & \multicolumn{3}{c}{$n=1000$} & \multicolumn{3}{c}{$n=2000$} \\
\cmidrule(lr){4-6}\cmidrule(lr){7-9}\cmidrule(lr){10-12}
Family & $d$ & CATE form & 1\% & 5\% & 10\% & 1\% & 5\% & 10\% & 1\% & 5\% & 10\% \\
\midrule
\multirow{21}{*}{Smooth logit} & \multirow{7}{*}{3} & Constant  & 0.6 & 5.3 & 9.5 & 1.0 & 6.2 & 11.8 & 1.1 & 6.3 & 10.4 \\
 &  & Linear  & 1.4 & 6.0 & 12.4 & 0.7 & 4.5 & 8.5 & 1.1 & 5.7 & 11.1 \\
 &  & Cubic  & 2.9 & 11.4 & 19.5 & 2.3 & 12.5 & 24.7 & 10.7 & 29.5 & 43.0 \\
 &  & Linear threshold  & 6.4 & 15.0 & 23.0 & 11.5 & 26.1 & 38.7 & 24.6 & 46.7 & 60.8 \\
 &  & Quadratic  & 9.5 & 24.4 & 36.9 & 23.5 & 48.0 & 59.2 & 54.8 & 77.7 & 86.8 \\
 &  & Quadratic exponential  & 32.1 & 58.7 & 67.6 & 70.2 & 87.5 & 91.3 & 98.2 & 99.9 & 100.0 \\
 &  & Quadratic bump  & 63.1 & 82.3 & 89.6 & 95.2 & 99.0 & 99.6 & 100.0 & 100.0 & 100.0 \\
\addlinespace[0.20em]
 & \multirow{7}{*}{5} & Constant  & 1.4 & 5.9 & 9.9 & 0.9 & 4.9 & 9.5 & 1.5 & 6.2 & 11.0 \\
 &  & Linear  & 1.1 & 5.4 & 10.6 & 0.8 & 5.8 & 10.9 & 0.7 & 4.6 & 9.3 \\
 &  & Cubic  & 2.2 & 10.1 & 18.0 & 4.1 & 16.5 & 29.0 & 8.6 & 27.1 & 40.5 \\
 &  & Linear threshold  & 5.5 & 16.0 & 24.6 & 9.1 & 25.5 & 37.5 & 27.5 & 50.5 & 63.4 \\
 &  & Quadratic  & 9.8 & 25.5 & 37.7 & 21.7 & 45.9 & 60.6 & 55.4 & 77.6 & 86.6 \\
 &  & Quadratic exponential  & 30.8 & 54.4 & 65.8 & 69.2 & 88.3 & 93.3 & 96.6 & 99.8 & 100.0 \\
 &  & Quadratic bump  & 63.6 & 85.5 & 92.0 & 94.8 & 99.4 & 99.8 & 100.0 & 100.0 & 100.0 \\
\addlinespace[0.20em]
 & \multirow{7}{*}{10} & Constant  & 0.9 & 5.5 & 11.1 & 1.1 & 6.4 & 12.1 & 0.9 & 5.8 & 11.0 \\
 &  & Linear  & 0.6 & 5.3 & 11.2 & 1.1 & 4.7 & 9.0 & 1.7 & 6.3 & 11.2 \\
 &  & Cubic  & 2.7 & 10.2 & 17.0 & 4.5 & 17.8 & 27.5 & 9.8 & 28.8 & 42.9 \\
 &  & Linear threshold  & 3.1 & 11.6 & 22.3 & 11.8 & 30.1 & 41.1 & 25.7 & 49.3 & 62.6 \\
 &  & Quadratic  & 9.9 & 25.3 & 37.2 & 20.7 & 46.8 & 60.0 & 55.5 & 76.5 & 85.2 \\
 &  & Quadratic exponential  & 30.0 & 55.7 & 67.5 & 72.0 & 88.0 & 92.8 & 98.0 & 99.9 & 100.0 \\
 &  & Quadratic bump  & 61.4 & 82.3 & 90.2 & 95.6 & 98.9 & 99.4 & 100.0 & 100.0 & 100.0 \\
\midrule
\multirow{21}{*}{Threshold} & \multirow{7}{*}{3} & Constant  & 1.1 & 4.8 & 8.9 & 1.5 & 7.3 & 13.6 & 1.5 & 6.0 & 10.2 \\
 &  & Linear  & 1.2 & 5.8 & 11.4 & 0.9 & 5.4 & 9.8 & 1.7 & 5.4 & 10.5 \\
 &  & Cubic  & 1.9 & 9.6 & 17.0 & 3.5 & 14.3 & 23.8 & 6.8 & 25.2 & 39.3 \\
 &  & Linear threshold  & 6.1 & 16.9 & 24.6 & 9.0 & 24.9 & 38.4 & 23.2 & 46.3 & 58.8 \\
 &  & Quadratic  & 9.2 & 22.0 & 32.8 & 20.9 & 40.2 & 53.2 & 44.0 & 67.6 & 79.3 \\
 &  & Quadratic exponential  & 25.8 & 51.1 & 62.2 & 57.7 & 81.6 & 89.4 & 93.6 & 98.9 & 99.5 \\
 &  & Quadratic bump  & 54.3 & 76.0 & 85.5 & 89.4 & 98.5 & 99.5 & 99.9 & 99.9 & 100.0 \\
\addlinespace[0.20em]
 & \multirow{7}{*}{5} & Constant  & 1.5 & 6.2 & 12.1 & 1.2 & 6.4 & 13.0 & 1.5 & 5.7 & 9.8 \\
 &  & Linear  & 0.7 & 5.4 & 10.5 & 1.2 & 5.8 & 10.4 & 1.1 & 5.6 & 9.9 \\
 &  & Cubic  & 1.6 & 8.0 & 14.8 & 4.0 & 13.2 & 23.4 & 7.8 & 23.4 & 34.8 \\
 &  & Linear threshold  & 2.7 & 10.7 & 18.7 & 8.0 & 22.1 & 33.9 & 19.3 & 43.2 & 54.3 \\
 &  & Quadratic  & 6.2 & 21.0 & 32.7 & 17.8 & 39.3 & 51.3 & 43.9 & 68.4 & 79.0 \\
 &  & Quadratic exponential  & 17.8 & 37.8 & 49.9 & 56.6 & 80.6 & 87.9 & 93.3 & 98.6 & 99.4 \\
 &  & Quadratic bump  & 48.8 & 73.0 & 81.5 & 88.9 & 97.3 & 98.6 & 99.9 & 100.0 & 100.0 \\
\addlinespace[0.20em]
 & \multirow{7}{*}{10} & Constant  & 0.9 & 3.9 & 8.7 & 1.3 & 6.4 & 11.4 & 1.3 & 6.0 & 9.8 \\
 &  & Linear  & 0.9 & 4.7 & 10.5 & 0.8 & 4.2 & 8.6 & 1.4 & 6.1 & 12.5 \\
 &  & Cubic  & 2.1 & 8.3 & 16.1 & 2.4 & 12.0 & 21.1 & 8.0 & 22.4 & 37.3 \\
 &  & Linear threshold  & 4.1 & 12.8 & 20.0 & 8.0 & 22.7 & 32.8 & 19.8 & 41.7 & 53.6 \\
 &  & Quadratic  & 7.8 & 17.0 & 28.4 & 20.2 & 41.3 & 53.5 & 46.3 & 70.2 & 78.9 \\
 &  & Quadratic exponential  & 19.8 & 42.7 & 55.5 & 57.5 & 79.0 & 86.9 & 93.9 & 98.6 & 99.2 \\
 &  & Quadratic bump  & 45.0 & 68.8 & 79.4 & 89.3 & 98.0 & 98.9 & 99.6 & 100.0 & 100.0 \\
\midrule
\multirow{21}{*}{Fractional} & \multirow{7}{*}{3} & Constant  & 0.7 & 3.9 & 10.0 & 1.1 & 4.5 & 10.1 & 1.1 & 4.8 & 10.1 \\
 &  & Linear  & 0.9 & 4.9 & 11.6 & 0.9 & 4.6 & 9.9 & 0.9 & 6.0 & 10.7 \\
 &  & Cubic  & 3.8 & 11.8 & 20.1 & 5.5 & 19.7 & 31.3 & 14.6 & 38.0 & 52.2 \\
 &  & Linear threshold  & 11.0 & 25.8 & 38.2 & 22.1 & 45.5 & 58.2 & 51.2 & 75.5 & 85.2 \\
 &  & Quadratic  & 9.1 & 26.8 & 39.7 & 27.5 & 52.1 & 64.1 & 60.1 & 81.2 & 88.2 \\
 &  & Quadratic exponential  & 31.3 & 57.6 & 69.3 & 72.8 & 89.3 & 94.5 & 97.7 & 99.6 & 99.9 \\
 &  & Quadratic bump  & 98.5 & 99.6 & 99.8 & 100.0 & 100.0 & 100.0 & 100.0 & 100.0 & 100.0 \\
\addlinespace[0.20em]
 & \multirow{7}{*}{5} & Constant  & 0.9 & 5.7 & 9.8 & 0.8 & 5.5 & 10.0 & 0.9 & 4.3 & 9.8 \\
 &  & Linear  & 0.7 & 4.6 & 8.7 & 1.1 & 6.2 & 11.8 & 0.8 & 5.3 & 10.9 \\
 &  & Cubic  & 2.5 & 10.4 & 20.1 & 5.6 & 17.9 & 27.2 & 16.7 & 38.9 & 53.3 \\
 &  & Linear threshold  & 11.3 & 26.4 & 38.2 & 22.0 & 45.4 & 60.8 & 48.8 & 72.5 & 82.4 \\
 &  & Quadratic  & 9.6 & 25.5 & 36.1 & 24.3 & 46.9 & 58.7 & 59.6 & 83.0 & 90.2 \\
 &  & Quadratic exponential  & 34.6 & 59.3 & 71.3 & 73.1 & 90.1 & 94.5 & 98.2 & 99.5 & 99.9 \\
 &  & Quadratic bump  & 98.4 & 99.5 & 99.6 & 100.0 & 100.0 & 100.0 & 100.0 & 100.0 & 100.0 \\
\addlinespace[0.20em]
 & \multirow{7}{*}{10} & Constant  & 1.8 & 5.9 & 11.2 & 1.0 & 6.1 & 11.3 & 1.9 & 6.5 & 11.1 \\
 &  & Linear  & 1.4 & 5.5 & 10.4 & 1.8 & 6.4 & 11.7 & 1.8 & 5.7 & 10.6 \\
 &  & Cubic  & 3.1 & 11.2 & 19.3 & 5.8 & 17.9 & 28.0 & 13.6 & 33.2 & 47.8 \\
 &  & Linear threshold  & 10.6 & 24.5 & 36.5 & 21.6 & 43.6 & 56.5 & 46.8 & 73.6 & 83.5 \\
 &  & Quadratic  & 7.8 & 21.5 & 32.1 & 21.8 & 45.3 & 58.6 & 53.8 & 77.0 & 86.6 \\
 &  & Quadratic exponential  & 34.1 & 58.4 & 71.4 & 71.0 & 89.6 & 95.1 & 97.1 & 99.4 & 99.8 \\
 &  & Quadratic bump  & 96.9 & 99.0 & 99.5 & 100.0 & 100.0 & 100.0 & 100.0 & 100.0 & 100.0 \\
\bottomrule
\end{tabular}
\begin{tablenotes}
\footnotesize
\item Notes: Rejection frequencies are reported in percent. The nominal levels are 1\%, 5\%, and 10\%. The CATE forms are plotted in Figure \ref{fig:cate-forms-parametric}.
\end{tablenotes}
\end{threeparttable}
\end{table}

\begin{table}[!htbp]
\centering
\scriptsize
\setlength{\tabcolsep}{5.0pt}
\renewcommand{\arraystretch}{0.96}
\begin{threeparttable}
\caption{Rejection frequencies for the CvM quadratic-form test. Constant, Linear and Quadratic correspond to size, and the remaining CATE forms to power.}
\label{tab:quad-form-cvm}
\begin{tabular}{ll l *{9}{c}}
\toprule
& & & \multicolumn{3}{c}{$n=500$} & \multicolumn{3}{c}{$n=1000$} & \multicolumn{3}{c}{$n=2000$} \\
\cmidrule(lr){4-6}\cmidrule(lr){7-9}\cmidrule(lr){10-12}
Family & $d$ & CATE form & 1\% & 5\% & 10\% & 1\% & 5\% & 10\% & 1\% & 5\% & 10\% \\
\midrule
\multirow{21}{*}{Smooth logit} & \multirow{7}{*}{3} & Constant  & 0.6 & 4.2 & 9.1 & 1.2 & 5.4 & 10.1 & 1.0 & 5.9 & 10.9 \\
 &  & Linear  & 1.0 & 4.8 & 10.3 & 0.9 & 4.7 & 9.9 & 0.6 & 5.2 & 9.8 \\
 &  & Cubic  & 7.3 & 18.2 & 29.5 & 12.6 & 30.1 & 41.5 & 33.7 & 56.1 & 67.0 \\
 &  & Linear threshold  & 0.7 & 4.8 & 10.2 & 0.9 & 5.4 & 11.5 & 1.2 & 5.9 & 11.2 \\
 &  & Quadratic  & 0.5 & 4.5 & 8.6 & 0.9 & 4.9 & 10.0 & 0.7 & 4.3 & 10.2 \\
 &  & Quadratic exponential  & 3.5 & 12.7 & 19.4 & 5.6 & 16.6 & 26.8 & 14.4 & 30.6 & 41.5 \\
 &  & Quadratic bump  & 8.4 & 23.5 & 35.8 & 23.0 & 49.0 & 62.8 & 63.0 & 86.3 & 92.9 \\
\addlinespace[0.20em]
 & \multirow{7}{*}{5} & Constant  & 0.8 & 4.7 & 10.2 & 0.7 & 4.6 & 9.3 & 1.5 & 5.9 & 10.3 \\
 &  & Linear  & 0.6 & 5.4 & 9.6 & 1.0 & 5.1 & 11.5 & 0.9 & 4.7 & 8.2 \\
 &  & Cubic  & 6.1 & 15.6 & 24.0 & 14.3 & 33.2 & 44.5 & 33.7 & 57.0 & 69.0 \\
 &  & Linear threshold  & 0.7 & 5.1 & 9.4 & 1.5 & 6.1 & 11.7 & 1.9 & 8.8 & 15.5 \\
 &  & Quadratic  & 1.3 & 4.8 & 10.1 & 1.1 & 6.1 & 12.8 & 0.7 & 4.4 & 10.1 \\
 &  & Quadratic exponential  & 4.0 & 11.5 & 19.5 & 6.4 & 18.4 & 27.8 & 13.1 & 31.3 & 43.5 \\
 &  & Quadratic bump  & 8.9 & 27.3 & 39.3 & 22.7 & 49.6 & 64.0 & 62.9 & 88.1 & 93.1 \\
\addlinespace[0.20em]
 & \multirow{7}{*}{10} & Constant  & 0.4 & 4.9 & 10.0 & 0.9 & 4.1 & 9.4 & 1.4 & 5.1 & 11.6 \\
 &  & Linear  & 0.7 & 5.4 & 10.7 & 0.7 & 4.3 & 8.8 & 1.8 & 6.0 & 11.1 \\
 &  & Cubic  & 6.2 & 16.8 & 26.0 & 15.3 & 33.3 & 42.1 & 33.5 & 55.8 & 67.8 \\
 &  & Linear threshold  & 0.3 & 3.7 & 8.8 & 1.3 & 6.3 & 12.4 & 1.7 & 5.4 & 12.8 \\
 &  & Quadratic  & 1.1 & 4.4 & 10.1 & 0.5 & 4.3 & 9.6 & 0.8 & 4.8 & 9.4 \\
 &  & Quadratic exponential  & 2.6 & 10.8 & 16.9 & 7.1 & 19.0 & 28.1 & 14.8 & 32.6 & 42.7 \\
 &  & Quadratic bump  & 6.9 & 24.8 & 37.9 & 24.4 & 50.4 & 65.0 & 62.6 & 86.5 & 93.2 \\
\midrule
\multirow{21}{*}{Threshold} & \multirow{7}{*}{3} & Constant  & 0.9 & 5.1 & 9.4 & 1.4 & 6.1 & 10.9 & 1.2 & 4.5 & 9.5 \\
 &  & Linear  & 0.3 & 5.2 & 10.4 & 1.1 & 5.7 & 10.6 & 1.3 & 5.1 & 9.9 \\
 &  & Cubic  & 4.7 & 15.2 & 22.6 & 10.9 & 27.1 & 38.3 & 28.4 & 52.6 & 64.2 \\
 &  & Linear threshold  & 1.6 & 7.3 & 12.1 & 1.1 & 6.2 & 11.3 & 1.4 & 5.6 & 11.4 \\
 &  & Quadratic  & 0.9 & 5.6 & 10.7 & 1.3 & 4.8 & 9.5 & 0.6 & 4.5 & 9.7 \\
 &  & Quadratic exponential  & 2.0 & 10.2 & 18.7 & 5.5 & 14.5 & 25.0 & 8.3 & 24.1 & 34.8 \\
 &  & Quadratic bump  & 6.2 & 23.2 & 36.1 & 19.6 & 43.1 & 58.5 & 50.7 & 77.2 & 86.9 \\
\addlinespace[0.20em]
 & \multirow{7}{*}{5} & Constant  & 1.0 & 4.7 & 10.5 & 1.6 & 6.6 & 12.1 & 0.9 & 5.2 & 9.8 \\
 &  & Linear  & 0.4 & 3.8 & 8.6 & 0.9 & 4.0 & 9.0 & 0.7 & 4.2 & 9.7 \\
 &  & Cubic  & 3.9 & 11.4 & 17.9 & 11.3 & 25.7 & 36.1 & 26.1 & 48.6 & 60.6 \\
 &  & Linear threshold  & 0.7 & 4.8 & 9.8 & 1.2 & 6.0 & 11.4 & 1.3 & 6.0 & 11.5 \\
 &  & Quadratic  & 1.1 & 3.8 & 9.1 & 0.9 & 5.2 & 11.3 & 0.5 & 4.4 & 9.3 \\
 &  & Quadratic exponential  & 2.2 & 9.1 & 15.1 & 5.4 & 15.4 & 22.9 & 9.5 & 24.2 & 34.2 \\
 &  & Quadratic bump  & 5.0 & 18.2 & 30.2 & 16.9 & 40.3 & 55.0 & 52.8 & 79.7 & 89.6 \\
\addlinespace[0.20em]
 & \multirow{7}{*}{10} & Constant  & 0.4 & 3.9 & 8.8 & 0.7 & 5.0 & 10.7 & 0.7 & 5.1 & 10.0 \\
 &  & Linear  & 0.6 & 3.8 & 8.5 & 1.3 & 5.2 & 8.8 & 1.2 & 6.1 & 12.2 \\
 &  & Cubic  & 4.0 & 12.2 & 19.8 & 9.6 & 25.6 & 36.1 & 27.7 & 49.6 & 61.2 \\
 &  & Linear threshold  & 0.6 & 4.3 & 10.3 & 1.4 & 5.8 & 10.9 & 1.0 & 6.0 & 12.6 \\
 &  & Quadratic  & 0.9 & 4.4 & 9.9 & 0.8 & 3.3 & 8.4 & 0.8 & 5.2 & 10.2 \\
 &  & Quadratic exponential  & 2.7 & 8.6 & 15.7 & 5.3 & 15.1 & 21.7 & 11.6 & 24.1 & 34.3 \\
 &  & Quadratic bump  & 4.4 & 17.2 & 28.5 & 15.7 & 41.4 & 56.1 & 50.3 & 76.2 & 86.2 \\
\midrule
\multirow{21}{*}{Fractional} & \multirow{7}{*}{3} & Constant  & 0.7 & 3.3 & 8.3 & 0.8 & 4.3 & 9.8 & 1.0 & 5.6 & 10.8 \\
 &  & Linear  & 0.9 & 4.4 & 9.2 & 0.9 & 5.2 & 9.4 & 0.6 & 4.7 & 10.2 \\
 &  & Cubic  & 6.3 & 18.1 & 28.8 & 13.7 & 33.3 & 43.6 & 37.9 & 61.4 & 73.6 \\
 &  & Linear threshold  & 1.4 & 6.4 & 11.7 & 1.6 & 8.2 & 13.7 & 2.2 & 8.9 & 14.5 \\
 &  & Quadratic  & 0.5 & 6.1 & 10.7 & 1.4 & 4.8 & 9.9 & 1.3 & 5.5 & 10.1 \\
 &  & Quadratic exponential  & 3.3 & 10.1 & 17.0 & 5.5 & 17.6 & 27.0 & 13.6 & 30.5 & 40.9 \\
 &  & Quadratic bump  & 39.3 & 64.2 & 75.7 & 83.4 & 94.2 & 96.4 & 99.3 & 99.9 & 100.0 \\
\addlinespace[0.20em]
 & \multirow{7}{*}{5} & Constant  & 1.1 & 5.1 & 9.7 & 1.2 & 5.2 & 10.6 & 0.7 & 4.6 & 9.1 \\
 &  & Linear  & 0.5 & 5.2 & 10.5 & 1.4 & 4.5 & 9.3 & 0.3 & 4.5 & 9.6 \\
 &  & Cubic  & 6.6 & 17.5 & 26.8 & 14.2 & 32.8 & 43.8 & 37.2 & 60.1 & 72.0 \\
 &  & Linear threshold  & 1.4 & 6.5 & 12.6 & 0.9 & 5.9 & 12.7 & 0.9 & 8.3 & 14.9 \\
 &  & Quadratic  & 1.0 & 5.4 & 10.6 & 1.3 & 4.3 & 9.3 & 1.0 & 5.1 & 9.1 \\
 &  & Quadratic exponential  & 3.2 & 13.2 & 21.8 & 5.9 & 16.8 & 26.5 & 13.5 & 31.0 & 43.1 \\
 &  & Quadratic bump  & 35.9 & 64.7 & 75.2 & 80.1 & 93.4 & 96.6 & 99.8 & 100.0 & 100.0 \\
\addlinespace[0.20em]
 & \multirow{7}{*}{10} & Constant  & 1.2 & 4.5 & 9.4 & 0.8 & 5.4 & 10.1 & 1.4 & 5.1 & 11.1 \\
 &  & Linear  & 1.6 & 4.8 & 10.1 & 1.1 & 5.1 & 10.7 & 1.4 & 4.4 & 9.8 \\
 &  & Cubic  & 6.7 & 17.6 & 28.5 & 15.7 & 31.6 & 42.9 & 37.8 & 59.5 & 70.9 \\
 &  & Linear threshold  & 0.8 & 6.5 & 11.6 & 1.4 & 6.9 & 12.0 & 1.7 & 8.2 & 16.0 \\
 &  & Quadratic  & 0.7 & 4.6 & 10.6 & 0.5 & 5.1 & 8.9 & 1.3 & 5.4 & 10.0 \\
 &  & Quadratic exponential  & 3.3 & 10.4 & 17.5 & 6.2 & 16.2 & 27.4 & 11.7 & 29.0 & 40.4 \\
 &  & Quadratic bump  & 32.4 & 58.7 & 72.0 & 76.8 & 92.6 & 96.9 & 99.4 & 100.0 & 100.0 \\
\bottomrule
\end{tabular}
\begin{tablenotes}
\footnotesize
\item Notes: Rejection frequencies are reported in percent. The nominal levels are 1\%, 5\%, and 10\%. The CATE forms are plotted in Figure \ref{fig:cate-forms-parametric}.
\end{tablenotes}
\end{threeparttable}
\end{table}

\begin{table}[!htbp]
\centering
\scriptsize
\setlength{\tabcolsep}{5.0pt}
\renewcommand{\arraystretch}{0.96}
\begin{threeparttable}
\caption{Rejection frequencies for the KS quadratic-form test. Constant, Linear and Quadratic correspond to size, and the remaining CATE forms to power.}
\label{tab:linear-form-ks}
\begin{tabular}{ll l *{9}{c}}
\toprule
& & & \multicolumn{3}{c}{$n=500$} & \multicolumn{3}{c}{$n=1000$} & \multicolumn{3}{c}{$n=2000$} \\
\cmidrule(lr){4-6}\cmidrule(lr){7-9}\cmidrule(lr){10-12}
Family & $d$ & CATE form & 1\% & 5\% & 10\% & 1\% & 5\% & 10\% & 1\% & 5\% & 10\% \\
\midrule
\multirow{21}{*}{Smooth logit} & \multirow{7}{*}{3} & Constant  & 0.5 & 3.7 & 8.2 & 1.7 & 5.4 & 10.7 & 1.3 & 5.3 & 10.8 \\
 &  & Linear  & 1.2 & 5.6 & 10.7 & 0.8 & 4.6 & 9.7 & 1.0 & 4.7 & 9.2 \\
 &  & Cubic  & 4.6 & 14.8 & 23.7 & 8.0 & 21.9 & 34.5 & 19.7 & 43.4 & 55.1 \\
 &  & Linear threshold  & 1.4 & 5.4 & 10.5 & 0.9 & 4.2 & 9.5 & 1.3 & 4.9 & 11.2 \\
 &  & Quadratic  & 0.7 & 4.3 & 8.5 & 0.6 & 4.0 & 9.5 & 1.0 & 6.0 & 11.5 \\
 &  & Quadratic exponential  & 3.0 & 10.4 & 16.7 & 3.5 & 13.0 & 21.1 & 9.5 & 22.9 & 33.2 \\
 &  & Quadratic bump  & 8.2 & 20.7 & 33.1 & 17.9 & 40.4 & 53.2 & 51.1 & 78.4 & 88.0 \\
\addlinespace[0.20em]
 & \multirow{7}{*}{5} & Constant  & 1.2 & 4.2 & 9.8 & 0.9 & 4.7 & 9.7 & 1.0 & 5.5 & 11.2 \\
 &  & Linear  & 1.0 & 5.3 & 9.5 & 1.4 & 6.4 & 11.0 & 0.8 & 3.9 & 9.2 \\
 &  & Cubic  & 3.2 & 12.6 & 19.4 & 8.4 & 25.3 & 36.4 & 20.9 & 42.6 & 56.5 \\
 &  & Linear threshold  & 0.8 & 4.2 & 10.1 & 1.8 & 7.1 & 12.7 & 2.3 & 8.7 & 13.8 \\
 &  & Quadratic  & 0.9 & 4.8 & 10.3 & 1.1 & 4.3 & 10.6 & 1.3 & 6.6 & 11.5 \\
 &  & Quadratic exponential  & 2.2 & 8.9 & 16.6 & 4.0 & 12.2 & 21.2 & 8.1 & 21.1 & 32.5 \\
 &  & Quadratic bump  & 7.9 & 26.4 & 36.9 & 19.0 & 43.6 & 57.3 & 50.0 & 79.2 & 89.2 \\
\addlinespace[0.20em]
 & \multirow{7}{*}{10} & Constant  & 0.5 & 5.3 & 10.6 & 1.1 & 3.8 & 8.8 & 0.9 & 5.6 & 11.4 \\
 &  & Linear  & 1.3 & 5.5 & 10.5 & 1.2 & 5.0 & 9.2 & 1.0 & 5.7 & 12.1 \\
 &  & Cubic  & 3.9 & 13.4 & 22.3 & 7.9 & 26.2 & 36.6 & 19.3 & 42.8 & 55.7 \\
 &  & Linear threshold  & 0.7 & 4.7 & 9.0 & 1.3 & 6.4 & 11.5 & 1.4 & 7.8 & 13.9 \\
 &  & Quadratic  & 1.2 & 4.2 & 10.3 & 0.5 & 4.1 & 8.6 & 0.8 & 4.5 & 8.8 \\
 &  & Quadratic exponential  & 1.6 & 8.2 & 17.2 & 4.0 & 15.1 & 21.7 & 8.5 & 24.1 & 36.4 \\
 &  & Quadratic bump  & 6.8 & 20.0 & 31.8 & 20.2 & 42.1 & 55.9 & 53.9 & 78.2 & 88.7 \\
\midrule
\multirow{21}{*}{Threshold} & \multirow{7}{*}{3} & Constant  & 1.3 & 5.0 & 9.5 & 1.7 & 6.9 & 11.5 & 1.3 & 4.6 & 9.6 \\
 &  & Linear  & 0.7 & 4.9 & 10.4 & 1.1 & 5.9 & 11.4 & 1.4 & 5.1 & 10.5 \\
 &  & Cubic  & 2.7 & 12.1 & 21.5 & 7.1 & 20.2 & 30.6 & 17.1 & 38.7 & 52.8 \\
 &  & Linear threshold  & 2.1 & 6.5 & 10.9 & 1.3 & 5.8 & 13.5 & 1.3 & 5.9 & 11.4 \\
 &  & Quadratic  & 0.7 & 6.3 & 11.6 & 0.7 & 4.6 & 8.7 & 1.3 & 4.8 & 10.2 \\
 &  & Quadratic exponential  & 1.3 & 8.5 & 15.4 & 4.1 & 12.8 & 20.1 & 7.3 & 18.8 & 28.2 \\
 &  & Quadratic bump  & 6.4 & 18.7 & 31.2 & 14.4 & 36.7 & 52.0 & 40.4 & 67.9 & 80.9 \\
\addlinespace[0.20em]
 & \multirow{7}{*}{5} & Constant  & 1.2 & 4.5 & 10.1 & 1.7 & 5.2 & 11.4 & 0.4 & 5.4 & 9.9 \\
 &  & Linear  & 0.5 & 3.7 & 8.0 & 0.8 & 4.8 & 9.9 & 0.8 & 4.2 & 10.2 \\
 &  & Cubic  & 2.7 & 10.5 & 16.6 & 8.0 & 21.0 & 31.4 & 15.0 & 33.2 & 48.5 \\
 &  & Linear threshold  & 0.6 & 4.2 & 9.2 & 1.6 & 6.8 & 11.5 & 1.1 & 6.0 & 11.4 \\
 &  & Quadratic  & 0.9 & 4.4 & 9.5 & 0.7 & 5.9 & 11.4 & 0.7 & 4.8 & 9.2 \\
 &  & Quadratic exponential  & 1.5 & 7.4 & 13.6 & 3.9 & 13.3 & 21.4 & 5.9 & 18.6 & 29.8 \\
 &  & Quadratic bump  & 4.8 & 16.8 & 26.9 & 13.5 & 34.6 & 49.5 & 40.2 & 69.9 & 81.6 \\
\addlinespace[0.20em]
 & \multirow{7}{*}{10} & Constant  & 0.7 & 4.2 & 10.3 & 1.2 & 5.0 & 9.8 & 0.5 & 5.6 & 10.2 \\
 &  & Linear  & 0.7 & 5.4 & 10.4 & 1.1 & 5.8 & 10.5 & 1.0 & 6.3 & 11.8 \\
 &  & Cubic  & 3.2 & 10.9 & 17.2 & 6.0 & 19.8 & 29.3 & 14.2 & 35.7 & 50.3 \\
 &  & Linear threshold  & 0.7 & 4.3 & 9.8 & 1.2 & 5.6 & 11.7 & 0.7 & 5.8 & 11.4 \\
 &  & Quadratic  & 1.0 & 4.3 & 10.1 & 1.0 & 3.7 & 8.3 & 1.0 & 4.6 & 8.9 \\
 &  & Quadratic exponential  & 1.7 & 8.6 & 15.6 & 3.5 & 12.0 & 18.2 & 7.7 & 18.7 & 29.3 \\
 &  & Quadratic bump  & 4.1 & 16.0 & 25.8 & 12.0 & 32.5 & 48.0 & 39.0 & 68.9 & 79.9 \\
\midrule
\multirow{21}{*}{Fractional} & \multirow{7}{*}{3} & Constant  & 0.9 & 4.9 & 10.4 & 1.2 & 4.6 & 9.3 & 0.9 & 4.6 & 11.0 \\
 &  & Linear  & 1.3 & 5.8 & 11.5 & 0.6 & 4.2 & 9.8 & 1.1 & 4.7 & 9.4 \\
 &  & Cubic  & 5.1 & 15.4 & 26.0 & 10.5 & 25.9 & 36.7 & 25.9 & 48.5 & 61.6 \\
 &  & Linear threshold  & 1.7 & 6.4 & 11.6 & 1.6 & 7.2 & 14.1 & 2.1 & 7.8 & 15.6 \\
 &  & Quadratic  & 0.9 & 5.2 & 11.3 & 1.7 & 5.3 & 11.0 & 1.0 & 5.2 & 10.8 \\
 &  & Quadratic exponential  & 2.2 & 7.8 & 15.0 & 4.4 & 13.9 & 22.8 & 8.8 & 24.1 & 33.9 \\
 &  & Quadratic bump  & 24.3 & 52.0 & 66.8 & 62.7 & 86.6 & 92.6 & 95.5 & 99.4 & 100.0 \\
\addlinespace[0.20em]
 & \multirow{7}{*}{5} & Constant  & 1.2 & 5.2 & 11.0 & 0.6 & 4.9 & 8.9 & 0.5 & 4.8 & 9.2 \\
 &  & Linear  & 0.8 & 3.8 & 10.7 & 0.9 & 4.2 & 10.1 & 0.9 & 5.4 & 10.3 \\
 &  & Cubic  & 4.2 & 14.2 & 23.0 & 8.1 & 24.0 & 36.7 & 24.8 & 47.7 & 60.9 \\
 &  & Linear threshold  & 1.4 & 6.5 & 12.6 & 1.1 & 7.3 & 13.8 & 1.2 & 7.7 & 15.0 \\
 &  & Quadratic  & 1.0 & 5.8 & 11.1 & 0.9 & 5.2 & 11.2 & 0.9 & 4.0 & 9.0 \\
 &  & Quadratic exponential  & 3.3 & 11.0 & 21.5 & 3.8 & 14.4 & 24.9 & 9.1 & 24.5 & 35.6 \\
 &  & Quadratic bump  & 23.2 & 52.6 & 65.8 & 62.8 & 86.1 & 92.0 & 95.2 & 99.9 & 100.0 \\
\addlinespace[0.20em]
 & \multirow{7}{*}{10} & Constant  & 1.3 & 5.1 & 9.7 & 1.2 & 5.4 & 9.8 & 1.7 & 5.2 & 10.6 \\
 &  & Linear  & 0.7 & 4.9 & 10.3 & 1.5 & 6.0 & 10.7 & 1.3 & 5.5 & 10.9 \\
 &  & Cubic  & 4.6 & 14.8 & 22.8 & 9.2 & 25.0 & 36.2 & 20.5 & 45.6 & 58.0 \\
 &  & Linear threshold  & 0.8 & 6.5 & 11.5 & 1.7 & 7.4 & 12.6 & 1.2 & 7.8 & 15.6 \\
 &  & Quadratic  & 1.1 & 4.6 & 10.1 & 1.0 & 4.5 & 9.2 & 1.0 & 4.8 & 10.3 \\
 &  & Quadratic exponential  & 2.7 & 10.3 & 18.1 & 4.5 & 13.8 & 23.4 & 7.2 & 20.5 & 33.7 \\
 &  & Quadratic bump  & 21.4 & 46.6 & 61.9 & 58.2 & 84.3 & 91.7 & 94.4 & 99.5 & 99.9 \\
\bottomrule
\end{tabular}
\begin{tablenotes}
\footnotesize
\item Notes: Rejection frequencies are reported in percent. The nominal levels are 1\%, 5\%, and 10\%. The CATE forms are plotted in Figure \ref{fig:cate-forms-parametric}.
\end{tablenotes}
\end{threeparttable}
\end{table}

\subsection{Additional Simulations for CLATE Heterogeneity Tests}
\label{sec:app-late}

This subsection reports additional Monte Carlo evidence for the CLATE heterogeneity test discussed in Section \ref{sec:ex-clate}. The null hypothesis is that the conditional local average treatment effect
\[
\tau_L(z)=\E[Y(1)-Y(0)\mid \mathcal C,Z=z]
\]
is constant in the covariate of interest \(Z\). As in the main simulation study, we set \(Z=X_1\), consider \(r\in\{3,5,10\}\), and report empirical rejection frequencies at the 1\%, 5\%, and 10\% nominal levels. We consider sample sizes \(n\in\{500,1000,2000\}\). All results are based on \(1{,}000\) Monte Carlo replications, and bootstrap \(p\)-values are computed using \(B=2{,}000\) multiplier replications.

The DGPs are constructed to satisfy the conditional LATE assumptions while allowing treatment to be endogenous. In all designs, \(A\in\{0,1\}\) is a binary instrument generated from \(q(X)=\mP(A=1\mid X)\). To generate the potential treatment statuses, let \(U_D\sim \mathrm{Unif}[0,1]\) be independent of \((A,\varepsilon)\) conditional on \(X\). Given two functions \(\pi_0(X)\) and \(\pi_C(X)\), define
\[
\pi_1(X):=\min\{\pi_0(X)+\pi_C(X),0.95\},
\qquad
D(0):=\1\{U_D<\pi_0(X)\},
\qquad
D(1):=\1\{U_D<\pi_1(X)\}.
\]
Because \(\pi_1(X)\ge \pi_0(X)\), monotonicity \(D(1)\ge D(0)\) holds by construction. The complier group is
\[
\mathcal C=\{D(1)>D(0)\},
\]
and the observed treatment is generated as
\[
D=AD(1)+(1-A)D(0).
\]
Thus, the instrument affects treatment take-up through the shift from \(D(0)\) to \(D(1)\), and the magnitude of the first stage is governed by \(\pi_C(X)\).

Endogeneity is introduced by allowing the same latent variable \(U_D\) that determines treatment take-up to enter the untreated potential outcome:
\[
Y(0)=\mu_0(X)+\rho(U_D-0.5)+\varepsilon.
\]
Hence treatment is generally not unconfounded conditional on \(X\). The treated potential outcome is \(Y(1)=Y(0)+\tau_L(X_1)\), and the observed outcome is \(Y=DY(1)+(1-D)Y(0)\). Since \(A\) is generated independently of \((U_D,\varepsilon)\) conditional on \(X\), the conditional IV independence condition holds.

We consider the following three designs.

\paragraph{Case I: Smooth IV logit design.}
The covariates satisfy \(X_j\stackrel{iid}{\sim}\mathrm{Unif}[0,1]\), \(j=1,\ldots,r\), and \(U_j=X_j-0.5\). The instrument propensity score is
\[
q(X)=0.15+0.70\Lambda\{q_{\mathrm{index}}(X)\}.
\]
For \(r=3\),
\[
q_{\mathrm{index}}(X)
=
-0.1+0.8U_1+0.6U_2+0.4U_3+0.5U_1U_2,
\]
whereas for \(r>3\), the additional term \(0.3\bar U\) is included, with
\[
\bar U=(r-3)^{-1}\sum_{j=4}^r(X_j-0.5).
\]
The baseline outcome is \(\mu_0(X)=0.5U_2+0.3U_2U_3+0.2U_1^2\) for \(r=3\), and additionally includes \(0.15\bar U\) for \(r>3\). The first-stage components are smooth functions:
\[
\pi_0(X)=0.10+0.25\Lambda\{b_{\mathrm{index}}(X)\},
\qquad
\pi_C(X)=0.15+0.25\Lambda\{c_{\mathrm{index}}(X)\},
\]
where, for \(r=3\),
\[
b_{\mathrm{index}}(X)=-0.3+0.6U_1+0.4U_2+0.3U_3,
\qquad
c_{\mathrm{index}}(X)=0.2+0.5U_1-0.3U_2+0.4U_1^2,
\]
and both indices additionally include \(0.2\bar U\) for \(r>3\). Under the alternative,
\[
\tau_L(X_1)=\tau_0+(X_1-0.5).
\]

\paragraph{Case II: Threshold IV design.}
The covariates again satisfy \(X_j\stackrel{iid}{\sim}\mathrm{Unif}[0,1]\). The instrument propensity score is
\[
q(X)=0.10+0.80S(X),
\]
where \(S(X)=(X_1+X_2+X_3)/3\) for \(r=3\), and
\[
S(X)=0.8(X_1+X_2+X_3)/3+0.2(r-3)^{-1}\sum_{j=4}^r X_j
\]
for \(r>3\). The baseline outcome is \(\mu_0(X)=0.5(X_2-0.5)+0.5(X_3-0.5)\) for \(r=3\), with an additional term \(0.15(r-3)^{-1}\sum_{j=4}^r(X_j-0.5)\) when \(r>3\). The first-stage components are
\[
\pi_0(X)=0.10+0.20S(X),
\qquad
\pi_C(X)=0.15+0.25(0.6X_1+0.4X_2).
\]
Under the alternative,
\[
\tau_L(X_1)=\tau_0+(X_1-0.5)\1\{X_1>0.5\}.
\]

\paragraph{Case III: Fractional nonlinear IV design.}
The covariates satisfy \(X_j\stackrel{iid}{\sim}\mathrm{Beta}(2,2)\), \(j=1,\ldots,r\), and \(U_j=X_j-0.5\). The instrument propensity score is
\[
q(X)=0.15+0.70\Phi\{q_{\mathrm{index}}(X)\},
\]
where
\[
q_{\mathrm{index}}(X)
=
\frac{
-0.10+0.10\sum_{j=1}^r X_j^2+0.50U_1+0.40U_2+0.30U_1U_2
}{
\exp\{-0.10\sum_{j=1}^r X_j^2\}
}.
\]
The baseline outcome is
\[
\mu_0(X)=0.4U_1+0.3U_2+0.2U_3^2+0.10r^{-1}\sum_{j=1}^r U_j.
\]
The first-stage components are
\[
\pi_0(X)=0.10+0.25\Phi\{b_{\mathrm{index}}(X)\},
\qquad
\pi_C(X)=0.15+0.25\Phi\{c_{\mathrm{index}}(X)\},
\]
where
\[
b_{\mathrm{index}}(X)=-0.25+0.50U_1+0.30U_2+0.20U_1U_2,
\qquad
c_{\mathrm{index}}(X)=0.15+0.40U_1-0.20U_2+0.20r^{-1}\sum_{j=1}^r U_j.
\]
Under the alternative,
\[
\tau_L(X_1)=\tau_0+2(X_1-0.4)^2.
\]

In all three cases, the null is \(\tau_L(X_1)\equiv\tau_0\) with \(\tau_0=0.5\). The disturbance is either homoskedastic, \(\varepsilon\sim N(0,\sigma_u^2)\) with \(\sigma_u=0.25\), or heteroskedastic, \(\varepsilon\sim N(0,\sigma_u^2\{1+|X_1-0.5|\}^2)\). The parameter \(\rho=0.40\) controls the strength of treatment endogeneity through the latent variable \(U_D\).

Tables \ref{tab:clate_size_homo}--\ref{tab:clate_power_hetero} report the corresponding empirical rejection frequencies based on Proposition \ref{prop:clate-process-uniform} and the multiplier bootstrap procedure in \eqref{eq:bootstrap_process_clate}.

\begin{table}[!htbp]
\centering
\caption{Empirical size under homoskedastic errors for CLATE heterogeneity tests.}
\label{tab:clate_size_homo}
\begin{tabular*}{0.85\textwidth}{@{\extracolsep{\fill}} llcccccc @{}}
\toprule
& & \multicolumn{2}{c}{$n=500$} & \multicolumn{2}{c}{$n=1000$} & \multicolumn{2}{c}{$n=2000$} \\
\cmidrule(lr){3-4}\cmidrule(lr){5-6}\cmidrule(lr){7-8}
$r$ & $\alpha$ & KS & CvM & KS & CvM & KS & CvM \\
\midrule
\multicolumn{8}{l}{\textit{Case I: Smooth IV logit design}} \\
\midrule
\multirow{3}{*}{3}
& 1\%  & 1.20 & 0.80 & 1.10 & 0.50 & 1.30 & 1.30 \\

& 5\%  & 4.10 & 5.00 & 4.50 & 5.20 & 5.70 & 5.30 \\

& 10\%  & 8.40 & 8.50 & 8.80 & 9.10 & 9.70 & 9.80 \\
\addlinespace
\multirow{3}{*}{5}
& 1\%  & 0.90 & 0.80 & 0.40 & 0.50 & 1.10 & 0.80 \\

& 5\%  & 5.70 & 5.50 & 4.50 & 4.30 & 4.30 & 5.10 \\

& 10\%  & 10.60 & 9.80 & 9.30 & 9.50 & 9.60 & 10.10 \\
\addlinespace
\multirow{3}{*}{10}
& 1\%  & 1.10 & 0.90 & 1.00 & 1.20 & 1.10 & 1.10 \\

& 5\%  & 4.70 & 4.80 & 5.80 & 4.90 & 5.70 & 5.30 \\

& 10\%  & 9.40 & 9.70 & 10.00 & 10.10 & 10.40 & 10.20 \\
\midrule
\multicolumn{8}{l}{\textit{Case II: Threshold IV design}} \\
\midrule
\multirow{3}{*}{3}
& 1\%  & 0.90 & 1.00 & 0.80 & 1.00 & 1.40 & 1.40 \\

& 5\%  & 4.70 & 5.60 & 5.00 & 4.60 & 5.20 & 5.40 \\

& 10\%  & 10.50 & 10.80 & 11.10 & 10.20 & 9.50 & 8.90 \\
\addlinespace
\multirow{3}{*}{5}
& 1\%  & 1.20 & 0.80 & 1.20 & 1.40 & 0.90 & 0.80 \\

& 5\%  & 5.40 & 5.80 & 5.30 & 5.00 & 4.70 & 4.60 \\

& 10\%  & 10.50 & 10.50 & 9.80 & 10.40 & 8.90 & 10.00 \\
\addlinespace
\multirow{3}{*}{10}
& 1\%  & 0.60 & 0.90 & 1.20 & 1.10 & 0.60 & 0.80 \\

& 5\%  & 4.50 & 4.80 & 5.10 & 5.40 & 4.80 & 4.20 \\

& 10\%  & 9.10 & 9.80 & 10.70 & 10.60 & 10.00 & 9.90 \\
\midrule
\multicolumn{8}{l}{\textit{Case III: Fractional nonlinear IV design}} \\
\midrule
\multirow{3}{*}{3}
& 1\%  & 1.10 & 1.50 & 1.00 & 1.60 & 0.90 & 0.80 \\

& 5\%  & 4.90 & 5.40 & 5.90 & 5.70 & 5.00 & 5.30 \\

& 10\%  & 10.00 & 10.90 & 10.70 & 10.60 & 11.00 & 10.60 \\
\addlinespace
\multirow{3}{*}{5}
& 1\%  & 1.10 & 1.40 & 0.90 & 1.10 & 1.30 & 0.60 \\

& 5\%  & 5.70 & 4.40 & 6.50 & 5.60 & 5.80 & 5.30 \\

& 10\%  & 10.10 & 9.80 & 11.20 & 12.80 & 10.50 & 11.10 \\
\addlinespace
\multirow{3}{*}{10}
& 1\%  & 1.10 & 0.70 & 0.80 & 0.70 & 1.10 & 1.20 \\

& 5\%  & 3.90 & 3.80 & 4.70 & 4.00 & 6.00 & 6.20 \\

& 10\%  & 9.20 & 8.60 & 8.00 & 8.50 & 12.40 & 11.20 \\
\bottomrule
\end{tabular*}

\vspace{0.4em}
\begin{minipage}{0.8\textwidth}
\footnotesize
\textit{Notes.} Entries report rejection frequencies (in percent) over 1,000 Monte Carlo replications. The nominal significance levels are 1\%, 5\%, and 10\%. Bootstrap critical values are based on 2,000 multiplier bootstrap draws.
\end{minipage}
\end{table}

\begin{table}[!htbp]
\centering
\caption{Empirical size under heteroskedasticity errors for CLATE heterogeneity tests.}
\label{tab:clate_size_hetero}
\begin{tabular*}{0.85\textwidth}{@{\extracolsep{\fill}} llcccccc @{}}
\toprule
& & \multicolumn{2}{c}{$n=500$} & \multicolumn{2}{c}{$n=1000$} & \multicolumn{2}{c}{$n=2000$} \\
\cmidrule(lr){3-4}\cmidrule(lr){5-6}\cmidrule(lr){7-8}
$r$ & $\alpha$ & KS & CvM & KS & CvM & KS & CvM \\
\midrule
\multicolumn{8}{l}{\textit{Case I: Smooth IV logit design}} \\
\midrule
\multirow{3}{*}{3}
& 1\%  & 0.90 & 0.80 & 1.00 & 0.70 & 1.30 & 1.10 \\

& 5\%  & 4.20 & 5.10 & 4.50 & 5.60 & 5.70 & 5.00 \\

& 10\%  & 9.50 & 8.70 & 8.50 & 10.20 & 9.90 & 9.50 \\
\addlinespace
\multirow{3}{*}{5}
& 1\%  & 0.90 & 1.10 & 0.60 & 0.50 & 0.90 & 1.20 \\

& 5\%  & 6.30 & 5.50 & 4.90 & 4.30 & 4.60 & 4.70 \\

& 10\%  & 10.70 & 10.10 & 9.00 & 9.10 & 10.60 & 10.90 \\
\addlinespace
\multirow{3}{*}{10}
& 1\%  & 1.10 & 0.90 & 1.20 & 1.20 & 1.00 & 1.40 \\

& 5\%  & 4.40 & 4.60 & 5.20 & 5.10 & 5.50 & 5.20 \\

& 10\%  & 9.00 & 9.10 & 9.90 & 9.90 & 10.10 & 10.30 \\
\midrule
\multicolumn{8}{l}{\textit{Case II: Threshold IV design}} \\
\midrule
\multirow{3}{*}{3}
& 1\%  & 1.20 & 0.90 & 1.00 & 1.00 & 1.40 & 1.30 \\

& 5\%  & 5.10 & 6.00 & 5.80 & 4.90 & 5.80 & 5.20 \\

& 10\%  & 10.40 & 10.50 & 10.00 & 9.60 & 9.50 & 9.20 \\
\addlinespace
\multirow{3}{*}{5}
& 1\%  & 1.10 & 1.00 & 1.30 & 1.10 & 0.80 & 0.80 \\

& 5\%  & 5.90 & 5.90 & 5.20 & 5.40 & 4.50 & 4.90 \\

& 10\%  & 10.10 & 11.70 & 9.70 & 10.00 & 9.30 & 10.70 \\
\addlinespace
\multirow{3}{*}{10}
& 1\%  & 0.80 & 0.60 & 1.00 & 1.30 & 0.70 & 0.60 \\

& 5\%  & 5.20 & 5.30 & 5.20 & 5.50 & 5.40 & 4.00 \\

& 10\%  & 9.40 & 9.80 & 10.50 & 10.20 & 10.10 & 10.00 \\
\midrule
\multicolumn{8}{l}{\textit{Case III: Fractional nonlinear IV design}} \\
\midrule
\multirow{3}{*}{3}
& 1\%  & 1.00 & 1.70 & 1.20 & 1.50 & 0.90 & 0.80 \\

& 5\%  & 4.70 & 5.40 & 5.70 & 6.00 & 5.80 & 6.00 \\

& 10\%  & 10.80 & 10.40 & 10.50 & 10.50 & 11.10 & 11.10 \\
\addlinespace
\multirow{3}{*}{5}
& 1\%  & 1.10 & 1.30 & 0.90 & 1.00 & 1.10 & 0.80 \\

& 5\%  & 4.70 & 4.10 & 5.50 & 5.70 & 6.00 & 6.10 \\

& 10\%  & 9.80 & 9.90 & 11.70 & 12.50 & 10.80 & 10.40 \\
\addlinespace
\multirow{3}{*}{10}
& 1\%  & 1.00 & 0.80 & 0.60 & 0.50 & 1.30 & 1.00 \\

& 5\%  & 3.90 & 3.60 & 4.80 & 3.80 & 6.20 & 6.70 \\

& 10\%  & 8.90 & 9.40 & 8.30 & 8.40 & 12.90 & 11.20 \\
\bottomrule
\end{tabular*}

\vspace{0.4em}
\begin{minipage}{0.8\textwidth}
\footnotesize
\textit{Notes.} Entries report rejection frequencies (in percent) over 1,000 Monte Carlo replications. The nominal significance levels are 1\%, 5\%, and 10\%. Bootstrap critical values are based on 2,000 multiplier bootstrap draws.
\end{minipage}
\end{table}

\begin{table}[!htbp]
\centering
\caption{Empirical power under homoskedastic errors for CLATE heterogeneity tests.}
\label{tab:clate_power_homo}
\begin{tabular*}{0.85\textwidth}{@{\extracolsep{\fill}} llcccccc @{}}
\toprule
& & \multicolumn{2}{c}{$n=500$} & \multicolumn{2}{c}{$n=1000$} & \multicolumn{2}{c}{$n=2000$} \\
\cmidrule(lr){3-4}\cmidrule(lr){5-6}\cmidrule(lr){7-8}
$r$ & $\alpha$ & KS & CvM & KS & CvM & KS & CvM \\
\midrule
\multicolumn{8}{l}{\textit{Case I: Smooth IV logit design}} \\
\midrule
\multirow{3}{*}{3}
& 1\%  & 46.10 & 53.30 & 82.60 & 87.80 & 99.40 & 99.70 \\

& 5\%  & 68.20 & 73.90 & 94.00 & 95.50 & 100.00 & 99.90 \\

& 10\%  & 77.60 & 81.00 & 96.20 & 97.70 & 100.00 & 100.00 \\
\addlinespace
\multirow{3}{*}{5}
& 1\%  & 46.00 & 53.10 & 80.40 & 87.60 & 98.10 & 99.30 \\

& 5\%  & 67.30 & 74.00 & 92.90 & 95.10 & 99.70 & 99.90 \\

& 10\%  & 77.40 & 82.80 & 95.90 & 97.20 & 100.00 & 100.00 \\
\addlinespace
\multirow{3}{*}{10}
& 1\%  & 39.60 & 47.70 & 79.70 & 85.70 & 98.80 & 99.50 \\

& 5\%  & 62.60 & 69.20 & 91.40 & 95.50 & 99.80 & 99.80 \\

& 10\%  & 73.60 & 78.20 & 95.80 & 97.60 & 99.80 & 99.80 \\
\midrule
\multicolumn{8}{l}{\textit{Case II: Threshold IV design}} \\
\midrule
\multirow{3}{*}{3}
& 1\%  & 12.50 & 15.00 & 28.40 & 30.90 & 62.70 & 68.50 \\

& 5\%  & 29.50 & 31.90 & 50.40 & 53.80 & 84.00 & 86.40 \\

& 10\%  & 40.60 & 45.30 & 62.50 & 67.10 & 89.40 & 93.40 \\
\addlinespace
\multirow{3}{*}{5}
& 1\%  & 13.40 & 14.70 & 29.20 & 33.30 & 62.40 & 65.40 \\

& 5\%  & 29.00 & 32.60 & 53.60 & 57.70 & 81.80 & 85.10 \\

& 10\%  & 40.60 & 44.00 & 63.80 & 69.20 & 88.30 & 91.40 \\
\addlinespace
\multirow{3}{*}{10}
& 1\%  & 11.50 & 13.80 & 30.80 & 34.40 & 64.90 & 68.20 \\

& 5\%  & 26.50 & 29.20 & 51.50 & 56.40 & 84.20 & 86.90 \\

& 10\%  & 39.00 & 42.20 & 61.50 & 66.00 & 90.00 & 92.50 \\
\midrule
\multicolumn{8}{l}{\textit{Case III: Fractional nonlinear IV design}} \\
\midrule
\multirow{3}{*}{3}
& 1\%  & 5.90 & 7.60 & 13.00 & 14.90 & 37.90 & 41.00 \\

& 5\%  & 18.80 & 21.20 & 34.90 & 37.80 & 62.10 & 66.10 \\

& 10\%  & 28.10 & 31.20 & 48.60 & 52.00 & 74.70 & 80.00 \\
\addlinespace
\multirow{3}{*}{5}
& 1\%  & 6.20 & 8.40 & 15.90 & 17.40 & 36.70 & 38.40 \\

& 5\%  & 19.70 & 21.50 & 36.00 & 40.10 & 64.50 & 67.20 \\

& 10\%  & 29.50 & 32.40 & 50.00 & 52.00 & 77.20 & 79.60 \\
\addlinespace
\multirow{3}{*}{10}
& 1\%  & 6.10 & 7.60 & 14.50 & 16.00 & 41.90 & 44.80 \\

& 5\%  & 21.20 & 21.00 & 36.40 & 38.70 & 67.70 & 69.80 \\

& 10\%  & 32.20 & 33.70 & 49.20 & 51.80 & 78.10 & 82.40 \\
\bottomrule
\end{tabular*}

\vspace{0.4em}
\begin{minipage}{0.8\textwidth}
\footnotesize
\textit{Notes.} Entries report rejection frequencies (in percent) over 1,000 Monte Carlo replications. The nominal significance levels are 1\%, 5\%, and 10\%. Bootstrap critical values are based on 2,000 multiplier bootstrap draws.
\end{minipage}
\end{table}

\begin{table}[!htbp]
\centering
\caption{Empirical power under heteroskedasticity errors for CLATE heterogeneity tests.}
\label{tab:clate_power_hetero}
\begin{tabular*}{0.85\textwidth}{@{\extracolsep{\fill}} llcccccc @{}}
\toprule
& & \multicolumn{2}{c}{$n=500$} & \multicolumn{2}{c}{$n=1000$} & \multicolumn{2}{c}{$n=2000$} \\
\cmidrule(lr){3-4}\cmidrule(lr){5-6}\cmidrule(lr){7-8}
$r$ & $\alpha$ & KS & CvM & KS & CvM & KS & CvM \\
\midrule
\multicolumn{8}{l}{\textit{Case I: Smooth IV logit design}} \\
\midrule
\multirow{3}{*}{3}
& 1\%  & 31.10 & 36.40 & 63.10 & 68.80 & 96.30 & 97.90 \\

& 5\%  & 53.50 & 58.30 & 82.80 & 87.10 & 99.40 & 99.70 \\

& 10\%  & 65.20 & 69.00 & 90.10 & 92.80 & 99.90 & 99.90 \\
\addlinespace
\multirow{3}{*}{5}
& 1\%  & 32.50 & 36.10 & 64.60 & 71.10 & 93.60 & 95.90 \\

& 5\%  & 54.00 & 58.00 & 82.40 & 88.00 & 98.40 & 99.00 \\

& 10\%  & 64.90 & 69.50 & 89.70 & 92.70 & 99.10 & 99.30 \\
\addlinespace
\multirow{3}{*}{10}
& 1\%  & 26.50 & 32.00 & 64.20 & 69.50 & 94.20 & 95.90 \\

& 5\%  & 48.50 & 54.30 & 81.10 & 85.90 & 98.60 & 99.10 \\

& 10\%  & 60.10 & 65.50 & 87.70 & 91.00 & 99.40 & 99.70 \\
\midrule
\multicolumn{8}{l}{\textit{Case II: Threshold IV design}} \\
\midrule
\multirow{3}{*}{3}
& 1\%  & 7.50 & 8.70 & 15.90 & 17.20 & 41.10 & 42.60 \\

& 5\%  & 21.40 & 22.40 & 35.90 & 37.30 & 64.70 & 67.10 \\

& 10\%  & 31.30 & 32.30 & 48.90 & 49.30 & 76.10 & 78.60 \\
\addlinespace
\multirow{3}{*}{5}
& 1\%  & 8.90 & 9.50 & 16.80 & 17.90 & 41.90 & 42.30 \\

& 5\%  & 21.00 & 22.80 & 37.00 & 40.10 & 63.10 & 63.80 \\

& 10\%  & 31.90 & 34.20 & 49.80 & 52.10 & 74.40 & 75.80 \\
\addlinespace
\multirow{3}{*}{10}
& 1\%  & 6.80 & 7.80 & 18.90 & 20.10 & 43.40 & 43.60 \\

& 5\%  & 18.60 & 19.20 & 38.30 & 40.00 & 67.10 & 68.10 \\

& 10\%  & 28.10 & 29.10 & 49.20 & 51.70 & 77.00 & 78.30 \\
\midrule
\multicolumn{8}{l}{\textit{Case III: Fractional nonlinear IV design}} \\
\midrule
\multirow{3}{*}{3}
& 1\%  & 4.00 & 4.60 & 7.50 & 8.80 & 25.60 & 24.90 \\

& 5\%  & 14.40 & 17.10 & 24.70 & 24.40 & 48.90 & 49.70 \\

& 10\%  & 23.30 & 24.30 & 37.10 & 38.60 & 61.70 & 62.30 \\
\addlinespace
\multirow{3}{*}{5}
& 1\%  & 5.00 & 5.60 & 10.60 & 10.00 & 23.60 & 23.60 \\

& 5\%  & 15.00 & 16.20 & 26.90 & 27.40 & 48.40 & 47.10 \\

& 10\%  & 24.50 & 25.60 & 38.90 & 39.80 & 61.90 & 62.80 \\
\addlinespace
\multirow{3}{*}{10}
& 1\%  & 4.00 & 4.50 & 8.70 & 9.80 & 27.30 & 27.00 \\

& 5\%  & 15.40 & 15.50 & 27.60 & 27.60 & 52.40 & 51.10 \\

& 10\%  & 25.70 & 25.40 & 38.90 & 40.40 & 65.00 & 66.20 \\
\bottomrule
\end{tabular*}

\vspace{0.4em}
\begin{minipage}{0.8\textwidth}
\footnotesize
\textit{Notes.} Entries report rejection frequencies (in percent) over 1,000 Monte Carlo replications. The nominal significance levels are 1\%, 5\%, and 10\%. Bootstrap critical values are based on 2,000 multiplier bootstrap draws.
\end{minipage}
\end{table}
\clearpage
\putbib 
\end{bibunit}

\end{document}